\documentclass[11pt]{article}
\usepackage{authblk}
\usepackage{hyperref}
\usepackage{url}
\usepackage{centernot}
\usepackage{amsmath,amsfonts,amssymb,amsthm}
\usepackage{graphicx}
\usepackage[round]{natbib}
\usepackage{booktabs} 
\usepackage{subcaption}
\usepackage{bbm}
\usepackage{bm}
\usepackage{mwe}
\usepackage{paralist} 
\usepackage{indentfirst} 
\usepackage{xcolor}
\usepackage[toc,page,header]{appendix}
\usepackage{minitoc}
\usepackage[most]{tcolorbox}
\usepackage{longtable} 
\usepackage{booktabs}

\usepackage{setspace}

\usepackage[margin=1in]{geometry}

\DeclareMathOperator*{\argmin}{arg\,min}

\renewcommand{\emptyset}{\varnothing}

\newcommand{\rd}{\,\mathrm{d}}

\newcommand{\e}{\mathbb{E}}
\newcommand{\var}{\mathrm{Var}}
\newcommand{\cov}{\mathrm{Cov}}

\newcommand{\real}{\mathbb{R}}

\newcommand{\tran}{\mathsf{T}}
\newcommand{\indep}{\raisebox{0.05em}{\rotatebox[origin=c]{90}{$\models$}}}

\newcommand{\cd}{\mathcal{D}}

\newcommand{\cw}{\mathcal{W}}

\newcommand{\cy}{\mathcal{Y}}

\newcommand{\bfs}{\mathbb{S}}
\newcommand{\bfts}{\tilde{\mathbb{S}}}
\newcommand{\bftp}{\tilde{\mathbb{P}}}

\newcommand{\completeSampNotation}{\textnormal{II}}
\newcommand{\htc}{\hat{\theta}^{\completeSampNotation}}
\newcommand{\hgc}{\hat{\gamma}^{\completeSampNotation}}
\newcommand{\hga}{\hat{\gamma}^{\textnormal{I}}}

\newcommand{\PTDSuperScriptAcr}{\textnormal{MPD}}
\newcommand{\htPTD}{\hat{\theta}^{\PTDSuperScriptAcr}}

\newcommand{\xobs}{X^{\mathrm{c}}}
\newcommand{\xmiss}{X^{\mathrm{e}}}
\newcommand{\txmiss}{\tilde{X}^{\mathrm{e}}}

\newcommand{\goodX}{X}
\newcommand{\proxyX}{\tilde{X}}
\newcommand{\argx}{x}
\newcommand{\argxProxy}{\tilde{x}}
\newcommand{\xSpace}{\mathcal{X}}
\newcommand{\xProxySpace}{\tilde{\mathcal{X}}}

\newcommand{\datvecraw}{V}
\newcommand{\datvecrawDist}{\mathbb{P}_V}
\newcommand{\nSuper}{N_{\textnormal{Super}}}

\newcommand{\LabelStrategy}[1]{\mathcal{A}_{\pi}^{({#1})}}
\newcommand{\LabelRulePi}[1]{\pi_{\cd_{{#1}-1}}^{({#1})}}
\newcommand{\LabelRulePiDoubleArg}[2]{\pi_{\cd_{{#2}}}^{({#1})}}

\newcommand{\LimitingLabelRulePi}[1]{\bar{\pi}^{({#1})}}

\newcommand{\thetaTarg}{\theta_0}
\newcommand{\gammaTarg}{\gamma_0}

\newcommand{\giv}{\!\mid\!} 

\newcommand{\samplingSchemeName}[1]{two-phase proxy-assisted multiwave sampling{#1}}
\newcommand{\samplingSchemeNameCap}[1]{Two-phase proxy-assisted multiwave sampling{#1}}

\newcommand{\singlespacematrix}{\renewcommand{\arraystretch}{0.7}}

\usepackage{etoolbox}

\theoremstyle{plain}  
\newtheorem{theorem}{Theorem}
\newtheorem{lemma}[theorem]{Lemma}
\newtheorem{corollary}[theorem]{Corollary}
\newtheorem{proposition}[theorem]{Proposition}

\theoremstyle{remark}

\newtheorem{remark}{Remark}
\newtheorem{fact}{Fact}

\newtheorem{assumption}{Assumption}

\title{M-estimation under Two-Phase Multiwave Sampling \\
with Applications to Prediction-Powered Inference}

 \author[1,2]{\large Dan M. Kluger \thanks{Corresponding author: dkluger@mit.edu}}
 \author[2]{Stephen Bates}
 \affil[1]{\footnotesize Institute for Data, Systems, and Society, Massachusetts Institute of Technology}
 \affil[2]{\footnotesize Department of Electrical Engineering and Computer Science, Massachusetts Institute of Technology}

\date{}

\begin{document}

\maketitle
\vspace{-6pt}
\begin{abstract}

In two-phase multiwave sampling, inexpensive measurements are collected on a large sample and expensive, more informative measurements are adaptively obtained on subsets of units across multiple waves. Adaptively collecting the expensive measurements can increase efficiency but complicates statistical inference. We give valid estimators and confidence intervals for M-estimation under adaptive two-phase multiwave sampling. We focus on the case where proxies for the expensive variables---such as predictions from pretrained machine learning models---are available for all units and propose a \textit{Multiwave Predict-Then-Debias} estimator that combines proxy information with the expensive, higher-quality measurements to improve efficiency while removing bias. We establish asymptotic linearity and normality and propose asymptotically valid confidence intervals. We also develop an approximately greedy sampling strategy that improves efficiency relative to uniform sampling. Data-based simulation studies support the theoretical results and demonstrate efficiency gains.
  
\end{abstract}

\noindent  {\bf Keywords:} active inference, adaptive designs, two-phase sampling, Neyman allocation

\section{Introduction}

With recent advances in machine learning and artificial intelligence, researchers are increasingly assembling and analyzing large datasets in which some variables are algorithmic outputs rather than direct measurements of the quantity of interest. For example, a study may use a predicted protein structure from a protein language model rather than a structure measured with crystallography, because the latter is costly and time intensive. In this situation, naively applying traditional methods for statistical analysis will result in biased estimators and invalid confidence intervals. Nonetheless, there is an emerging toolkit of statistical methods that do work with such data, provided the analyst does have access to a small amount of \textit{gold standard} direct measurements to complement algorithmic predictions~\citep[e.g.,][]{OriginalPPI,DemistifyingPPI}.

In this paper, we consider the version of this problem where the researcher can adaptively collect such gold standard measurements. This has the promise of increasing sample efficiency, resulting in narrower confidence intervals for the parameter of interest with the same amount of data. A major technical challenge is that adaptive sampling schemes can introduce statistical dependencies across the samples, rendering it substantially more challenging to conduct valid statistical inference. 
In this work, we introduce a new estimator and confidence intervals for the adaptive setting and prove their validity.

Our work can be viewed as an instance of two-phase multiwave sampling~\citep{AdaptiveSamplingInTwoPhaseDesignsMcIsaacAndCook,ChenLumleyMultiwaveInTwoPhase,ChenAndLumleyPaper2} in which expensive variables are adaptively collected across multiple measurement waves. While the literature on two-phase multiwave sampling also studies practical sampling strategies and estimators in the regime we study, to our knowledge this literature has not established asymptotic normality of M-estimators using theory that accounts for statistical dependencies induced by the proposed sampling strategies. 
Moreover, much of the work in this literature assumes stratified sampling from pre-specified strata. In this paper, we consider more flexible sampling strategies that do not require stratification.

Simultaneously, our work can also be viewed as a part of the broader adaptive sampling and experimental design literature, and it is closely related to recent research on Active Statistical Inference~\citep{ActiveInferencePaper}. Given the difficulty of constructing asymptotically normal estimators in adaptive sampling settings, most studies restrict their attention to one of two simpler adaptive sampling regimes. The first is a data splitting regime, in which the optimal sampling rule is estimated on an independent pilot dataset and inference is conducted on the remaining data using standard asymptotic theory for i.i.d. samples. This leads to validity, but loses power since the pilot sample is discarded. The second sampling regime is online sampling: the data is observed in a sequence and the decision of whether to measure a data point must be made once and for all based on data collected up to that point. This sampling scheme allows for the use of martingale techniques to construct confidence intervals. However, a major limitation of the online regime is that it does not allow revisiting earlier samples if they are not measured. 
In contrast, in our work, if some particularly valuable data points were not measured in early waves, they are likely to still be measured in later waves after a better estimate of the optimal sampling strategy is obtained. We discuss related work in detail in Section~\ref{sec:relatedWork}.

\paragraph{Our contribution.} We introduce an estimator for the two-phase multiwave setting. We prove that this estimator is asymptotically linear and asymptotically normal, and use this to provide asymptotic confidence intervals. To our knowledge, this is the first approach to M-estimation under two-phase multiwave sampling with theoretical guarantees. We also discuss how the user should choose the sampling strategy for increased statistical efficiency.

 \paragraph{Outline.} The outline of this paper is as follows. In Section \ref{sec:settingAndEstimator}, we introduce the formal setting and notation and describe the point estimator and its corresponding confidence intervals. In Section \ref{sec:AsympTheory}, we present our main theoretical results which (i) establish asymptotic linearity of the point estimator in M-estimation tasks under fairly mild conditions (ii) provide a central limit theorem for the point estimator and (iii) establish conditions under which the confidence intervals are asymptotically valid. In Section \ref{sec:ChoicesToIncreaseEfficiency}, we use the asymptotic variance formula obtained in the previous section to motivate sampling strategies that are designed to reduce asymptotic variance. In Section \ref{sec:Simulations}, we conduct simulations to test the empirical performance and coverage of two of these sampling strategies. In Section \ref{sec:relatedWork}, we review related work. The proofs for all theoretical results are provided in the appendix.

\section{Setting, point estimator, and confidence intervals}\label{sec:settingAndEstimator}

In this section, we formally introduce our notation and setting and describe \samplingSchemeName{.} We then introduce appropriate inverse probability weights that can be used in these settings. For M-estimation tasks, we propose a Predict-Then-Debias type estimator \citep{ChenAndChen2000,PTDBootstrapPaper} that leverages all available data and present corresponding confidence intervals.

Throughout the text, we suppose $\datvecraw \equiv (\xobs,\txmiss,\xmiss) \sim \datvecrawDist$ is a random vector in $\mathbb{R}^q$ in which $\xmiss$ denotes a vector of expensive-to-measure variables, $\txmiss$ denotes a cheap-to-measure estimate of $\xmiss$, and $\xobs$ is a vector of other variables of interest or auxiliary variables that are also cheap-to-measure. We use the shorthand notation $\goodX \equiv (\xobs,\xmiss) \in \mathbb{R}^p$ to denote the vector of gold standard measurements for all variables of interest. It will also be convenient to let $\proxyX \equiv (\xobs, \txmiss) \in \mathbb{R}^p$ denote the cheap-to-measure vector of estimates for all variables of interest. We use $\xSpace \subseteq \mathbb{R}^p$ and $\xProxySpace \subseteq \mathbb{R}^p$ to denote the supports of $\goodX$ and $\proxyX$, respectively. 
We use $\mathcal{P} \equiv \{ \pi : \xProxySpace \to (0,1) \}$ to denote the space of possible labelling rules (a labelling rule maps $\proxyX$ observations to a probability that the corresponding $\xmiss$ will be measured). For positive integers $n$ we use $[n]=\{1,\dots,n\}$ to denote the set of the first $n$ positive integers. We use $\mathsf{S}_N= \{\tau : [N] \to [N] \text{ such that } \tau \text{ is bijective} \}$ to denote the collection of permutations of the first $N$ integers. $I_{d \times d}$ denotes the $d \times d$ identity matrix.
We use $o_p(1)$ and $O_p(1)$ to denote sequences that converge in probability and are bounded in probability, respectively, as $N \to \infty$. 
Unless otherwise specified, sums and products over ranges of indices in which the lower limit exceeds the upper limit are defined to be $0$ and $1$, respectively.

\subsection{\samplingSchemeNameCap{}}\label{sec:DescriptionOfSamplingScheme}

In \samplingSchemeName{,} Phase I involves collecting a large sample of size $N$ in which the less expensive variables $\proxyX=(\xobs,\txmiss)$ are measured for each sample but measurement of the expensive variable $\xmiss$ is reserved for Phase II. 
Phase II involves $K$ waves. In each wave, $\xmiss_i$ measurements are collected on a subset of the $N$ samples using independent Bernoulli sampling. The probability that an $\xmiss_i$ measurement is collected is a function of the available data for the $i$th sample $\proxyX_i=(\xobs_i,\txmiss_i)$ and a labelling rule that is learned on data from previous waves. The procedure, which we refer to as \samplingSchemeName{} is formally stated in the box below, with elaborations and assumptions about each step subsequently provided.

\begin{tcolorbox}[
    breakable, 
    colback=gray!5,
    colframe=gray!60,
    title={\samplingSchemeNameCap{} scheme},
    fonttitle=\bfseries,
    boxrule=0.6pt,
    arc=2pt
]\label{box:SamplingRegime}

\vspace{6pt}
\textbf{Phase I (Inexpensive variable collection):}
Collect an i.i.d. sample of size $N$ from the superpopulation.  
For each $i \in [N]$, observe the inexpensive variables $\proxyX_i=(\xobs_i,\txmiss_i)$, while the expensive variable $\xmiss_i$ remains unobserved. Data available after Phase I is denoted by $\cd_0 \equiv ( \proxyX_i )_{i=1}^N$. \newline

\vspace{1pt}

\textbf{Phase II (Adaptive multiwave measurements):}
For waves $k = 1,\dots,K$:

\begin{enumerate}
    \item \textbf{Learn labelling rule:}  
    Use all data collected prior to wave $k$, denoted by $\cd_{k-1}$, and the labelling strategy $\LabelStrategy{k}$ to obtain a labelling rule
    $\LabelRulePi{k} \equiv \LabelStrategy{k}(\cd_{k-1}) \in \mathcal{P}.$

    \item \textbf{Select units for measurement:}  
    For each $i \in [N]$, compute the labelling probability
    $ \LabelRulePi{k}(\proxyX_i).
    $
    Draw $U_i^{(k)} \stackrel{\text{i.i.d.}}{\sim} \mathrm{Unif}[0,1]$ independently of $\cd_{k-1}$ and set
    $$I_i^{(k)} \equiv \mathbbm{1}\{U_i^{(k)} \leq \LabelRulePi{k}(\proxyX_i) \}.
    $$

    \item \textbf{Collect measurements:}  
    Observe $\xmiss_i$ if both $I_i^{(k)}=1$ and $\xmiss_i$ has not been previously measured. Data available after wave $k$ is denoted by $$\cd_k \equiv \big( (I_i^{(j)},I_i^{(j)} \xmiss_i )_{j=1}^k, \proxyX_i \big)_{i=1}^N.$$
\end{enumerate}

\vspace{3pt}

\textbf{Observed data after Phase II:} Let $I_i \in \{0,1\}$ be an indicator of whether $\xmiss_i$ was measured in any of the $K$ waves. Store the sampling information and data denoted by 
$$ \cd_{\text{out}} \equiv
 \Big( \big(I_i^{(j)},U_i^{(j)},\LabelRulePi{j}(\proxyX_i) \big)_{j=1}^K,\; \proxyX_i,\; I_i \xmiss_i \Big)_{i=1}^N.
$$

\end{tcolorbox}

We now elaborate on assumptions and implementation requirements for specific steps within \samplingSchemeName{}.
 In Phase I, we assume that the $N$ samples collected are i.i.d. from the superpopulation of interest. While the $\xmiss_i$ values are unobserved in Phase I, the following assumption states that the data would be an i.i.d. sample if $\xmiss_i$ were collected for all Phase I samples.

\begin{assumption}[i.i.d. Phase I data] \label{assump:IIDUnderlyingData}
    $\datvecraw_1,\dots,\datvecraw_N \stackrel{\text{i.i.d.}}{\sim} \datvecrawDist$, where $\datvecraw_i \equiv (\xobs_i,\txmiss_i,\xmiss_i)$ for $i \in [N]$.
\end{assumption}

In each wave $k$ in Phase II, the labelling strategy $\LabelStrategy{k}$ is a prespecified function that maps all previously observed data (and sampling indicators), denoted by $\cd_{k-1}$, to a labelling rule in $\mathcal{P} \equiv \{ \pi : \xProxySpace \to (0,1) \}$. If the labelling rule $\pi \in \mathcal{P}$ is selected, then for each $i \in [N]$, $\xmiss_i$ is to be measured with probability $\pi(\proxyX_i)$ according to independent Bernoulli sampling. If $\xmiss_i$ has already been measured in a previous wave, a measurement of it is not collected again (for simplicity, we consider settings where $\xmiss_i$ can be measured without noise so repeated measurements are of no value). We also suppose that the prespecified labelling strategy $\LabelStrategy{k}$ is sufficiently regular to not introduce measurability concerns (specifically, for each $i \in [N]$ and $k \in [K]$ we assume that the wave $k$ labelling probabilities for sample $i$, given by $[\LabelStrategy{k}(\cd_{k-1})](\proxyX_i)$ can be expressed as a measurable function of $\cd_{k-1}$ and $\proxyX_i$). See Section \ref{sec:ChoicesToIncreaseEfficiency} for specific labelling strategies $\LabelStrategy{k}$ designed for efficient parameter estimation.

\subsection{Multiwave inverse probability weights}\label{sec:IntroduceMIPWWeights}

We next introduce inverse probability-type weights that are appropriate for two-phase multiwave sampling settings. For each $i \in [N]$, define $W_i^{(1)} \equiv I_i^{(1)}/{\LabelRulePiDoubleArg{1}{0}(\proxyX_i)}$ and define \begin{equation}\label{eq:InverseProbabilityWeights} 
W_i^{(k)} \equiv \Big( \prod_{j=1}^{k-1}  \frac{1-I_i^{(j)}}{1-\LabelRulePi{j} (\proxyX_i)} \Big) \frac{I_i^{(k)}}{\LabelRulePi{k} (\proxyX_i)},\end{equation} for $k \in \{2,\dots,K\}$. 
We next aggregate the weights across the $K$ waves. In particular, fix $c_1,c_2,\dots,c_K \in [0,1]$ such that $\sum_{k=1}^K c_k=1$, and define the following multiwave inverse probability weights \begin{equation}\label{eq:aggregated_Wk}
    W_i \equiv \sum_{k=1}^K c_k W_i^{(k)} \quad \text{for all } i \in [N].
\end{equation} The prespecified $c_k$ determine how much weight should be given to each wave, and as a starting point can be made proportional to the expected size of each wave. 

We remark that these weights are not the same as those seen in some other works on two-phase multiwave sampling \citep{ChenRecentSurrogatePoweredInferenceMultiwave,YangDiaoCook22}. In those works the inverse probability weights are given by calculating the total probability of a sample being labelled over the course of all waves.
Our construction of the weights enables us to establish theoretical guarantees by recursive applications of the tower property where we condition on data from previous waves. We expect that some properties of the multiwave inverse probability weights $W_i$, such as their lack of pairwise correlations, may be useful in other contexts, and we record them in Appendix \ref{sec:PropertiesOfWeights}.

\subsection{Multiwave Predict-Then-Debias M-estimator}\label{sec:IntroduceMestimator}

Our focus is on M-estimation settings where there is some prespecified loss function $l_{\theta}: \mathbb{R}^p \to \mathbb{R}$ parameterized by $\theta \in \Theta \subseteq \mathbb{R}^d $ and the goal is to estimate the well-defined quantity of interest \begin{equation}\label{eq:EstimandThetaStar}
    \thetaTarg \equiv \argmin_{\theta \in \Theta} \e[l_{\theta} (\goodX)].
\end{equation} As examples, the loss function $l_{\theta}(\cdot)$ could be chosen such that $\thetaTarg$ is a population mean, a population quantile, or a population regression coefficient in a GLM or robust regression model that regresses one component of $\goodX$ on other components of $\goodX$. 

We begin with 3 simple estimators (the first two of which use Phase II information):
\begin{equation}\label{eq:PTDComponentEstimatorsDEF}
    \htc \equiv \argmin_{\theta \in \Theta} \frac{1}{N} \sum_{i=1}^N W_i l_{\theta}(\goodX_i), \ \  \hgc \equiv \argmin_{\theta \in \Theta} \frac{1}{N} \sum_{i=1}^N W_i l_{\theta}(\proxyX_i), \ \  \text{and }  \hga \equiv \argmin_{\theta \in \Theta} \frac{1}{N} \sum_{i=1}^N  l_{\theta}(\proxyX_i). 
\end{equation} 
We combine these estimators into our proposed \emph{Multiwave Predict-Then-Debias estimator}
\begin{equation}\label{eq:PTD_estimator} \htPTD = \hat{\Omega} \hga +  (\htc -  \hat{\Omega} \hgc),\end{equation}
where $\hat{\Omega} \in \mathbb{R}^{d \times d}$ is a tuning parameter for improved efficiency that we will discuss in due course. The reader can keep in mind the case where $\hat \Omega = I_{d \times d}$ as an intuitive special case.

The idea behind the estimator is as follows. The estimator
$\htc$ minimizes an empirical weighted loss whose expected value is $\e[l_{\theta}(\goodX)]$ (see Proposition \ref{prop:WeightsEnableUnbiasedMeanEst} for details), so $\htc$ is an estimator targeting the estimand of interest $\thetaTarg$. Likewise, $\hgc$ and $\hga$ minimize empirical loss functions whose expected values are $\e[l_{\theta}(\proxyX)]$, so these estimators target the quantity $$\gammaTarg \equiv \argmin_{\theta \in \Theta} \e[l_{\theta} (\proxyX)].$$ Importantly, $\gammaTarg$ is generally not equal to $\thetaTarg$, since the distribution of $\proxyX$ is not the same as that of $X$. Still, $\hga$ has low variance because it is based on all $N$ samples, so it is useful to anchor on this quantity. Then, we add a bias-correction term such that the resulting estimator targets $\thetaTarg$.

In this manuscript, we will show that this estimator is consistent and asymptotically normal. Moreover, it results in improved efficiency compared to baseline approaches such as $\htc$ and other non-adaptive strategies. We will also give a consistent variance estimator, which leads to valid confidence intervals.

\subsection{Asymptotic variance estimator and confidence intervals}\label{sec:IntroduceCovarianceEstimatorsAndCIs}

A formula for the asymptotic variance of $\htPTD$ will subsequently be given in Theorem \ref{theorem:CLT_PTDEstimator}, and we state a consistent variance estimator here. Define the following:
\begin{equation}\label{eq:HessianAndSigmaEstimators} 
\begin{split}
   \hat{\Sigma}_{11} & \equiv \frac{1}{N} \sum_{i=1}^N W_i^2 \dot{l}_{\htc} (\goodX_i) [ \dot{l}_{\htc} (\goodX_i)]^\tran, \quad \hat{\Sigma}_{12} \equiv \frac{1}{N} \sum_{i=1}^N W_i^2 \dot{l}_{\htc} (\goodX_i) [ \dot{l}_{\hga} (\proxyX_i)]^\tran, \\   \hat{\Sigma}_{22} & \equiv \frac{1}{N} \sum_{i=1}^N W_i^2  \dot{l}_{\hga} (\proxyX_i) [ \dot{l}_{\hga} (\proxyX_i) ]^\tran, \quad \hat{\Sigma}_{13} \equiv \frac{1}{N} \sum_{i=1}^N W_i \dot{l}_{\htc} (\goodX_i) [ \dot{l}_{\hga} (\proxyX_i) ]^\tran, \\ \hat{\Sigma}_{33} & \equiv \frac{1}{N} \sum_{i=1}^N  \dot{l}_{\hga} (\proxyX_i) [ \dot{l}_{\hga} (\proxyX_i) ]^\tran, \quad \hat{H}_{\thetaTarg} \equiv \frac{1}{N} \sum_{i=1}^N W_i \ddot{l}_{\htc}(\goodX_i), \quad \text{and} \quad \hat{H}_{\gammaTarg} \equiv \frac{1}{N} \sum_{i=1}^N \ddot{l}_{\hga}(\proxyX_i).
\end{split}
\end{equation}
Above for each $\argx \in \xSpace \cup \xProxySpace$ and $\theta' \in \Theta$, $\dot{l}_{\theta'}(\argx)$ and $\ddot{l}_{\theta'}(\argx)$ denote the gradient and Hessian, respectively, of the map $\theta \mapsto l_{\theta}(\argx)$ evaluated at $\theta=\theta'$. (We remark that in some cases $\theta \mapsto l_{\theta}(\argx)$ is not differentiable or twice differentiable at $\theta=\htc$ or $\theta=\hga$. In these cases, 
we define $\dot{l}_{\theta'}(\argx)$ and $\ddot{l}_{\theta'}(\argx)$ in terms of the first and second order upper right-hand Dini partial derivatives to ensure that the estimators in \eqref{eq:HessianAndSigmaEstimators} are well-defined). Letting $\hat{\Omega} \in \mathbb{R}^{d \times d}$ be the (possibly data dependent) tuning matrix used to construct $\htPTD$, an estimator for the asymptotic variance of $\htPTD$ is given by \begin{equation}\label{eq:PTDAsympVarEstimator} 
\begin{split}
\hat{\Sigma}^{\PTDSuperScriptAcr} & \equiv \hat{H}_{\thetaTarg}^{-1} \hat{\Sigma}_{11} \hat{H}_{\thetaTarg}^{-1} + \hat{\Omega} \hat{H}_{\gammaTarg}^{-1} \big( \hat{\Sigma}_{22}-\hat{\Sigma}_{33} \big) \hat{H}_{\gammaTarg}^{-1} \hat{\Omega}^\tran 
 \\ & \quad + \hat{H}_{\thetaTarg}^{-1} (\hat{\Sigma}_{13}-\hat{\Sigma}_{12}) \hat{H}_{\gammaTarg}^{-1} \hat{\Omega}^\tran +  \big(  \hat{H}_{\thetaTarg}^{-1} (\hat{\Sigma}_{13}-\hat{\Sigma}_{12}) \hat{H}_{\gammaTarg}^{-1} \hat{\Omega}^\tran \big)^\tran. 
 \end{split}
 \end{equation} Two-sided $(1-\alpha)$-confidence intervals for the $j$th component of $\thetaTarg$ are then given by \begin{equation}\label{eq:CIDef}
    \mathcal{C}_j^{(1-\alpha)} \equiv \Big[ \htPTD_j - z_{1-\alpha/2} \sqrt{\hat{\Sigma}_{jj}^{\PTDSuperScriptAcr}/N},\htPTD_j + z_{1-\alpha/2} \sqrt{\hat{\Sigma}_{jj}^{\PTDSuperScriptAcr} /N} \Big], 
\end{equation} for each $j \in [d]$ and $\alpha \in (0,1)$, where $z_{1-\alpha/2}$ denotes the $(1-\alpha/2)$-th quantile of a standard normal distribution. Under certain assumptions, this variance estimator is consistent and these confidence intervals are asymptotically valid; we turn to the technical details next.

\section{Asymptotic theory}\label{sec:AsympTheory}

In this section, we study the asymptotic properties of the estimator $\htPTD$ under \samplingSchemeName{.} In particular, we show that under relatively mild regularity conditions, $\htPTD$ is consistent and asymptotically linear. Here the asymptotic linear expansion involves statistically dependent weights $(W_i)_{i=1}^N$. Under additional assumptions, we establish that $\htPTD$ is asymptotically normal. 
Finally, under additional regularity conditions we show that the confidence intervals defined at \eqref{eq:CIDef} are asymptotically valid. We remark that the theoretical results presented in this section can also be used to provide theoretical guarantees for M-estimators under other two-phase multiwave sampling designs, not just proxy-assisted ones (see Appendix \ref{sec:HowToGeneralizeToTwoPhaseMultiwave}). 

\subsection{Consistency, \texorpdfstring{$\sqrt{N}$}{Root-N}-consistency, and asymptotic linearity}\label{sec:StateMestimationResults}

We begin by requiring that the labelling probabilities $\LabelRulePi{k}(\proxyX_i)$ are bounded away from zero and one, as is commonplace.

\begin{assumption}[Overlap condition]\label{assump:LabellingRuleOverlap}
    There exists a constant $b \in (0,1/2)$ that does not depend on $N$ such that almost surely $\LabelRulePi{k}(\proxyX_i) \in [b,1-b]$ for each $k \in [K]$ and $i \in [N]$.
\end{assumption} The constant $b$ may be arbitrarily close to $0$, but may not decrease as $N \to \infty$. In our setting, we have the capability to ensure that this assumption holds, since we control the labelling probabilities.

Next, we require regularity conditions of the loss function to enable M-estimation.
In particular, for each $\theta \in \Theta$ define $L(\theta) \equiv \e[l_{\theta}(\goodX)]$ and $\tilde{L}(\theta) \equiv \e[l_{\theta}(\proxyX)]$ to be the population losses and we suppose that the loss $l_{\theta}(\cdot)$ satisfies the following conditions.

\begin{assumption}[Regularity conditions for M-estimation]\label{assump:SmoothEnoughForAsymptoticLineariaty} \ 

\begin{enumerate}[(i)] 
    \item $\theta \mapsto l_{\theta}(\argx)$ is convex for every $\argx \in \xSpace \cup \xProxySpace$.
    \item Across the domain $\Theta$, $\thetaTarg$ is the unique minimizer of $L(\theta)$ and $\gammaTarg$ is the unique minimizer of $\tilde{L}(\theta)$, with $\thetaTarg$ and $\gammaTarg$ being in the interior of the set $\Theta$.
    \item $\theta \mapsto l_{\theta}(\goodX)$ is differentiable at $\theta=\thetaTarg$ almost surely and $\theta \mapsto l_{\theta}(\proxyX)$ is differentiable at $\theta=\gammaTarg$ almost surely.
    \item $\theta \mapsto l_{\theta}(\goodX)$ is locally Lipschitz around $\theta=\thetaTarg$ and $\theta \mapsto l_{\theta}(\proxyX)$ is locally Lipschitz around $\theta=\gammaTarg$. In particular, there exists neighborhoods $\mathcal{L}_{\thetaTarg}$ of $\thetaTarg$ and $\mathcal{L}_{\gammaTarg}$ of $\gammaTarg$, and there exists functions $M,\tilde{M} : \mathbb{R}^p \to (0,\infty)$ such that $\e[M^2(\goodX)] < \infty$, $\e[\tilde{M}^2(\proxyX)]< \infty$, and such that for all $\argx \in \xSpace$ and $\theta,\theta' \in \mathcal{L}_{\thetaTarg}$, $\vert l_{\theta}(\argx) - l_{\theta'}(\argx)\vert < M(\argx) \vert \vert  \theta-\theta' \vert \vert $ while for all $\argxProxy \in \xProxySpace$ and $\theta,\theta' \in \mathcal{L}_{\gammaTarg}$, $\vert l_{\theta}(\argxProxy) - l_{\theta'}(\argxProxy)\vert < \tilde{M}(\argxProxy) \vert \vert  \theta- \theta' \vert \vert $. 
    \item The population losses given by $L(\theta)$ and 
     $\tilde{L}(\theta)$ admit 2nd-order Taylor expansions about $\thetaTarg$ and $\gammaTarg$, respectively, and the Hessians $\nabla^2 L(\thetaTarg)$ and $\nabla^2 \tilde{L}(\gammaTarg)$ are nonsingular.
    \item $\e[l_{\thetaTarg}^2(\goodX)] < \infty$ and $\e[l_{\gammaTarg}^2(\proxyX)] < \infty$.
\end{enumerate}
\end{assumption}

The above assumptions are fairly standard in M-estimation theory (e.g, \cite{VanderVaartTextbook,PPI++}), even when the data are an i.i.d. sample rather than our more challenging adaptive sampling setting. 

Define $H_{\thetaTarg} \equiv \nabla^2 L(\thetaTarg)= \nabla_{\theta}^2\e[l_{\theta}(\goodX)] \big|_{\theta=\thetaTarg}$ and $H_{\gammaTarg} \equiv \nabla^2 \tilde{L}(\gammaTarg)= \nabla_{\theta}^2\e[l_{\theta}(\proxyX)] \big|_{\theta=\gammaTarg}$.  $H_{\thetaTarg}$ and $H_{\gammaTarg}$ exist and are invertible while under Assumption \ref{assump:SmoothEnoughForAsymptoticLineariaty}(v). Meanwhile, under Assumption \ref{assump:SmoothEnoughForAsymptoticLineariaty}(iii), $\dot{l}_{\thetaTarg}(\goodX)=  \nabla_{\theta} l_{\theta}(\goodX) \big|_{\theta=\thetaTarg}$ and $\dot{l}_{\gammaTarg}(\proxyX)=\nabla_{\theta} l_{\theta}(\proxyX) \big|_{\theta=\gammaTarg}$ almost surely.

 Assumptions \ref{assump:IIDUnderlyingData}--\ref{assump:SmoothEnoughForAsymptoticLineariaty} are sufficient to ensure that $\htc$, $\hgc$, and $\hga$ are each $\sqrt{N}$-consistent estimators for $\thetaTarg$, $\gammaTarg$, and $\gammaTarg$ respectively, and that they each admit asymptotic linear expansions. 

\begin{theorem}\label{theorem:AsymptoticLInearMestsStacked}
Under \samplingSchemeName{} and Assumptions \ref{assump:IIDUnderlyingData}--\ref{assump:SmoothEnoughForAsymptoticLineariaty}, $${ \singlespacematrix \sqrt{N} \Bigg( \begin{bmatrix} 
    \htc \\ \hgc \\ \hga
\end{bmatrix} - \begin{bmatrix}
    \thetaTarg \\ \gammaTarg \\ \gammaTarg
\end{bmatrix} \Bigg) = - \frac{1}{\sqrt{N}} \sum_{i=1}^N \begin{bmatrix}
    H_{\thetaTarg}^{-1} & 0 & 0 \\
    0 & H_{\gammaTarg}^{-1} & 0 \\ 
    0 & 0 & H_{\gammaTarg}^{-1}
\end{bmatrix}\begin{bmatrix}
     W_i  \dot{l}_{\thetaTarg}(\goodX_i) \\ W_i  \dot{l}_{\gammaTarg}(\proxyX_i) \\  \dot{l}_{\gammaTarg}(\proxyX_i) 
\end{bmatrix} }+o_p(1).$$
Moreover, the above are $O_p(1)$.
\end{theorem}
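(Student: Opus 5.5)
The plan is to follow the classical convexity-based argument for M-estimators (Hjort--Pollard / Haberman style, as in \cite{VanderVaartTextbook} and \cite{PPI++}), adapted to the dependent weights $W_i$. The third block, $\hga$, is an ordinary unweighted M-estimator on the i.i.d.\ proxies $\proxyX_i$, so its expansion follows from standard results under Assumption \ref{assump:SmoothEnoughForAsymptoticLineariaty}. The first two blocks have identical structure (loss evaluated at $\goodX_i$ versus $\proxyX_i$), so I would prove a single lemma: for generic i.i.d.\ $Z_i$ (equal to $\goodX_i$ or $\proxyX_i$) with target $\vartheta_0$ satisfying the conditions of Assumption \ref{assump:SmoothEnoughForAsymptoticLineariaty}, the weighted minimizer $\hat\vartheta=\argmin_\theta N^{-1}\sum_i W_i l_\theta(Z_i)$ satisfies $\sqrt N(\hat\vartheta-\vartheta_0)=-H^{-1}N^{-1/2}\sum_i W_i\dot l_{\vartheta_0}(Z_i)+o_p(1)$. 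Stacking the three expansions then gives the display. The $O_p(1)$ claim follows once the leading term $N^{-1/2}\sum_i W_i\dot l(Z_i)$ is shown to be $O_p(1)$. For that I would use the properties of the weights from Appendix \ref{sec:PropertiesOfWeights} (Proposition \ref{prop:WeightsEnableUnbiasedMeanEst}): $\e[W_i\mid \cd_0, (X_i)]=1$, the weights are pairwise uncorrelated given the full data, and $W_i\le 1/b^K$ by Assumption \ref{assump:LabellingRuleOverlap}. These give that the variance of $N^{-1/2}\sum_i W_i f(Z_i)$ is bounded by a constant times $\e\|f(Z)\|^2$, and that the mean is zero when $\e f(Z)=0$.

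The core step is to define, for fixed $h\in\mathbb{R}^d$, the local criterion
\[
G_N(h)=\sum_{i=1}^N W_i\big(l_{\vartheta_0+h/\sqrt N}(Z_i)-l_{\vartheta_0}(Z_i)\big)
\]
and show the quadratic approximation $G_N(h)=h^\tran N^{-1/2}\sum_i W_i\dot l_{\vartheta_0}(Z_i)+\tfrac12 h^\tran H h+o_p(1)$ pointwise in $h$. Write $r_i(h)=l_{\vartheta_0+h/\sqrt N}(Z_i)-l_{\vartheta_0}(Z_i)-h^\tran\dot l_{\vartheta_0}(Z_i)/\sqrt N$. Then $\sum_i W_i r_i(h)=\sum_i r_i(h)+\sum_i (W_i-1)r_i(h)$. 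The first sum is the i.i.d.\ piece. By the Taylor expansion in (v) its mean is $N\e r(h)\to \tfrac12 h^\tran Hh$. Its variance is at most $N\e r^2(h)\to0$, using the Lipschitz bound (iv) together with differentiability (iii) and dominated convergence, exactly as in the standard proof. The second sum is the genuinely new piece. Conditioning on the full data $(V_i)_{i\le N}$, it has conditional mean zero and conditional second moment at most $C\sum_i r_i(h)^2$, because the weights are uncorrelated with bounded variance. So it is $O_p\big((N\e r^2)^{1/2}\big)=o_p(1)$.

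The argument is then completed with convexity. $G_N$ is convex in $h$, so pointwise convergence of convex functions to a random strictly convex quadratic upgrades to uniform convergence on compacts (the convexity lemma of Hjort--Pollard). Their argmin lemma then gives $\sqrt N(\hat\vartheta-\vartheta_0)=\argmin G_N=-H^{-1}N^{-1/2}\sum W_i\dot l+o_p(1)$, provided the linear term is $O_p(1)$, which was established above. This argument also yields consistency, so no separate consistency step is needed. Note that uniqueness in (ii) and interiority guarantee that the local reparametrization is valid.

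The step I expect to be the main obstacle is justifying the conditional mean-zero and uncorrelatedness properties of $W_i-1$ given the full data $(V_i)$. The labelling rules $\LabelRulePi{k}$ depend on all units' previously measured $\xmiss$, so the $W_i$ are not independent across $i$. I would handle this by recursive tower-property conditioning on $\cd_{k-1}$ and the $V$'s, as described in Section \ref{sec:IntroduceMIPWWeights}. The idea is to take each wave's increment $c_k(W_i^{(k)}-\text{previous survival factor})$ as a martingale difference across waves; within a wave the terms are conditionally independent Bernoullis across $i$. This would show that $\e[(W_i-1)(W_j-1)\mid V]=0$ for $i\ne j$ and give the needed bound on $\e[(\sum_i (W_i-1)a_i)^2\mid V]$ for any $V$-measurable $a_i$. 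I would first verify this for $K=2$ and then induct on $K$.
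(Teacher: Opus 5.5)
Your proposal is correct, but it reaches the result by a genuinely different route from the paper.

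The paper proves consistency separately (Proposition \ref{prop:ComponentEstimatorConsistency}). It then bounds the modulus of continuity of the weighted empirical process via a wave-by-wave symmetrization (ghost uniforms and Rademacher signs, conditional on earlier waves) plus Dudley chaining (Lemma \ref{lemma:ModContinuityExpectationBound}). That bound drives a peeling argument for the $\sqrt{N}$ rate (Theorem \ref{theorem:RootNConsistency}). It also serves as the stochastic equicontinuity that upgrades pointwise convergence of $Z_N(h)$ to uniform convergence on $\|h\|_2\le r$ (Lemma \ref{lemma:UniformPConvToZero}), and a van der Vaart-style comparison then finishes.

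You replace all of this with convexity. Your only probabilistic inputs are pointwise first and second moments of $\sum_i W_i r_i(h)$ and of $U_N\equiv N^{-1/2}\sum_i W_i\dot l_{\vartheta_0}(Z_i)$. The convexity lemma then supplies uniformity on compacts for free, and the Hjort--Pollard argmin lemma yields consistency, rate and linearization at once.

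Two points of precision:
\begin{itemize}
\item The convexity lemma needs a deterministic limit, so apply it to $G_N(h)-h^\tran U_N$. This is still convex, has limit $\tfrac12 h^\tran H h$, and is exactly how Hjort and Pollard set up their basic corollary.
\item The global step uses convexity of $\Theta$ (implicit in Assumption \ref{assump:SmoothEnoughForAsymptoticLineariaty}(i)) together with interiority of $\vartheta_0$.
\end{itemize}

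The obstacle you anticipate is already resolved in the paper. The unconditional Propositions \ref{prop:WeightsEnableUnbiasedMeanEst} and \ref{prop:NoCovarianceMultiwaveIPW} give $\e[\sum_i W_i r_i(h)]=N\e[r(h)]$ and $\var(\sum_i W_i r_i(h))\le b^{-2K}N\e[r^2(h)]$ directly (Corollary \ref{cor:PropertiesOfWeightedAveragees}(III)). So you need neither the split into $\sum_i r_i+\sum_i(W_i-1)r_i$ nor uncorrelatedness conditional on $V$, though the latter does hold by the tower argument you sketch.

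On the trade-off: your proof is far shorter and avoids chaining entirely, but convexity of $\theta\mapsto l_\theta$ is essential to it. The paper uses convexity only for consistency and for $H\succ 0$. Its local analysis therefore works for merely locally Lipschitz losses and would extend to nonconvex M-estimators whenever consistency can be shown otherwise. Lemma \ref{lemma:ModContinuityExpectationBound} is also a reusable maximal inequality for these dependent weighted processes.
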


 As a corollary, in the setting of Theorem \ref{theorem:AsymptoticLInearMestsStacked}, $\htPTD$ is consistent for $\thetaTarg$ and asymptotically linear, provided that the $\hat{\Omega}$ converges in probability as $N \to \infty$. \begin{corollary}\label{cor:PTDEstAsymptoticallyLinear}
    Under \samplingSchemeName{} and Assumptions \ref{assump:IIDUnderlyingData}--\ref{assump:SmoothEnoughForAsymptoticLineariaty}, and if the tuning matrix $\hat{\Omega} \xrightarrow{p} \Omega$ as $N \to \infty$, $\htPTD \xrightarrow{p} \thetaTarg$ as $N \to \infty$ and \begin{equation}\label{eq:AsympLinearExpansion}
    \sqrt{N} \big( \htPTD -\thetaTarg \big) = -\frac{1}{\sqrt{N}} \sum_{i=1}^N \Big( W_i  H_{\thetaTarg}^{-1} \dot{l}_{\thetaTarg}(\goodX_i) +(1-W_i) \Omega H_{\gammaTarg}^{-1} \dot{l}_{\gammaTarg}(\proxyX_i) \Big) + o_p(1).\end{equation}
\end{corollary}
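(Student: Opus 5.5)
The corollary follows from Theorem \ref{theorem:AsymptoticLInearMestsStacked} by algebra together with Slutsky-type arguments. We write
$$\sqrt{N}(\htPTD-\thetaTarg) = \sqrt{N}(\htc-\thetaTarg) + \hat{\Omega}\,\sqrt{N}\big[(\hga-\gammaTarg)-(\hgc-\gammaTarg)\big].$$
This decomposition adds and subtracts $\hat{\Omega}\gammaTarg$, and it works because $\hga$ and $\hgc$ target the same $\gammaTarg$.

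For consistency, the last statement of Theorem \ref{theorem:AsymptoticLInearMestsStacked} gives $\sqrt{N}(\htc-\thetaTarg)=O_p(1)$, $\sqrt{N}(\hgc-\gammaTarg)=O_p(1)$ and $\sqrt{N}(\hga-\gammaTarg)=O_p(1)$. Since $\hat{\Omega}\xrightarrow{p}\Omega$, we have $\hat{\Omega}=O_p(1)$. Dividing the display by $\sqrt{N}$ then shows $\htPTD-\thetaTarg=O_p(N^{-1/2})=o_p(1)$.

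For asymptotic linearity, we first replace $\hat{\Omega}$ by $\Omega$ in the display. The error made is $(\hat{\Omega}-\Omega)\sqrt{N}[(\hga-\gammaTarg)-(\hgc-\gammaTarg)]$, which is $o_p(1)\cdot O_p(1)=o_p(1)$. Next we substitute the three linear expansions from Theorem \ref{theorem:AsymptoticLInearMestsStacked}. Their remainders are $o_p(1)$, and multiplying a remainder by the fixed matrix $\Omega$ keeps it $o_p(1)$. This yields
$$-\frac{1}{\sqrt{N}}\sum_{i=1}^N\Big(W_iH_{\thetaTarg}^{-1}\dot{l}_{\thetaTarg}(\goodX_i) + \Omega H_{\gammaTarg}^{-1}\dot{l}_{\gammaTarg}(\proxyX_i) - W_i\Omega H_{\gammaTarg}^{-1}\dot{l}_{\gammaTarg}(\proxyX_i)\Big)+o_p(1).$$
Grouping the last two terms as $(1-W_i)\Omega H_{\gammaTarg}^{-1}\dot{l}_{\gammaTarg}(\proxyX_i)$ gives \eqref{eq:AsympLinearExpansion}.

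The only step that needs care is the replacement of $\hat{\Omega}$ by $\Omega$. It uses the $O_p(1)$ bounds from the theorem, not merely the expansions, because $\hat{\Omega}$ may be data dependent and hence correlated with the weights. The product argument still goes through, since an $o_p(1)$ matrix times an $O_p(1)$ vector is $o_p(1)$ regardless of any dependence between them. No further obstacle arises.
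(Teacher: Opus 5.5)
Your proposal is correct and follows essentially the same route as the paper: the same decomposition, replacing $\hat{\Omega}$ by $\Omega$ at an $o_p(1)\cdot O_p(1)$ cost, and substituting the stacked expansion from Theorem~\ref{theorem:AsymptoticLInearMestsStacked}. The only cosmetic difference is that you derive consistency from the $\sqrt{N}$-rate bounds, whereas the paper obtains it directly from the consistency of the component estimators (Proposition~\ref{prop:ComponentEstimatorConsistency}).
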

This conclusion about $\htPTD$ is the aspect of this subsection of most methodological interest. We will next leverage this result to prove asymptotic normality, but first we pause to comment on the technical challenges behind the above result.

Theorem~\ref{theorem:AsymptoticLInearMestsStacked} and its proof make up a major technical contribution of this paper. Asymptotic linearity is common in i.i.d. settings, but extending it to the adaptive setting is delicate.
In particular, in Appendix \ref{sec:ControlLocalEmpiricalProcessResults} we use similar symmetrization and chaining arguments for empirical processes as seen in canonical texts such as \cite{Vershynin_2018} and \cite{VDVAndWellnerTextbook}. In contrast to these, however, we decompose empirical weighted processes into multiple terms, each of which can be controlled by particular symmetrization arguments that condition on data from prior waves. 
Our modified symmetrization and chaining arguments also leverage the boundedness of the weights (Assumption \ref{assump:LabellingRuleOverlap}; Equation \eqref{eq:InverseProbabilityWeights}) to establish that expected fluctuations of the relevant local empirical processes are still controlled at a sufficiently fast rate in our dependency regime.

\subsection{Asymptotic normality}

We next turn to asymptotic normality.
Under nonadaptive sampling, an asymptotic linear expansion such as the one in Corollary \ref{cor:PTDEstAsymptoticallyLinear} together with the central limit theorem (CLT) and Slutsky's lemma immediately establish asymptotic normality. However, in \samplingSchemeName{,} the standard multivariate CLT cannot be applied because the terms being averaged are not statistically independent. We introduce additional assumptions that are sufficient to ensure that $\htPTD$ is asymptotically normal.

\begin{assumption}[Regularity conditions for establishing asymptotic normality]\label{assump:CLTRgularityConditions} \
\begin{enumerate}[(i)] 

    \item \textit{Symmetric labelling strategies.}
    For each $k \in \{0\} \cup [K]$ and permutation $\tau \in \mathsf{S}_N$, let $\cd_k^{(\tau)} \equiv \big( (I_{\tau(i)}^{(j)},I_{\tau(i)}^{(j)} \cdot \xmiss_{\tau(i)})_{j=1}^{k},\proxyX_{\tau(i)} \big)_{i=1}^N$ denote a permutation of the data that is available after the $k$th wave according to the permutation $\tau$.
    Recall $\LabelStrategy{k}(\cd_{k-1}) \in \mathcal{P}$ is a mapping $\xProxySpace \to (0,1)$.
    For each $k \in [K]$ and permutation  $\tau \in \mathsf{S}_N$, $\LabelStrategy{k}(\cd_{k-1})= \LabelStrategy{k}(\cd_{k-1}^{(\tau)})$.  
    \item \textit{$L^1$ convergence of labelling rules}. For each $k \in [K]$ there exists a measurable function $\LimitingLabelRulePi{k} : \xProxySpace \to [b,1-b]$, such that $\lim\limits_{N \to \infty} \e \big[ \vert \LabelRulePi{k}(\proxyX_1) -\LimitingLabelRulePi{k}(\proxyX_1) \vert \big]=0.$ 
    \item \textit{Bounded moments of order greater than 2.} There exists an $\eta_*>0$, such that for each $j \in [d]$, $\e \big[ \big| [\dot{l}_{\thetaTarg}(\goodX)]_j \big|^{2+\eta_*} \big] < \infty$ and $\e \big[ \big| [\dot{l}_{\gammaTarg}(\proxyX)]_j \big|^{2+\eta_*} \big] < \infty$.
\end{enumerate}
\end{assumption}

 Assumption \ref{assump:CLTRgularityConditions}(i) can be ensured by an investigator who is choosing a labelling strategy, and will hold if the adaptive labelling strategy only gives preference based on the values of the data that was previously observed rather than the particular index of each sample. We remark that Assumption \ref{assump:CLTRgularityConditions}(i) is not strictly necessary and can be removed if Assumption \ref{assump:CLTRgularityConditions}(ii) is strengthened to state that $\lim_{N \to \infty} \sup_{i \in [N]} \e \big[ \vert \LabelRulePi{k}(\proxyX_i) -\LimitingLabelRulePi{k}(\proxyX_i) \vert \big]=0$. Assumption \ref{assump:CLTRgularityConditions}(ii) is a fairly common condition requiring that the labelling rules from each wave converge as $N \to \infty$. If we are in an asymptotic regime where the number of waves remains fixed as $N \to \infty$ the labelling strategies can be carefully chosen so that this assumption holds (e.g., using parametric or consistent nonparametric approaches to learn a good labelling rule). Notably, the $L^1$ convergence of the labelling rule can happen at an arbitrarily slow rate, while other theoretical results in the adaptive experiment literature (e.g., \cite{Hahn2011AdaptiveExperimentalDesign,HarrisonBatchAdaptiveDMLCausal,LikeHarrisonsPaperAndCausalButAboutOutcomeCollection}) assume particular rates of convergence.

To present a formula for the asymptotic variance of $\htPTD$, it is convenient to define \begin{equation}\label{eq:PiBarProd1tok}
    \LimitingLabelRulePi{1:k}(\argxProxy) \equiv \LimitingLabelRulePi{k}(\argxProxy) \prod_{j=1}^{k-1} \big( 1-  \LimitingLabelRulePi{j}(\argxProxy) \big) \quad \text{for } k \in [K], \argxProxy \in \xProxySpace. 
\end{equation} The above quantity can be thought of as a limiting probability of a sample with cheap-to-measure data $\argxProxy$ being selected for labelling in the $k$th wave, but not in previous waves. Define also \begin{equation}\label{eq:GradientAsympCovFormulae}
\begin{aligned}
    \Sigma_{11} & \equiv  \sum_{k=1}^K c_k^2 \e \Big[ \frac{ \dot{l}_{\thetaTarg}(\goodX) [  \dot{l}_{\thetaTarg}(\goodX) ]^\tran}{\LimitingLabelRulePi{1:k}(\proxyX)} \Big], \quad  \Sigma_{12} \equiv  \sum_{k=1}^K c_k^2 \e \Big[ \frac{ \dot{l}_{\thetaTarg}(\goodX) [  \dot{l}_{\gammaTarg}(\proxyX) ]^\tran}{\LimitingLabelRulePi{1:k}(\proxyX)} \Big], \quad \Sigma_{13} \equiv  \e \big[ \dot{l}_{\thetaTarg}(\goodX) [\dot{l}_{\gammaTarg}(\proxyX)]^\tran \big],  \\  
    \Sigma_{22} & \equiv \sum_{k=1}^K c_k^2 \e \Big[ \frac{ \dot{l}_{\gammaTarg}(\proxyX) [  \dot{l}_{\gammaTarg}(\proxyX) ]^\tran}{\LimitingLabelRulePi{1:k}(\proxyX)} \Big], \quad  
      \text{and} \quad \Sigma_{33} \equiv \e \big[ \dot{l}_{\gammaTarg}(\proxyX) [\dot{l}_{\gammaTarg}(\proxyX)]^\tran \big]. 
\end{aligned}
\end{equation} 
For any fixed tuning matrix $\Omega \in \mathbb{R}^{d \times d}$,
the asymptotic variance is then\begin{equation}\label{eq:AsympVarPTDFormula}
\begin{split}
    \Sigma^{\PTDSuperScriptAcr}(\Omega) & \equiv H_{\thetaTarg}^{-1} \Sigma_{11} H_{\thetaTarg}^{-1} + \Omega H_{\gammaTarg}^{-1} \big( \Sigma_{22}-\Sigma_{33} \big) H_{\gammaTarg}^{-1} \Omega^\tran 
\\ & \quad + H_{\thetaTarg}^{-1} (\Sigma_{13}-\Sigma_{12}) H_{\gammaTarg}^{-1} \Omega^\tran +  \big(  H_{\thetaTarg}^{-1} (\Sigma_{13}-\Sigma_{12}) H_{\gammaTarg}^{-1} \Omega^\tran \big)^\tran.
\end{split}
\end{equation}

\begin{theorem}\label{theorem:CLT_PTDEstimator}
Under \samplingSchemeName{} and Assumptions \ref{assump:IIDUnderlyingData}--\ref{assump:CLTRgularityConditions}, $$\text{if} \quad \hat{\Omega} \xrightarrow{p} \Omega \quad \text{then} \quad \sqrt{N} (\htPTD-\thetaTarg) \xrightarrow{d} \mathcal{N} \big(0,\Sigma^{\PTDSuperScriptAcr}(\Omega) \big) \quad \text{as } N \to \infty.$$
\end{theorem}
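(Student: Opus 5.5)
The plan is to start from the asymptotic linear expansion in Corollary~\ref{cor:PTDEstAsymptoticallyLinear}. By Slutsky's lemma it suffices to show that
$$T_N \equiv \frac{1}{\sqrt{N}} \sum_{i=1}^N \big( W_i a_i + (1-W_i) b_i \big) \xrightarrow{d} \mathcal{N}\big(0,\Sigma^{\PTDSuperScriptAcr}(\Omega)\big),$$
where $a_i \equiv H_{\thetaTarg}^{-1}\dot{l}_{\thetaTarg}(\goodX_i)$ and $b_i \equiv \Omega H_{\gammaTarg}^{-1}\dot{l}_{\gammaTarg}(\proxyX_i)$. By Cramér--Wold we may fix $t\in\mathbb{R}^d$ and study the scalar sum $t^\tran T_N$. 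Write $G_i \equiv a_i - b_i$, so that $T_N = N^{-1/2}\sum_i a_i + N^{-1/2}\sum_i (W_i-1)G_i$.

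\textbf{Martingale decomposition of the weights.} Define $S_i^{(0)}\equiv 1$ and $S_i^{(k)} \equiv \prod_{j=1}^{k} (1-I_i^{(j)})/(1-\LabelRulePi{j}(\proxyX_i))$. Then $W_i^{(k)} = S_i^{(k-1)} I_i^{(k)}/\LabelRulePi{k}(\proxyX_i)$. Let $\mathcal{F}_{k}$ be the sigma-field generated by $(\datvecraw_i)_{i=1}^N$ and $\cd_k$. Conditional on $\mathcal{F}_{k-1}$, the pairs $(I_i^{(k)})_i$ are independent Bernoulli variables. Hence both $\Delta_i^{(k)}\equiv W_i^{(k)}-S_i^{(k-1)}$ and $S_i^{(k)}-S_i^{(k-1)}$ have conditional mean zero, and they are conditionally independent across $i$. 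Telescoping gives $W_i^{(k)}-1 = \Delta_i^{(k)} + \sum_{j<k}(S_i^{(j)}-S_i^{(j-1)})$. Regrouping by wave, $(W_i-1) = \sum_{k=1}^K \xi_i^{(k)}$, where $\xi_i^{(k)}$ is an $\mathcal{F}_k$-measurable combination of $\Delta_i^{(k)}$ and $S_i^{(k)}-S_i^{(k-1)}$ with weights $c_k$ and $\sum_{m>k}c_m$. Each $\xi_i^{(k)}$ is mean zero given $\mathcal{F}_{k-1}$. We therefore get $t^\tran T_N = B_0 + \sum_{k=1}^K B_k$, where $B_0 = N^{-1/2}\sum_i t^\tran a_i$ is an i.i.d.\ sum by Assumption~\ref{assump:IIDUnderlyingData}, and $B_k = N^{-1/2}\sum_i \xi_i^{(k)} t^\tran G_i$ is, given $\mathcal{F}_{k-1}$, a sum of independent mean-zero terms. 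All weights are bounded by a constant depending only on $b$ and $K$ (Assumption~\ref{assump:LabellingRuleOverlap}).

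\textbf{Blockwise conditional CLT.} For each $k\ge 1$ I will show that $\e[\exp(\mathrm{i}s B_k)\mid\mathcal{F}_{k-1}] \xrightarrow{p} \exp(-s^2 v_k/2)$ for a deterministic $v_k$. This uses the Lyapunov CLT for triangular arrays applied conditionally. The Lyapunov ratio is controlled by the bounded weights together with Assumption~\ref{assump:CLTRgularityConditions}(iii): $N^{-1-\eta_*/2}\sum_i |t^\tran G_i|^{2+\eta_*} \to 0$ in probability. The remaining requirement is that the conditional variance $V_k \equiv N^{-1}\sum_i \e[(\xi_i^{(k)})^2\mid\mathcal{F}_{k-1}](t^\tran G_i)^2$ converges in probability to $v_k$. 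Given these limits, iterated conditioning from $k=K$ down to $1$, with dominated convergence since characteristic functions are bounded, yields $\e[\exp(\mathrm{i}s\,t^\tran T_N)] \to \exp(-s^2(v_0+\sum_k v_k)/2)$. Here $v_0=\var(t^\tran a_1)$.

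\textbf{Main obstacle: convergence of $V_k$.} The conditional variance has the form $N^{-1}\sum_i g_k\big((\LabelRulePi{j}(\proxyX_i), I_i^{(j)})_{j< k},\LabelRulePi{k}(\proxyX_i)\big)\,h(\datvecraw_i)$. Here $g_k$ is bounded and Lipschitz in the labelling probabilities on $[b,1-b]$, and $h(\datvecraw)=(t^\tran G)^2$ is integrable. I will argue by induction on $k$, in three steps.
\begin{enumerate}[(a)]
\item Replace each $\LabelRulePi{j}$ by $\LimitingLabelRulePi{j}$. The resulting error is bounded by $N^{-1}\sum_i |\LabelRulePi{j}(\proxyX_i)-\LimitingLabelRulePi{j}(\proxyX_i)|\,h(\datvecraw_i)$. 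After truncating $h$ at level $M$, the symmetry in Assumption~\ref{assump:CLTRgularityConditions}(i) makes $(\LabelRulePi{j}(\proxyX_i),\datvecraw_i)$ exchangeable in $i$. So the expectation of the truncated error is at most $M\,\e|\LabelRulePi{j}(\proxyX_1)-\LimitingLabelRulePi{j}(\proxyX_1)| \to 0$ by Assumption~\ref{assump:CLTRgularityConditions}(ii). The tail beyond $M$ is handled uniformly via $\e[h\mathbbm{1}\{h>M\}]$.
\item Remove the dependence on the past indicators $I_i^{(j)}$ one wave at a time. Conditional on $\mathcal{F}_{j-1}$ these are independent, so the centered average has conditional variance $O_p(N^{-1})$ after truncation. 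This replaces each indicator by its conditional mean, which is $\LimitingLabelRulePi{j}$ up to the error already controlled in (a).
\item Apply the ordinary weak law of large numbers to the resulting i.i.d.\ average.
\end{enumerate}
Finally I will check that $\sum_{k=0}^K v_k = t^\tran\Sigma^{\PTDSuperScriptAcr}(\Omega)t$. This is an algebraic identity: equivalently, $\e[(W_1-1)^2 \mid \datvecraw_1]$ converges to $\sum_k c_k^2/\LimitingLabelRulePi{1:k}(\proxyX_1) - 1$, since the $W_i^{(k)}$ for different $k$ have disjoint supports. Expanding $\var(a)+\e[(W-1)^2 G G^\tran]$ together with the cross term then reproduces \eqref{eq:AsympVarPTDFormula} via \eqref{eq:GradientAsympCovFormulae}.
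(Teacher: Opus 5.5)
Your proof is correct, but it takes a genuinely different route from the paper. The paper never conditions wave by wave. It couples $W_i$ with i.i.d.\ weights $\bar{W}_i$, built from the same uniforms $U_i^{(k)}$ but with the limiting rules $\LimitingLabelRulePi{k}$. It then shows $N^{-1/2}\sum_i (W_i-\bar{W}_i)f(\datvecraw_i)\xrightarrow{p}0$: cross-covariances between distinct units vanish by the tower-property structure of Proposition~\ref{prop:NoCovarianceMultiwaveIPW} and Lemma~\ref{lemma:CovWeightInepAndStandardIsZero}, so the variance collapses to $\var\big((W_1-\bar{W}_1)f(\datvecraw_1)\big)$, which tends to $0$ because $\e|W_1-\bar{W}_1|\to 0$. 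Finally it applies the ordinary i.i.d.\ multivariate CLT, with covariance $\var\big(\bar{W}_1 g_1+(1-\bar{W}_1)g_2\big)$.

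Your martingale-difference decomposition, with a conditional CLT per wave, avoids the coupling and the $W$--$\bar{W}$ covariance lemma. It also makes the asymptotically independent per-wave variance contributions explicit. The cost is more bookkeeping in proving $V_k\xrightarrow{p}v_k$; your steps (a)--(c) use exchangeability and Assumption~\ref{assump:CLTRgularityConditions}(ii) just as the paper does. Your limit $\sum_{k\ge 0}v_k=\e[(t^\tran a)^2]+\e\big[(\sum_k c_k^2/\LimitingLabelRulePi{1:k}(\proxyX)-1)(t^\tran G)^2\big]$ agrees with the paper's expression for $t^\tran\Sigma^{\PTDSuperScriptAcr}(\Omega)t$. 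There is no limiting cross term between $B_0$ and the $B_k$, so the only cross terms are those from expanding $GG^\tran$.

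To make the conditional step rigorous without subsequence arguments, and to cover a degenerate $v_k=0$ (where a normalized Lyapunov ratio is undefined), you can bound the conditional characteristic function directly:
$$\big|\e[e^{\mathrm{i}sB_k}\mid\mathcal{F}_{k-1}]-e^{-s^2V_k/2}\big|\le |s|^{2+\eta_*}\sum_{i=1}^N\e\big[|Y_i|^{2+\eta_*}\mid\mathcal{F}_{k-1}\big]+\tfrac{s^4}{8}\sum_{i=1}^N\sigma_i^4 .$$
Here $Y_i\equiv N^{-1/2}\xi_i^{(k)}t^\tran G_i$, $\sigma_i^2\equiv\e[Y_i^2\mid\mathcal{F}_{k-1}]$, and $\eta_*\le 1$ without loss of generality. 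Both terms are $o_p(1)$ by the bounded weights and the i.i.d.\ moment bounds on $G_i$.
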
 The main proof strategy is to introduce weights that are both (i) statistically independent of each other and (ii) asymptotically close enough to the statistically dependent weights $(W_i)_{i=1}^N$. Up to $o_p(1)$ terms, the asymptotic linear expansion in Corollary \ref{cor:PTDEstAsymptoticallyLinear} is then rewritten in terms of these independent weights, enabling the use of the standard multivariate CLT.

\subsection{Confidence interval validity}

To establish asymptotic validity of the confidence intervals using the previous asymptotic normality result, it remains to establish that the asymptotic covariance estimator at \eqref{eq:PTDAsympVarEstimator} is consistent. 
To ensure consistent asymptotic covariance matrix estimation, we introduce the following assumptions on the loss function $l_{\theta}(\cdot)$.

\begin{assumption}[Regularity conditions for consistent variance estimation]\label{assump:SmoothEnoughForConsitentVarEst} \

\begin{enumerate}[(i)] 
    \item $\theta \mapsto l_{\theta}(\goodX)$ is continuously twice differentiable at $\theta=\thetaTarg$ almost surely, while $\theta \mapsto l_{\theta}(\proxyX)$ is continuously twice differentiable at $\theta=\gammaTarg$ almost surely. Moreover, the loss is smooth enough such that 2nd derivatives and expectations can be swapped so that $\e[\ddot{l}_{\thetaTarg}(\goodX)]=H_{\thetaTarg}$ and $\e[\ddot{l}_{\gammaTarg}(\proxyX)]=H_{\gammaTarg}$.
    \item For each $j,j' \in [d]$, there exist functions $L_{jj'}, \tilde{L}_{jj'} : \mathbb{R}^p \to [0,\infty]$ and neighborhoods $\mathcal{B}_{jj'}$ and $\tilde{\mathcal{B}}_{jj'}$ of $\thetaTarg$ and $\gammaTarg$, respectively, such that $\vert [\ddot{l}_{\theta}(\argx)]_{jj'} \vert \leq L_{jj'}(\argx)$ for all $\theta \in \mathcal{B}_{jj'}$ and $\argx \in \xSpace$, $\vert [\ddot{l}_{\theta}(\argxProxy)]_{jj'} \vert \leq \tilde{L}_{jj'}(\argxProxy)$ for all $\theta \in \tilde{\mathcal{B}}_{jj'}$ and $\argxProxy \in \xProxySpace$, and $\e[L_{jj'}(\goodX)+\tilde{L}_{jj'}(\proxyX)] <\infty$. 
    \item For each $j,j' \in [d]$, $\e \big[ \big( [\ddot{l}_{\thetaTarg}(\goodX)]_{jj'} \big)^2 + \big( [\dot{l}_{\thetaTarg}(\goodX)]_{j} \big)^4 \big] < \infty$ 
    and $\e \big[ \big( [\dot{l}_{\gammaTarg}(\proxyX)]_{j} \big)^4 \big] < \infty$.
    
\end{enumerate}
\end{assumption} The above assumptions are all smoothness and bounded moment conditions on the first and second derivatives of $\theta \mapsto l_{\theta}(X)$ and $\theta \mapsto l_{\theta}(\tilde{X})$ in neighborhoods of $\thetaTarg$ and $\gammaTarg$, respectively. In Appendix \ref{sec:ConsistencyOfCovarianceMatrixComponents}, we show that they are sufficient (alongside Assumptions \ref{assump:IIDUnderlyingData}--\ref{assump:CLTRgularityConditions}) for proving $$(\hat{\Sigma}_{11}, \hat{\Sigma}_{12}, \hat{\Sigma}_{22}, \hat{\Sigma}_{13}, \hat{\Sigma}_{33}, \hat{H}_{\thetaTarg},\hat{H}_{\gammaTarg}) \xrightarrow{p} (\Sigma_{11}, \Sigma_{12}, \Sigma_{22}, \Sigma_{13}, \Sigma_{33}, H_{\thetaTarg},H_{\gammaTarg}) \quad \text{as} \quad N \to \infty.$$ 

To be in the setting of Theorem \ref{theorem:CLT_PTDEstimator}, we need to choose a tuning matrix $\hat{\Omega}$ such that $\hat{\Omega} \xrightarrow{p} \Omega$ for some $\Omega \in \mathbb{R}^{d \times d}$. 
A fairly general way to choose a converging tuning matrix is to take it to be some function of the empirical matrices defined at \eqref{eq:HessianAndSigmaEstimators}:
\begin{equation}\label{eq:HatOmegaContinuousFunctionChoice}
    \hat{\Omega} = f \big(\hat{\Sigma}_{11}, \hat{\Sigma}_{12}, \hat{\Sigma}_{22}, \hat{\Sigma}_{13}, \hat{\Sigma}_{33}, \hat{H}_{\thetaTarg},\hat{H}_{\gammaTarg} \big) \quad \text{ for some } f : (\mathbb{R}^{d \times d})^7 \to \mathbb{R}^{d \times d}.
 \end{equation} 
 We thus combine Theorem \ref{theorem:CLT_PTDEstimator} and a consistency result for $(\hat{\Sigma}_{11}, \hat{\Sigma}_{12}, \hat{\Sigma}_{22}, \hat{\Sigma}_{13}, \hat{\Sigma}_{33}, \hat{H}_{\thetaTarg},\hat{H}_{\gammaTarg})$ to get the following result, which establishes asymptotically valid confidence intervals for $\thetaTarg$.

\begin{proposition}\label{prop:AsymptoticallyValidCIs}
    Suppose the data are collected via \samplingSchemeName{,} that Assumptions \ref{assump:IIDUnderlyingData}--\ref{assump:SmoothEnoughForConsitentVarEst} hold and that $\htPTD$ is tuned using a tuning matrix $\hat{\Omega}$ given by \eqref{eq:HatOmegaContinuousFunctionChoice} for some $f: (\mathbb{R}^{d \times d})^7 \to \mathbb{R}^{d \times d}$ that is continuous at $(\Sigma_{11}, \Sigma_{12}, \Sigma_{22}, \Sigma_{13}, \Sigma_{33}, H_{\thetaTarg},H_{\gammaTarg} )$. Then, as $N \to \infty$,
    $\hat{\Sigma}^{\PTDSuperScriptAcr} \xrightarrow{p} \Sigma^{\PTDSuperScriptAcr} \big(f(\Sigma_{11}, \Sigma_{12}, \Sigma_{22}, \Sigma_{13}, \Sigma_{33}, H_{\thetaTarg},H_{\gammaTarg} )\big) \equiv \Sigma_f^{\PTDSuperScriptAcr},$ where $\hat{\Sigma}^{\PTDSuperScriptAcr}$ and $\Sigma^{\PTDSuperScriptAcr}(\cdot)$, are defined at Equations \eqref{eq:PTDAsympVarEstimator} and \eqref{eq:AsympVarPTDFormula}.
     Moreover, for each $\alpha \in (0,1)$ and each $j \in [d]$ such that $[\Sigma_f^{\PTDSuperScriptAcr}]_{jj} \neq 0$, $$\lim_{N \to \infty} \mathbb{P} \big( [\thetaTarg]_j \in \mathcal{C}_j^{(1-\alpha)} \big) = 1-\alpha,$$ where $\mathcal{C}_j^{(1-\alpha)}$ denotes the $(1-\alpha)$-confidence interval for the $j$th component of $\thetaTarg$ defined at \eqref{eq:CIDef}.

\end{proposition}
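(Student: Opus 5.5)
The plan is to reduce the proposition to two facts: consistency of the seven plug-in matrices, and Theorem \ref{theorem:CLT_PTDEstimator}. The argument then closes with the continuous mapping theorem and Slutsky's lemma.

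\textbf{Step 1: the plug-in matrices.} First I establish
\[
(\hat{\Sigma}_{11}, \hat{\Sigma}_{12}, \hat{\Sigma}_{22}, \hat{\Sigma}_{13}, \hat{\Sigma}_{33}, \hat{H}_{\thetaTarg},\hat{H}_{\gammaTarg}) \xrightarrow{p} (\Sigma_{11}, \Sigma_{12}, \Sigma_{22}, \Sigma_{13}, \Sigma_{33}, H_{\thetaTarg},H_{\gammaTarg}).
\]
Each estimator has the form $N^{-1}\sum_i W_i^m g_{\hat\theta}(\cdot)$ with $m\in\{0,1,2\}$. I split each into two pieces: the same average with the estimated parameters $\htc,\hga$ replaced by $\thetaTarg,\gammaTarg$, plus a remainder.

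The remainder is handled as follows.
\begin{itemize}
\item By Theorem \ref{theorem:AsymptoticLInearMestsStacked}, $\htc\to\thetaTarg$ and $\hga\to\gammaTarg$ in probability.
\item By Assumption \ref{assump:LabellingRuleOverlap}, $W_i\le b^{-K}$, so the weights are bounded.
\item For the Hessian terms, continuity of $\ddot l_\theta$ near $\thetaTarg$ and the envelope $L_{jj'}$ from Assumption \ref{assump:SmoothEnoughForConsitentVarEst}(i)--(ii) give that $N^{-1}\sum_i \sup_{\|\theta-\thetaTarg\|\le\delta}|\ddot l_\theta-\ddot l_{\thetaTarg}|$ has expectation tending to $0$ as $\delta\to0$, by dominated convergence. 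This kills the remainder.
\item For the gradient outer products, I write $\dot l_{\htc}=\dot l_{\thetaTarg}+\ddot l_{\tilde\theta}(\htc-\thetaTarg)$ by the mean value theorem. The cross terms are then bounded by Cauchy--Schwarz, using the fourth-moment and envelope conditions together with $\|\htc-\thetaTarg\|=o_p(1)$.
\end{itemize}

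The main piece is a weighted average at fixed parameters with dependent weights. I handle it as in the proof of Theorem \ref{theorem:CLT_PTDEstimator}, by coupling $W_i$ to independent weights built from $\LimitingLabelRulePi{k}$ and the same uniforms $U_i^{(k)}$.
\begin{itemize}
\item The coupled weights differ from $W_i$ only when some $U_i^{(k)}$ falls between $\LabelRulePi{k}(\proxyX_i)$ and $\LimitingLabelRulePi{k}(\proxyX_i)$, or through the denominators.
\item By exchangeability (Assumption \ref{assump:CLTRgularityConditions}(i)) and the $L^1$ convergence in (ii), $N^{-1}\sum_i|W_i^m-\bar W_i^m|\,|g(\cdot)|\to0$ in $L^1$. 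Uniform integrability of $|g|$ comes from the moment bounds, so I truncate $g$ and let the truncation level grow.
\item The law of large numbers for the i.i.d. coupled terms then gives the limits.
\end{itemize}

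The limiting constants come from a short calculation. Conditioning on $\proxyX$ gives $\e[\bar W_i\mid \goodX_i,\proxyX_i]=1$. Also $\e[\bar W_i^2\mid\cdot]=\sum_k c_k^2/\LimitingLabelRulePi{1:k}(\proxyX_i)$, because $\bar W^{(k)}\bar W^{(k')}=0$ for $k\neq k'$: a unit is labelled first in at most one wave. This reproduces $\Sigma_{11},\Sigma_{12},\Sigma_{22},\Sigma_{13},\Sigma_{33}$. For example, $\hat\Sigma_{13}$ involves $W_i$ linearly and so tends to $\e[\dot l\dot l^\tran]$. I expect this step — controlling the dependent-weight averages for unbounded integrands with only $L^1$ convergence of the labelling rules — to be the main obstacle. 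If a lemma of this type is already proved in the appendix for the CLT, I will invoke it directly.

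\textbf{Step 2: assembling the variance estimator.} Since $f$ is continuous at the limit point, $\hat\Omega\xrightarrow{p}\Omega_f\equiv f(\Sigma_{11},\dots,H_{\gammaTarg})$ by the continuous mapping theorem. Matrix inversion is continuous at the invertible $H_{\thetaTarg}$ and $H_{\gammaTarg}$, and \eqref{eq:PTDAsympVarEstimator} is a continuous function of $(\hat\Omega,\hat\Sigma_{\cdot\cdot},\hat H)$ that mirrors \eqref{eq:AsympVarPTDFormula}. Hence $\hat\Sigma^{\PTDSuperScriptAcr}\xrightarrow{p}\Sigma^{\PTDSuperScriptAcr}(\Omega_f)=\Sigma_f^{\PTDSuperScriptAcr}$.

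\textbf{Step 3: coverage.} Because $\hat\Omega\xrightarrow{p}\Omega_f$, Theorem \ref{theorem:CLT_PTDEstimator} gives $\sqrt N(\htPTD_j-[\thetaTarg]_j)\xrightarrow{d}\mathcal N(0,[\Sigma_f^{\PTDSuperScriptAcr}]_{jj})$. When $[\Sigma_f^{\PTDSuperScriptAcr}]_{jj}>0$, Slutsky's lemma gives $\sqrt N(\htPTD_j-[\thetaTarg]_j)/\sqrt{\hat\Sigma^{\PTDSuperScriptAcr}_{jj}}\xrightarrow{d}\mathcal N(0,1)$. The probability that this ratio lies in $[-z_{1-\alpha/2},z_{1-\alpha/2}]$ therefore tends to $1-\alpha$, because the normal limit has a continuous distribution function. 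This event is exactly $[\thetaTarg]_j\in\mathcal C_j^{(1-\alpha)}$.
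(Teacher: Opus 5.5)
Your Steps 2 and 3 coincide with the paper's argument, and your handling of the fixed-parameter averages is a genuinely different and valid route. The paper controls $N^{-1}\sum_i W_i^2 g(\datvecraw_i)$ through vanishing covariances of squared weights (Proposition \ref{prop:Vanishing2ndOrderCov} and Corollary \ref{cor:SimplifyHigherOrderExpectationsOneTerm}) plus Chebyshev. You instead couple to the i.i.d.\ weights $\bar W_i$. Combining Lemma \ref{lemma:PropertiesOfIndependentVersionsOfWeights}(I), Corollary \ref{cor:GapWithIndepWeightsDisappearsInExpecation}, the bound $|W_i^2-\bar W_i^2|\le 2b^{-K}|W_i-\bar W_i|$, and truncation of $g$ gives $\e\bigl[N^{-1}\sum_i |W_i^m-\bar W_i^m|\,|g(\datvecraw_i)|\bigr]\to 0$ under only $\e|g(\datvecraw)|<\infty$. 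The ordinary LLN then applies. This bypasses the squared-weight machinery of Appendix \ref{sec:SquaredWeightsPropertiesAppendix} and needs weaker moments. Your Hessian remainder (supremum over a shrinking ball, envelope $L_{jj'}$, dominated convergence) is also fine, and it avoids Corollary \ref{cor:SameDistWithEstimators}.

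The gap is in the gradient remainder. After your mean-value step, the cross remainder in, e.g., $\hat\Sigma_{12}$ is bounded by $b^{-2K}\|\htc-\thetaTarg\|_2\,N^{-1}\sum_i \bar L(\goodX_i)\,|[\dot{l}_{\gammaTarg}(\proxyX_i)]_{j'}|$, with $\bar L$ built from the $L_{jj''}$. The quadratic remainder in $\hat\Sigma_{11}$ is bounded by $b^{-2K}\|\htc-\thetaTarg\|_2^2\,N^{-1}\sum_i \bar L^2(\goodX_i)$. Assumption \ref{assump:SmoothEnoughForConsitentVarEst}(ii) only makes $L_{jj''}$ integrable, so your Cauchy--Schwarz step needs $\e[L_{jj''}^2(\goodX)]<\infty$, or at least integrability of $L_{jj''}(\goodX)\,|[\dot{l}_{\gammaTarg}(\proxyX)]_{j'}|$. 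Neither is assumed. The fourth-moment condition in (iii) concerns $\dot l_{\thetaTarg}$, not the envelope. Consistency $\|\htc-\thetaTarg\|_2=o_p(1)$ alone cannot offset empirical averages that need not be $O_p(1)$. The mean value theorem is itself unsupported. Assumption \ref{assump:SmoothEnoughForConsitentVarEst} gives continuity of $\dot l_\theta(x)$ and $\ddot l_\theta(x)$ at $\thetaTarg$ and a bound on Dini derivatives on $\mathcal{B}_{jj'}$. It does not give differentiability of $\theta\mapsto\dot l_\theta(x)$ along the segment from $\thetaTarg$ to $\htc$.

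The fix is not to differentiate the gradient at all, and to reuse your Hessian device with the square-integrable Lipschitz envelope from Assumption \ref{assump:SmoothEnoughForAsymptoticLineariaty}(iv). For $\theta\in\mathcal{L}_{\thetaTarg}$ one has $\|\dot l_\theta(x)\|_\infty\le M(x)$ (as in Fact \ref{fact:PropertiesOfGradients}), so $\sup_{\|\theta-\thetaTarg\|_2\le\delta}|[\dot l_\theta(x)-\dot l_{\thetaTarg}(x)]_j|^2\le 4M^2(x)$. This supremum tends to $0$ as $\delta\to0$ for a.e.\ $x$ by Assumption \ref{assump:SmoothEnoughForConsitentVarEst}(i). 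Dominated convergence, bounded weights, and Cauchy--Schwarz against the square-integrable $\dot l_{\thetaTarg}$ or $\dot l_{\gammaTarg}$ then make every cross and quadratic remainder $o_p(1)$; the same works for $\hga$ with $\tilde M$. This is essentially the content of Lemmas \ref{lemma:DCT_with_SubsequenceTrick} and \ref{lemma:AsymptoticallyIgnoreEstimationErrorInAverages}, which evaluate at $\htc$ directly via a subsequence argument and exchangeability.
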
 

\section{Choosing sampling rules to increase efficiency}\label{sec:ChoicesToIncreaseEfficiency}

The appeal of two-phase multiwave sampling is that we can choose to sample points that are most informative, increasing efficiency. In particular, sampling in multiple waves allows us to adaptively update our sampling strategy as we acquire more data and better understand which future data is likely to be most beneficial.
We now turn our attention to labelling strategies $\LabelStrategy{k}$ for each $k \in [K]$, discussing choices that lead to improved efficiency.

\subsection{An approximate greedy optimal strategy}\label{sec:DescribeApproximateGreedyOptimalStrategy}

In this subsection, we present a strategy for sampling in each wave that is designed to increase precision. We focus on settings where the investigator is primarily interested in a design that will lead to narrow confidence intervals for the $j$th component of $\thetaTarg$ for some fixed $j \in [d]$. (Other objectives can be considered in our framework but are omitted due to space constraints; see \cite{Recent2PhaseMultiwaveGeneralizedRaking} and \cite{HarrisonBatchAdaptiveDMLCausal} for A-optimality and more general objectives in related settings).  
For simplicity, we develop efficient designs for the case where $c_1,\dots,c_K$ are fixed and prespecified and the tuning matrix $\hat{\Omega}=I_{d \times d}$. 

Fix $j \in [d]$ and with the choice of tuning matrix $\hat{\Omega}=I_{d \times d}$, we consider labelling strategies for minimizing the asymptotic variance of $\htPTD_j$ in a greedy manner. To do this fix $k^* \in [K]$. In the setting of Theorem \ref{theorem:CLT_PTDEstimator}, the asymptotic variance of a version of the estimator $\htPTD_j$ that \textit{only uses the first $k^*$ waves} of data is given by \begin{equation}\label{eq:AsympVarUntunedFirstKstarWaves} \big[\Sigma^{\PTDSuperScriptAcr,(1:k^*)} (I_{d \times d}) \big]_{jj} = \kappa_1^2 \cdot \e \Bigg[ \frac{\big( \big[  H_{\thetaTarg}^{-1} \dot{l}_{\thetaTarg}(\goodX) - H_{\gammaTarg}^{-1} \dot{l}_{\gammaTarg}(\proxyX) \big]_j \big)^2 }{ \LimitingLabelRulePi{k^*}(\proxyX) \prod_{k=1}^{k^*-1} \big(1-\LimitingLabelRulePi{k}(\proxyX) \big)
} \Bigg] + \kappa_2,\end{equation} where $\kappa_1$ and $\kappa_2$ are quantities that do not depend on the function $\LimitingLabelRulePi{k^*}(\cdot)$ (see Appendix \ref{sec:FormulaForGreedyOptimalObjective} for details). After wave $k^*-1$ and prior to wave $k^*$, the labelling strategy $\LabelStrategy{k^*}$ (i.e., the approach for choosing the labelling rule $\LabelRulePi{k^*}$) will impact $\LimitingLabelRulePi{k^*}$ but not $\kappa_1, \kappa_2$ or $\{\LimitingLabelRulePi{k}  \}_{k=1}^{k^*-1}$. As a greedy approach, suppose the investigator chooses a labelling rule $\LabelRulePi{k^*}$ with the goal of approximating the function $\LimitingLabelRulePi{k^*}$ that minimizes the above asymptotic variance expression. 
The function $\bar{\pi}^{(k^*)}: \mathbb{R}^p \to (0,\infty)$ minimizing the asymptotic variance at \eqref{eq:AsympVarUntunedFirstKstarWaves} satisfies \begin{equation}\label{eq:OptimalSolutionForGreedy_IgnoreBounds}   
    \bar{\pi}_{\text{opt}}^{(k^*)} (\proxyX)  \propto \sqrt{\frac{\varrho_j(\proxyX)}{\prod_{k=1}^{k^*-1} \big(1-\LimitingLabelRulePi{k}(\proxyX) \big)} }, \  \text{ where }  \  \varrho_j(\proxyX) \equiv \e \Big[ \big( \big[  H_{\thetaTarg}^{-1} \dot{l}_{\thetaTarg}(\goodX) - H_{\gammaTarg}^{-1} \dot{l}_{\gammaTarg}(\proxyX)  \big]_j \big)^2  \Big| \proxyX \Big].
\end{equation} See Appendix \ref{sec:TractableModifToOptimizationProblem} for a derivation that follows similar arguments seen in the Active Inference literature \citep{ActiveInferencePaper,ChenRecentSurrogatePoweredInferenceMultiwave}. 

An approximate greedy optimal labelling rule for wave $k^*$ approximates $\bar{\pi}_{\text{opt}}^{(k^*)}$, while satisfying budget and overlap constraints. This can be achieved by initially setting $$\hat{\bar{\pi}}_{\text{opt,init}}^{(k^*)} (\argxProxy) =  \sqrt{\hat{\varrho}_j(\argxProxy)} \cdot  \prod_{k=1}^{k^*-1} \big(1-\LabelRulePi{k}(\argxProxy) \big)^{-1/2} \quad \text{ for each } \argxProxy \in \xProxySpace,$$ where $\hat{\varrho}_j(\cdot)$ is some estimate for the function $\varrho_j(\cdot)$ that is learned using the data $\cd_{k^*-1}$. (In the next subsection, we discuss approaches for estimating $\varrho_j(\cdot)$). The initial labelling rule $\hat{\bar{\pi}}_{\text{opt,init}}^{(k^*)}$ is not guaranteed to meet the budget and overlap constraints, so we present a post-hoc modification to satisfy these constraints in Appendix \ref{sec:EnforceBudgetAndOverlapConstraints} that is used in our experiments.

For $k^*=1$, direct estimation of the function $\varrho_j(\cdot)$ at \eqref{eq:OptimalSolutionForGreedy_IgnoreBounds} is not feasible prior to the first wave of Phase II because of lack of $\xmiss$ observations. Instead, for the first wave we propose setting $ \LabelRulePiDoubleArg{1}{0}(\argxProxy) =n_{\text{targ}}^{(1)}/N$ for each $\argxProxy \in \xProxySpace$ so that the first stage is a uniform random sample that broadly explores the covariate space. 

\begin{remark} Strategic choices for the initial (first wave) labelling rule that leverage both $\cd_0$ and prior information have been considered elsewhere in the literature (e.g., \cite{ChenLumleyMultiwaveInTwoPhase,SkylerYashEmmanuelRecentArXiv_LLMEvalMeans}). In contrast to our post-hoc enforcement of an overlap constraint, other works in the adaptive experiment and sampling literature give approaches for directly minimizing the objective in \eqref{eq:AsympVarUntunedFirstKstarWaves} subject to an overlap constraint \citep{HarrisonBatchAdaptiveDMLCausal} or a constraint that $\bar{\pi}^{(k^*)}(\argxProxy) \in [0,1]$ \citep{wang2025maximinoptimalapproachsampling}. Building on these approaches may lead to further efficiency gains but is beyond the scope of the present work.  
\end{remark}

\subsection{Plug-in estimates}

We next discuss a way to estimate $\varrho_j(\cdot)$ after each wave of Phase II. Throughout this subsection we will fix $k^* \in [K] \setminus \{1\}$, and discuss estimating $\varrho_j(\cdot)$ after wave $k^*-1$ has been completed.  
While estimates of $\gammaTarg$ and $H_{\gammaTarg}$ are available from the Phase I data (see $\hga$ and $\hat{H}_{\gammaTarg}$ defined at \eqref{eq:PTDComponentEstimatorsDEF} and \eqref{eq:HessianAndSigmaEstimators}), $\thetaTarg$ and $H_{\thetaTarg}$ are not known precisely and must be estimated from samples collected in earlier waves of Phase II. To do this define \begin{equation}\label{eq:WeightsAggFirstKstar}\cw_i^{(k^*-1)} \equiv \frac{1}{ \sum_{k'=1}^{k^*-1} c_{k'} }  \sum_{k=1}^{k^*-1} c_k W_i^{(k)} \quad \text{for each } i \in [N]\end{equation}
to be the multiwave inverse probability weights (analogous to those in \eqref{eq:aggregated_Wk}), that only considers the first $k^*-1$ waves. Prior to the start of wave $k^*$ we use estimators of $\thetaTarg$ and $H_{\thetaTarg}$ given by $\hat{\theta}^{\completeSampNotation,(k^*-1)} = \argmin_{\theta \in \Theta} \bigl\{ \sum_{i=1}^N  \cw_i^{(k^*-1)}  l_{\theta}(\goodX_i) \bigr\}$ and $\hat{H}_{\thetaTarg}^{(k^*-1)} = N^{-1} \sum_{i=1}^N  \cw_i^{(k^*-1)} \ddot{l}_{\hat{\theta}^{\completeSampNotation,(k^*-1)}}(\goodX_i)$. Under $k^*-1$ Phase II waves, these estimators correspond to the consistent estimators $\htc$ and $\hat{H}_{\thetaTarg}$ studied in Sections \ref{sec:settingAndEstimator} and \ref{sec:AsympTheory}. Thus, defining  $$\cy_i^{(k^*-1)} = \Big( \Big[  (\hat{H}_{\thetaTarg}^{(k^*-1)})^{-1} \dot{l}_{\hat{\theta}^{\completeSampNotation,(k^*-1)}}(\goodX_i) - \hat{H}_{\gammaTarg}^{-1} \dot{l}_{\hga}(\proxyX_i) \Big]_j \Big)^2 \quad \text{for} \quad i \in [N],$$ note that $\cy_i^{(k^*-1)}$ approximates the quantity inside the conditional expectation in the formula for $\varrho_j$ (see \eqref{eq:OptimalSolutionForGreedy_IgnoreBounds}) and that $\cy_i^{(k^*-1)}$ can be evaluated for each $i \in \mathcal{I}^{(k^*-1)}$ where $\mathcal{I}^{(k^*-1)} \subseteq [N]$ denotes the samples in which the label $\xmiss_i$ was collected in a wave prior to wave $k^*$. 

We now consider two approaches to estimating $\varrho_j(\cdot)$.

\subsubsection{Estimating optimal labelling rule with machine learning}\label{sec:EstimateOptimalStrategyWithML}

First, we use machine learning to estimate the function $\varrho_j(\cdot)$, which is defined as a conditional mean. In particular, we train a machine learning model that predicts $\cy_i^{(k^*-1)}$ values using $\proxyX_i$ values trained on the sample $(\cy_i^{(k^*-1)}, \proxyX_i)_{i \in \mathcal{I}^{(k^*-1)}}$, resulting in a learned function that we define as $\hat{\varrho}_j(\cdot)$. In this manuscript we consider using k-nearest neighbors to learn a function that predicts $\cy_i^{(k^*-1)}$ from $\proxyX_i$, since it is a universal function approximator and our theory allows for the use of methods even with a slow rate of convergence.

\subsubsection{Estimating the greedy optimal labelling probabilities in each stratum}\label{sec:EstimateOptimalLabellingProbabilityInEachStrata}

The second approach we consider is to stratify the space $\xProxySpace$ and assign our estimate of $\varrho_j(\argxProxy)$ to be constant within each stratum. In particular,
partition $\xProxySpace$ into $L$ prespecified, disjoint strata $\mathcal{S}_1,\dots,\mathcal{S}_L$. We use a coarse estimator for $\varrho_j(\cdot)$ given by the piecewise function $$\hat{\varrho}_j^{(\text{strat})}(\argxProxy) = \sum_{r=1}^L \mathbbm{1} \{ \argxProxy \in \mathcal{S}_r \} \cdot \Bigg(\frac{ N^{-1} \sum_{i=1}^N \cw_i^{(k^*-1)} \cdot \mathbbm{1} \{ \proxyX_i \in \mathcal{S}_r \} \cdot  \cy_i^{(k^*-1)}   }{ N^{-1} \sum_{i=1}^N \mathbbm{1} \{ \proxyX_i \in \mathcal{S}_r \} } \Bigg) \quad \text{for } \argxProxy \in \xProxySpace,$$ that is derived in Appendix \ref{sec:FormulaForEstimatedStrataSpecificLabellingRule}. Using $\hat{\varrho}_j^{(\text{strat})}$ corresponds to a strategy commonly seen in the two-phase multiwave sampling literature in which strata are preselected and the number of samples in each stratum is determined using Neyman allocation to minimize the variance of estimated influence function terms~\citep[e.g.,][]{ChenLumleyMultiwaveInTwoPhase}.

\section{Numerical experiments}\label{sec:Simulations}

In this section we present five experiments using three different datasets and one synthetic dataset to validate the coverage guarantees in Proposition \ref{prop:AsymptoticallyValidCIs} and to empirically study the efficiency gains that can be obtained by the strategies discussed in Section \ref{sec:ChoicesToIncreaseEfficiency}. All experiments focus on M-estimation tasks: either the estimand is a quantile or a population regression coefficient from a linear or logistic regression model. In all experiments we consider the efficiency gains from the approximately greedy optimal labelling strategy discussed in Section \ref{sec:ChoicesToIncreaseEfficiency}.  We compare to the baseline of a single-wave Predict-Then-Debias approach where Phase II is a uniform random sample. In these experiments, we present the results using two different approaches for estimating the sampling rule in \eqref{eq:OptimalSolutionForGreedy_IgnoreBounds}: nearest-neighbors estimation (Section~\ref{sec:EstimateOptimalStrategyWithML}) and stratification (Section~\ref{sec:EstimateOptimalLabellingProbabilityInEachStrata}). We also vary the number of waves $K$. We describe the experimental setup and datasets in more detail next.

\subsection{Overview of experiments}

In each of five experiments, we treat the empirical distribution of a large fully observed dataset of size $\nSuper$ as the superpopulation distribution $\datvecrawDist$. The target parameter $\thetaTarg$ is defined as the empirical risk minimizer computed using all $\nSuper$ observations.
For each experiment and for each $K \in \{1+1,1+5,1+25\}$, we conduct $1{,}000$ Monte Carlo simulations of \samplingSchemeName{}. In each simulation, a Phase I sample of size $N$ is drawn with replacement from the $\nSuper$ observations. Phase II uses a total expected labelling budget of $n_{\text{targ}}$. The first (“explore”) wave uses i.i.d. Bernoulli sampling with expected size $n_{\text{targ}}^{(1)}$. The remaining budget is split evenly across the subsequent $K-1$ adaptive waves: for $k \geq 2$, the expected number of labels collected in wave $k$ is set to
$n_{\text{targ}}^{(k)} = ( n_{\text{targ}} - n_{\text{targ}}^{(1)})/(K-1).$  For $k \in [K]$, we set $c_k = n_{\text{targ}}^{(k)}/ \big(\sum_{k'=1}^K n_{\text{targ}}^{(k')} \big)$ (see Appendix \ref{sec:ChoiceOfCk} for discussion on the choice of $c_k$).

After Phase II, we compute our estimator $\htPTD$ and construct 90\% confidence intervals using \eqref{eq:CIDef}. We compare results from adaptive sampling to a baseline in which Phase II consists of an i.i.d. Bernoulli sample with the same expected total labelling budget $n_{\text{targ}}$. In the baseline, the Predict-Then-Debias estimator is used. In all cases, we use the asymptotically optimal tuning matrix among the class of diagonal matrices,
derived in Appendix \ref{sec:AsymptoticallyOptimalDiagTuningMatrix}.
Table \ref{table:ExperimentSummary} summarizes the datasets, estimands, and values of $\nSuper$, $N$, $n_{\text{targ}}$, and $n_{\text{targ}}^{(1)}/n_{\text{targ}}$ used in each experiment. Additional implementation details are provided in Appendix \ref{sec:SimulationDetails}.

\paragraph{Evaluation metrics.}
For each method and value of $K$, we summarize performance across the $1{,}000$ Monte Carlo simulations using three metrics. First, we compute the root mean squared error (RMSE) of the estimator relative to the superpopulation value $\thetaTarg$. Second, we report empirical coverage of the nominal 90\% confidence intervals. Third, to quantify efficiency gains relative to uniform sampling with the same expected number of labels $n_{\text{targ}}$, we compute an effective sample size ratio based on squared confidence interval width ratios, adjusting for stochastic differences in the number of labelled samples. Results aggregated across simulations are in Figure \ref{fig:ResultsAggregated}. To assess the normal approximation in Theorem \ref{theorem:CLT_PTDEstimator}, Figure \ref{fig:TestNormality} displays the empirical distribution of $\htPTD$ across simulations. In experiments with $d>1$, evaluation metrics focus on the component of interest indicated by Column 3 in Table \ref{table:ExperimentSummary}.

\begin{table}[hbt!]
\caption{\small Summary of experiments. The penultimate column gives the expected number of samples collected in Phase II. The final column gives the proportion of the labelling budget that was allocated to the first wave, with the remaining budget split evenly among the subsequent waves.
}
\label{table:ExperimentSummary}
\centering
\scriptsize
\begin{tabular}{c c c c c c r r c} 
\toprule
Exp \#  & Dataset & Estimand & $\nSuper$ & $N$   & $n_{\text{targ}}$ & $n_{\text{targ}}^{(1)}/n_{\text{targ}}$ \\ 
\midrule
1  & Synthetic &  Linear regression coeff. for treatment & $3 \times 10^6$ &  $20{,}000$ & $2{,}000$ &  1/4 \\ [1ex] 
2   & Housing Price &  Linear regression coeff. for nightlight & $46{,}418$ &  $10{,}000$ & $1{,}000$ & 1/4 \\ [1ex] 
3  & AlphaFold &  Logistic regression coeff. for interaction & $10{,}802$ &  $20{,}000$ & $5{,}000$ & 1/3 \\ [1ex] 
4   & Tree cover &  Logistic regression coeff. for population & $67{,}968$  & $10{,}000$ & $1{,}000$ &   1/4 \\ [1ex] 
5  & Tree cover &  $0.75$th quantile of tree cover & $67{,}968$ & $10{,}000$ & $1{,}000$ & 1/4 \\ [1ex] 
 \bottomrule
\end{tabular}
\smallskip

\end{table}
\vspace{-0.9cm}

\subsection{Experiments and Datasets}\label{sec:Datasets}

We consider one synthetic experiment and four data-based experiments spanning linear regression, logistic regression, and quantile M-estimation tasks. Variables not explicitly stated to be measured in Phase II are simulated to be observed in Phase I. Additional details on the datasets and the construction of Phase I strata are provided in Appendix \ref{sec:DatasetsAdditionalDetails}.

\paragraph{Synthetic experiment.}
We generated a large synthetic superpopulation of size $\nSuper = 3 \times 10^6$ consisting of an outcome $Y$, a continuous covariate $Z_{\text{cov}}$, a binary treatment $Z_{\text{trt}}$, and an error-prone proxy $\tilde{Z}_{\text{trt}}$. The estimand is the population regression coefficient for $Z_{\text{trt}}$ in a linear regression of $Y$ on $(Z_{\text{cov}}, Z_{\text{trt}})$. The data-generating process was constructed so that (i) regression residuals were heteroskedastic with variance depending strongly on the Phase I covariates, and (ii) prediction errors in $\tilde{Z}_{\text{trt}}$ were differential (i.e., $Z_{\text{trt}} \centernot{\indep} Y \giv (Z_{\text{cov}}, \tilde{Z}_{\text{trt}})$). In Phase I, only $(Y, Z_{\text{cov}}, \tilde{Z}_{\text{trt}})$ were observed, while $Z_{\text{trt}}$ was measured in Phase II.

\paragraph{Housing price experiment.}
We used a dataset of $46{,}418$ grid cells containing housing price, income, nightlight intensity, and road length from \cite{MOSAIKSPaper}. The estimand is the population regression coefficient for nightlight intensity in a linear regression of housing price on income, nightlight intensity, and road length. In Phase I, proxy measures of nightlight intensity and road length derived from daytime satellite imagery were observed; corresponding gold-standard measurements were only collected in Phase II.

\paragraph{AlphaFold experiment.}
We used $10{,}802$ protein regions with indicators for acetylation and ubiquitination and an outcome indicating whether the region is internally disordered \citep{bludau2022structural}. The estimand is the interaction coefficient in a logistic regression of disorder status on acetylation and ubiquitination. In Phase I, predictions of disorder status were observed; corresponding gold-standard labels were only collected in Phase II. 

\paragraph{Forest cover experiment.}
We used $67{,}968$ grid cells containing percent tree cover, population, and elevation taken from the previously mentioned data source \citep{MOSAIKSPaper}. The estimand is the logistic regression coefficient for population when regressing a binary forest cover indicator (based on a 10\% tree cover threshold) on elevation and population. In Phase I, proxy measures of forest cover and population were observed; corresponding gold-standard measurements were only collected in Phase II. 

\paragraph{Tree cover quantile experiment.}
Using the same dataset, we estimate the 0.75-quantile of percent tree cover. In Phase I, only proxy predictions of tree cover were observed; gold-standard measurements were collected in Phase II. 
Quantile estimation required weighted quantile estimation and weighted density estimation at the target quantile (see Appendix \ref{sec:DatasetsAdditionalDetails} for implementation details).

\begin{figure}[!hbt]
    \centering
    \includegraphics[width=0.975\linewidth]{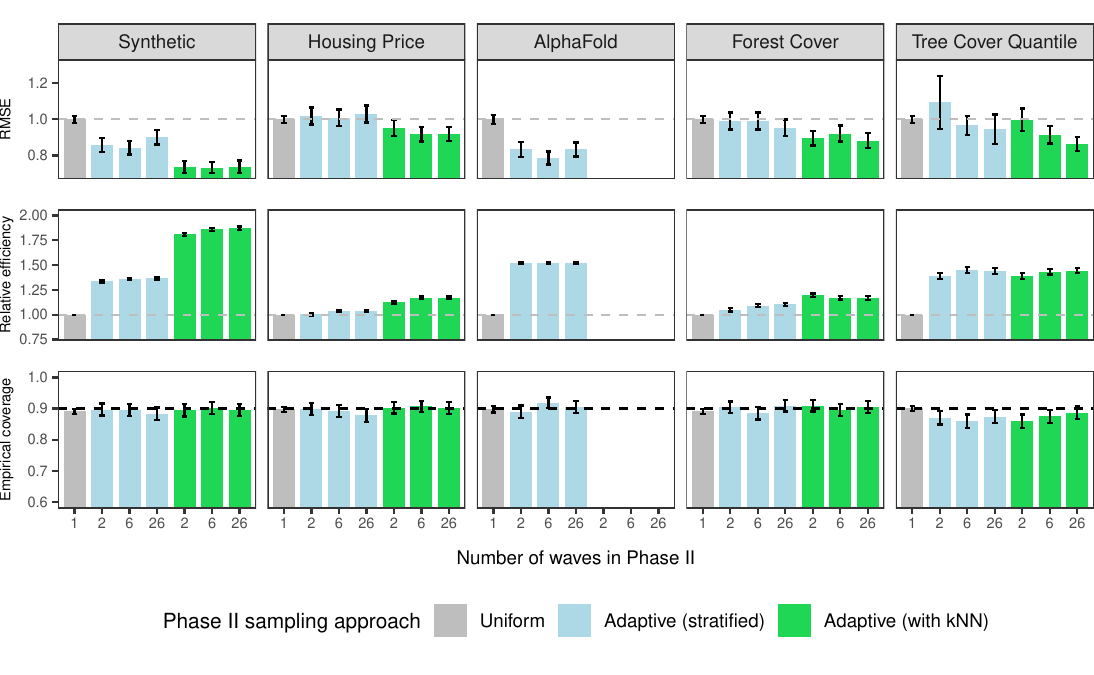}
    \caption{\small Comparison of two-phase multiwave sampling strategies. The baseline strategy (grey) involves one wave of uniform random sampling and is compared to adaptive sampling with either $2$, $6$, or $26$ waves in Phase II and with either the stratified approach described in Section \ref{sec:EstimateOptimalLabellingProbabilityInEachStrata} (blue) or a kNN-based approach described in Section \ref{sec:EstimateOptimalStrategyWithML} (green) for approximating a greedy optimal labelling rule. For all sampling strategies considered, the (Multiwave) Predict-Then-Debias estimator with the optimal diagonal tuning matrix (Appendix \ref{sec:AsymptoticallyOptimalDiagTuningMatrix}) was used. Each column corresponds to a different experiment. The first row shows the RMSEs calculated across the $1{,}000$ simulations after being rescaled by a constant (separately for each experiment) so that the uniform sampling baseline would have an RMSE of $1$. The second row shows the efficiency relative to the uniform sampling baseline averaged across the $1{,}000$ simulations. The third row gives the empirical coverage of the 90\% confidence intervals across the $1{,}000$ simulations with the dashed line giving the nominal coverage. The error bars give $\pm 2$ standard errors for the evaluation metric being plotted. Note that the uniform sampling baseline has smaller standard errors because $6{,}000$ simulations of the baseline were conducted. In the AlphaFold experiments, we do not consider kNN approaches due to the feature space being discrete.
    }\label{fig:ResultsAggregated}
\end{figure}

\subsection{Results}

Across experiments, adaptive implementations of \samplingSchemeName{} consistently reduced RMSE relative to uniform Phase II sampling and yielded effective sample size gains of up to approximately 1.8 (Figure \ref{fig:ResultsAggregated}). In most settings, the machine learning–based approximation to the greedy optimal labelling rule outperformed the stratified approximation, reflecting information loss induced by coarse or suboptimal discretizations of $\xProxySpace$. The primary exception was the AlphaFold experiment, where $\xProxySpace$ is discrete and stratification does not incur approximation error. In the tree cover quantile experiment, stratification also performed comparably to the machine learning approach, as the Phase I variables were bounded and one-dimensional.

Increasing the number of waves from $2$ to $6$ yielded modest efficiency gains in some experiments, while further increasing to $26$ waves provided no discernible additional benefit. This pattern is consistent with prior findings \citep{YangDiaoCook22} suggesting limited returns beyond a small number of adaptive waves. 

Efficiency gains were largest in the synthetic, AlphaFold interaction, and tree cover quantile experiments. In the synthetic experiment, heteroskedastic residual variance and differential proxy error induced substantial heterogeneity in the informativeness of the samples, favoring adaptive labelling schemes. In the AlphaFold experiment, imbalance in one binary covariate made weighted sampling particularly beneficial for estimating the interaction coefficient. In the quantile estimation task, adaptive prioritization of observations likely to lie near the target quantile could have substantially improved precision.

\begin{figure}[!hbt]
    \centering
    \includegraphics[width=0.95\linewidth]{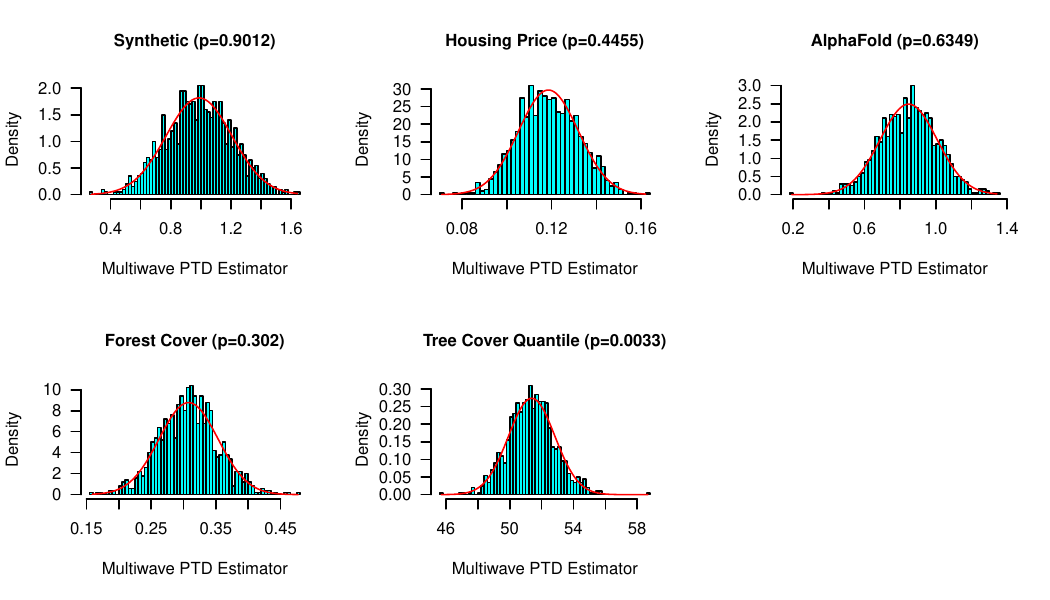}
    \caption{\small Histograms of the Multiwave Predict-Then-Debias estimator under two-phase multiwave sampling across $1{,}000$ simulations. Each histogram corresponds to a different experiment (Table \ref{table:ExperimentSummary}) and we depict the results when the number of waves $K=6$. For all experiments, with the exception of the AlphaFold one, the kNN-based approach for estimating the greedy optimal labelling rule (Section \ref{sec:EstimateOptimalStrategyWithML}) was used. The histogram y-axis is rescaled to the density scale with the red lines giving a Gaussian distribution with a mean and variance matching the empirical ones from the simulation. The panel titles give the p-value from a Shapiro-Wilks test for normality, rounded to 4 decimal places.}\label{fig:TestNormality}
\end{figure}

Across experiments and adaptive sampling strategies considered, empirical coverage of the nominal 90\% confidence intervals was close to 90\% (Figure \ref{fig:ResultsAggregated}). Slight undercoverage was observed in the quantile experiment, likely reflecting the difficulty of accurately estimating probability densities (required for Hessian estimation) with a small number of samples. Consistent with Theorem \ref{theorem:CLT_PTDEstimator}, the sampling distributions of $\htPTD$ were approximately normal across experiments (Figure \ref{fig:TestNormality}). The most noticeable deviations from normality occurred in the quantile setting, again likely due to slower convergence in density estimation.

\section{Related work}\label{sec:relatedWork}

This paper is most closely related to the literature on two-phase multiwave sampling. It also closely relates to the Active Inference literature, as our focus is on settings where estimates of expensive-to-measure variables are initially available and gold standard labels are subsequently collected adaptively. Our work also has some similarities with the literature on adaptive experiments and bandits, where some papers provide theoretical inference guarantees that account for the statistical dependencies induced by these adaptive designs.

\subsection{Two-phase sampling literature}

In two-phase sampling designs, an investigator collects cheap-to-measure variables $X^{\textnormal{(I)}}$ from $N$ samples (or subjects) in Phase I, and then in Phase II measures a more expensive variable $X^{\textnormal{(II)}}$ on a subset of the $N$ samples. Two-phase sampling results in a dataset of the form $(I_i,X_i^{\textnormal{(I)}},I_i X_i^{\textnormal{(II)}})_{i=1}^N$ where $I_i \in \{0,1 \}$ is an indicator of whether a measurement of $X_i^{\textnormal{(II)}}$ was collected in Phase II. In principle, the second phase of sampling can be conducted in order to minimize the variance of an estimator that will ultimately be deployed. However, as we discuss next, a central challenge in the two-phase sampling literature is that the optimal sampling rule is not known a priori and estimating it can induce statistical dependencies across the samples $(I_i,X_i^{\textnormal{(I)}},I_i X_i^{\textnormal{(II)}})_{i=1}^N$ that complicate statistical inference. 

\paragraph{Optimal sampling design in Phase II:}

Theoretically optimal sampling strategies in Phase II have been derived for a variety of estimands~\citep[e.g.,][]{NeymanAllocation1934,ReillyTheoryForOptimalSamplingAssumesKnownParamsOrPilot,GilbertOptimalDesignInTwoPhaseSamplingUsesPilot,OptimalDesign2Phase_TaoZengLin2020JASA,wang2025maximinoptimalapproachsampling}. 
While these works derive theoretically optimal Phase II sampling schemes, a key practical challenge is that the formulas for optimal Phase II sampling probabilities include unknown parameters (sometimes infinite dimensional ones) that can only be estimated using jointly measured $X^{\textnormal{(I)}}$ and $X^{\textnormal{(II)}}$ data. To address this issue, some papers advocate that investigators estimate these unknown parameters using historical data or a small pilot sample of $(X^{\textnormal{(I)}},X^{\textnormal{(II)}})$ data that is used for determining an optimal Phase II sampling scheme but is discarded for statistical inference purposes (to avoid distribution shift or statistical dependency issues). Other works suggest merging internal pilot data with adaptively collected data from a subsequent optimal sample in order to increase power, while arguing that ignoring the introduced statistical dependencies for simplicity leads to only minor statistical inference issues \citep{InternalPilotThenFuse,wang2025maximinoptimalapproachsampling}. The latter study notes the complications involved in rigorously accounting for i.i.d. violations, stating that it constitutes future work.

\paragraph{Breaking Phase II into multiple waves (two-phase multiwave sampling):} An alternative to using pilot or historical data to estimate an optimal Phase II sampling rule is to break Phase II into multiple waves. Before each wave, all previously collected data is used to estimate unknown parameters in an efficient sampling rule for the upcoming wave (notably, after the first wave, paired internal samples of $(X^{\textnormal{(I)}},X^{\textnormal{(II)}})$ are available for determining sampling probabilities in subsequent waves). This sampling procedure is often referred to as two-phase multiwave sampling. Optimal designs for this sampling procedure have been studied for a variety of two-phase estimators of regression coefficients (\cite{AdaptiveSamplingInTwoPhaseDesignsMcIsaacAndCook}; \cite{ChenLumleyMultiwaveInTwoPhase,ChenAndLumleyPaper2}; \cite{YangDiaoCook22}; \cite{Recent2PhaseMultiwaveGeneralizedRaking}).
Two-phase multiwave sampling designs have been implemented in electronic health record studies \citep{ShepherdTwoPhaseMultiWave} and in a recent R package \citep{RPackagePaperFor2PhaseMultiwave}, suggesting the practical promise and appeal of these designs. Simulations in \cite{ChenLumleyMultiwaveInTwoPhase} and concurrent work \citep{ChenRecentSurrogatePoweredInferenceMultiwave} study two-phase multiwave sampling designs in our motivating setting of interest in which the Phase I data includes an inexpensive proxy (or estimate) of the expensive variable measured in Phase II.

While the two-phase multiwave sampling literature has developed practically appealing approaches and has demonstrated efficiency gains, we note two major gaps in this literature. First, much of the work in this literature assumes either discrete Phase I data (e.g., \cite{AdaptiveSamplingInTwoPhaseDesignsMcIsaacAndCook,YangDiaoCook22}) or prespecified strata for the Phase I data (e.g., \cite{ChenAndLumleyPaper2}), restricting their search for optimal Phase II sampling strategies to those that assign the same sampling probabilities to all samples within the same stratum. (Concurrent work \citep{ChenRecentSurrogatePoweredInferenceMultiwave} extends beyond these settings drawing on ideas from the Active Inference literature). Second, in two-phase multiwave sampling, the resulting sample $(I_i,X_i^{\textnormal{(I)}},I_i X_i^{\textnormal{(II)}})_{i=1}^N$ is not i.i.d. To our knowledge, none of the works in the multiwave sampling literature provide theoretical guarantees of asymptotic normality or confidence intervals with theoretical guarantees that explicitly account for statistical dependencies. Instead, these works derive optimal sampling strategies under i.i.d. settings, and verify in simulations that the confidence intervals for the resulting weighted estimators attain the nominal coverage. 
We believe that adding theoretical guarantees to this literature can broaden the appeal of two-phase multiwave sampling designs and facilitate more comprehensive investigation of optimal designs (e.g., beyond optimal strata-specific sampling probabilities).

\paragraph{Asymptotic theory for dependency in two-phase sampling:}

Some work in the two-phase sampling literature establishes asymptotic theory for M- or Z-estimators that accounts for certain types of statistical dependence between the Phase II inclusion indicators $I_i$, but such works do not apply to our setting. For example, \cite{BreslowAndWellnerZtheoremCorrection} and \cite{SaegusaAndWellner2013AOS} develop asymptotic theory for two-phase stratified sampling designs where in Phase II a fixed number of samples within each stratum are collected randomly without replacement, inducing particular correlation structures between the indicators $I_i$. More generally, \cite{HanAndWellner2021AOS_AsymptoticsForIPWTypeMestimationUnderVariousSamplingDesigns} develops asymptotic theory under a broad class of sampling designs which may have complex dependency structures. They assume that $(I_1,\dots,I_N) \indep  (X_1^{\textnormal{(II)}},\dots,X_N^{\textnormal{(II)}}) \giv (X_1^{\textnormal{(I)}},\dots,X_N^{\textnormal{(I)}})$.
However, the dependency structure induced in two-phase multiwave sampling does not fit this criterion as observations from earlier waves determine data collection in subsequent waves. Meanwhile, \cite{ZhouEtAl14} proves asymptotic normality of regression estimators for a special case of two-phase two-wave sampling designs in which the second wave of Phase II involves collecting a simple random subsample among the samples that are expected to have extreme values of the missing covariate.

\subsection{Prediction-Powered and Active Inference literature}

Driven by the growing use of pretrained machine learning models, a rapidly growing literature is investigating methods for using machine learning predictions to impute missing data to increase the power of statistical analyses, while using complete, labelled samples to maintain reliability. This literature, which we refer to as the Prediction-Powered Inference literature \citep{OriginalPPI}, develops a family of methods that have origins in the semiparametric literature for missing data~\citep[e.g.,][]{RobinsEtAl94JASA,TsiatisMissingDataSemiparametricChapter} and the survey sampling literature \citep{sarndal2003model,ChenAndChen2000}. See \cite{DemistifyingPPI} for an overview of the Prediction-Powered Inference literature. We now comment in detail on two strands most directly related to our investigation.

\paragraph{Active Inference literature:} In the Active Inference setting~\citep{ActiveInferencePaper}, an investigator has access to a large unlabelled dataset and pretrained prediction model.
They have a limited budget for collecting labels and want to do so in a strategic manner. Under independent Bernoulli labelling, a formula for the (approximately) optimal labelling probabilities can be derived as a function of the features; however, these optimal labelling probabilities must be estimated from the data which can induce complex statistical dependencies that complicate inference. 

This literature has considered a number of approaches to circumvent this challenge of statistical dependency. \cite{ActiveInferencePaper} and \cite{TijanaLLMActiveInferencePaper} propose ordering the unlabelled data, deciding whether or not to collect labels for the current sample based on previous samples in the ordering, and conducting inference using the martingale CLT. In contrast to our work, they do not consider settings in which the decision of whether or not to label early samples in the ordering can be revisited. For mean estimation tasks, \cite{MCLTforAdaptiveMeanEstimationAo24}, \cite{hamiltonRecentActiveInferencePaper}, and \cite{SkylerYashEmmanuelRecentArXiv_LLMEvalMeans} develop approaches that allow for revisiting samples with low indices, although it remains unknown whether their approaches can be extended to M-estimation tasks. \cite{ActiveInferencePaper} also considers a setting where optimal labelling probabilities can be roughly estimated using historical data and are not estimated adaptively (preserving statistical independence). In particular, for some settings and estimators of interest, the uncertainty in the machine learning model is the primary unknown quantity that must be estimated from the data. In LLM prediction settings, \cite{StratifiedPPIFisch24} and \cite{ActiveInferenceMixWithUnifromLiZrnicCandes} suggest that estimates of the uncertainty in each prediction can be directly queried from the LLM.
Focusing on mean estimation tasks and settings with available uncertainty scores from historical models, \cite{chen2025balancedActiveInference} consider balanced sampling constraints, which induces statistical dependencies but allows for further efficiency gains (they also consider more general M-estimation tasks, but do not provide any theory, variance estimates, or confidence intervals for more general M-estimators). 

To our knowledge, the adaptive labelling schemes from this literature that are most similar to our two-phase multiwave sampling setting can be found in \cite{LikeHarrisonsPaperAndCausalButAboutOutcomeCollection} and \cite{ChenRecentSurrogatePoweredInferenceMultiwave}. \cite{LikeHarrisonsPaperAndCausalButAboutOutcomeCollection} develops an approach for estimating average treatment effects in settings where ground truth measurements of the outcome variable are collected adaptively in Phase II. They account for the statistical dependencies by assuming that the labelling probabilities converge to some limiting value at a fast enough rate. \cite{ChenRecentSurrogatePoweredInferenceMultiwave} consider Z-estimation tasks in two-phase multiwave sampling settings where the ground truth labels of a prediction are collected adaptively across multiple waves in Phase II, but they do not present theoretical guarantees under this adaptive sampling regime. (They do provide guarantees in a non-adaptive i.i.d. setting that they also study).

\paragraph{Estimator debiasing approaches:} Much of the Prediction-Powered Inference literature, including the Active Inference literature, focuses on point estimators that minimize a modified, debiased loss function 
\citep{OriginalPPI,PPI++,ActiveInferencePaper}. In this paper, we instead focus on a different class of estimators originating in \cite{ChenAndChen2000}, that involve direct debiasing of the estimators and have been studied in the Prediction-Powered Inference literature \citep{gronsbell2024PromotesCC,PPBootNote,PTDBootstrapPaper,KerriRSEPaper}. As discussed in \cite{PTDBootstrapPaper}, loss debiasing and estimator debiasing approaches result in fundamentally different estimators (although they align for mean estimation tasks). Estimator debiasing affords more flexibility (loss debiasing sometimes results in nonconvexity challenges
) and the ability to lean on existing statistical software. To our knowledge, within the Active Inference literature, only \cite{ChenRecentSurrogatePoweredInferenceMultiwave} consider estimator debiasing approaches; however, their estimator differs from ours in using different inverse probability weights and a sparse tuning approach to accommodate multiple surrogates.

\subsection{Adaptive experiment and multi-arm bandit literatures}

A large literature has developed inference guarantees for online adaptive experiment and multi-arm bandit settings, where martingale techniques naturally apply (e.g., \cite{Hadad_RenormalizeToEnableMCLT,CookMishlerRamdasOnlineStreamingMartingaleCLT,zhangJansonMurphyBanditMestimator2021mestimator,LinWainwrightAOS_sequential}---see \cite{KallusReviewInferenceAfterAdaptiveExperiments} for a review). These online settings differ fundamentally from two-phase multiwave sampling because decisions about early samples cannot be revisited. More closely related to our setting are adaptive experiments with a small number of batches, where martingale approaches do not naturally apply. \cite{Hahn2011AdaptiveExperimentalDesign} establish asymptotic normality of causal effect estimators in two-batch adaptive experiments where the propensity scores are a function of covariates that are coarsened to lie in a finite space. \cite{HarrisonBatchAdaptiveDMLCausal} establish asymptotic normality of a more general class of causal estimators in adaptive experiments with multiple batches, while dropping the coarsened propensity score restrictions. Both of these works assume that the propensity scores converge at particular rates (roughly $o_p(N^{-1/2})$ and $O_p(N^{-1/4})$, respectively) while our theoretical results hold under arbitrarily slow convergence of the analogous labelling rule (Assumption \ref{assump:CLTRgularityConditions}(ii)). Other works study IPW estimators for (differences of) means in two-batch adaptive experiments \citep{ZhimeiTwoWaveDOMExperemintWIPW} and regression estimators in learn-as-you-go trials in which (possibly continuous) treatments are adaptively allocated \citep{NevoLAGO_AOS,LAGOBiometricsRecentGLMExtension}. While these works provide inferential guarantees that account for statistical dependencies, none of them establish asymptotic normality for general M-estimation tasks under the batch adaptive sampling design we study. \newline

\subsubsection*{Acknowledgments}
D.M.K. was supported by the MIT Institute for Data Systems and Society Michael Hammer Postdoctoral Fellowship. D.M.K. and S.B. were supported by a research gift from Generali Group through its research partnership with the Laboratory for Information and Decision Systems at MIT. We thank Alexandra Ferrante and Sherrie Wang for helpful comments on an early version of this work.

\subsubsection*{Data and code availability}
Data and code that can be used to reproduce this study are available at \url{https://github.com/DanKluger/MultiwavePTD}.

\bibliographystyle{apalike}
\bibliography{ActivePTD}

\begingroup
\hypersetup{bookmarksdepth=-2}
\doparttoc 
\faketableofcontents 
\part{} 
\endgroup

\newpage


\appendix

\addcontentsline{toc}{section}{Appendix} 

\part{Appendices} 

\parttoc 
\counterwithin{theorem}{section} 

 \renewcommand{\thefigure}{S\arabic{figure}}
 \setcounter{figure}{0}
 \renewcommand{\thetable}{S\arabic{table}}
 \setcounter{table}{0}

\section*{Guide to appendices}
\addcontentsline{toc}{section}{Guide to appendices}

Below we present an overview of the appendices and how they relate to each other. We then state notation conventions that are used throughout the main text and the appendices. We also present a table of notation that can be used as a navigational aid.

\subsection*{Overview of appendices}
\addcontentsline{toc}{subsection}{Overview of appendices}

\paragraph{Theory and proofs.} Appendices \ref{sec:PropertiesOfWeights}--\ref{sec:AppendixProovingConsistentCovarianceAndValidCIs} prove the theoretical results from Section \ref{sec:AsympTheory} of the main text. They establish asymptotic linearity and normality of the Multiwave Predict-Then-Debias estimator $\htPTD$, as well as asymptotic validity of the corresponding confidence intervals. Specifically, the first half of Appendix \ref{sec:PropertiesOfWeights} (through Appendix \ref{sec:WeightPropertiesFirstOrder}) establishes properties of the multiwave inverse probability weights $W_i$ under Assumption \ref{assump:IIDUnderlyingData}, such as their impact on expectations and covariances. These properties are subsequently used in Appendix \ref{sec:AsymptoticsForMestimators} to prove $\sqrt{N}$-consistency of $(\htc,\hgc, \hga)$ and asymptotic linearity of both $(\htc,\hgc, \hga)$ and $\htPTD$ under Assumptions \ref{assump:IIDUnderlyingData}--\ref{assump:SmoothEnoughForAsymptoticLineariaty} (Theorem \ref{theorem:AsymptoticLInearMestsStacked}; Corollary \ref{cor:PTDEstAsymptoticallyLinear}). Under an additional assumption (Assumption \ref{assump:CLTRgularityConditions}), Appendices \ref{sec:ExchangeabilityPropertiesAppendix} and \ref{sec:L1ConvergenceImplicationsAppendix} establish exchangeability properties of the weights and an asymptotic convergence property of the inverse labelling probabilities. These additional properties of the weights are used alongside the asymptotic linearity result (shown in Appendix \ref{sec:AsymptoticsForMestimators}) to prove asymptotic normality of $\htPTD$ (Theorem \ref{theorem:CLT_PTDEstimator}) in Appendix \ref{sec:CLTProof}. The final subsection of Appendix \ref{sec:PropertiesOfWeights} establishes properties of the \textit{squared} multiwave inverse probability weights. These properties of the squared weights are used alongside additional regularity conditions  (Assumption \ref{assump:SmoothEnoughForConsitentVarEst}) to prove that the asymptotic covariance matrix estimators are consistent in Appendix \ref{sec:AppendixProovingConsistentCovarianceAndValidCIs}. Appendix \ref{sec:AppendixProovingConsistentCovarianceAndValidCIs} concludes with a proof that the confidence intervals are asymptotically valid (Proposition \ref{prop:AsymptoticallyValidCIs}), using consistency of the covariance matrix estimators and asymptotic normality (shown in Appendix \ref{sec:CLTProof}). Appendix \ref{sec:HowToGeneralizeToTwoPhaseMultiwave} briefly discusses how these results can be used to provide theoretical guarantees for M-estimation tasks in more general two-phase multiwave sampling settings (not just proxy-assisted ones).

\paragraph{Additional details on efficiency and simulations.} Appendices \ref{sec:EfficientLabellingStrategyAppendix}--\ref{sec:SimulationDetails} give additional details on strategies that were used to increase efficiency and on our numerical experiments that were omitted from the main text for brevity. In particular, Appendix \ref{sec:EfficientLabellingStrategyAppendix} gives additional details and derivations for the approximate greedy optimal sampling strategy. Appendix \ref{sec:OtherWaysToIncreaseEfficiencyAppendix} discusses choices that can be used to increase efficiency after Phase II is complete and all data has been collected. Appendix \ref{sec:SimulationDetails} gives additional details on the simulations. 

\subsection*{Notation and conventions}
\addcontentsline{toc}{subsection}{Notation conventions}

Throughout the appendices we will use the following conventions. We define $\text{sgn}: \mathbb{Z} \to \{-1,0,1\}$ to be a function that satisfies $\text{sgn}(0)=0$, $\text{sgn}(z)=1$ for $z >0$, and $\text{sgn}(z)=-1$ for $z<0$ (this function encodes the sign of the input integer). With slight abuse of notation, for each positive integer $j$ we let $e_j$ denote the $j$th standard basis vector in $\mathbb{R}^{d^*}$ whose dimension $d^*$ depends on the context. For any vector $v$ and matrix $A$, $[v]_j=e_j^\tran v$ denotes the $j$th component of $v$ and $[A]_{jj'}=e_j^\tran A e_{j'}$ denotes the $j$th entry in the $j'$th column of $A$ (provided that $j$ and $j'$ are small enough positive integers for those quantities to be defined). 

As in the main text, unless otherwise specified, sums and products over ranges of indices in which the lower limit exceeds the upper limit are defined to be $0$ and $1$, respectively. For all $\theta' \in \Theta$ and $\argx \in \xSpace \cup \xProxySpace$, $\dot{l}_{\theta'}(\argx)$ denotes the vector of upper right-hand Dini partial derivatives of $\theta \mapsto l_{\theta}(\argx)$ evaluated at $\theta=\theta'$. Similarly, for all $\theta' \in \Theta$ and $\argx \in \xSpace \cup \xProxySpace$, $\ddot{l}_{\theta'}(\argx)$ denotes the matrix of upper right-hand Dini partial derivatives of $\theta \mapsto \dot{l}_{\theta}(\argx)$ evaluated at $\theta=\theta'$. (When $\theta \mapsto l_{\theta}(\argx)$ is twice differentiable, $\dot{l}_{\theta}(\argx)$ and $\ddot{l}_{\theta}(\argx)$ correspond to the gradient and Hessian of $\theta \mapsto l_{\theta}(\argx)$ evaluated at $\theta=\theta'$, respectively).

\subsection*{Notation glossary (reference)}
\addcontentsline{toc}{subsection}{Notation glossary (reference)}

An overview of the notation used throughout the main text and appendices is given in Table \ref{tab:notation}. Table  \ref{tab:notation} is intended as a navigational aid; all notation is formally defined in the main text and appendices, which should be treated as the authoritative source.

{\small
\begin{longtable}{lll}
\caption{Summary of notation. The first column gives the symbol denoting commonly seen mathematical quantities, and when possible provides a defining equation or the space in which the quantity exists. The second column gives a heuristic description of the quantity. The third column gives the corresponding section or equation where the quantity is formally defined. This table is intended as a navigational aid. Some descriptions are heuristic; see the referenced sections and equations for formal definitions.}
\label{tab:notation} \\
\toprule
\textbf{Notation} & \textbf{Description} & \textbf{Reference} \\
\midrule
\endfirsthead
\toprule
\textbf{Symbol} & \textbf{Description} & \textbf{Reference} \\
\midrule
\endhead
\bottomrule
\endfoot
\multicolumn{3}{l}{\textit{Sample and wave size parameters}} \\
$N$ & Phase I sample size & Section~\ref{sec:settingAndEstimator} \\
$K$ & Number of waves in Phase II & Section~\ref{sec:settingAndEstimator} \\ 
$n_{\text{targ}}^{(k)}$ & Expected number of labels collected in wave $k$ & Section~\ref{sec:Simulations} \\
$n_{\text{targ}}=\sum_{k=1}^K n_{\text{targ}}^{(k)}$ & Expected number of labels collected in Phase II & Section~\ref{sec:Simulations} \\
\addlinespace
\addlinespace
\multicolumn{3}{l}{\textit{Data and variables}} \\
$\datvecraw \equiv (\xobs, \txmiss, \xmiss) \in \real^q$ & Full data vector for single sample & Section~\ref{sec:settingAndEstimator} \\
$\goodX \equiv (\xobs, \xmiss) \in \real^p$ & Gold standard measurements & Section~\ref{sec:settingAndEstimator} \\
$\proxyX \equiv (\xobs, \txmiss) \in \real^p$ & Cheap-to-measure proxy vector & Section~\ref{sec:settingAndEstimator} \\
$\xSpace, \xProxySpace \subseteq \real^p$ & Supports of $\goodX$ and $\proxyX$ & Section~\ref{sec:settingAndEstimator} \\
$\cd_k \equiv  \big( (I_i^{(j)},I_i^{(j)} \xmiss_i )_{j=1}^k, \proxyX_i \big)_{i=1}^N$ & Data available and sampling indicators after wave $k$ & Section~\ref{sec:DescriptionOfSamplingScheme} \\
\addlinespace
\addlinespace
\multicolumn{3}{l}{\textit{Labelling rules and indicators}} \\
$\mathcal{P} \equiv \{\pi: \xProxySpace \to (0,1)\}$ & Space of labelling rules & Section~\ref{sec:settingAndEstimator} \\
$\LabelStrategy{k}$ & Labelling strategy for wave $k$ & Section~\ref{sec:DescriptionOfSamplingScheme} \\
$\LabelRulePi{k} \equiv \LabelStrategy{k}(\cd_{k-1}) \in \mathcal{P}$ & Labelling rule for wave $k$ & Section~\ref{sec:DescriptionOfSamplingScheme} \\
$\LabelRulePiDoubleArg{k}{k-1}(\proxyX_i) \in (0,1)$ & Labelling probability for sample $i$ in wave $k$ & Section~\ref{sec:DescriptionOfSamplingScheme} \\
$U_i^{(k)} \stackrel{\text{i.i.d.}}{\sim} \mathrm{Unif}[0,1]$ & Uniform draws for labelling decision & Section~\ref{sec:DescriptionOfSamplingScheme} \\
$I_i^{(k)} \equiv \mathbbm{1}\{U_i^{(k)} \leq \LabelRulePiDoubleArg{k}{k-1}(\proxyX_i)\}$ & Labelling indicator for sample $i$ in wave $k$ & Section~\ref{sec:DescriptionOfSamplingScheme} \\
$I_i \in \{0,1\}$ & Indicator that $\xmiss_i$ was labelled in Phase II & Section~\ref{sec:DescriptionOfSamplingScheme} \\
$b \in (0,1/2)$ & Overlap constant (Assumption~\ref{assump:LabellingRuleOverlap}) & Assump.~\ref{assump:LabellingRuleOverlap} \\
$\LimitingLabelRulePi{k} \in \mathcal{P}$ & Limiting labelling rule $\LabelRulePi{k}$ for wave $k$ & Assump.~\ref{assump:CLTRgularityConditions} \\
$\LimitingLabelRulePi{1:k}(\argxProxy)$ & Limiting probability of labelling sample $i$ & Eq.~\eqref{eq:PiBarProd1tok} \\
& \quad in wave $k$ but not earlier given $\proxyX_i=\argxProxy$ &  \\
\addlinespace

\multicolumn{3}{l}{\textit{Weights}} \\
$c_1, \dots, c_K \in [0,1]$ & Prespecified constants; $\sum_{k=1}^K c_k = 1$ & Section \ref{sec:IntroduceMIPWWeights} \\
$\phi_{-1}(t)=1-t,\; \phi_0(t)=t,\; \phi_1(t)=1$ & Auxiliary functions for weight decomposition & Eq.~\eqref{eq:ComplementOr1Operator} \\
\addlinespace
$W_i^{(k,j)} \equiv \frac{\phi_{\text{sgn}(j-k)}( I_i^{(j)})}{\phi_{\text{sgn}(j-k)} ( \LabelRulePi{j}(\proxyX_i) )}$ & Factor of $W_i^{(k)}$ corresponding to wave $j$ & Eq.~\eqref{eq:IPWsingleSingle} \\
\addlinespace
$W_i^{(k)} =\prod_{j=1}^K W_i^{(k,j)}$ & Wave-$k$ inverse probability weight & Eqs.~\eqref{eq:InverseProbabilityWeights};\eqref{eq:WkFormulaAlt} \\
$W_i \equiv \sum_{k=1}^K c_k W_i^{(k)}$ & (Aggregated) Multiwave IPW weights & Eq.~\eqref{eq:aggregated_Wk} \\
$\bar{W}_i^{(k,j)},\bar{W}_i^{(k)}, \bar{W}_i$ & i.i.d.\ approximations to $W_i^{(k,j)},W_i^{(k)},W_i$ & Eqs.~\eqref{eq:IndepAggregatedWeightsMainText};\eqref{eq:WkIndepVersionDecomp} \\
& (using $\{\LimitingLabelRulePi{j}\}_{j=1}^K$ and $\{U_i^{(j)}\}_{j=1}^K$) &  \\
\addlinespace
\multicolumn{3}{l}{\textit{Loss, estimands, and Hessians}} \\
$l_\theta(\cdot)$ & Loss function parameterized by $\theta \in \Theta \subseteq \real^d$ & Section~\ref{sec:IntroduceMestimator} \\
$\thetaTarg \equiv \argmin_{\theta \in \Theta} \e[l_\theta(\goodX)]$ & Estimand of interest & Eq.~\eqref{eq:EstimandThetaStar} \\
$\gammaTarg \equiv \argmin_{\theta \in \Theta} \e[l_\theta(\proxyX)]$ & Proxy estimand & Section~\ref{sec:IntroduceMestimator} \\
$L(\theta) \equiv \e[l_\theta(\goodX)]$ & Population loss & Section~\ref{sec:StateMestimationResults} \\
$\tilde{L}(\theta) \equiv \e[l_\theta(\proxyX)]$ & Proxy population loss & Section~\ref{sec:StateMestimationResults} \\
$H_{\thetaTarg} \equiv \nabla^2 L(\thetaTarg)$ & Hessian of population loss at $\thetaTarg$ & Section~\ref{sec:StateMestimationResults} \\
$H_{\gammaTarg} \equiv \nabla^2 \tilde{L}(\gammaTarg)$ & Hessian of proxy loss at $\gammaTarg$ & Section~\ref{sec:StateMestimationResults} \\
\addlinespace
\addlinespace

\multicolumn{3}{l}{\textit{Estimators}} \\
$\htc= \argmin_{\theta \in \Theta} \{ \sum_{i=1}^N W_i l_{\theta}(\goodX_i) \}$ & Estimator of $\thetaTarg$ using samples selected in Phase II &  Eq.~\eqref{eq:PTDComponentEstimatorsDEF} \\
$\hgc = \argmin_{\theta \in \Theta} \{ \sum_{i=1}^N W_i l_{\theta}(\proxyX_i) \} $ & Estimator of $\gammaTarg$ using samples selected in Phase II & Eq.~\eqref{eq:PTDComponentEstimatorsDEF} \\
$\hga= \argmin_{\theta \in \Theta} \{ \sum_{i=1}^N  l_{\theta}(\proxyX_i) \} $ & Estimator of $\gammaTarg$ using all Phase I samples & Eq.~\eqref{eq:PTDComponentEstimatorsDEF} \\
$\hat{\Omega} \in \mathbb{R}^{d \times d}$ & Tuning matrix (possibly data dependent) & Section~\ref{sec:IntroduceMestimator} \\
$\htPTD \equiv \hat{\Omega} \hga + (\htc -\hat{\Omega} \hgc)$ & Multiwave Predict-Then-Debias estimator & Eq.~\eqref{eq:PTD_estimator} \\
\addlinespace
\multicolumn{3}{l}{\textit{Variance estimation}} \\
$\hat{\Sigma}_{11}, \hat{\Sigma}_{12}, \hat{\Sigma}_{22}, \hat{\Sigma}_{13}, \hat{\Sigma}_{33}$ & Covariance component estimators & Eq.~\eqref{eq:HessianAndSigmaEstimators} \\
$\Sigma_{11}, \Sigma_{12}, \Sigma_{22}, \Sigma_{13}, \Sigma_{33}$ & Population covariance components & Eq.~\eqref{eq:GradientAsympCovFormulae} \\
$\hat{H}_{\thetaTarg}, \hat{H}_{\gammaTarg}$ & Estimator for Hessians & Eq.~\eqref{eq:HessianAndSigmaEstimators} \\
$\hat{\Sigma}^{\PTDSuperScriptAcr}$ & Estimator for asymptotic variance of $\htPTD$ & Eq.~\eqref{eq:PTDAsympVarEstimator} \\
$\Sigma^{\PTDSuperScriptAcr}(\Omega)$ & Asymptotic variance of $\htPTD$ if $\hat{\Omega} \xrightarrow{p} \Omega$ & Eq.~\eqref{eq:AsympVarPTDFormula}  \\
\addlinespace
\addlinespace
\multicolumn{3}{l}{\textit{Empirical Process Theory (Appendix \ref{sec:AsymptoticsForMestimators})}} \\
$\bfs_N f \equiv N^{-1} \sum_{i=1}^N W_i f(\goodX_i)$ & Weighted empirical average (gold standard) & Eq.~\eqref{eq:AveragingOperatorDefs} \\
$\bfts_N f \equiv N^{-1} \sum_{i=1}^N W_i f(\proxyX_i)$ & Weighted empirical average (proxy, labelled) & Eq.~\eqref{eq:AveragingOperatorDefs} \\
$\bftp_N f \equiv N^{-1} \sum_{i=1}^N f(\proxyX_i)$ & Unweighted empirical average (proxy, all) & Eq.~\eqref{eq:AveragingOperatorDefs} \\
$\mathbb{G}_N f \equiv \sqrt{N}(\bfs_N f - \e[f(\goodX)])$ & Centered and scaled empirical weighted average & Eq.~\eqref{eq:G_operator} \\
$\tilde{\mathbb{G}}_N f \equiv \sqrt{N}(\bfts_N f - \e[f(\proxyX)])$ & Same but for proxy & Eq.~\eqref{eq:G_operator} \\
$\Delta_N(\theta), \tilde{\Delta}_N(\theta)$ & Centered empirical processes & Eq.~\eqref{eq:DeltaDef} \\
$\omega_{N,\delta_0}(\delta), \tilde{\omega}_{N,\delta_0}(\delta)$ & Moduli of continuity of $\Delta_N, \tilde{\Delta}_N$ & Eq.~\eqref{eq:ModuliOfContinuityDef} \\

\addlinespace
\multicolumn{3}{l}{\textit{Approximate greedy optimal sampling strategy }} \\
$\varrho_j(\proxyX)$ & Conditional mean of squared differences  & Eq.~\eqref{eq:OptimalSolutionForGreedy_IgnoreBounds} \\
& \quad in influence functions given $\proxyX$ & \\
\bottomrule
\end{longtable}}


\section{Properties of multiwave inverse probability weights}\label{sec:PropertiesOfWeights}

In this appendix we establish useful properties of the multiwave inverse probability weights $W_i$ under \samplingSchemeName{.} We briefly summarize some of the more notable properties, which are then proven in subsequent subsections.

Using the weights $W_i$ enables unbiased estimation of the mean of functions of $\datvecraw$, even though $\datvecraw$ is only fully observed on some (generally nonuniform) random subset of the $N$ samples. Further the multiwave inverse probability weights $W_i$ do not induce pairwise correlations between samples (although, notably, they do induce statistical dependencies). More specifically, under \samplingSchemeName{} and Assumption \ref{assump:IIDUnderlyingData}, for any fixed, measurable functions $f,g : \mathbb{R}^q \to \mathbb{R}$ and for any $i,i' \in [N]$ such that $i \neq i'$, \begin{enumerate}[(i)]
    \item $\e[W_i f(\datvecraw_i)]=\e[f(\datvecraw)]$, and 
    \item $\cov \big( W_i f(\datvecraw_i), W_{i'} g(\datvecraw_{i'}) \big)=\cov \big( W_i f(\datvecraw_i),  g(\datvecraw_{i'}) \big) =0$.
\end{enumerate} Properties (i) and (ii) above are formally established in Propositions \ref{prop:WeightsEnableUnbiasedMeanEst} and \ref{prop:NoCovarianceMultiwaveIPW}, and can also be checked using recursive applications of the tower property (in which data from earlier and earlier waves are conditioned upon). Under further assumptions, we prove that covariances of the form $\cov \big( W_i^2 f(\datvecraw_i), W_{i'}^2 g(\datvecraw_{i'}) \big)$ converge to $0$ as $N \to \infty$ (see Proposition \ref{prop:Vanishing2ndOrderCov}).  

Another notable property of the weights $W_i$ is that in our motivating setting they are exchangeable. In particular, under \samplingSchemeName{} and Assumptions \ref{assump:IIDUnderlyingData} and \ref{assump:CLTRgularityConditions}(i), $(I_i, W_i,\goodX_i,\proxyX_i)_{i=1}^N$ is an exchangeable sequence of $N$ random vectors (see Proposition \ref{prop:ExchangeableData}).

\subsection{Helpful notation for studying weights}

 Recall that for each $i \in [N]$ and $k \in [K]$, the multiwave inverse probability weights defined in Equations \eqref{eq:InverseProbabilityWeights} and \eqref{eq:aggregated_Wk} are given by $$W_i^{(k)} \equiv \Big( \prod_{j=1}^{k-1}  \frac{1-I_i^{(j)}}{1-\LabelRulePi{j} (\proxyX_i)} \Big) \frac{I_i^{(k)}}{\LabelRulePi{k} (\proxyX_i)} \quad \text{and} \quad W_i \equiv \sum_{k=1}^K c_k W_i^{(k)},$$ where $c_1,\dots,c_K \in [0,1]$ are prespecified constants satisfying $\sum_{k=1}^K c_k=1$. It will also be convenient to define $$W_i^{(0)} \equiv 1 \quad \text{ for each } i \in [N].$$ 
 
To study properties of the weights $W_i^{(k)}$, it helps to express $W_i^{(k)}$ as a product of $K$ terms. In particular, define $\phi_{-1},\phi_0, \phi_1 : [0,1] \to [0,1]$ by \begin{equation}\label{eq:ComplementOr1Operator}
    \phi_{-1}(t) \equiv 1-t, \quad \phi_0(t) \equiv t, \quad \text{and} \quad  \phi_1(t) \equiv 1 \quad \text{ for all } t \in [0,1],
\end{equation} and recall that $\text{sgn}: \mathbb{Z} \to \{-1,0,1\}$ gives the sign of an input integer (i.e., $\text{sgn}(0)=0$, $\text{sgn}(z)=1$ for positive $z$, and $\text{sgn}(z)=-1$ for negative $z$).
Next define \begin{equation}\label{eq:IPWsingleSingle}
     W_i^{(k,j)} \equiv  \frac{\phi_{\text{sgn}(j-k)} \big( I_i^{(j)} \big)}{\phi_{\text{sgn}(j-k)} \big( \LabelRulePi{j}(\proxyX_i) \big)} \quad \text{ for each } i \in [N], j \in [K], k \in \{0\} \cup [K],
 \end{equation} and observe that for $k < j$, $W_i^{(k,j)}=1$ and that 
 
 \begin{equation}\label{eq:WkFormulaAlt}
W_i^{(k)} = \prod_{j=1}^k  W_i^{(k,j)} = \prod_{j=1}^K  W_i^{(k,j)} \quad \text{ for each } i \in [N], k \in \{0\} \cup [K]. \end{equation}

This notation can be used to immediately establish the following two facts that are used throughout the appendices. The notation is also used in the next subsections to prove properties that simplify expectations of quantities multiplied by $W_i^{(k)}$.

\begin{fact}\label{fact:WsqFormula}
       For $i \in [N]$, $W_i^2= \sum_{k=1}^K c_k^2 (W_i^{(k)})^2$.
\end{fact}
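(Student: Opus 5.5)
The plan is to expand the square and show that every cross term vanishes identically. Writing $W_i=\sum_{k=1}^K c_k W_i^{(k)}$ gives
\[
W_i^2=\sum_{k=1}^K c_k^2 (W_i^{(k)})^2+\sum_{k\neq k'} c_k c_{k'} W_i^{(k)} W_i^{(k')}.
\]
So it suffices to show that $W_i^{(k)} W_i^{(k')}=0$ pointwise whenever $k\neq k'$. No expectations or assumptions are needed; this is a deterministic identity in the indicators $I_i^{(j)}$.

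By symmetry we may take $k<k'$. We then use the product representation \eqref{eq:WkFormulaAlt}, namely $W_i^{(k)}=\prod_{j=1}^K W_i^{(k,j)}$, and look at the factor for wave $j=k$ in each weight.
\begin{itemize}
\item In $W_i^{(k)}$ we have $\text{sgn}(k-k)=0$. By \eqref{eq:IPWsingleSingle} the factor is $W_i^{(k,k)}=I_i^{(k)}/\LabelRulePi{k}(\proxyX_i)$.
\item In $W_i^{(k')}$ we have $\text{sgn}(k-k')=-1$, so the factor is $W_i^{(k',k)}=(1-I_i^{(k)})/(1-\LabelRulePi{k}(\proxyX_i))$.
\end{itemize}
Their product therefore contains $I_i^{(k)}(1-I_i^{(k)})$, which equals $0$ because $I_i^{(k)}\in\{0,1\}$. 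Hence $W_i^{(k)}W_i^{(k')}=0$.

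The denominators are harmless, since the labelling rules take values in $(0,1)$ by the definition of $\mathcal{P}$. All other factors are finite, so multiplying by them preserves the zero.

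The only point needing care is bookkeeping the sign convention in $\phi_{\text{sgn}(j-k)}$, that is, confirming which of $I$ or $1-I$ appears in which weight. Once that is checked, the cross terms drop out and the identity $W_i^2=\sum_k c_k^2 (W_i^{(k)})^2$ follows. Intuitively, $W_i^{(k)}$ is nonzero only if sample $i$ is first labelled in wave $k$, and these events are disjoint across $k$.
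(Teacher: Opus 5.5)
Your proof is correct and follows the paper's argument: expand the square and show each cross term $W_i^{(k)}W_i^{(k')}$ vanishes, because at wave $j=\min\{k,k'\}$ one weight carries the factor $I_i^{(j)}$ and the other carries $1-I_i^{(j)}$. Your sign bookkeeping for $\phi_{\text{sgn}(j-k)}$ is also right.
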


\begin{proof}
    Fix $i \in [N]$. Also fix $k,k' \in [K]$ such that $k \neq k'$. Note that $ W_i^{(k,j)} W_i^{(k',j)}=0$ when $j=\text{min}\{k,k'\}$. Hence, by \eqref{eq:WkFormulaAlt}, $W_i^{(k)}W_i^{(k')}=\prod_{j=1}^K    W_i^{(k,j)} W_i^{(k',j)}=0$. Since this holds for any $k,k' \in [K]$ such that $k \neq k'$,  $W_i^2=\big( \sum_{k=1}^K c_k W_i^{(k)} \big)^2=\sum_{k=1}^K c_k^2 (W_i^{(k)})^2$.
\end{proof}

\begin{fact}\label{fact:WeightBound}
    Under Assumption \ref{assump:LabellingRuleOverlap}, for each $i \in [N]$ $0 \leq  W_i   \leq b^{-K}$ almost surely. 
\end{fact}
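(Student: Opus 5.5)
The plan is to bound each wave-specific weight $W_i^{(k)}$ separately and then use that $W_i$ is a convex combination of them. Fix $i \in [N]$ and $k \in [K]$. By \eqref{eq:WkFormulaAlt}, $W_i^{(k)} = \prod_{j=1}^k W_i^{(k,j)}$. Each factor has the form $\phi_{\text{sgn}(j-k)}(I_i^{(j)})/\phi_{\text{sgn}(j-k)}(\LabelRulePi{j}(\proxyX_i))$ with $j \le k$, so the sign index is either $-1$ (for $j<k$) or $0$ (for $j=k$).

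The numerator $\phi_{-1}(I_i^{(j)}) = 1-I_i^{(j)}$ or $\phi_0(I_i^{(j)}) = I_i^{(j)}$ lies in $\{0,1\}$, because $I_i^{(j)}$ is an indicator. Under Assumption \ref{assump:LabellingRuleOverlap}, $\LabelRulePi{j}(\proxyX_i) \in [b,1-b]$ almost surely. Hence both $\LabelRulePi{j}(\proxyX_i)$ and $1-\LabelRulePi{j}(\proxyX_i)$ lie in $[b,1-b]$, and in particular are at least $b>0$. Therefore each factor satisfies $0 \le W_i^{(k,j)} \le 1/b$ almost surely.

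Multiplying the $k$ factors gives $0 \le W_i^{(k)} \le b^{-k} \le b^{-K}$, where the last inequality uses $b<1$ and $k \le K$. Since the $c_k \in [0,1]$ sum to one, $W_i = \sum_k c_k W_i^{(k)}$ is a convex combination of quantities in $[0,b^{-K}]$, so $0 \le W_i \le b^{-K}$ almost surely.

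There is one alternative route, and it gives a sharper bound. By the proof of Fact \ref{fact:WsqFormula}, at most one $W_i^{(k)}$ is nonzero, so $W_i = c_k W_i^{(k)}$ for a single $k$. This is not needed for the stated bound. No step is a real obstacle; the only care needed is the almost-sure qualifier, which is inherited from the overlap assumption. The finitely many events over $k \in [K]$ are intersected, so the bound holds almost surely for all waves simultaneously.
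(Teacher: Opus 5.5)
Your proposal is correct and matches the paper's proof: bound each factor $W_i^{(k,j)}$ by $b^{-1}$ using the overlap assumption, multiply, and use that $W_i$ is a convex combination. The only cosmetic difference is that the paper takes the product over all $K$ factors (those with $j>k$ equal $1$), whereas you take $k$ factors and then use $b^{-k}\le b^{-K}$.
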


\begin{proof}
Fix $i \in [N]$. For any $j,k \in [K]$, by Assumption \ref{assump:LabellingRuleOverlap} and Definition \eqref{eq:ComplementOr1Operator}, $\phi_{\text{sgn}(j-k)} \big( I_i^{(j)} \big) \in \{0,1 \}$ and $\phi_{\text{sgn}(j-k)} \big( \LabelRulePi{j}(\proxyX_i) \big) \in [b,1]$ almost surely. Hence by \eqref{eq:IPWsingleSingle}, $0 \leq W_i^{(k,j)} \leq b^{-1}$ almost surely for any $j,k \in [K]$. Since by Equations \eqref{eq:aggregated_Wk} and \eqref{eq:WkFormulaAlt}, $W_i=\sum_{k=1}^K c_k \prod_{j=1}^K W_i^{(k,j)}$, where $c_k \geq 0$ and $\sum_{k=1}^K c_k =1$, $W_i \in [0,b^{-K}]$ almost surely.
\end{proof}

\subsection{Helpful lemmas for simplifying weighted expectations}

To study expectations of quantities when multiplied by $W_i^{(k)}$ we start by explicitly stating some properties of \samplingSchemeName{.} Intuitively, the following fact holds because for $j \in [K]$, $I_i^{(j)}$ for $i=1,\dots,N$ are from Bernoulli draws that are independent conditionally on $\cd_{j-1}$, each with success probability $\LabelRulePi{j}(\proxyX_i)$.

\begin{fact}\label{fact:CondMeanAndIndep}
    Under \samplingSchemeName{,} for all $j \in [K]$, and $i,i' \in [N]$, \begin{equation}\label{eq:PropertiesPhaseIISamplingCondMeanInd}
    \e[I_i^{(j)} \giv \cd_{j-1}]=\LabelRulePi{j}(\proxyX_i), \quad  \text{while both}
\end{equation}  

\begin{equation}\label{eq:PropertiesPhaseIISamplingCondIndep}
   \big( I_{i}^{(j)},I_{i'}^{(j)} \big) \indep (\datvecraw_i, \datvecraw_{i'} ) \giv \cd_{j-1} \quad \text{and} \quad I_{i}^{(j)} \indep  I_{i'}^{(j)} \giv \cd_{j-1} \quad \text{if } i \neq i'.
\end{equation} 
\end{fact}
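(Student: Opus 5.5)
The plan is to read everything off the construction of wave $j$ in the sampling scheme box. Write $\pi_i \equiv \LabelRulePi{j}(\proxyX_i)$. By the measurability requirement on $\LabelStrategy{j}$, $\pi_i$ is a measurable function of $\cd_{j-1}$ and $\proxyX_i$. Since $\proxyX_i$ is itself a component of $\cd_{j-1}$, each $\pi_i$ is $\sigma(\cd_{j-1})$-measurable. The key structural input is that the wave-$j$ uniforms $(U_i^{(j)})_{i=1}^N$ are i.i.d. $\mathrm{Unif}[0,1]$ and drawn independently of everything generated before wave $j$. I will take this to mean independence from the full vector $\big(\cd_{j-1}, (\datvecraw_i)_{i=1}^N\big)$, i.e.\ from $\cd_{j-1}$ together with the unobserved $\xmiss$ values. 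This is the intended reading, since the uniforms are exogenous randomization. Note that $\cd_{j-1}$ is a function of $(\datvecraw_i)_{i}$ and of earlier-wave uniforms only, and those are independent of wave-$j$ uniforms.

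With this in hand, all three claims follow from one computation. Let $\mathcal{G} \equiv \sigma\big(\cd_{j-1}, (\datvecraw_i)_{i=1}^N\big)$. Conditional on $\mathcal{G}$, the thresholds $\pi_i$ are fixed and the $U_i^{(j)}$ remain i.i.d. uniform. Hence for $a,a' \in \{0,1\}$ and $i \neq i'$,
\[
\mathbb{P}\big(I_i^{(j)}=a, I_{i'}^{(j)}=a' \giv \mathcal{G}\big) = \pi_i^{a}(1-\pi_i)^{1-a}\, \pi_{i'}^{a'}(1-\pi_{i'})^{1-a'} ,
\]
by the standard substitution lemma for independent variables, since $I_i^{(j)} = \mathbbm{1}\{U_i^{(j)} \le \pi_i\}$ with $\pi_i$ being $\mathcal{G}$-measurable and $U^{(j)} \indep \mathcal{G}$.

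The right-hand side is $\sigma(\cd_{j-1})$-measurable. Therefore the conditional law of $(I_i^{(j)}, I_{i'}^{(j)})$ given $\mathcal{G}$ coincides with its conditional law given $\cd_{j-1}$ (tower property). Since $\sigma(\cd_{j-1}, \datvecraw_i, \datvecraw_{i'}) \subseteq \mathcal{G}$, the same holds given $(\cd_{j-1},\datvecraw_i,\datvecraw_{i'})$. This is exactly the conditional independence $(I_i^{(j)},I_{i'}^{(j)}) \indep (\datvecraw_i,\datvecraw_{i'}) \giv \cd_{j-1}$. The product form of the formula gives $I_i^{(j)} \indep I_{i'}^{(j)} \giv \cd_{j-1}$ for $i \neq i'$. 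Marginalizing over $a'$ (or taking $i=i'$) gives $\e[I_i^{(j)} \giv \cd_{j-1}] = \pi_i$, which is \eqref{eq:PropertiesPhaseIISamplingCondMeanInd}.

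The only delicate point is the step I flagged: making precise that $U^{(j)}$ is independent not just of the observed $\cd_{j-1}$ but jointly of $\cd_{j-1}$ and the unobserved $\datvecraw_i$'s. Without this, the conditional independence from $\datvecraw$ would not follow. I would state it explicitly as the formalization of ``drawn independently'' in the sampling scheme box, justified by the fact that the randomization device does not see unmeasured values. The remaining steps are routine conditional-probability manipulations.
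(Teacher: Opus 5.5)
Your proposal is correct and follows essentially the same route as the paper. Both arguments rest on two facts: the wave-$j$ labelling probabilities are $\sigma(\cd_{j-1})$-measurable, and the wave-$j$ uniforms are drawn independently of $\cd_{j-1}$ and the unobserved $\xmiss$ values jointly, a formalization the paper also invokes explicitly for the first conditional-independence claim. Your explicit computation of the conditional law given $\mathcal{G}$, followed by the tower property, is a more formal rendering of the steps the paper calls ``immediate consequences.''
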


\begin{proof} Fix $j \in [K]$, and $i,i' \in [N]$ such that $i \neq i'$. Recall that under \samplingSchemeName{,} $I_i^{(j)} \equiv \mathbbm{1}\{U_i^{(j)} \leq \LabelRulePi{j}(\proxyX_i) \}$ and $I_{i'}^{(j)} \equiv \mathbbm{1}\{U_{i'}^{(j)} \leq \LabelRulePi{j}(\proxyX_{i'}) \}$, that $\LabelRulePi{j}(\proxyX_i)$ and $\LabelRulePi{j}(\proxyX_{i'})$ can both be expressed as fixed, measurable functions of $\cd_{j-1}$, and that $U_i^{(j)}, U_{i'}^{(j)} \stackrel{\text{i.i.d.}}{\sim} \text{Unif}[0,1]$ are generated independently of $\cd_{j-1}$ (see Section \ref{sec:DescriptionOfSamplingScheme}). As immediate consequences, $U_i^{(j)}, U_{i'}^{(j)} \giv \cd_{j-1} \stackrel{\text{i.i.d.}}{\sim} \text{Unif}[0,1]$, and moreover \eqref{eq:PropertiesPhaseIISamplingCondMeanInd} and the second claim in \eqref{eq:PropertiesPhaseIISamplingCondIndep} both hold. 

To prove the first claim in \eqref{eq:PropertiesPhaseIISamplingCondIndep} note that under \samplingSchemeName{,} prior to the $j$th wave of data collection, $U_i^{(j)}, U_{i'}^{(j)} \stackrel{\text{i.i.d.}}{\sim} \text{Unif}[0,1]$ are generated independently of both $\cd_{j-1}$ and $\xmiss$ data that has not yet been observed. Thus regardless of whether or not $\xmiss_i$ or $\xmiss_{i'}$ was observed prior to the start of wave $j$, $U_i^{(j)}, U_{i'}^{(j)} \stackrel{\text{i.i.d.}}{\sim} \text{Unif}[0,1]$ are generated independently of both $\cd_{j-1}$ and $(\datvecraw_i,\datvecraw_{i'})=(\proxyX_i,\xmiss_i,\proxyX_{i'},\xmiss_{i'})$. Hence $(U_{i}^{(j)},U_{i'}^{(j)}) \indep (\datvecraw_i,\datvecraw_{i'}) \giv \cd_{j-1}$. Recalling that $\LabelRulePi{j}(\proxyX_{i})$ and
$\LabelRulePi{j}(\proxyX_{i'})$ can be written as fixed, measurable functions of $\cd_{j-1}$ and noting the formulas for $I_i^{(j)}$ and $I_{i'}^{(j)}$ establishes the first claim in \eqref{eq:PropertiesPhaseIISamplingCondIndep}.

\end{proof}

The above properties and the earlier definitions for $W_i^{(k,j)}$ can be used to prove the following auxiliary lemma, which enables simplifications of weighted expectations when conditioning on $\cd_{j-1}$. 

\begin{lemma}\label{lemma:SimplifyCondExpectations}
    Under \samplingSchemeName{,}  for any fixed, measurable $f,g: \mathbb{R}^q  \to \mathbb{R}$, and
     for all $k,k' \in \{0\} \cup [K]$, $j \in [K]$, $s,s' \in \{1,2\}$, and $i,i' \in [N]$ such that $i \neq i'$,

    $$ \e \big[ \big( W_i^{(k,j)} \big)^s  \big( W_{i'}^{(k',j)} \big)^{s'} f(\datvecraw_i) g(\datvecraw_{i'}) \giv \cd_{j-1} \big]=\frac{\e \big[ f(\datvecraw_i) g(\datvecraw_{i'}) \giv \cd_{j-1} \big]}{\big[\phi_{\textnormal{sgn}(j-k)} \big( \LabelRulePi{j}(\proxyX_i) \big) \big]^{s-1}  \big[\phi_{\textnormal{sgn}(j-k')} \big( \LabelRulePi{j}(\proxyX_{i'}) \big) \big]^{s'-1}}.$$
\end{lemma}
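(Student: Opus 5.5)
The argument conditions on $\cd_{j-1}$ and uses Fact \ref{fact:CondMeanAndIndep} to factor the conditional expectation. Write $a_i \equiv \phi_{\textnormal{sgn}(j-k)}(\LabelRulePi{j}(\proxyX_i))$ and $a_{i'} \equiv \phi_{\textnormal{sgn}(j-k')}(\LabelRulePi{j}(\proxyX_{i'}))$. Both are measurable functions of $\cd_{j-1}$, so they can be pulled out of the conditional expectation. Under Assumption \ref{assump:LabellingRuleOverlap} they are strictly positive; even without that assumption, labelling rules take values in $(0,1)$, so $a_i,a_{i'}>0$. This gives
\[
\e\big[(W_i^{(k,j)})^s (W_{i'}^{(k',j)})^{s'} f(\datvecraw_i)g(\datvecraw_{i'}) \giv \cd_{j-1}\big] = a_i^{-s}a_{i'}^{-s'}\,\e\big[\phi_{\textnormal{sgn}(j-k)}(I_i^{(j)})^s\,\phi_{\textnormal{sgn}(j-k')}(I_{i'}^{(j)})^{s'} f(\datvecraw_i)g(\datvecraw_{i'}) \giv \cd_{j-1}\big].
\]

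Next I would use that $I_i^{(j)}\in\{0,1\}$. Each $\phi_m$ maps $\{0,1\}$ into $\{0,1\}$, so the powers $s,s'$ can be dropped from the indicator factors: $\phi_m(I)^s=\phi_m(I)$. By the first conditional independence in \eqref{eq:PropertiesPhaseIISamplingCondIndep}, $(I_i^{(j)},I_{i'}^{(j)})$ is independent of $(\datvecraw_i,\datvecraw_{i'})$ given $\cd_{j-1}$. The conditional expectation therefore factors into $\e[\phi_{m}(I_i^{(j)})\phi_{m'}(I_{i'}^{(j)})\giv\cd_{j-1}]$ times $\e[f(\datvecraw_i)g(\datvecraw_{i'})\giv\cd_{j-1}]$, where $m=\textnormal{sgn}(j-k)$ and $m'=\textnormal{sgn}(j-k')$.

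The second conditional independence in \eqref{eq:PropertiesPhaseIISamplingCondIndep} ($i\neq i'$) splits the first factor into a product of two conditional means. Each $\phi_m$ is affine in its argument, so \eqref{eq:PropertiesPhaseIISamplingCondMeanInd} gives $\e[\phi_m(I_i^{(j)})\giv \cd_{j-1}]=\phi_m(\LabelRulePi{j}(\proxyX_i))=a_i$. I would check this in each of the three cases $m\in\{-1,0,1\}$; the case $m=1$ is trivial. The same holds for $i'$. Combining, the expression equals $a_i^{1-s}a_{i'}^{1-s'}\e[f(\datvecraw_i)g(\datvecraw_{i'})\giv\cd_{j-1}]$, which is the claim.

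The main obstacle is the factorization step. The joint conditional independence of the indicator pair from $(\datvecraw_i,\datvecraw_{i'})$ must be combined with the pairwise independence of the indicators, and the data components that are not $\cd_{j-1}$-measurable (unobserved $\xmiss$) must be handled correctly. Fact \ref{fact:CondMeanAndIndep} is stated exactly for this purpose, so the remaining care is integrability. I would assume $f(\datvecraw_i)g(\datvecraw_{i'})$ is integrable or nonnegative so that the conditional expectations are well defined; the weight factors are bounded, so they do not affect integrability.
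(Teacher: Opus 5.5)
Your proposal is correct and follows the paper's proof essentially step for step. You pull out the $\cd_{j-1}$-measurable denominators, drop the powers on the $\{0,1\}$-valued factors $\phi_r(I_i^{(j)})$, and factor via the two conditional independences in Fact~\ref{fact:CondMeanAndIndep}; you then evaluate $\e[\phi_r(I_i^{(j)})\giv\cd_{j-1}]=\phi_r(\LabelRulePi{j}(\proxyX_i))$ by checking the three cases, and your integrability caveat is a reasonable addition that the paper leaves implicit.
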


\begin{proof} 

     Fix measurable $f,g: \mathbb{R}^q  \to \mathbb{R}$. Further fix $k,k' \in \{0 \} \cup [K]$, $j \in [K]$, $s,s' \in \{1,2\}$, and $i,i' \in [N]$ such that $i \neq i'$. For convenience, let $r = \text{sgn}(j-k)$ and $r' = \text{sgn}(j-k')$. 
     
     Note that by considering all 3 possible cases in \eqref{eq:ComplementOr1Operator} and applying \eqref{eq:PropertiesPhaseIISamplingCondMeanInd}, $$\e[\phi_{r} (I_i^{(j)}) \giv \cd_{j-1} ] = \phi_{r} \big( \LabelRulePi{j}(\proxyX_i) \big), \quad  \text{ and similarly, } \quad  \e[\phi_{r'} (I_{i'}^{(j)}) \giv \cd_{j-1} ] = \phi_{r'} \big( \LabelRulePi{j}(\proxyX_{i'}) \big).$$ Also note that by examining \eqref{eq:ComplementOr1Operator}, regardless of the values of $I_i^{(j)},I_{i'}^{(j)} \in \{0,1\} $, $ s,s' \in \{1,2\}$ and $r,r' \in \{-1,0,1\}$, $$\big[ \phi_r (I_i^{(j)} ) \big]^s=\phi_r (I_i^{(j)} ) \quad \text{and} \quad \big[ \phi_{r'} (I_{i'}^{(j)} ) \big]^{s'}=\phi_{r'} (I_{i'}^{(j)} ),$$ because $\phi_{r^*}(I) \in \{0,1\}$ for any $I \in \{0,1\}$ and $r^* \in \{-1,0,1\}$. Next note that $\LabelRulePi{j}(\proxyX_i)$ and $\LabelRulePi{j}(\proxyX_{i'})$ are measurable functions of $\cd_{j-1}$ and $\phi_r$, $\phi_{r'}$, $f$, and $g$ are all measurable functions. Hence, by definition \eqref{eq:IPWsingleSingle}, the above result, and the conditional independence properties of \samplingSchemeName{} given in \eqref{eq:PropertiesPhaseIISamplingCondIndep}, $$\begin{aligned} \e \big[ \big( W_i^{(k,j)} \big)^s  \big( W_{i'}^{(k',j)} \big)^{s'} f(\datvecraw_i) g(\datvecraw_{i'}) \giv \cd_{j-1} \big] & =  \e \Bigg[ \frac{\phi_r (I_i^{(j)} )  \cdot \phi_{r'} ( I_{i'}^{(j)} ) f(\datvecraw_i) g(\datvecraw_{i'})}{\big[ \phi_r \big( \LabelRulePi{j}(\proxyX_i) \big) \big]^s \cdot \big[\phi_{r'} \big( \LabelRulePi{j}(\proxyX_{i'}) \big) \big]^{s'}}  \Bigg| \cd_{j-1} \Bigg]
    \\ & = \frac{\e \big[ \phi_r (I_i^{(j)} ) \cdot \phi_{r'} ( I_{i'}^{(j)} ) f(\datvecraw_i) g(\datvecraw_{i'}) \giv \cd_{j-1} \big] }{\big[ \phi_r \big( \LabelRulePi{j}(\proxyX_i) \big) \big]^s \cdot \big[\phi_{r'} \big( \LabelRulePi{j}(\proxyX_{i'}) \big) \big]^{s'}}
    \\ & = \frac{\e \big[ \phi_r (I_i^{(j)} ) \cdot \phi_{r'} ( I_{i'}^{(j)} ) \giv \cd_{j-1} \big] \cdot  \e \big[ f(\datvecraw_i) g(\datvecraw_{i'}) \giv \cd_{j-1} \big] }{\big[ \phi_r \big( \LabelRulePi{j}(\proxyX_i) \big) \big]^s \cdot \big[\phi_{r'} \big( \LabelRulePi{j}(\proxyX_{i'}) \big) \big]^{s'}}
    \\ & = \frac{\e \big[ \phi_r (I_i^{(j)} ) \giv \cd_{j-1} \big]  \cdot \e \big[ \phi_{r'} ( I_{i'}^{(j)} ) \giv \cd_{j-1} \big] \cdot  \e \big[ f(\datvecraw_i) g(\datvecraw_{i'}) \giv \cd_{j-1} \big] }{\big[ \phi_r \big( \LabelRulePi{j}(\proxyX_i) \big) \big]^s \cdot \big[\phi_{r'} \big( \LabelRulePi{j}(\proxyX_{i'}) \big) \big]^{s'}}
    \\ & = \frac{\e \big[ f(\datvecraw_i) g(\datvecraw_{i'}) \giv \cd_{j-1} \big]}{\big[ \phi_r \big( \LabelRulePi{j}(\proxyX_i) \big) \big]^{s-1} \cdot \big[\phi_{r'} \big( \LabelRulePi{j}(\proxyX_{i'}) \big) \big]^{s'-1}}, \end{aligned}$$ where the last equality follows from a previously established result. Recalling that we had set $r = \text{sgn}(j-k)$ and $r' = \text{sgn}(j-k')$, this proves the desired result.
\end{proof}

The following lemma shows that multiplying quantities by the multiwave inverse probability weights $W_i^{(k)}$ and $W_{i'}^{(k')}$ does not change its expected value, and readily sets up the proofs of Propositions \ref{prop:WeightsEnableUnbiasedMeanEst} and \ref{prop:NoCovarianceMultiwaveIPW}. The proof of the lemma involves a recursive application of the tower property as well as Lemma \ref{lemma:SimplifyCondExpectations} in the case where $s=s'=1$.

\begin{lemma}\label{lemma:SimplifyWeightedExpectations2Terms}
     Under \samplingSchemeName{,}  for any fixed, measurable $f,g: \mathbb{R}^q  \to \mathbb{R}$, $$\e \big[ W_i^{(k)} W_{i'}^{(k')} f(\datvecraw_i) g(\datvecraw_{i'})  \big]=\e \big[ f(\datvecraw_i) g(\datvecraw_{i'}) \big]$$
     for all $k,k' \in \{0\} \cup [K]$, all $i,i' \in [N]$ such that $i \neq i'$.
\end{lemma}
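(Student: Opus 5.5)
The plan is a backward induction over waves that peels off one wave factor at a time using the tower property. By \eqref{eq:WkFormulaAlt}, $W_i^{(k)}W_{i'}^{(k')} = \prod_{j=1}^K W_i^{(k,j)}W_{i'}^{(k',j)}$. For $m \in \{0\}\cup[K]$ I will define
$$A_m \equiv \e\Big[\Big(\prod_{j=1}^{m} W_i^{(k,j)}W_{i'}^{(k',j)}\Big) f(\datvecraw_i)g(\datvecraw_{i'})\Big],$$
so that $A_K$ is the left-hand side and $A_0=\e[f(\datvecraw_i)g(\datvecraw_{i'})]$ is the right-hand side. It then suffices to show $A_m=A_{m-1}$ for each $m\in[K]$.

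For the step $A_m = A_{m-1}$, I condition on $\cd_{m-1}$. The factors $W_i^{(k,j)}W_{i'}^{(k',j)}$ with $j\le m-1$ are measurable functions of $\cd_{m-1}$. Indeed, $I_i^{(j)}$ for $j \le m-1$ are recorded in $\cd_{m-1}$, and $\LabelRulePi{j}(\proxyX_i)=[\LabelStrategy{j}(\cd_{j-1})](\proxyX_i)$ is a measurable function of $\cd_{j-1}\subseteq\cd_{m-1}$ and $\proxyX_i$, which also lies in $\cd_{m-1}$. These factors can therefore be pulled out of the conditional expectation. What remains is
$$\e\big[W_i^{(k,m)}W_{i'}^{(k',m)}f(\datvecraw_i)g(\datvecraw_{i'})\giv\cd_{m-1}\big].$$
By Lemma \ref{lemma:SimplifyCondExpectations} with $s=s'=1$, this equals $\e[f(\datvecraw_i)g(\datvecraw_{i'})\giv\cd_{m-1}]$, since the denominators carry exponent $s-1=0$. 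Putting the pulled-out factors back inside and applying the tower property yields $A_{m-1}$.

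Some integrability is needed to justify the conditioning. I will assume $f(\datvecraw_i)g(\datvecraw_{i'})$ is integrable, as implicitly intended. Each factor $W^{(k,j)}$ is bounded by $b^{-1}$ under Assumption \ref{assump:LabellingRuleOverlap} (see the proof of Fact \ref{fact:WeightBound}). Even without that assumption, the factors are nonnegative, so the argument goes through for $f,g\ge0$ by monotone convergence, and the general case then follows by splitting into positive and negative parts.

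The only delicate point is the measurability claim: that the earlier-wave factors, including the labelling probabilities, are $\cd_{m-1}$-measurable. This is exactly the regularity assumed on $\LabelStrategy{j}$ in Section \ref{sec:DescriptionOfSamplingScheme}. Beyond that, the induction is mechanical. The cases $k=0$ or $k'=0$ need no separate treatment, because then all factors $W^{(0,j)}$ equal $1$ and Lemma \ref{lemma:SimplifyCondExpectations} still applies.
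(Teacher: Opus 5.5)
Your proposal is correct and follows essentially the same argument as the paper. Both factor $W_i^{(k)}W_{i'}^{(k')}$ over waves via \eqref{eq:WkFormulaAlt}, strip off the wave-$m$ factor by conditioning on $\cd_{m-1}$, and apply Lemma~\ref{lemma:SimplifyCondExpectations} with $s=s'=1$ plus the tower property; the paper phrases this as an induction on $j^*$, which yields the same chain $A_K=A_{K-1}=\cdots=A_0$, and your remark on integrability (left implicit in the paper) is sensible extra care.
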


\begin{proof}
    Fix measurable $f,g: \mathbb{R}^q  \to \mathbb{R}$. Further fix $k,k' \in \{0 \} \cup [K]$ and $i,i' \in [N]$ such that $i \neq i'$. By applying Lemma \ref{lemma:SimplifyCondExpectations} in the special case where $s=s'=1$, note that
    \begin{equation}\label{eq:SimplifyCondExpectations}
        \e \big[ W_i^{(k,j)} W_{i'}^{(k',j)} f(\datvecraw_i) g(\datvecraw_{i'}) \giv \cd_{j-1} \big]=\e \big[ f(\datvecraw_i) g(\datvecraw_{i'}) \giv \cd_{j-1} \big] \quad \text{for each } j \in [K].
    \end{equation} The proof of this lemma will follow by noting that by \eqref{eq:WkFormulaAlt}, $$\e \big[ W_i^{(k)} W_{i'}^{(k')} f(\datvecraw_i) g(\datvecraw_{i'})  \big]= \e \Big[ \prod_{j=1}^K W_i^{(k,j)} W_{i'}^{(k',j)} f(\datvecraw_i) g(\datvecraw_{i'})  \Big]$$ and recursively applying the tower property conditioning on $\cd_{j-1}$ for decreasing $j$.

    Formally, we will prove by induction that for each $j^* \in [K]$, \begin{equation}\label{eq:InductiveStep1stOrderExpectations}
   \e \Big[ \Big( \prod_{j=1}^{j^*} W_i^{(k,j)} W_{i'}^{(k',j)} \Big) f(\datvecraw_i) g(\datvecraw_{i'})  \Big] = \e [ f(\datvecraw_i) g(\datvecraw_{i'})]. \end{equation} In the case where $j^*=1$, \eqref{eq:InductiveStep1stOrderExpectations} holds because by the tower property and \eqref{eq:SimplifyCondExpectations}, $$\e \big[  W_i^{(k,1)} W_{i'}^{(k',1)} f(\datvecraw_i) g(\datvecraw_{i'})  \big] = \e \big[  \e[ W_i^{(k,1)} W_{i'}^{(k',1)} f(\datvecraw_i) g(\datvecraw_{i'}) \giv \cd_0]  \big] = \e \big[  \e[f(\datvecraw_i) g(\datvecraw_{i'}) \giv \cd_0]   \big]=\e [ f(\datvecraw_i) g(\datvecraw_{i'})].$$ Next fix some $j^* \in [K-1]$ and assume that \eqref{eq:InductiveStep1stOrderExpectations} holds for $j^*$. Since for $j \leq j^*$, $ W_i^{(k,j)}$ and $W_{i'}^{(k',j)}$ are measurable functions of $\cd_{j^*}$, by the tower property and \eqref{eq:SimplifyCondExpectations}, $$\begin{aligned} \e \Big[ \prod_{j=1}^{j^* +1} W_i^{(k,j)} W_{i'}^{(k',j)} f(\datvecraw_i) g(\datvecraw_{i'})  \Big] & = \e \Big[ \Big( \prod_{j=1}^{j^* } W_i^{(k,j)} W_{i'}^{(k',j)}  \Big) \e[ W_i^{(k,j^*+1)} W_{i'}^{(k',j^*+1)} f(\datvecraw_i) g(\datvecraw_{i'}) \giv \cd_{j^*}]  \Big] 
   \\ & =   \e \Big[ \Big( \prod_{j=1}^{j^* } W_i^{(k,j)} W_{i'}^{(k',j)} \Big) \e[ f(\datvecraw_i) g(\datvecraw_{i'}) \giv \cd_{j^*}]  \Big] 
   \\ & = \e \Big[ \Big( \prod_{j=1}^{j^* } W_i^{(k,j)} W_{i'}^{(k',j)} \Big)  f(\datvecraw_i) g(\datvecraw_{i'})   \Big]
   \\ & = \e [ f(\datvecraw_i) g(\datvecraw_{i'})], \end{aligned}$$ where the last step holds provided that \eqref{eq:InductiveStep1stOrderExpectations} holds for $j^*$. Hence we have shown that for any $j^* \in [K-1]$, if \eqref{eq:InductiveStep1stOrderExpectations} holds for $j^*$, then \eqref{eq:InductiveStep1stOrderExpectations} also holds for $j^*+1$, and additionally, \eqref{eq:InductiveStep1stOrderExpectations} holds in the case where $j^*=1$. Thus, by induction, \eqref{eq:InductiveStep1stOrderExpectations} holds for all $j^* \in [K]$. 
   
   By recalling the alternative formula for $W_i^{(k)}$ and $W_{i'}^{(k')}$ at \eqref{eq:WkFormulaAlt}, and applying Equation \eqref{eq:InductiveStep1stOrderExpectations} in the case where $j^*=K$, $$\e \big[ W_i^{(k)} W_{i'}^{(k')} f(\datvecraw_i) g(\datvecraw_{i'})  \big]= \e \Big[ \prod_{j=1}^K W_i^{(k,j)} W_{i'}^{(k',j)} f(\datvecraw_i) g(\datvecraw_{i'})  \Big]=\e[f(\datvecraw_i) g(\datvecraw_{i'})].$$ 
\end{proof} 

Recalling that for $i \in [N]$, by definition $W_i^{(0)} =1$, we obtain the following corollary. This corollary is subsequently used to prove Proposition \ref{prop:WeightsEnableUnbiasedMeanEst}.

\begin{corollary}\label{cor:SimplifyWeightedExpectations1Term}
      Under \samplingSchemeName{,}  for any fixed, measurable $f: \mathbb{R}^q  \to \mathbb{R}$, $$\e \big[ W_i^{(k)}f(\datvecraw_i)  \big]=\e \big[ f(\datvecraw_i)  \big] \quad \text{ for all } i \in [N], k \in [K].$$
\end{corollary}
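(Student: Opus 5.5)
The plan is to reduce the corollary to Lemma \ref{lemma:SimplifyWeightedExpectations2Terms} by pairing sample $i$ with a dummy second sample that carries the trivial weight $W^{(0)} \equiv 1$ and a constant function.

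First, fix a measurable $f: \mathbb{R}^q \to \mathbb{R}$, an index $i \in [N]$ and a wave $k \in [K]$. Choose any $i' \in [N]$ with $i' \neq i$; this is possible whenever $N \geq 2$, which we may assume since the asymptotics take $N \to \infty$. Then apply Lemma \ref{lemma:SimplifyWeightedExpectations2Terms} with $k' = 0$ and $g \equiv 1$. Because $W_{i'}^{(0)} = 1$ by definition, this gives
$$\e \big[ W_i^{(k)} f(\datvecraw_i) \big] = \e \big[ W_i^{(k)} W_{i'}^{(0)} f(\datvecraw_i) g(\datvecraw_{i'}) \big] = \e \big[ f(\datvecraw_i) g(\datvecraw_{i'}) \big] = \e[f(\datvecraw_i)],$$
which is the claim.

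The only obstacle is integrability. The lemma, like the corollary, is stated for any measurable $f$, so I would read both under the implicit proviso that the expectations exist. If $\e[f(\datvecraw_i)]$ is well defined, I would rerun the argument on $f^+$ and $f^-$ separately and subtract. This is legitimate because $W_i^{(k)} \geq 0$, and the tower-property steps in the lemma's induction hold for nonnegative integrands without any integrability assumption.

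For $N = 1$, or as a self-contained alternative, I would repeat the lemma's induction directly. Using \eqref{eq:WkFormulaAlt}, condition on $\cd_{j-1}$ for $j = k, k-1, \dots, 1$. At each step use $\e[W_i^{(k,j)} f(\datvecraw_i) \giv \cd_{j-1}] = \e[f(\datvecraw_i) \giv \cd_{j-1}]$, which is the single-index analogue of Lemma \ref{lemma:SimplifyCondExpectations} with $s = 1$. It follows from Fact \ref{fact:CondMeanAndIndep} applied to sample $i$ alone.
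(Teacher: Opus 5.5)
Your proposal is correct and follows essentially the paper's proof. The paper likewise applies Lemma~\ref{lemma:SimplifyWeightedExpectations2Terms} with $k'=0$, $g\equiv 1$ and $W_{i'}^{(0)}=1$ for some $i'\neq i$; your extra remarks on integrability (splitting $f$ into $f^+$ and $f^-$) and on the case $N=1$ (rerunning the single-index induction) go beyond what the paper records but are sound.
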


\begin{proof}
    Fix $f : \mathbb{R}^q \to \mathbb{R}$ to be a measurable function. Take $g_{1} : \mathbb{R}^q \to \mathbb{R}$ to be a constant function that satisfies $g_{1}(v)=1$ for all $v \in \mathbb{R}^q$ and recall that by definition for any $i' \in [N]$, $W_{i'}^{(0)}=1$. By applying Lemma \ref{lemma:SimplifyWeightedExpectations2Terms} in the case where $g=g_{1}$ and $k'=0$, it follows that for each $k \in [K]$ and $i,i' \in [N]$, such that $i \neq i'$, $$\e[ W_i^{(k)} f(\datvecraw_i) ]=\e \big[ W_i^{(k)} W_{i'}^{(0)} f(\datvecraw_i) g_{1}(\datvecraw_{i'})  \big]=\e[f(\datvecraw_i) g_{1}(\datvecraw_{i'})]=\e[f(\datvecraw_i)].$$
\end{proof}

\subsection{Impact of weights on expectations and pairwise covariances}\label{sec:WeightPropertiesFirstOrder}

The following propositions establish that weighted averages using $W_i$ are unbiased and that the weights do not induce pairwise correlations. They focus on the setting of Assumption \ref{assump:IIDUnderlyingData}, although more general results are possible because their proofs use Lemma \ref{lemma:SimplifyWeightedExpectations2Terms} and Corollary \ref{cor:SimplifyWeightedExpectations1Term}. 

\begin{proposition}\label{prop:WeightsEnableUnbiasedMeanEst}
    Under \samplingSchemeName{} and Assumption \ref{assump:IIDUnderlyingData}, for any measurable function $f: \mathbb{R}^q \to \mathbb{R}$, $\e[W_i f(\datvecraw_i)] = \e[f(\datvecraw)]$ for each $i \in [N]$ and hence $$\e \Big[ \frac{1}{N} \sum_{i=1}^N W_i f(\datvecraw_i) \Big] = \e[f(\datvecraw)].$$
\end{proposition}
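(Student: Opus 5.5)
The plan is to reduce the claim to Corollary \ref{cor:SimplifyWeightedExpectations1Term} by linearity of expectation. Fix a measurable $f:\mathbb{R}^q \to \mathbb{R}$ and $i \in [N]$. Since $W_i = \sum_{k=1}^K c_k W_i^{(k)}$ with prespecified constants $c_k$, I will write
$$\e[W_i f(\datvecraw_i)] = \sum_{k=1}^K c_k \, \e\big[W_i^{(k)} f(\datvecraw_i)\big].$$
Corollary \ref{cor:SimplifyWeightedExpectations1Term} then replaces each summand by $\e[f(\datvecraw_i)]$. Because $\sum_{k=1}^K c_k = 1$, this gives $\e[W_i f(\datvecraw_i)] = \e[f(\datvecraw_i)]$.

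Assumption \ref{assump:IIDUnderlyingData} gives $\datvecraw_i \sim \datvecrawDist$, so $\e[f(\datvecraw_i)] = \e[f(\datvecraw)]$. Averaging over $i \in [N]$ and using linearity once more yields the displayed identity for $N^{-1}\sum_{i=1}^N W_i f(\datvecraw_i)$.

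The only delicate step is integrability, needed to justify splitting the expectation and to avoid $\infty - \infty$. I will read the statement as applying whenever $\e[|f(\datvecraw)|] < \infty$, or alternatively whenever $f \ge 0$ with values allowed in $[0,\infty]$. In the integrable case, apply Corollary \ref{cor:SimplifyWeightedExpectations1Term} to $|f|$ first. This shows $\e[W_i^{(k)}|f(\datvecraw_i)|] = \e[|f(\datvecraw)|] < \infty$; the tower-property argument behind that corollary remains valid for nonnegative functions without any integrability assumption. It follows that each $W_i^{(k)} f(\datvecraw_i)$ is integrable, so the linearity step is legitimate. In the nonnegative case no splitting issue arises. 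Note also that Assumption \ref{assump:LabellingRuleOverlap} is not needed here: the weights are finite because the labelling rules take values in $(0,1)$.
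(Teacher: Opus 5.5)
Your proposal is correct and follows the paper's proof. The paper likewise expands $W_i=\sum_k c_k W_i^{(k)}$, applies Corollary~\ref{cor:SimplifyWeightedExpectations1Term} termwise, and uses $\sum_k c_k=1$ with Assumption~\ref{assump:IIDUnderlyingData} before averaging over $i$. Your integrability remark (apply the corollary to $|f|$ first, since the tower-property argument works for nonnegative integrands) fills in a point the paper leaves implicit, and you are right that Assumption~\ref{assump:LabellingRuleOverlap} is not needed.
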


\begin{proof}
    
Fix $f : \mathbb{R}^q \to \mathbb{R}$ to be a measurable function, and suppose that Assumption \ref{assump:IIDUnderlyingData} holds. By linearity of expectation, the definition for $W_i$ in \eqref{eq:aggregated_Wk}, and Corollary \ref{cor:SimplifyWeightedExpectations1Term}, for any $i \in [N]$,
$$\e[W_i f(\datvecraw_i)] = \sum_{k=1}^K c_k \e[ W_i^{(k)} f(\datvecraw_i) ]=\sum_{k=1}^K c_k \e[f(\datvecraw_i)]=\e[f(\datvecraw)],$$ 
where the last step follows from Assumption \ref{assump:IIDUnderlyingData} and because $\sum_{k=1}^K c_k=1$. Hence, by linearity of expectation, $\e \big[N^{-1} \sum_{i=1}^N W_i f(\datvecraw_i) \big] = N^{-1} \sum_{i=1}^N  \e[ W_i f(\datvecraw_i)]= \e[f(V)].$
  
\end{proof}

\begin{proposition}\label{prop:NoCovarianceMultiwaveIPW} Under \samplingSchemeName{} and Assumption \ref{assump:IIDUnderlyingData}, for any fixed, measurable functions $f,g: \mathbb{R}^q \to \mathbb{R}$ and $i,i' \in [N]$ such that $i \neq i'$, $$\cov \big( W_i  f(\datvecraw_i), W_{i'} g(\datvecraw_{i'}) \big)= \cov \big( W_i  f(\datvecraw_i),  g(\datvecraw_{i'}) \big)=0.$$
\end{proposition}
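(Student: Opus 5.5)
The plan is to reduce both covariance claims to Lemma \ref{lemma:SimplifyWeightedExpectations2Terms}, Corollary \ref{cor:SimplifyWeightedExpectations1Term}, and Proposition \ref{prop:WeightsEnableUnbiasedMeanEst}. Throughout, fix measurable $f,g$ and indices $i \neq i'$, and assume the relevant expectations are finite so that the covariances are defined. If needed, this can be handled by first treating bounded $f,g$ and then applying the lemmas to positive and negative parts; since $W_i \le b^{-K}$ only under Assumption \ref{assump:LabellingRuleOverlap}, I will simply take integrability as implicit in the statement.

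For the first covariance, I expand $W_i = \sum_{k=1}^K c_k W_i^{(k)}$ and $W_{i'} = \sum_{k'=1}^K c_{k'} W_{i'}^{(k')}$, and then use bilinearity of expectation:
$$\e\big[W_i W_{i'} f(\datvecraw_i) g(\datvecraw_{i'})\big] = \sum_{k,k'} c_k c_{k'}\, \e\big[W_i^{(k)} W_{i'}^{(k')} f(\datvecraw_i) g(\datvecraw_{i'})\big].$$
By Lemma \ref{lemma:SimplifyWeightedExpectations2Terms}, each summand equals $c_k c_{k'}\,\e[f(\datvecraw_i) g(\datvecraw_{i'})]$. Since $\sum_k c_k = 1$, the whole sum equals $\e[f(\datvecraw_i) g(\datvecraw_{i'})]$. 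Under Assumption \ref{assump:IIDUnderlyingData}, $\datvecraw_i$ and $\datvecraw_{i'}$ are independent, so this factors as $\e[f(\datvecraw)]\,\e[g(\datvecraw)]$. Proposition \ref{prop:WeightsEnableUnbiasedMeanEst} gives $\e[W_i f(\datvecraw_i)] = \e[f(\datvecraw)]$ and $\e[W_{i'} g(\datvecraw_{i'})] = \e[g(\datvecraw)]$. Subtracting the product of these means from the cross moment gives zero.

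For the second covariance, I apply Lemma \ref{lemma:SimplifyWeightedExpectations2Terms} with $k' = 0$, using $W_{i'}^{(0)} = 1$. This gives $\e[W_i^{(k)} g(\datvecraw_{i'}) f(\datvecraw_i)] = \e[f(\datvecraw_i) g(\datvecraw_{i'})]$ for each $k$. Summing with weights $c_k$ yields $\e[W_i f(\datvecraw_i) g(\datvecraw_{i'})] = \e[f(\datvecraw)]\,\e[g(\datvecraw)]$. Since $\e[g(\datvecraw_{i'})] = \e[g(\datvecraw)]$ and $\e[W_i f(\datvecraw_i)] = \e[f(\datvecraw)]$, this covariance is also zero.

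There is no real obstacle here, because the recursive tower-property work has already been done in Lemma \ref{lemma:SimplifyWeightedExpectations2Terms}. The only delicate point is the integrability needed to split expectations across the finite sums, which is routine under the bounded-weights fact or the standing moment assumptions.
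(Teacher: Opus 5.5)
Your proposal is correct and matches the paper's proof: the paper likewise expands $W_i=\sum_k c_k W_i^{(k)}$ bilinearly, applies Lemma~\ref{lemma:SimplifyWeightedExpectations2Terms} (with $k'=0$ and $W_{i'}^{(0)}=1$ for the second claim) together with Corollary~\ref{cor:SimplifyWeightedExpectations1Term}, and concludes using the independence of $\datvecraw_i$ and $\datvecraw_{i'}$ under Assumption~\ref{assump:IIDUnderlyingData}. The only cosmetic difference is that you expand cross moments rather than covariances. The paper leaves integrability implicit, just as you do.
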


\begin{proof}
    
 Fix $f,g : \mathbb{R}^q \to \mathbb{R}$ to be measurable functions, and suppose that Assumption \ref{assump:IIDUnderlyingData} holds. Next fix $i,i' \in[N]$ such that $i \neq i'$. Note that by definition of $W_i$ in \eqref{eq:aggregated_Wk} and by applying Lemma \ref{lemma:SimplifyWeightedExpectations2Terms} and Corollary \ref{cor:SimplifyWeightedExpectations1Term}, $$\begin{aligned}
 \cov \big( W_i  f(\datvecraw_i), W_{i'} g(\datvecraw_{i'}) \big) & = \sum_{k=1}^K \sum_{k'=1}^K c_k c_{k'}  \cov \big( W_i^{(k)}  f(\datvecraw_i), W_{i'}^{(k')} g(\datvecraw_{i'}) \big)
 \\ & = \sum_{k=1}^K \sum_{k'=1}^K c_k c_{k'} \Big( \e \big[  W_i^{(k)}  W_{i'}^{(k')}  f(\datvecraw_i) g(\datvecraw_{i'}) \big] -\e \big[  W_i^{(k)} f(\datvecraw_i) \big] \e \big[ W_{i'}^{(k')} g(\datvecraw_{i'}) \big] \Big)
 \\ & = \sum_{k=1}^K \sum_{k'=1}^K c_k c_{k'} \Big( \e \big[  f(\datvecraw_i) g(\datvecraw_{i'}) \big] -\e \big[   f(\datvecraw_i) \big] \e \big[  g(\datvecraw_{i'}) \big] \Big)
 \\ & = \sum_{k=1}^K \sum_{k'=1}^K c_k c_{k'} \cov \big(   f(\datvecraw_i) , g(\datvecraw_{i'}) \big)
 \\ & =0.
 \end{aligned}$$ Above, the last step holds because by Assumption \ref{assump:IIDUnderlyingData}, $\datvecraw_i \indep \datvecraw_{i'}$, so $\cov \big(   f(\datvecraw_i) , g(\datvecraw_{i'}) \big)=0$. By a similar argument, and noting that by definition $W_{i'}^{(0)} =1$, $$\begin{aligned}
 \cov \big( W_i  f(\datvecraw_i), g(\datvecraw_{i'}) \big) & = \sum_{k=1}^K  c_k   \cov \big( W_i^{(k)}  f(\datvecraw_i), g(\datvecraw_{i'}) \big)
 \\ & = \sum_{k=1}^K  c_k \Big( \e \big[  W_i^{(k)}  W_{i'}^{(0)}  f(\datvecraw_i) g(\datvecraw_{i'}) \big] -\e \big[  W_i^{(k)} f(\datvecraw_i) \big] \e \big[ g(\datvecraw_{i'}) \big] \Big)
 \\ & = \sum_{k=1}^K  c_k  \Big( \e \big[  f(\datvecraw_i) g(\datvecraw_{i'}) \big] -\e \big[   f(\datvecraw_i) \big] \e \big[  g(\datvecraw_{i'}) \big] \Big) 
 \\ & = \sum_{k=1}^K  c_k  \cov \big(   f(\datvecraw_i) , g(\datvecraw_{i'}) \big) =0.
 \end{aligned}$$ 

 \end{proof}

\subsection{Exchangeability properties}\label{sec:ExchangeabilityPropertiesAppendix}

In this subsection we formally prove exchangeability results that hold under Assumption \ref{assump:IIDUnderlyingData} that $V_1,\dots,V_N$ are i.i.d. when the labelling strategy is symmetric (Assumption \ref{assump:CLTRgularityConditions}(i)). Some readers may find these results intuitive and prefer to skip the formal proofs in this subsection.

We first prove a more general lemma, and then present some implications of the lemma that are cited in later parts of the appendix. For the more general lemma it is convenient to define \begin{equation}\label{eq:xiDef_ObsDatVec}
\xi_i \equiv \Big(\datvecraw_i, \big(\LabelRulePi{j}(\proxyX_i) ,U_i^{(j)},I_i^{(j)} \big)_{j=1}^K \Big) \in \mathbb{R}^{q+3K} \quad \text{for each } i \in [N]. \end{equation} It is worth noting that for each $i \in [N]$, $W_i$ can be expressed as fixed, measurable function of the corresponding $\xi_i$ value (see Equations \eqref{eq:InverseProbabilityWeights} and \eqref{eq:aggregated_Wk}).

\begin{lemma}\label{lemma:StrongerExchangeability}
   Under \samplingSchemeName{} and Assumptions \ref{assump:IIDUnderlyingData} and  \ref{assump:CLTRgularityConditions}(i), $(\xi_i)_{i=1}^N$ is an exchangeable sequence of random vectors.
\end{lemma}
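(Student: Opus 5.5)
The plan is to write the whole vector $(\xi_i)_{i=1}^N$ as a deterministic, permutation-equivariant map of i.i.d. primitive inputs. The conclusion then follows because permuting i.i.d. inputs does not change their joint law. The primitives are $\zeta_i \equiv (\datvecraw_i, U_i^{(1)},\dots,U_i^{(K)})$ for $i \in [N]$. Under Assumption \ref{assump:IIDUnderlyingData}, together with the sampling scheme (each $U_i^{(k)}$ is i.i.d. Unif$[0,1]$ and independent of everything generated earlier), the $\zeta_i$ are i.i.d. across $i$.

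First I would check the joint independence of the $\zeta_i$ carefully. The $U^{(k)}$ are drawn independently of $\cd_{k-1}$ and of unobserved $\xmiss$. By induction on $k$, the array $(U_i^{(k)})_{i}$ is independent of $(\datvecraw_i, U_i^{(1:k-1)})_{i}$, since $\cd_{k-1}$ is a function of the latter. The full array is therefore a product measure, and $(\zeta_i)$ is i.i.d.

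Next I would show by induction on $k$ that there are measurable maps $F_k$ with $(\cd_k$-row for unit $i$, $\LabelRulePi{j}(\proxyX_i), I_i^{(j)})_{j\le k} = F_k(\zeta_i; \zeta_1,\dots,\zeta_N)$. Here $F_k$ is symmetric in its last $N$ arguments: permuting $(\zeta_1,\dots,\zeta_N)$ by $\tau$ leaves each unit's output unchanged as a function of its own $\zeta$. Equivalently, the output array for the permuted inputs is the $\tau$-permutation of the original output array.

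The base case $k=0$ is trivial, since $\cd_0$ consists of the $\proxyX_i$. For the inductive step, Assumption \ref{assump:CLTRgularityConditions}(i) says the rule $\LabelStrategy{k}(\cd_{k-1}) = \LabelStrategy{k}(\cd_{k-1}^{(\tau)})$. Hence the labelling function $\pi^{(k)}$, viewed as a function of the input array, is invariant under permutation of units. Evaluating it at $\proxyX_i$ gives a quantity depending on unit $i$ only through $\zeta_i$, and equivariantly on the rest. Then $I_i^{(k)} = \mathbbm{1}\{U_i^{(k)} \le \LabelRulePi{k}(\proxyX_i)\}$ and $I_i^{(k)}\xmiss_i$ inherit this equivariance, which completes the induction.

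Having obtained $\xi_\tau := (\xi_{\tau(i)})_i = G((\zeta_{\tau(i)})_i)$ for a single measurable $G$, I would finish with $(\zeta_{\tau(i)})_i \stackrel{d}{=} (\zeta_i)_i$. This gives $(\xi_{\tau(i)})_i \stackrel{d}{=} (\xi_i)_i$ for every $\tau \in \mathsf{S}_N$, which is exchangeability.

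The main obstacle is the equivariance step. Symmetry of $\LabelStrategy{k}$ is only stated for the observed data $\cd_{k-1}$, which includes $I_i^{(j)}\xmiss_i$ rather than $\xmiss_i$. So I must verify that the permuted observed data computed from the permuted primitives coincides exactly with $\cd_{k-1}^{(\tau)}$, including the dependence of earlier $I$'s on earlier rules. This is also where measurability, assumed in Section \ref{sec:DescriptionOfSamplingScheme}, is needed so that $G$ is a genuine measurable map.
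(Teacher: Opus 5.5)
Your argument is correct, but it is organized differently from the paper's. The paper never builds a global map. It defines the growing vectors $\xi_i^{(k)} = (\xi_i^{(k-1)}, \LabelRulePi{k}(\proxyX_i), U_i^{(k)}, I_i^{(k)})$ and shows by induction on $k$ that exchangeability of $(\xi_i^{(k-1)})_{i=1}^N$ passes to $(\xi_i^{(k)})_{i=1}^N$. In the inductive step, symmetry of $\LabelStrategy{k}$ is applied to the realized data, and each row of $\cd_{k-1}^{(\tau)}$ is a fixed function of the corresponding $\xi^{(k-1)}_{\tau(i')}$. The inductive hypothesis then transfers the joint law, after which the fresh uniforms $U^{(k)}$ and the indicators are appended.

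You instead push all randomness into the i.i.d. primitives $\zeta_i$. You establish the deterministic identity $(\xi_{\tau(i)})_i = G((\zeta_{\tau(i)})_i)$ by induction and use a distributional identity only once, at the end. The ingredients are the same in both routes: symmetry of each $\LabelStrategy{k}$, fresh independent uniforms, and joint measurability of $(\cd,\argxProxy)\mapsto[\LabelStrategy{k}(\cd)](\argxProxy)$. So neither route is more general.

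Your route separates the single probabilistic input (the product structure of the primitives) from a purely combinatorial equivariance check. This is the standard template for i.i.d. inputs passed through a permutation-equivariant procedure. The paper's route avoids constructing $G$, and the bookkeeping you flag as the main obstacle never arises there, since $\cd_{k-1}^{(\tau)}$ is trivially a fixed function of the permuted $\xi^{(k-1)}$'s. In your route that check is also automatic, because you carry the $\cd_k$-rows in the inductive hypothesis.

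One small correction. Your inductive justification that the $\zeta_i$ are i.i.d. runs the wrong way. Knowing that $U^{(k)}$ is independent of $\cd_{k-1}$, which is a function of $(\datvecraw_i, (U_i^{(j)})_{j<k})_{i=1}^N$, does not make $U^{(k)}$ independent of that larger collection. What you need is that each wave's uniforms are fresh randomness, independent of everything generated earlier, including the unobserved $\xmiss_i$ and the earlier $U$'s. The paper uses exactly this, both in this proof and in Lemma~\ref{lemma:PropertiesOfIndependentVersionsOfWeights}, so state it as the modeling assumption rather than deriving it.
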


\begin{proof}

For each $i \in [N]$, recursively define $$\xi_i^{(0)} \equiv  \datvecraw_i \quad \text{and} \quad  \xi_i^{(k)}  \equiv \Big(\xi_i^{(k-1)}, \LabelRulePi{k}(\proxyX_i),U_i^{(k)} ,I_i^{(k)}  \Big) \quad \text{for } k=1,\dots,K.$$ We will prove by induction that for each $k \in \{0\} \cup [K]$, $(\xi_i^{(k)})_{i=1}^N$ is an exchangeable sequence of $(q+3k)$-dimensional random vectors. By Assumption \ref{assump:IIDUnderlyingData}, $\datvecraw_1,\dots,\datvecraw_N$ are i.i.d. (and hence exchangeable), implying that $(\xi_i^{(k)})_{i=1}^N$ is an exchangeable sequence of random vectors in the case where $k=0$. 

Next fix $k \in [K]$ and when assuming the inductive hypothesis that $(\xi_i^{(k-1)})_{i=1}^N$ is an exchangeable sequence of random vectors, we will show that $(\xi_i^{(k)})_{i=1}^N$ is an exchangeable sequence of random vectors. To do this first fix a permutation $\tau \in \mathsf{S}_N$. Next note that by Assumption \ref{assump:CLTRgularityConditions}(i), $$\Big( \xi_{\tau(i)}^{(k-1)}, \LabelRulePi{k}(\proxyX_{\tau(i)}) \Big)_{i=1}^N =  \Big( \xi_{\tau(i)}^{(k-1)},\big[ \LabelStrategy{k} (\cd_{k-1}) \big](\proxyX_{\tau(i)}) \Big)_{i=1}^N=\Big( \xi_{\tau(i)}^{(k-1)},\big[ \LabelStrategy{k} (\cd_{k-1}^{(\tau)}) \big](\proxyX_{\tau(i)}) \Big)_{i=1}^N.$$ Next let $h(\cdot)$ and $h^*(\cdot)$ be fixed measurable functions such that $h(\xi_i^{(k-1)}) \equiv \big( (I_{i}^{(j)},I_{i}^{(j)} \cdot \xmiss_{i})_{j=1}^{k-1},\proxyX_{i} \big)$ and $h^*(\xi_i^{(k-1)}) \equiv \proxyX_i$ for each $i \in [N]$. Recalling the definition of $\cd_{k-1}^{(\tau)}$ and $\cd_{k-1}$, observe that $\cd_{k-1}^{(\tau)}=\big( h(\xi_{\tau(i')}^{(k-1)}) \big)_{i'=1}^N$ and $\cd_{k-1}=\big( h(\xi_{i'}^{(k-1)}) \big)_{i'=1}^N$. By the inductive hypothesis $(\xi_i^{(k-1)})_{i=1}^N$ is exchangeable, and hence the sequence of $N$ (long) random vectors (in $\mathbb{R}^{(N+1)(q+3k-3)}$) indexed by $i$ satisfies
$$\Big( \xi_i^{(k-1)}, (\xi_{i'}^{(k-1)})_{i'=1}^N \Big)_{i=1}^N \stackrel{\text{dist}}{=} \Big( \xi_{\tau(i)}^{(k-1)}, (\xi_{\tau(i')}^{(k-1)})_{i'=1}^N \Big)_{i=1}^N.$$ Combining previous results, since $\LabelStrategy{k}$ is a fixed labelling strategy, $$\begin{aligned} \Big( \xi_{\tau(i)}^{(k-1)}, \LabelRulePi{k}(\proxyX_{\tau(i)}) \Big)_{i=1}^N & =  \Big( \xi_{\tau(i)}^{(k-1)},\big[ \LabelStrategy{k} (\cd_{k-1}^{(\tau)}) \big](\proxyX_{\tau(i)}) \Big)_{i=1}^N 
\\ & = \Bigg( \xi_{\tau(i)}^{(k-1)},\Big[ \LabelStrategy{k} \Big( \big( h(\xi_{\tau(i')}^{(k-1)}) \big)_{i'=1}^N \Big) \Big] \Big( h^* ( \xi_{\tau(i)}^{(k-1)} ) \Big) \Bigg)_{i=1}^N
\\ & \stackrel{\text{dist}}{=}  \Bigg( \xi_{i}^{(k-1)},\Big[ \LabelStrategy{k} \Big( \big( h(\xi_{i'}^{(k-1)}) \big)_{i'=1}^N \Big) \Big] \Big( h^* ( \xi_i^{(k-1)} ) \Big) \Bigg)_{i=1}^N 
\\ & = \Big( \xi_i^{(k-1)},\big[ \LabelStrategy{k} (\cd_{k-1}) \big](\proxyX_i) \Big)_{i=1}^N 
\\ & = \Big( \xi_i^{(k-1)}, \LabelRulePi{k} (\proxyX_i) \Big)_{i=1}^N.\end{aligned}$$ Next recall that $U_1^{(k)},\dots,U_N^{(k)} \stackrel{\text{i.i.d.}}{\sim} \text{Unif}[0,1]$ are the quantities that are generated independently of $\big( \xi_i^{(k-1)}, \LabelRulePi{k} (\proxyX_i) \big)_{i=1}^N$ in \samplingSchemeName{} and hence by the previous result $$\Big( \xi_{\tau(i)}^{(k-1)}, \LabelRulePi{k}(\proxyX_{\tau(i)}),U_{\tau(i)}^{(k)} \Big)_{i=1}^N \stackrel{\text{dist}}{=} \Big( \xi_i^{(k-1)}, \LabelRulePi{k} (\proxyX_i),U_i^{(k)} \Big)_{i=1}^N.$$ This further implies $$\Big( \xi_{\tau(i)}^{(k-1)}, \LabelRulePi{k}(\proxyX_{\tau(i)}),U_{\tau(i)}^{(k)},\mathbbm{1} \bigl\{ U_{\tau(i)}^{(k)} \leq \LabelRulePi{k} (\proxyX_{\tau(i)}) \bigr\} \Big)_{i=1}^N \stackrel{\text{dist}}{=} \Big( \xi_i^{(k-1)}, \LabelRulePi{k} (\proxyX_i),U_i^{(k)},\mathbbm{1} \bigl\{ U_i^{(k)} \leq \LabelRulePi{k} (\proxyX_i) \bigr\} \Big)_{i=1}^N.$$ Recalling that under \samplingSchemeName{,} for $i \in [N]$, $I_i^{(k)} = \mathbbm{1}\bigl\{ U_i^{(k)} \leq \LabelRulePi{k} (\proxyX_i) \bigr\}$ and that by definition for $i \in [N]$, $\xi_i^{(k)}=\big( \xi_i^{(k-1)}, \LabelRulePi{k} (\proxyX_i),U_i^{(k)} ,I_i^{(k)} \big)$, we can simplify each side of the above expression to get that $\big( \xi_{\tau(i)}^{(k)} \big)_{i=1}^N \stackrel{\text{dist}}{=} \big( \xi_i^{(k)} \big)_{i=1}^N$. Because this argument that $\big( \xi_{\tau(i)}^{(k)} \big)_{i=1}^N \stackrel{\text{dist}}{=} \big( \xi_i^{(k)} \big)_{i=1}^N$ holds for any fixed permutation $\tau \in \mathsf{S}_N$, $\big( \xi_i^{(k)} \big)_{i=1}^N$ is an exchangeable sequence of random vectors. 

Thus we have shown that when we assume the inductive hypothesis that for some $k \in [K]$, $\big( \xi_i^{(k-1)} \big)_{i=1}^N$ is an exchangeable sequence of random vectors it follows that $\big( \xi_i^{(k)} \big)_{i=1}^N$ is an exchangeable sequence of random vectors. Recalling that $\big( \xi_i^{(0)} \big)_{i=1}^N$ is an exchangeable sequence of random vectors, we have thus shown by induction that for each $k \in \{0\} \cup [K]$, $(\xi_i^{(k)})_{i=1}^N$ is an exchangeable sequence of random vectors. To complete the proof observe that by definition $\xi_i=\xi_i^{(K)}$ for each $i \in [N]$, and hence $(\xi_i)_{i=1}^N$ is an exchangeable sequence of random vectors.
\end{proof}

The following result is a consequence of this lemma. Throughout the text it enables us to alternate between expectations that are taken with respect to the $i$th (and $i'$th) sample with expectations that are taken with respect to the 1st (and 2nd) sample.

\begin{proposition}\label{prop:ExchangeableData}
    Under \samplingSchemeName{} and Assumptions \ref{assump:IIDUnderlyingData} and  \ref{assump:CLTRgularityConditions}(i), $(I_i, W_i,\goodX_i,\proxyX_i)_{i=1}^N$ is an exchangeable sequence of random vectors.
\end{proposition}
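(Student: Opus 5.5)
This follows from Lemma \ref{lemma:StrongerExchangeability} by pushing the exchangeable sequence $(\xi_i)_{i=1}^N$ through a fixed coordinatewise map. Exchangeability is preserved when the same fixed measurable function is applied to each element of an exchangeable sequence. If $(\xi_{\tau(i)})_{i=1}^N \stackrel{\text{dist}}{=} (\xi_i)_{i=1}^N$ for every $\tau \in \mathsf{S}_N$ and $\psi$ is measurable, then $(\psi(\xi_{\tau(i)}))_{i=1}^N \stackrel{\text{dist}}{=} (\psi(\xi_i))_{i=1}^N$. This holds because the map $(z_1,\dots,z_N) \mapsto (\psi(z_1),\dots,\psi(z_N))$ is measurable and does not depend on $\tau$. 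So it suffices to exhibit one fixed measurable $\psi: \mathbb{R}^{q+3K} \to \mathbb{R}^{2+2p}$ with $\psi(\xi_i) = (I_i, W_i, \goodX_i, \proxyX_i)$ for every $i \in [N]$.

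\textbf{Constructing $\psi$.} Recall $\xi_i = \big(\datvecraw_i, (\LabelRulePi{j}(\proxyX_i), U_i^{(j)}, I_i^{(j)})_{j=1}^K\big)$.
\begin{itemize}
\item $\goodX_i = (\xobs_i, \xmiss_i)$ and $\proxyX_i = (\xobs_i, \txmiss_i)$ are coordinate projections of $\datvecraw_i$.
\item $I_i = 1 - \prod_{j=1}^K (1 - I_i^{(j)})$ is a polynomial in the indicator coordinates.
\item $W_i = \sum_{k=1}^K c_k \big(\prod_{j=1}^{k-1} \frac{1-I_i^{(j)}}{1-\LabelRulePi{j}(\proxyX_i)}\big) \frac{I_i^{(k)}}{\LabelRulePi{k}(\proxyX_i)}$, by \eqref{eq:InverseProbabilityWeights} and \eqref{eq:aggregated_Wk}. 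This is a rational function of the coordinates of $\xi_i$, with the fixed constants $c_k$ not depending on $i$.
\end{itemize}

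The only technical point is that this rational function must be measurable on all of $\mathbb{R}^{q+3K}$. The denominators can vanish only off the range of $\xi_i$, since the labelling probabilities lie in $(0,1)$. I would define $\psi$ by the formula on the open set where every probability coordinate lies in $(0,1)$, and set it to $0$ elsewhere. This $\psi$ is Borel measurable (continuous on an open set, constant on its complement) and agrees with $(I_i, W_i, \goodX_i, \proxyX_i)$ almost surely. Applying $\psi$ coordinatewise to the exchangeable sequence from Lemma \ref{lemma:StrongerExchangeability} then gives the claim.

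No step is a real obstacle. The only care needed is that $\psi$ is the same function for every $i$, and it is, because $W_i$ depends on sample $i$ only through $\xi_i$. The permutation-invariance of the labelling strategy (Assumption \ref{assump:CLTRgularityConditions}(i)) was already used to prove Lemma \ref{lemma:StrongerExchangeability}.
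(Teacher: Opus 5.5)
Your proposal is correct and follows the paper's proof: both apply one fixed measurable map $h$ coordinatewise to the exchangeable sequence $(\xi_i)_{i=1}^N$ from Lemma~\ref{lemma:StrongerExchangeability}, with $h(\xi_i)=(I_i,W_i,\goodX_i,\proxyX_i)$. Your explicit construction of a Borel version of $h$ is a careful addition; since labelling rules take values in $(0,1)$, it in fact agrees with the target vector everywhere, not just almost surely.
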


\begin{proof}
    
Recall that for each $i \in [N]$, $\datvecraw_i \equiv ( \xobs_i,\txmiss_i,\xmiss_i)$, $\proxyX_i \equiv (\xobs_i,\txmiss_i)$, $\goodX_i \equiv ( \xobs_i,\xmiss_i)$, and by Equation \eqref{eq:InverseProbabilityWeights} and \eqref{eq:aggregated_Wk} and the definition of $I_i$, $$W_i = \sum_{k=1}^K c_k \Big( \prod_{j=1}^{k-1}  \frac{1-I_i^{(j)}}{1-\LabelRulePi{j} (\proxyX_i)} \Big) \frac{I_i^{(k)}}{\LabelRulePi{k} (\proxyX_i)} \quad \text{and} \quad I_i= \mathbbm{1} \bigl\{   I_i^{(k)} =1 \text{ for some } k \in [K] \bigr\}.$$ Because by Definition \eqref{eq:xiDef_ObsDatVec}, $$\xi_i \equiv \Big(\datvecraw_i, \big(\LabelRulePi{j}(\proxyX_i),U_i^{(j)}, I_i^{(j)} \big)_{j=1}^K \Big) \quad \text{for each } i \in [N],$$ it follows that there exists a fixed, measurable function $h: \mathbb{R}^{q+3K} \to \mathbb{R}^{2+2p}$ such that $h(\xi_i)=(I_i,W_i,\goodX_i,\proxyX_i)$ for each $i \in [N]$. Letting $h$ be such a fixed function, and noting that by Lemma \ref{lemma:StrongerExchangeability} $(\xi_i)_{i=1}^N$ is an exchangeable sequence of random vectors, it follows that $$\big(h(\xi_i) \big)_{i=1}^N= (I_i,W_i,\goodX_i,\proxyX_i)_{i=1}^N,$$ is an exchangeable sequence of random vectors.

\end{proof}

We next state a corollary to Proposition \ref{prop:ExchangeableData}. This corollary is used when establishing consistency of the covariance estimators in Appendix \ref{sec:AppendixProovingConsistentCovarianceAndValidCIs}. 

\begin{corollary}\label{cor:SameDistWithEstimators}
    In the setting of Proposition \ref{prop:ExchangeableData}, for each fixed $N \in \mathbb{Z}_+$ and $i \in [N]$, $(\htc, \hga, \goodX_i, \proxyX_i)$ and $(\htc, \hga, \goodX_1, \proxyX_1)$ have the same joint distribution.
\end{corollary}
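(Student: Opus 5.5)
The plan is to express $\htc$ and $\hga$ as permutation-invariant functions of the exchangeable sequence $(I_{i'}, W_{i'},\goodX_{i'},\proxyX_{i'})_{i'=1}^N$, and then to apply a transposition.

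First, note that $\htc=\argmin_{\theta}N^{-1}\sum_{i'}W_{i'}l_\theta(\goodX_{i'})$ depends on the data only through $(W_{i'},\goodX_{i'})_{i'=1}^N$. Since $W_{i'}=0$ whenever $I_{i'}=0$, the unobserved $\xmiss_{i'}$ enters only through $W_{i'}\xmiss_{i'}$, so no issue arises. Likewise, $\hga$ depends only on $(\proxyX_{i'})_{i'=1}^N$. The objective functions are sums and are therefore invariant under reordering of the index $i'$. Hence there is a fixed measurable map $F$, which includes a fixed measurable tie-breaking rule for the argmin if the minimizer is not unique, such that $(\htc,\hga)=F\big((W_{i'},\goodX_{i'},\proxyX_{i'})_{i'=1}^N\big)$ and
\[
F\big((W_{\tau(i')},\goodX_{\tau(i')},\proxyX_{\tau(i')})_{i'=1}^N\big)=F\big((W_{i'},\goodX_{i'},\proxyX_{i'})_{i'=1}^N\big)
\]
for every $\tau\in\mathsf{S}_N$. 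I expect the main care point to be making this permutation invariance rigorous under a measurable selection of the argmin. I would handle it by declaring that the estimator is defined through a selection that depends only on the objective function $\theta\mapsto N^{-1}\sum_{i'}W_{i'}l_\theta(\goodX_{i'})$ (and the analogous objective for $\hga$). Such a selection is automatically symmetric because the objective itself is.

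Next, fix $i$ and let $\tau$ be the transposition swapping $1$ and $i$. Write $Z_{i'}=(I_{i'},W_{i'},\goodX_{i'},\proxyX_{i'})$. By Proposition \ref{prop:ExchangeableData},
\[
(Z_{\tau(i')})_{i'=1}^N\stackrel{\text{dist}}{=}(Z_{i'})_{i'=1}^N .
\]
Apply the fixed measurable map $G\big((z_{i'})_{i'}\big)=\big(F((z_{i'})_{i'}),\,x_1,\,\tilde x_1\big)$, where $x_1$ and $\tilde x_1$ denote the $\goodX$ and $\proxyX$ coordinates of $z_1$. The right-hand side then gives $(\htc,\hga,\goodX_1,\proxyX_1)$. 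By the permutation invariance of $F$, the left-hand side gives $(\htc,\hga,\goodX_{\tau(1)},\proxyX_{\tau(1)})=(\htc,\hga,\goodX_i,\proxyX_i)$. Equality in distribution is preserved under measurable maps, which yields the claim.
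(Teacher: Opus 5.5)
Your proposal is correct and follows essentially the same argument as the paper. Both write $(\htc,\hga)$ as a permutation-invariant function of the exchangeable stacking $((W_{i'},\goodX_{i'},\proxyX_{i'}))_{i'=1}^N$, then push the distributional identity from Proposition \ref{prop:ExchangeableData} through a measurable map that also extracts one unit's coordinates under a permutation relating indices $1$ and $i$ (the paper takes any $\tau$ with $\tau(i)=1$; you take the transposition), and your remark about a symmetric measurable tie-breaking rule for the argmin makes explicit a point the paper leaves implicit when it asserts the existence of its invariant map $h_N$.
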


\begin{proof}
    Fix $N \in \mathbb{Z}_+$ and $i \in [N]$. Fix $\tau \in \mathsf{S}_N$ to be some permutation such that $\tau(i)=1$. Next, define $$\cd_{\text{all}} \equiv  \big( (W_{i'},\goodX_{i'},\proxyX_{i'}) \big)_{i'=1}^N \quad \text{and} \quad \cd_{\text{all}}^{(\tau)} \equiv  \big( (W_{\tau(i')},\goodX_{\tau(i')},\proxyX_{\tau(i')}) \big)_{i'=1}^N,$$ to be a stacking of $N$ $\mathbb{R}^{1+2p}$-valued random vectors and a permuted version of this stacking, respectively.  Next observe that as a consequence of Proposition \ref{prop:ExchangeableData}, $\cd_{\text{all}}$ and $\cd_{\text{all}}^{(\tau)}$ have the same joint distribution. Since $(\goodX_i,\proxyX_i)$ are components of $\cd_{\text{all}}$ and $(\goodX_{\tau(i)},\proxyX_{\tau(i)})$ are correspondingly indexed components of $\cd_{\text{all}}^{(\tau)}$, $$(\cd_{\text{all}},\goodX_i,\proxyX_i) \stackrel{\text{dist}}{=}(\cd_{\text{all}}^{(\tau)},\goodX_{\tau(i)},\proxyX_{\tau(i)}).$$ 

    Now recall from \eqref{eq:PTDComponentEstimatorsDEF} that $\htc= \argmin_{\theta \in \Theta} \sum_{i'=1}^N W_{i'} l_{\theta}(\goodX_{i'})$ and $\hga = \argmin_{\theta \in \Theta}  \sum_{i'=1}^N  l_{\theta}(\proxyX_{i'}).$ Note that the values of $\htc$ and $\hga$ do not change under permutations of $\big( (W_{i'},\goodX_{i'},\proxyX_{i'}) \big)_{i'=1}^N$. In particular, there exists a fixed function $h_N$ such that for all $\sigma \in \mathsf{S}_N$, $$h_N \Big( \big( (W_{\sigma(i')},\goodX_{\sigma(i')},\proxyX_{\sigma(i')}) \big)_{i'=1}^N \Big)=(\htc,\hga),$$ regardless of the permutation $\sigma \in \mathsf{S}_N$. Choosing such a function $h_N$, note that $(\htc,\hga)=h_N( \cd_{\text{all}} )$ and $h_N( \cd_{\text{all}}^{(\tau)} )=(\htc,\hga)$. Combining this with a result displayed above and recalling that $\tau(i)=1$, $$\big( \htc,\hga, \goodX_i,\proxyX_i \big) =\big( h_N( \cd_{\text{all}} ), \goodX_i,\proxyX_i \big) \stackrel{\text{dist}}{=} \big( h_N( \cd_{\text{all}}^{(\tau)} ), \goodX_{\tau(i)},\proxyX_{\tau(i)} \big)=\big( \htc,\hga, \goodX_1,\proxyX_1 \big).$$

\end{proof}

\subsection{Implications of \texorpdfstring{$L^1$}{L1} convergence assumption}\label{sec:L1ConvergenceImplicationsAppendix}

The following lemma enables us to switch from a statement about $L^1$ convergence of the labelling rule assumed by Assumption \ref{assump:CLTRgularityConditions}(ii), to other notions of $L$-type convergence of the labelling rule. The proof leverages Assumption \ref{assump:LabellingRuleOverlap} that the labelling probabilities are bounded away from $0$ and $1$.

\begin{lemma}\label{lemma:InverseLpConvergence}
    Under \samplingSchemeName{} and Assumptions \ref{assump:IIDUnderlyingData}, \ref{assump:LabellingRuleOverlap}, \ref{assump:CLTRgularityConditions}(i) and \ref{assump:CLTRgularityConditions}(ii), for any $j \in [K]$, $r \in \{-1,0,1\} $, $l \geq 1$, $$\lim_{ N \to \infty } \e \Big[ \Big| \frac{1}{\phi_r \big(\LabelRulePi{j}(\proxyX_i) \big)} - \frac{1}{\phi_r \big( \LimitingLabelRulePi{j}(\proxyX_i) \big)} \Big|^l \Big]=0,$$ where $i$ is a fixed positive integer that does not increase with $N$, and $\phi_r$ is defined at \eqref{eq:ComplementOr1Operator}.
\end{lemma}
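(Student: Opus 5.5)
Three facts drive the argument: boundedness away from zero from Assumption \ref{assump:LabellingRuleOverlap}, the Lipschitz property of $t \mapsto 1/t$ on $[b,1]$, and exchangeability to move from index $1$ to index $i$.

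\emph{Reduction to $i=1$.} Fix $j \in [K]$, $r \in \{-1,0,1\}$, $l \geq 1$, and $i$; only $N \geq i$ matter. By Lemma \ref{lemma:StrongerExchangeability}, $(\xi_i)_{i=1}^N$ is exchangeable, and $\xi_i$ contains $\datvecraw_i$ and $\LabelRulePi{j}(\proxyX_i)$. Hence $\big(\LabelRulePi{j}(\proxyX_i), \proxyX_i\big) \stackrel{\text{dist}}{=} \big(\LabelRulePi{j}(\proxyX_1), \proxyX_1\big)$. Since $\LimitingLabelRulePi{j}$ is a fixed measurable function, the quantity inside the expectation has the same distribution for index $i$ as for index $1$. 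It therefore suffices to treat $i=1$.

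\emph{Trivial and pointwise cases.} For $r=1$ we have $\phi_1 \equiv 1$, so the difference is identically $0$. For $r \in \{-1,0\}$, Assumption \ref{assump:LabellingRuleOverlap} gives $\phi_r(\LabelRulePi{j}(\proxyX_1)) \in [b,1-b]$ almost surely. Assumption \ref{assump:CLTRgularityConditions}(ii) gives $\phi_r(\LimitingLabelRulePi{j}(\proxyX_1)) \in [b,1-b]$, since $\LimitingLabelRulePi{j}$ takes values in $[b,1-b]$. For $s,t \in [b,1]$ we have $|1/s - 1/t| = |s-t|/(st) \leq b^{-2}|s-t|$. Moreover $|\phi_r(s)-\phi_r(t)| = |s-t|$ for $r \in \{-1,0\}$. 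So, almost surely,
$$\Big| \frac{1}{\phi_r(\LabelRulePi{j}(\proxyX_1))} - \frac{1}{\phi_r(\LimitingLabelRulePi{j}(\proxyX_1))}\Big| \leq b^{-2} \big|\LabelRulePi{j}(\proxyX_1) - \LimitingLabelRulePi{j}(\proxyX_1)\big|.$$

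\emph{From $L^1$ to $L^l$.} The left-hand side above is also bounded by $b^{-1}$, being a difference of two numbers in $[1,b^{-1}]$. Using $|D|^l \leq (b^{-1})^{l-1}|D|$ for $l \geq 1$, the $l$th moment is at most $b^{-(l+1)}\,\e\big[|\LabelRulePi{j}(\proxyX_1) - \LimitingLabelRulePi{j}(\proxyX_1)|\big]$. This tends to $0$ by Assumption \ref{assump:CLTRgularityConditions}(ii).

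The main care point is the exchangeability step, since Assumption \ref{assump:CLTRgularityConditions}(ii) is stated only for index $1$. Lemma \ref{lemma:StrongerExchangeability} resolves this, because the joint law of $\big(\proxyX_i, \LabelRulePi{j}(\proxyX_i)\big)$ is invariant in $i$ for each fixed $N$. The remaining steps are routine bounds.
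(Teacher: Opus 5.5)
Your proof is correct and follows essentially the paper's route. Both use three ingredients: the $b^{-2}$-Lipschitz bound for $t\mapsto 1/\phi_r(t)$ on $[b,1-b]$; boundedness to pass from the $L^1$ convergence in Assumption \ref{assump:CLTRgularityConditions}(ii) to $L^l$ convergence; and exchangeability (Lemma \ref{lemma:StrongerExchangeability}) to replace index $i$ by index $1$. The differences are cosmetic. You reduce to $i=1$ first and use $|D|^l\le b^{-(l-1)}|D|$, whereas the paper uses $|\pi-\bar{\pi}|^l\le|\pi-\bar{\pi}|$ (since $|\pi-\bar{\pi}|\le 1$) and invokes exchangeability at the end, which gives the constant $b^{-2l}$ instead of your $b^{-(l+1)}$.
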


\begin{proof}
   Fix $j \in [K]$, $r \in \{-1,0,1\}$ $l \geq 1$, and $i$ to be a positive integer that does not increase as $N \to \infty$. Next consider the three functions $h_{-1},h_0,h_1 : [b,1-b] \to \mathbb{R}$ given by $$h_{-1}(t) \equiv \frac{1}{\phi_{-1}(t)} = \frac{1}{1-t}, \quad h_{0}(t) \equiv \frac{1}{\phi_{0}(t)}=\frac{1}{t}, \quad \text{and} \quad h_1(t) \equiv \frac{1}{\phi_1(t)}=1 \quad \text{for } t \in [b,1-b],$$ where $b \in (0,1/2)$. Note that regardless of the value of $r \in \{-1,0,1\}$, the maximum derivative of $h_r(\cdot)$ satisfies $\sup_{t \in [b,1-b]} \vert h_r'(t) \vert \leq b^{-2}$. Since by Assumptions \ref{assump:LabellingRuleOverlap} and \ref{assump:CLTRgularityConditions}(ii), $\LabelRulePi{j}(\proxyX_i)  \in [b,1-b]$ and $\LimitingLabelRulePi{j}(\proxyX_i) \in [b,1-b]$ almost surely, $$\begin{aligned}
\e \Big[ \Big| \frac{1}{\phi_r \big( \LabelRulePi{j}(\proxyX_i) \big) } - \frac{1}{ \phi_r \big( \LimitingLabelRulePi{j}(\proxyX_i) \big) } \Big|^l \Big] & = \e \big[ \big| h_r \big( \LabelRulePi{j}(\proxyX_i) \big) - h_r \big( \LimitingLabelRulePi{j}(\proxyX_i) \big) \big|^l \big] 
\\ & \leq  \e \big[ \big( \sup_{t \in [b,1-b]} \vert h_r'(t) \vert \cdot \vert  \LabelRulePi{j}(\proxyX_i) -  \LimitingLabelRulePi{j}(\proxyX_i) \vert \big)^l \big] 
\\ & \leq b^{-2l} \e \big[  \big| \LabelRulePi{j}(\proxyX_i)- \LimitingLabelRulePi{j}(\proxyX_i) \big|^l  \big]
\\ & \leq b^{-2l} \e \big[  \big| \LabelRulePi{j}(\proxyX_i)- \LimitingLabelRulePi{j}(\proxyX_i) \big|  \big]
\\ & \leq b^{-2l} \e \big[  \big| \LabelRulePi{j}(\proxyX_1)- \LimitingLabelRulePi{j}(\proxyX_1) \big|  \big].
\end{aligned}$$ Above the penultimate step holds because $\big| \LabelRulePi{j}(\proxyX_i)- \LimitingLabelRulePi{j}(\proxyX_i) \big| \in [0,1]$ almost surely and $l \geq 1$, while the final step follow because $\LimitingLabelRulePi{j}$ is a fixed function while $\big(\proxyX_i,\LabelRulePi{j}(\proxyX_i) \big)_{i=1}^N$ is an exchangeable sequence of random vectors (as a direct consequence of Lemma \ref{lemma:StrongerExchangeability}). Considering the limit as $N \to \infty$, by the above inequality and the labelling rule convergence assumption (Assumption  \ref{assump:CLTRgularityConditions}(ii)), $$0 \leq \limsup_{N \to \infty} \e \Big[ \Big| \frac{1}{\phi_r \big( \LabelRulePi{j}(\proxyX_i) \big) } - \frac{1}{ \phi_r \big( \LimitingLabelRulePi{j}(\proxyX_i) \big) } \Big|^l \Big] \leq  b^{-2l} \limsup_{N \to \infty} \e \big[  \big| \LabelRulePi{j}(\proxyX_1)- \LimitingLabelRulePi{j}(\proxyX_1) \big|  \big] =0,$$ implying the desired result.

\end{proof}

\subsection{Asymptotic simplifications for expectations with squared weights}\label{sec:SquaredWeightsPropertiesAppendix}
In this subsection, we study the asymptotic properties of expectations of quantities multiplied by $(W_1^{(k)})^2 (W_2^{(k')})^2$ for each $k,k' \in \{0\} \cup [K]$. The results are ultimately used to prove Proposition \ref{prop:Vanishing2ndOrderCov} which establishes that covariances of the form $\cov\big(W_1^2 f(\datvecraw_1), W_2^2 g(\datvecraw_2) \big)$ decay asymptotically. The only place in this manuscript where results from this subsection are used is in Proposition \ref{prop:ConsistencyOfSigmaUpperLeft} when proving the consistency of the covariance matrix estimators $\hat{\Sigma}_{11}$, $\hat{\Sigma}_{12}$, and $\hat{\Sigma}_{22}$ defined at \eqref{eq:HessianAndSigmaEstimators}, which have $W_i^2$ terms in their formulas. Therefore, readers less interested in the theoretical development of the consistency results for $\hat{\Sigma}_{11}$, $\hat{\Sigma}_{12}$, and $\hat{\Sigma}_{22}$ (and the confidence interval validity result they help establish (Proposition \ref{prop:AsymptoticallyValidCIs})), may prefer to skim this subsection, looking only at the Proposition \ref{prop:SimplifyHigherOrderExpectations} and \ref{prop:Vanishing2ndOrderCov} statements while skipping the proofs and lemmas below.

For each $k,k',\kappa \in \{0 \} \cup [K]$, it helps to define $$\Xi^{(k,k',\kappa)} \equiv \frac{ \prod_{j=1}^{\kappa}  \big( W_1^{(k,j)} \big)^2 \big( W_2^{(k',j)} \big)^2  }{ \prod_{j=\kappa+1}^{K}   \phi_{\text{sgn}(j-k)}\big( \LimitingLabelRulePi{j}(\proxyX_1) \big)  \cdot  \phi_{\text{sgn}(j-k')} \big( \LimitingLabelRulePi{j}(\proxyX_2) \big)  }$$ and a variant without the $\big(W_1^{(k,\kappa)} \big)^2$ and $\big(W_2^{(k',\kappa)}\big)^2$ terms in the numerator by $$\Xi_{-}^{(k,k',\kappa)} \equiv \frac{ \prod_{j=1}^{\kappa-1}  \big( W_1^{(k,j)} \big)^2 \big( W_2^{(k',j)} \big)^2  }{ \prod_{j=\kappa+1}^{K}   \phi_{\text{sgn}(j-k)}\big( \LimitingLabelRulePi{j}(\proxyX_1) \big) \cdot \phi_{\text{sgn}(j-k')} \big(\LimitingLabelRulePi{j}(\proxyX_2) \big)  }.$$ In the above definitions, when the lower limit is larger than the upper limit of a product, that product is defined to be $1$. Next recall definition \eqref{eq:PiBarProd1tok} that
    $$\LimitingLabelRulePi{1:k}(\argxProxy) = \LimitingLabelRulePi{k}(\argxProxy) \prod_{j=1}^{k-1} \big( 1-  \LimitingLabelRulePi{k}(\argxProxy) \big) \quad \text{for } k \in [K], \argxProxy \in \xProxySpace, \quad \text{and define }  \LimitingLabelRulePi{1:0}(\argxProxy) \equiv 1 \quad \text{for } \argxProxy \in \xProxySpace.$$ Note that by formula \eqref{eq:ComplementOr1Operator} for $\phi_{-1},\phi_0,\phi_1$, \begin{equation}\label{eq:XiAt0}
        \Xi^{(k,k',0)} = \frac{1}{\LimitingLabelRulePi{1:k}(\proxyX_1)  \cdot \LimitingLabelRulePi{1:k'}(\proxyX_2) }.
    \end{equation} Meanwhile recalling the alternative formulas for $W_1^{(k)}$ and $W_2^{(k')}$ at \eqref{eq:WkFormulaAlt}, 
    
    \begin{equation}\label{eq:XiAtK}
        \Xi^{(k,k',K)}=   \big( \prod_{j=1}^{K} W_1^{(k,j)} \big)^2 \big( \prod_{j=1}^{K} W_2^{(k',j)} \big)^{2} =\big( W_1^{(k)} \big)^2 \big( W_2^{(k')} \big)^{2}.
    \end{equation}

The next lemma allows us to relate $\Xi^{(k,k',\kappa)}$ to $\Xi_-^{(k,k',\kappa)}$ quantities while the lemma after it allows us to relate $\Xi_-^{(k,k',\kappa)}$ to $\Xi^{(k,k',\kappa-1)}$ quantities. These lemmas, are then applied in an alternating fashion to relate expectations with $\Xi^{(k,k',K)}$ to those with $\Xi^{(k,k',0)}$.

\begin{lemma}\label{lemma:DropSquares} Under \samplingSchemeName{,} for any fixed, measurable $f,g : \mathbb{R}^q \to \mathbb{R}$, and for any $k,k',\kappa \in \{0 \} \cup [K]$, 
    $$\e \Big[ \Xi^{(k,k',\kappa)} f(\datvecraw_1) g(\datvecraw_2) \Big] = \e \Bigg[ \frac{\Xi_-^{(k,k',\kappa)} f(\datvecraw_1) g(\datvecraw_2)}{ \phi_{\textnormal{sgn}(\kappa-k)} \big(\LabelRulePi{\kappa}(\proxyX_1) \big) \cdot  \phi_{\textnormal{sgn}(\kappa-k')} \big(\LabelRulePi{\kappa}(\proxyX_2) \big)}  \Bigg].$$
\end{lemma}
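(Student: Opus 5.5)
The plan is to condition on $\cd_{\kappa-1}$ and apply Lemma \ref{lemma:SimplifyCondExpectations} with $j=\kappa$ and $s=s'=2$ to the only factors in $\Xi^{(k,k',\kappa)}$ that are not $\cd_{\kappa-1}$-measurable. The case $\kappa=0$ is vacuous: the products over $j \le 0$ are empty, and $\Xi_-^{(k,k',0)}$ and $\Xi^{(k,k',0)}$ have the same denominator but $\Xi_-^{(k,k',0)}$ lacks the $j=1$ limiting factor. So for $\kappa=0$ I would read the statement as applying only when it is meaningful, i.e.\ for $\kappa \in [K]$, and fix $\kappa \in [K]$ from now on.

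\textbf{Step 1 (factorization).} Write
\[
\Xi^{(k,k',\kappa)} = \Xi_-^{(k,k',\kappa)} \bigl(W_1^{(k,\kappa)}\bigr)^2 \bigl(W_2^{(k',\kappa)}\bigr)^2,
\]
which holds directly from the definitions.

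\textbf{Step 2 (measurability).} $\Xi_-^{(k,k',\kappa)}$ is a measurable function of $\cd_{\kappa-1}$. Its numerator involves $W_1^{(k,j)}$ and $W_2^{(k',j)}$ for $j\le \kappa-1$, and each of these depends only on $I_i^{(j)}$ and $\LabelRulePi{j}(\proxyX_i)$, which are functions of $\cd_{j}\subseteq\cd_{\kappa-1}$. Its denominator involves only the fixed functions $\LimitingLabelRulePi{j}$ evaluated at $\proxyX_1,\proxyX_2$, which are part of $\cd_0$.

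\textbf{Step 3 (tower property and simplification).} By the tower property,
\[
\e\bigl[\Xi^{(k,k',\kappa)} f(\datvecraw_1)g(\datvecraw_2)\bigr]
=\e\Bigl[\Xi_-^{(k,k',\kappa)}\,\e\bigl[(W_1^{(k,\kappa)})^2(W_2^{(k',\kappa)})^2 f(\datvecraw_1)g(\datvecraw_2)\giv\cd_{\kappa-1}\bigr]\Bigr].
\]
Lemma \ref{lemma:SimplifyCondExpectations} with $i=1$, $i'=2$, $j=\kappa$, $s=s'=2$ turns the inner conditional expectation into
\[
\frac{\e[f(\datvecraw_1)g(\datvecraw_2)\giv\cd_{\kappa-1}]}{\phi_{\textnormal{sgn}(\kappa-k)}(\LabelRulePi{\kappa}(\proxyX_1))\,\phi_{\textnormal{sgn}(\kappa-k')}(\LabelRulePi{\kappa}(\proxyX_2))}.
\]
The denominator here is $\cd_{\kappa-1}$-measurable, so I can pull it and $\Xi_-^{(k,k',\kappa)}$ back inside the conditional expectation and undo the tower property, which gives the stated identity.

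\textbf{Main obstacle: integrability.} Justifying the tower-property manipulations requires integrability. By Assumption-style boundedness, $\Xi_-^{(k,k',\kappa)}$ and the reciprocal $\phi$ factors are bounded, since the $\phi$ values lie in $[b,1]$ when the overlap condition is in force. So the identity holds whenever $f(\datvecraw_1)g(\datvecraw_2)$ is integrable, or is nonnegative with both sides possibly infinite. If overlap is not assumed, I would first prove the identity for nonnegative $f,g$, where conditional expectations are always defined, and then extend to general $f,g$ by splitting into positive and negative parts, interpreting the result as holding whenever the expectations exist. This caveat is the only delicate point; the algebra itself is routine.
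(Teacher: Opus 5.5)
Your proposal is correct and matches the paper's proof step for step: condition on $\cd_{\kappa-1}$, pull out the $\cd_{\kappa-1}$-measurable factor $\Xi_-^{(k,k',\kappa)}$, apply Lemma~\ref{lemma:SimplifyCondExpectations} with $i=1$, $i'=2$, $j=\kappa$, $s=s'=2$, and undo the tower property; your integrability remark is extra care that the paper leaves implicit. One small correction to your aside on $\kappa=0$: there $\Xi_-^{(k,k',0)}$ and $\Xi^{(k,k',0)}$ are in fact identical (both numerators are empty products and both denominators run over $j=1,\dots,K$), and the actual reason to restrict to $\kappa\in[K]$ is that the right-hand side would involve a wave-$0$ labelling rule that does not exist, which is consistent with the paper only invoking the lemma for $\kappa\in[K]$.
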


\begin{proof} Fix measurable functions $f,g : \mathbb{R}^q \to \mathbb{R}$ and further fix $k,k',\kappa \in \{0 \} \cup [K]$. Next note that for $j \leq \kappa-1$, $W_1^{(k,j)}$,$W_1^{(k',j)}$, $\tilde{X}_1$ and $\tilde{X}_2$ are all measurable functions of $\cd_{\kappa-1}$, while $\phi_r$ and $\bar{\pi}^{(j)}$ are nonrandom, measurable functions for $r \in \{-1,0,1\}$ and $j \in [K]$. Hence by definition of $\Xi^{(k,k',\kappa)}$ and $\Xi_{-}^{(k,k',\kappa)}$ and by the tower property, 

$$\begin{aligned}
\e \Big[ \Xi^{(k,k',\kappa)} f(\datvecraw_1) g(\datvecraw_2) \Big] & = \e \Bigg[ \frac{ \prod_{j=1}^{\kappa}  \big( W_1^{(k,j)} \big)^2 \big( W_2^{(k',j)} \big)^{2} f(\datvecraw_1) g(\datvecraw_2) }{ \prod_{j=\kappa+1}^{K}  \phi_{\text{sgn}(j-k)} \big(\LimitingLabelRulePi{j}(\proxyX_1) \big) \cdot \phi_{\text{sgn}(j-k')} \big(\LimitingLabelRulePi{j}(\proxyX_2) \big) } \Bigg]
\\ & = \e \Bigg[ \frac{ \prod_{j=1}^{\kappa-1}  \big( W_1^{(k,j)} \big)^2 \big( W_2^{(k',j)} \big)^{2} \e \big[ \big( W_1^{(k,\kappa)} \big)^2 \big( W_2^{(k',\kappa)} \big)^{2}  f(\datvecraw_1) g(\datvecraw_2) \giv \cd_{\kappa-1} \big] }{ \prod_{j=\kappa+1}^{K}   \phi_{\text{sgn}(j-k)} \big(\LimitingLabelRulePi{j}(\proxyX_1) \big) \cdot \phi_{\text{sgn}(j-k')} \big(\LimitingLabelRulePi{j}(\proxyX_2) \big)  }  \Bigg]
\\ & = \e \Big[ \Xi_{-}^{ (k,k',\kappa )} \cdot \e \big[ \big( W_1^{(k,\kappa)} \big)^2 \big( W_2^{(k',\kappa)} \big)^{2}  f(\datvecraw_1) g(\datvecraw_2) \giv \cd_{\kappa-1} \big] \Big]
\\ & = \e \Bigg[ \frac{\Xi_{-}^{ (k,k',\kappa )} \cdot \e[ f(\datvecraw_1) g(\datvecraw_2) \giv \cd_{\kappa-1}]}{\phi_{\textnormal{sgn}(\kappa-k)} \big( \LabelRulePi{\kappa}(\proxyX_1) \big) \cdot \phi_{\textnormal{sgn}(\kappa-k')} \big( \LabelRulePi{\kappa}(\proxyX_2) \big) } \Bigg]
\\ & = \e \Bigg[ \frac{\Xi_{-}^{ (k,k',\kappa )}  f(\datvecraw_1) g(\datvecraw_2) }{\phi_{\textnormal{sgn}(\kappa-k)} \big( \LabelRulePi{\kappa}(\proxyX_1) \big) \cdot \phi_{\textnormal{sgn}(\kappa-k')} \big( \LabelRulePi{\kappa}(\proxyX_2) \big) } \Bigg].
\end{aligned}$$ Above, the penultimate step follows from a direct application of Lemma \ref{lemma:SimplifyCondExpectations} in the case where $i=1$, $i'=2$, and $s=s'=2$.
\end{proof}

\begin{lemma}\label{lemma:DropOrder1TermsViaBoundedness} Suppose \samplingSchemeName{} is conducted in such a way that Assumptions \ref{assump:IIDUnderlyingData}, \ref{assump:LabellingRuleOverlap}, \ref{assump:CLTRgularityConditions}(i), and \ref{assump:CLTRgularityConditions}(ii) hold. For any fixed, measurable $f,g : \mathbb{R}^q \to \mathbb{R}$ such that for some $\eta > 0$, $\e \big[ \vert f(\datvecraw) \vert ^{1+\eta} \big]<\infty$ and $\e \big[ \vert g(\datvecraw) \vert^{1+\eta} \big]<\infty$, and for any $k,k' \in \{0 \} \cup [K]$, $\kappa \in [K]$,
    $$\e \Bigg[ \frac{\Xi_-^{(k,k',\kappa)} f(\datvecraw_1) g(\datvecraw_2)}{ \phi_{\textnormal{sgn}(\kappa-k)} \big(\LabelRulePi{\kappa}(\proxyX_1) \big) \cdot  \phi_{\textnormal{sgn}(\kappa-k')} \big(\LabelRulePi{\kappa}(\proxyX_2) \big) } \Bigg] =\e \Big[ \Xi^{(k,k',\kappa-1 )} f(\datvecraw_1) g(\datvecraw_2) \Big]+o(1),$$ where $o(1)$ denotes a term that converges to $0$ as $N \to \infty$.
\end{lemma}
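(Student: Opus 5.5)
The plan is to write the difference between the two sides as a single expectation and show it vanishes. By definition, $\Xi^{(k,k',\kappa-1)}$ has numerator $\prod_{j=1}^{\kappa-1}(W_1^{(k,j)})^2(W_2^{(k',j)})^2$, which equals the numerator of $\Xi_-^{(k,k',\kappa)}$. Its denominator is $\prod_{j=\kappa}^{K}\phi_{\textnormal{sgn}(j-k)}(\LimitingLabelRulePi{j}(\proxyX_1))\phi_{\textnormal{sgn}(j-k')}(\LimitingLabelRulePi{j}(\proxyX_2))$, which is the denominator of $\Xi_-^{(k,k',\kappa)}$ times the $j=\kappa$ factor. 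Hence
$$\Xi^{(k,k',\kappa-1)}=\frac{\Xi_-^{(k,k',\kappa)}}{\phi_{\textnormal{sgn}(\kappa-k)}\big(\LimitingLabelRulePi{\kappa}(\proxyX_1)\big)\,\phi_{\textnormal{sgn}(\kappa-k')}\big(\LimitingLabelRulePi{\kappa}(\proxyX_2)\big)}.$$
Writing $A_N\equiv 1/\phi_{\textnormal{sgn}(\kappa-k)}(\LabelRulePi{\kappa}(\proxyX_1))$, $\bar A\equiv 1/\phi_{\textnormal{sgn}(\kappa-k)}(\LimitingLabelRulePi{\kappa}(\proxyX_1))$, and $B_N,\bar B$ for the analogues with $\proxyX_2$ and $k'$, the claim becomes
$$\e\big[\Xi_-^{(k,k',\kappa)}(A_NB_N-\bar A\bar B)f(\datvecraw_1)g(\datvecraw_2)\big]\to 0.$$

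Next I bound the factors. By Assumption \ref{assump:LabellingRuleOverlap}, Assumption \ref{assump:CLTRgularityConditions}(ii) (the limit $\LimitingLabelRulePi{j}$ takes values in $[b,1-b]$), and the argument in Fact \ref{fact:WeightBound}, every $(W_i^{(k,j)})^2\le b^{-2}$ and every $1/\phi_r(\cdot)\le b^{-1}$. So $|\Xi_-^{(k,k',\kappa)}|\le C_b$ almost surely for a constant depending only on $b$ and $K$, and likewise $A_N,B_N,\bar A,\bar B\le b^{-1}$. I then decompose
$$A_NB_N-\bar A\bar B=(A_N-\bar A)B_N+\bar A(B_N-\bar B),$$
so the target is bounded by $C_b b^{-1}\,\e\big[(|A_N-\bar A|+|B_N-\bar B|)\,|f(\datvecraw_1)g(\datvecraw_2)|\big]$.

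The main obstacle is that $|A_N-\bar A|\to0$ only in $L^l$, while $f(\datvecraw_1)g(\datvecraw_2)$ is merely in $L^{1+\eta}$. This product is in $L^{1+\eta}$ because $\datvecraw_1\indep\datvecraw_2$ under Assumption \ref{assump:IIDUnderlyingData}, so $\e|fg|^{1+\eta}=\e|f|^{1+\eta}\e|g|^{1+\eta}<\infty$. I will apply Hölder's inequality with exponents $p=1+\eta$ and $l=(1+\eta)/\eta$:
$$\e\big[|A_N-\bar A|\,|f(\datvecraw_1)g(\datvecraw_2)|\big]\le\big(\e|A_N-\bar A|^{l}\big)^{1/l}\big(\e|f(\datvecraw_1)g(\datvecraw_2)|^{1+\eta}\big)^{1/(1+\eta)}.$$
The first factor tends to $0$ by Lemma \ref{lemma:InverseLpConvergence} with $i=1$, $j=\kappa$, $r=\textnormal{sgn}(\kappa-k)$, since $l\ge1$. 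The same argument with $i=2$ handles the $B_N$ term, because $i=2$ is fixed as $N\to\infty$. This gives the stated $o(1)$ remainder.

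If the case $\kappa<k$ (where $\phi_1\equiv1$) needs separate treatment, note that then $A_N=\bar A=1$ and that term is identically zero, consistent with the general argument.
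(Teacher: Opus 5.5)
Your proof is correct and is essentially the paper's argument: the same splitting $A_NB_N-\bar A\bar B=(A_N-\bar A)B_N+\bar A(B_N-\bar B)$, the same almost-sure bounds on $\Xi_-^{(k,k',\kappa)}$ and on the inverse labelling probabilities from overlap, H\"older with exponents $1+\eta$ and $1+1/\eta$, independence of $\datvecraw_1$ and $\datvecraw_2$, and Lemma~\ref{lemma:InverseLpConvergence}. One harmless slip in your closing remark: the degenerate case $\phi_{\textnormal{sgn}(\kappa-k)}=\phi_1\equiv 1$ occurs when $\kappa>k$, not $\kappa<k$ (for $\kappa<k$ the factor is $\phi_{-1}(t)=1-t$), but your general argument covers every case anyway.
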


\begin{proof}

Fix $f,g : \mathbb{R}^q \to \mathbb{R}$ such that for some $\eta > 0$, $\e \big[ \vert f(\datvecraw) \vert ^{1+\eta} \big]<\infty$ and $\e \big[ \vert g(\datvecraw) \vert^{1+\eta} \big]<\infty$, and then let $\eta$ be sufficiently small to satisfy this condition on $f,g$. Further fix $k,k' \in \{0 \} \cup [K]$, and $\kappa \in [K]$. After fixing these quanties, for the purposes of this proof, it will be convenient to define $R \equiv \Xi_-^{(k,k',\kappa)} f(\datvecraw_1) g(\datvecraw_2)$, $$T \equiv \big[ \phi_{\textnormal{sgn}(\kappa-k)} \big(\LabelRulePi{\kappa}(\proxyX_1) \big) \big]^{-1}, \quad \text{and} \quad  T' \equiv \big[ \phi_{\textnormal{sgn}(\kappa-k')} \big(\LabelRulePi{\kappa}(\proxyX_2) \big) \big]^{-1},$$ and similarly using the limiting labelling rules define$$\bar{T} \equiv \big[ \phi_{\textnormal{sgn}(\kappa-k)} \big(\LimitingLabelRulePi{\kappa}(\proxyX_1) \big) \big]^{-1}, \quad \bar{T}' \equiv \big[ \phi_{\textnormal{sgn}(\kappa-k')} \big(\LimitingLabelRulePi{\kappa}(\proxyX_2) \big) \big]^{-1},$$

 Now note that $$\begin{aligned}
    \e [ R T T' ] &  = \e [ R \bar{T} \bar{T}' ] + \e \big[ R \bar{T} (T'-\bar{T}' ) ]+ \e \big[ R T' ( T -\bar{T} )  \big]
    \\ & = \e \big[ \Xi^{(k,k',\kappa-1 )} f(\datvecraw_1) g(\datvecraw_2) \big]+  \e \big[ R \bar{T} (T'-\bar{T}' ) ]+ \e \big[ R T' ( T -\bar{T} )  \big]. 
\end{aligned}$$ Above the second step follows from plugging in the formulas for $R,\bar{T},\bar{T'}$, and $\Xi_-^{(k,k',\kappa)}$ into the first term $\e[R \bar{T} \bar{T}']$ and by recalling the formula for $\Xi^{(k,k',\kappa-1 )}$. 

We will show that the second and third terms in the above equation are $o(1)$ in the sense that they converge to $0$ as $N \to \infty$, and the proof will be completed by plugging in the values for $R$, $T$, and $T'$ into the left hand side of the above equation.
To do this, note that by applying Jensen's inequality and subsequently Hölder's inequality for the $1+\eta$ and $(1+\eta)/\eta$ pair, $$\begin{aligned}
   \big| \e \big[ R \bar{T} ( T' -\bar{T}')  \big] \big| & \leq    \e \big[  \vert R \bar{T}  \vert \cdot \vert T' -\bar{T}' \vert \big] 
   \\ & \leq \big(\e \big[ \vert R \bar{T} \vert^{1+\eta} \big] \big)^{1/(1+\eta)} \cdot \big( \e \big[ \vert T' - \bar{T}'\vert^{1+1/\eta} \big]  \big)^{\eta/(1+\eta)}
   \\ & \leq \Big(\e \Big[ \Big| \frac{\Xi_-^{(k,k',\kappa)} f(\datvecraw_1) g(\datvecraw_2)}{ \phi_{\textnormal{sgn}(\kappa-k)} \big(\LimitingLabelRulePi{\kappa}(\proxyX_1) \big) } \Big|^{1+\eta} \Big] \Big)^{1/(1+\eta)} \cdot \big( \e \big[ \vert T' - \bar{T}'\vert^{1+1/\eta} \big]  \big)^{\eta/(1+\eta)}
    \\ & \leq \Big( \big( b^{-4K} \big)^{1+\eta} \cdot \e \big[  \big|  f(\datvecraw_1) g(\datvecraw_2)  \big|^{1+\eta} \big] \Big)^{1/(1+\eta)} \cdot \big( \e \big[ \vert T' - \bar{T}'\vert^{1+1/\eta} \big]  \big)^{\eta/(1+\eta)}
    \\ & \leq  b^{-4K}   \big( \e [  \vert  f(\datvecraw) \vert^{1+\eta} ] \cdot \e[ \vert g(\datvecraw) \vert^{1+\eta} ] \big)^{1/(1+\eta)} \cdot \big( \e \big[ \vert T' - \bar{T}'\vert^{1+1/\eta} \big]  \big)^{\eta/(1+\eta)}.
\end{aligned}$$ Above the last step hold by Assumption \ref{assump:IIDUnderlyingData}, while the penultimate step follows from Assumption \ref{assump:LabellingRuleOverlap} that labelling probabilities lie within $[b,1-b]$ for some $b \in (0,1/2)$. In particular, under Assumption \ref{assump:LabellingRuleOverlap}, for any $j \in [K]$ and  $r \in \{-1,0,1\}$, $\vert W_1^{(k,j)} \vert \leq b^{-1}$ and $\vert W_2^{(k',j)} \vert \leq b^{-1}$ almost surely,  $\big[ \phi_{r} \big(\LimitingLabelRulePi{\kappa}(\argxProxy) \big) \big]^{-1} \leq b^{-1}$ for all $\argxProxy \in \xProxySpace$, and thus almost surely $$ \Bigg| \frac{\Xi_-^{(k,k',\kappa)} }{\phi_{\textnormal{sgn}(\kappa-k)} \big(\LimitingLabelRulePi{\kappa}(\proxyX_1) \big) } \Bigg| =  \Bigg| \frac{  \phi_{\text{sgn}(\kappa-k')} \big(\LimitingLabelRulePi{\kappa}(\proxyX_2) \big)  \prod_{j=1}^{\kappa-1}  \big( W_1^{(k,j)} \big)^2 \big( W_2^{(k',j)} \big)^2  }{ \prod_{j=\kappa}^{K}  \phi_{\text{sgn}(j-k)}\big( \LimitingLabelRulePi{j}(\proxyX_1) \big) \cdot \phi_{\text{sgn}(j-k')} \big(\LimitingLabelRulePi{j}(\proxyX_2) \big)  } \Bigg| \leq b^{-4K}.$$ 

Next, note that by applying Lemma \ref{lemma:InverseLpConvergence} in the case where $i=2$ and $l=1+1/\eta>1$, $$\lim_{N \to \infty} \e \big[ \vert T' - \bar{T}'\vert^{1+1/\eta} \big] = \lim_{N \to \infty} \e \Big[ \Big| \frac{1}{\phi_{\textnormal{sgn}(\kappa-k')} \big(\LabelRulePi{\kappa}(\proxyX_2) \big)} - \frac{1}{\phi_{\textnormal{sgn}(\kappa-k')} \big(\LimitingLabelRulePi{\kappa}(\proxyX_2) \big)} \Big|^{1+1/\eta} \Big]=0.$$ 
Since $b^{-4K} < \infty$, $\e \big[ \vert f(\datvecraw) \vert^{1+\eta} \big]<\infty$ and $\e \big[ \vert g(\datvecraw) \vert^{1+\eta} \big]<\infty$, we can combine the this result with a previous inequality to get that $$\limsup_{N \to \infty} \big| \e \big[ R \bar{T} ( T' -\bar{T}')  \big] \big| \leq  b^{-4K}   \big( \e [  \vert  f(\datvecraw) \vert^{1+\eta} ] \cdot \e[ \vert g(\datvecraw) \vert^{1+\eta} ] \big)^{1/(1+\eta)} \cdot \big( \limsup_{N \to \infty} \e \big[ \vert T' - \bar{T}'\vert^{1+1/\eta} \big]  \big)^{\eta/(1+\eta)}=0.$$
We have thus shown that $\e \big[ R \bar{T} ( T' -\bar{T}')  \big]=o(1)$. A similar argument (which establishes and leverages the facts that $\e \big[ \vert R T' \vert^{1+\eta} \big] < \infty$ and $\lim_{N \to \infty} \e \big[ \vert T - \bar{T} \vert^{1+1/\eta}  \big] = 0$) shows that $\e \big[ R T' ( T -\bar{T} )  \big]=o(1)$.

To complete the proof, recalling an earlier formula for $ \e [ R T T' ] $, $$\begin{aligned}
    \e [ R T T' ]   & = \e \big[ \Xi^{(k,k',\kappa-1 )} f(\datvecraw_1) g(\datvecraw_2) \big]+  \e \big[ R \bar{T} (T'-\bar{T}' ) ]+ \e \big[ R T' ( T -\bar{T} )  \big]
    \\ & = \e \big[ \Xi^{(k,k',\kappa-1 )} f(\datvecraw_1) g(\datvecraw_2) \big]+o(1). 
\end{aligned}$$ Plugging in the formulas for $R$, $T$, and $T'$ into the left hand side of the above expression gives the desired result.
\end{proof}

\begin{proposition}\label{prop:SimplifyHigherOrderExpectations}
    Suppose \samplingSchemeName{} is conducted in such a way that Assumptions \ref{assump:IIDUnderlyingData}, \ref{assump:LabellingRuleOverlap}, \ref{assump:CLTRgularityConditions}(i), and \ref{assump:CLTRgularityConditions}(ii) hold. For any fixed, measurable $f,g : \mathbb{R}^q \to \mathbb{R}$ such that for some $\eta > 0$, $\e \big[ \vert f(\datvecraw) \vert ^{1+\eta} \big]<\infty$ and $\e \big[ \vert g(\datvecraw) \vert^{1+\eta} \big]<\infty$, and for any $k,k' \in \{0 \} \cup [K]$,
    $$\lim_{N \to \infty} \e \Big[ \big( W_1^{(k)} \big)^2 \big( W_2^{(k')} \big)^2 f(\datvecraw_1) g(\datvecraw_2) \Big] =\e \Bigg[ \frac{f(\datvecraw_1) }{\LimitingLabelRulePi{1:k}(\proxyX_1) }  \Bigg] \e \Bigg[ \frac{g(\datvecraw_2)}{ \LimitingLabelRulePi{1:k'}(\proxyX_2) } \Bigg].$$
\end{proposition}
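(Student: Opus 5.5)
The plan is to telescope from $\kappa=K$ down to $\kappa=0$ using the two preceding lemmas in alternation. By \eqref{eq:XiAtK}, the left-hand side equals $\e[\Xi^{(k,k',K)} f(\datvecraw_1) g(\datvecraw_2)]$. Fix $\kappa \in [K]$. Lemma \ref{lemma:DropSquares} rewrites $\e[\Xi^{(k,k',\kappa)} f(\datvecraw_1)g(\datvecraw_2)]$ exactly as the expectation of $\Xi_-^{(k,k',\kappa)} f(\datvecraw_1)g(\datvecraw_2)$ divided by the two wave-$\kappa$ factors $\phi_{\textnormal{sgn}(\kappa-k)}(\LabelRulePi{\kappa}(\proxyX_1))$ and $\phi_{\textnormal{sgn}(\kappa-k')}(\LabelRulePi{\kappa}(\proxyX_2))$. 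Lemma \ref{lemma:DropOrder1TermsViaBoundedness}, which applies under the stated $1+\eta$ moment conditions, then replaces these factors by their limiting versions at a cost of $o(1)$, giving $\e[\Xi^{(k,k',\kappa-1)} f(\datvecraw_1)g(\datvecraw_2)]+o(1)$.

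Iterating this for $\kappa=K,K-1,\dots,1$ accumulates only $K$ error terms, each $o(1)$. Since $K$ is fixed, the total error is $o(1)$, and we obtain
$$\e\big[(W_1^{(k)})^2(W_2^{(k')})^2 f(\datvecraw_1)g(\datvecraw_2)\big]=\e\big[\Xi^{(k,k',0)} f(\datvecraw_1)g(\datvecraw_2)\big]+o(1).$$

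By \eqref{eq:XiAt0}, $\Xi^{(k,k',0)}=1/\big(\LimitingLabelRulePi{1:k}(\proxyX_1)\LimitingLabelRulePi{1:k'}(\proxyX_2)\big)$. This is a fixed, nonrandom function of $(\proxyX_1,\proxyX_2)$ that does not depend on $N$. Because $\datvecraw_1\indep\datvecraw_2$ by Assumption \ref{assump:IIDUnderlyingData}, the expectation factorizes into the stated product, and the product does not depend on $N$. Integrability is not an issue: $1/\LimitingLabelRulePi{1:k}\le b^{-K}$ because $\LimitingLabelRulePi{j}$ takes values in $[b,1-b]$, and $f,g\in L^{1+\eta}\subset L^1$. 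Taking $N\to\infty$ then gives the limit.

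The main obstacle is already handled by Lemma \ref{lemma:DropOrder1TermsViaBoundedness}, namely controlling the swap of estimated for limiting labelling probabilities via Hölder and Lemma \ref{lemma:InverseLpConvergence}. What remains is bookkeeping. First, each step really feeds $\Xi^{(k,k',\kappa-1)}$ into the next application of Lemma \ref{lemma:DropSquares} with index $\kappa-1$; this holds because the two lemmas are stated for arbitrary $\kappa$. Second, the edge cases $k=0$ or $k'=0$ cause no trouble: then $\phi_{\textnormal{sgn}(j-0)}=\phi_1\equiv1$ for all $j\ge1$, so those factors are trivially 1 and $\LimitingLabelRulePi{1:0}\equiv1$, consistent with $W^{(0)}=1$.
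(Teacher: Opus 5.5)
Your proposal is correct and follows essentially the same approach as the paper's proof. The paper likewise shows $\e[\Xi^{(k,k',\kappa)} f(\datvecraw_1) g(\datvecraw_2)] = \e[\Xi^{(k,k',\kappa-1)} f(\datvecraw_1) g(\datvecraw_2)] + o(1)$ for each $\kappa \in [K]$ by applying Lemma \ref{lemma:DropSquares} and then Lemma \ref{lemma:DropOrder1TermsViaBoundedness}, telescopes over the fixed $K$ waves, identifies the endpoints via \eqref{eq:XiAtK} and \eqref{eq:XiAt0}, and factorizes using the independence of $\datvecraw_1$ and $\datvecraw_2$.
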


\begin{proof}
    Fix  $f,g : \mathbb{R}^q \to \mathbb{R}$ to be measurable functions such that for some $\eta > 0$, $\e \big[ \vert f(\datvecraw) \vert ^{1+\eta} \big]<\infty$ and $\e \big[ \vert g(\datvecraw) \vert^{1+\eta} \big]<\infty$. Next fix $k, k' \in \{0 \} \cup [K]$. Note that by Equations \eqref{eq:XiAt0} and \eqref{eq:XiAtK} and because $(\datvecraw_1,\proxyX_1) \indep (\datvecraw_2,\proxyX_2)$ under Assumption \ref{assump:IIDUnderlyingData}, it will suffice to show that $$\lim_{N \to \infty} \e \big[\Xi^{(k,k',K)} f(\datvecraw_1) g(\datvecraw_2) \big] = \e \big[\Xi^{(k,k',0)} f(\datvecraw_1) g(\datvecraw_2) \big].$$
    
    To prove the above claim we will show that
    \begin{equation}\label{eq:RecursiveStep2ndOrderExpectations}
        \e \big[\Xi^{(k,k',\kappa)} f(\datvecraw_1) g(\datvecraw_2) \big] = \e \big[\Xi^{(k,k',\kappa-1 )} f(\datvecraw_1) g(\datvecraw_2) \big] +o(1) \quad \text{for all } \kappa \in [K],
    \end{equation} where $o(1)$ is a term that converges to $0$ as $N \ \to \infty$, and then recursively apply \eqref{eq:RecursiveStep2ndOrderExpectations}. To verify the claim in \eqref{eq:RecursiveStep2ndOrderExpectations}, fix $\kappa \in [K]$ and note that by a direct application of Lemma \ref{lemma:DropSquares} and then Lemma \ref{lemma:DropOrder1TermsViaBoundedness}, 
$$\begin{aligned} \e \big[ \Xi^{(k,k',\kappa)} f(\datvecraw_1) g(\datvecraw_2) \big] & =  \e \Bigg[ \frac{\Xi_-^{(k,k',\kappa)} f(\datvecraw_1) g(\datvecraw_2)}{ \phi_{\textnormal{sgn}(\kappa-k)} \big(\LabelRulePi{\kappa}(\proxyX_1) \big) \cdot \phi_{\textnormal{sgn}(\kappa-k')} \big(\LabelRulePi{\kappa}(\proxyX_2) \big) } \Bigg]
\\ & =  \e \Big[ \Xi^{(k,k',\kappa-1 )} f(\datvecraw_1) g(\datvecraw_2) \Big]+o(1).
\end{aligned}$$ The above argument holds for any $\kappa \in [K]$, proving \eqref{eq:RecursiveStep2ndOrderExpectations}. 

As a consequence of  \eqref{eq:RecursiveStep2ndOrderExpectations}, $$\lim_{N \to \infty} \sum_{\kappa=1}^K \Bigg( \e \big[\Xi^{(k,k', \kappa )} f(\datvecraw_1) g(\datvecraw_2) \big] - \e \big[\Xi^{(k,k', \kappa -1)} f(\datvecraw_1) g(\datvecraw_2) \big] \Bigg)=0.$$ By simplifying the telescoping sum in the above expression and adding $\e \big[\Xi^{(k,k',0)} f(\datvecraw_1) g(\datvecraw_2) \big]$ (a quantity that does not vary with $N$) to each side of the above equation, $$\begin{aligned}
    \lim_{N \to \infty}  \e \big[\Xi^{(k,k',K)} f(\datvecraw_1) g(\datvecraw_2) \big] & = \e \big[\Xi^{(k,k',0)} f(\datvecraw_1) g(\datvecraw_2) \big] 
    \\ & = \e \Bigg[ \frac{ f(\datvecraw_1) g(\datvecraw_2)}{ \LimitingLabelRulePi{1:k}(\proxyX_1) \cdot \LimitingLabelRulePi{1:k'}(\proxyX_2) } \Bigg] 
     \\ & = \e \Bigg[ \frac{ f(\datvecraw_1)}{\LimitingLabelRulePi{1:k}(\proxyX_1) } \Bigg] \cdot \e \Bigg[ \frac{ g(\datvecraw_2)}{ \LimitingLabelRulePi{1:k'}(\proxyX_2) } \Bigg].
\end{aligned}$$ Above the 2nd equality follows from plugging in the formula for $\Xi^{(k,k',0)}$ at \eqref{eq:XiAt0}. The final equality above follows because by Assumption \ref{assump:IIDUnderlyingData}, $\datvecraw_1 \indep \datvecraw_2$, where $\datvecraw_1=(\proxyX_1,\xmiss_1)$ and $\datvecraw_2=(\proxyX_2,\xmiss_2)$, while $\LimitingLabelRulePi{1:k}$ and $\LimitingLabelRulePi{1:k'}$ are fixed functions that do not depend on the observed data. The proof is completed by plugging in the formula at \eqref{eq:XiAtK} that $\Xi^{(k,k',K)}=\big(W_1^{(k)} \big)^2 \big(W_2^{(k')} \big)^2$ into the left hand side of the above equation.

\end{proof}

\begin{corollary}
    \label{cor:SimplifyHigherOrderExpectationsOneTerm}
    Suppose \samplingSchemeName{} is conducted in such a way that Assumptions \ref{assump:IIDUnderlyingData}, \ref{assump:LabellingRuleOverlap}, \ref{assump:CLTRgularityConditions}(i), and \ref{assump:CLTRgularityConditions}(ii) hold. For any fixed, measurable $f: \mathbb{R}^q \to \mathbb{R}$ such that for some $\eta > 0$, $\e \big[ \vert f(\datvecraw) \vert ^{1+\eta} \big]<\infty$ and for any $k \in  [K]$ and $i \in \{1,2\}$,
    $$\lim_{N \to \infty} \e \Big[ \big( W_i^{(k)} \big)^2 f(\datvecraw_i) \Big] =\e \Bigg[ \frac{f(\datvecraw) }{\LimitingLabelRulePi{1:k}(\proxyX) }  \Bigg].$$
\end{corollary}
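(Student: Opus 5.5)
The plan is to deduce this from Proposition \ref{prop:SimplifyHigherOrderExpectations} by pairing with a trivial second factor, just as Corollary \ref{cor:SimplifyWeightedExpectations1Term} was deduced from Lemma \ref{lemma:SimplifyWeightedExpectations2Terms}. Fix $f$ with $\e[\vert f(\datvecraw)\vert^{1+\eta}]<\infty$ and $k\in[K]$, and take $g_1\equiv 1$, which trivially has finite moments of every order. Recall that $W^{(0)}_{i'}\equiv 1$ and $\LimitingLabelRulePi{1:0}\equiv 1$ by convention.

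First consider $i=1$. Apply Proposition \ref{prop:SimplifyHigherOrderExpectations} with $g=g_1$ and $k'=0$. Then $(W_2^{(0)})^2 g_1(\datvecraw_2)=1$, and the second factor on the right equals $\e[1/\LimitingLabelRulePi{1:0}(\proxyX_2)]=1$. This gives
\[
\lim_{N\to\infty}\e\big[(W_1^{(k)})^2 f(\datvecraw_1)\big]=\e\big[f(\datvecraw_1)/\LimitingLabelRulePi{1:k}(\proxyX_1)\big]=\e\big[f(\datvecraw)/\LimitingLabelRulePi{1:k}(\proxyX)\big],
\]
where the last equality uses Assumption \ref{assump:IIDUnderlyingData}.

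For $i=2$ there are two routes. The symmetric one applies the proposition with the roles swapped: take $k=0$ and $f=g_1$ for sample $1$, and put $(k',g)=(k,f)$ for sample $2$. The second factor then gives $\e[f(\datvecraw_2)/\LimitingLabelRulePi{1:k}(\proxyX_2)]$, which again equals the claimed limit. Alternatively, Lemma \ref{lemma:StrongerExchangeability} shows that $\big((W_1^{(k)})^2, \datvecraw_1\big)\stackrel{\text{dist}}{=}\big((W_2^{(k)})^2,\datvecraw_2\big)$, since $W_i^{(k)}$ is a fixed measurable function of $\xi_i$. Either route finishes the proof.

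The only point needing care is that the proposition genuinely allows the index $0$ for $k$ or $k'$. It does: it is stated for $k,k'\in\{0\}\cup[K]$. Its proof chain also handles index $0$, because for $k=0$ every factor satisfies $\phi_{\mathrm{sgn}(j-0)}=\phi_1\equiv 1$, so all the relevant terms equal $1$. I therefore expect no real obstacle; the corollary is immediate.
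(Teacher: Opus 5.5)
Your proposal is correct and matches the paper's proof. The paper applies Proposition~\ref{prop:SimplifyHigherOrderExpectations} with $g\equiv 1$ and $k'=0$ to get the $i=1$ case, then invokes exchangeability (Proposition~\ref{prop:ExchangeableData}) for $i=2$, which is your second route; your role-swap alternative for $i=2$ (index $0$ with the constant function on sample $1$) is also valid, since the proposition is stated for all $k,k'\in\{0\}\cup[K]$.
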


\begin{proof}
    Fix $k \in [K]$ and a measurable $f: \mathbb{R}^q \to \mathbb{R}$ such that for some $\eta > 0$, $\e \big[ \vert f(\datvecraw) \vert ^{1+\eta} \big]<\infty$. Next recall that $W_2^{(0)}=1$ by definition and let $g_1 : \mathbb{R}^q \to \mathbb{R}$ be a constant function defined by $g_1(v)=1$ for all $v \in \mathbb{R}^q$. Applying Proposition \ref{prop:SimplifyHigherOrderExpectations} in the case where $g=g_1$, $k'=0$, $k=k$, and $f=f$, $$\lim_{N \to \infty} \e \Big[ \big( W_1^{(k)} \big)^2  f(\datvecraw_1)  \Big] =\e \Bigg[ \frac{f(\datvecraw_1) }{\LimitingLabelRulePi{1:k}(\proxyX_1) }  \Bigg] \cdot 1 = \e \Bigg[ \frac{f(\datvecraw) }{\LimitingLabelRulePi{1:k}(\proxyX) }  \Bigg],$$ where the last step follows from Assumption \ref{assump:IIDUnderlyingData}. To complete the proof note that by Proposition \ref{prop:ExchangeableData} establishing exchangeability, $\big( W_1^{(k)} \big)^2  f(\datvecraw_1)$ and $\big( W_2^{(k)} \big)^2  f(\datvecraw_2)$ have the same distribution. Hence $$\lim_{N \to \infty} \e \Big[ \big( W_2^{(k)} \big)^2  f(\datvecraw_2)  \Big]=\lim_{N \to \infty} \e \Big[ \big( W_1^{(k)} \big)^2  f(\datvecraw_1)  \Big]=\e \Bigg[ \frac{f(\datvecraw) }{\LimitingLabelRulePi{1:k}(\proxyX) }  \Bigg].$$
\end{proof}

The following proposition is a consequence of previous results. It is eventually used when establishing the consistency of the covariance matrix estimators $\hat{\Sigma}_{11}$, $\hat{\Sigma}_{12}$, and $\hat{\Sigma}_{22}$ defined at \eqref{eq:HessianAndSigmaEstimators}, which have $W_i^2$ terms in their formulas.
\begin{proposition}\label{prop:Vanishing2ndOrderCov} Suppose \samplingSchemeName{} is conducted in such a way that Assumptions \ref{assump:IIDUnderlyingData}, \ref{assump:LabellingRuleOverlap}, \ref{assump:CLTRgularityConditions}(i), and \ref{assump:CLTRgularityConditions}(ii) hold. For any fixed, measurable $f,g : \mathbb{R}^q \to \mathbb{R}$ such that for some $\eta>0$, $\e \big[\vert f(\datvecraw) \vert^{1+\eta}]<\infty$ and $\e \big[\vert g(\datvecraw) \vert^{1+\eta}]<\infty$,

\begin{equation}\label{eq:DecayingPairwiseCov2ndOrder}
     \lim_{N \to \infty} \cov\big(W_1^2 f(\datvecraw_1), W_2^2 g(\datvecraw_2) \big)=0.
 \end{equation} 
    
\end{proposition}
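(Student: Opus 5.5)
The plan is to expand $W_i^2$ via Fact \ref{fact:WsqFormula}, $W_i^2=\sum_{k=1}^K c_k^2 (W_i^{(k)})^2$, and reduce everything to Proposition \ref{prop:SimplifyHigherOrderExpectations} and Corollary \ref{cor:SimplifyHigherOrderExpectationsOneTerm}. Write
\[
\cov\big(W_1^2 f(\datvecraw_1), W_2^2 g(\datvecraw_2)\big)=\e\big[W_1^2W_2^2 f(\datvecraw_1)g(\datvecraw_2)\big]-\e\big[W_1^2 f(\datvecraw_1)\big]\,\e\big[W_2^2 g(\datvecraw_2)\big].
\]
All these expectations are finite for each $N$. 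Indeed, Fact \ref{fact:WeightBound} gives $W_i\le b^{-K}$. In addition, $\e|f(\datvecraw)|<\infty$ and $\e|g(\datvecraw)|<\infty$ by the $(1+\eta)$-moment assumption, and $\datvecraw_1\indep\datvecraw_2$ gives $\e|f(\datvecraw_1)g(\datvecraw_2)|<\infty$. So linearity of expectation may be applied term by term.

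\textbf{First term.} By Fact \ref{fact:WsqFormula},
\[
\e\big[W_1^2W_2^2 f(\datvecraw_1)g(\datvecraw_2)\big]=\sum_{k=1}^K\sum_{k'=1}^K c_k^2c_{k'}^2\,\e\big[(W_1^{(k)})^2(W_2^{(k')})^2 f(\datvecraw_1)g(\datvecraw_2)\big].
\]
For each of the finitely many pairs $(k,k')$, Proposition \ref{prop:SimplifyHigherOrderExpectations} applies. It holds under exactly the assumed moment conditions and Assumptions \ref{assump:IIDUnderlyingData}, \ref{assump:LabellingRuleOverlap}, \ref{assump:CLTRgularityConditions}(i)--(ii). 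It gives the limit $\e\big[f(\datvecraw)/\LimitingLabelRulePi{1:k}(\proxyX)\big]\,\e\big[g(\datvecraw)/\LimitingLabelRulePi{1:k'}(\proxyX)\big]$. Hence the first term converges to $A_f A_g$, where
\[
A_h\equiv\sum_{k=1}^K c_k^2\,\e\big[h(\datvecraw)/\LimitingLabelRulePi{1:k}(\proxyX)\big].
\]
These limits are finite because $\LimitingLabelRulePi{1:k}\ge b^K$.

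\textbf{Second term.} Again by Fact \ref{fact:WsqFormula}, $\e[W_1^2 f(\datvecraw_1)]=\sum_k c_k^2\,\e[(W_1^{(k)})^2 f(\datvecraw_1)]$. Corollary \ref{cor:SimplifyHigherOrderExpectationsOneTerm} with $i=1$ shows this converges to $A_f$. Likewise, the same corollary with $i=2$ shows $\e[W_2^2 g(\datvecraw_2)]\to A_g$. The product of the two expectations therefore converges to $A_fA_g$.

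Subtracting the two limits shows the covariance tends to $A_fA_g-A_fA_g=0$, which is \eqref{eq:DecayingPairwiseCov2ndOrder}. There is no substantive obstacle here, since the real work was done in Lemmas \ref{lemma:DropSquares} and \ref{lemma:DropOrder1TermsViaBoundedness}. The only care needed is bookkeeping: checking that the index ranges $k,k'\in[K]$ are covered by the earlier results, which allow $\{0\}\cup[K]$, and that the sums are finite so limits pass through them.
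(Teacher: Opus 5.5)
Your proposal is correct and takes essentially the same approach as the paper: both expand $W_i^2$ via Fact \ref{fact:WsqFormula} and then apply Proposition \ref{prop:SimplifyHigherOrderExpectations} and Corollary \ref{cor:SimplifyHigherOrderExpectationsOneTerm}. The only difference is ordering. The paper first shows that each componentwise covariance $\cov\big((W_1^{(k)})^2 f(\datvecraw_1),(W_2^{(k')})^2 g(\datvecraw_2)\big)$ vanishes and then sums by bilinearity, whereas you expand the joint and marginal expectations separately and subtract the limits.
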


\begin{proof}

 Fix measurable $f,g : \mathbb{R}^q \to \mathbb{R}$ such that for some $\eta>0$, $\e \big[\vert f(\datvecraw) \vert^{1+\eta}]<\infty$ and $\e \big[\vert g(\datvecraw) \vert^{1+\eta}]<\infty$. 
To establish \eqref{eq:DecayingPairwiseCov2ndOrder}, it helps to first show that
 \begin{equation}\label{eq:DecayingPairwiseCov2ndOrder1Term}
     \lim_{N \to \infty} \cov\big( \big(W_1^{(k)} \big)^2 f(\datvecraw_1), \big( W_2^{(k')} \big)^2 g(\datvecraw_2) \big) = 0 \quad \text{ for any} \quad k,k' \in [K].
 \end{equation} To do this fix $k,k' \in [K]$ and let $o(1)$ denote terms that converge to $0$ as $N \to \infty$. Note that by applying Proposition \ref{prop:SimplifyHigherOrderExpectations} and Corollary \ref{cor:SimplifyHigherOrderExpectationsOneTerm} $$\begin{aligned} \cov\big( \big(W_1^{(k)} \big)^2 f(\datvecraw_1), \big( W_2^{(k')} \big)^2 g(\datvecraw_2) \big) & =  \e \big[ \big(W_1^{(k)} \big)^2 \big( W_2^{(k')} \big)^2 f(\datvecraw_1) g(\datvecraw_2) \big] -  \e \big[ \big(W_1^{(k)} \big)^2 f(\datvecraw_1) \big] \e \big[ \big( W_2^{(k')} \big)^2 g(\datvecraw_2) \big]
 \\ & = \e \Bigg[ \frac{f(\datvecraw_1) }{\LimitingLabelRulePi{1:k}(\proxyX_1) }  \Bigg] \e \Bigg[ \frac{g(\datvecraw_2)}{ \LimitingLabelRulePi{1:k'}(\proxyX_2) } \Bigg] 
+o(1)
\\ & \quad - \Bigg( \e \Bigg[ \frac{f(\datvecraw) }{\LimitingLabelRulePi{1:k}(\proxyX) }  \Bigg] +o(1) \Bigg)  \Bigg(\e \Bigg[ \frac{g(\datvecraw)}{ \LimitingLabelRulePi{1:k'}(\proxyX) } \Bigg]+o(1) \Bigg)
\\ & = o(1),
\end{aligned}$$ where the last step uses the fact that $\LimitingLabelRulePi{1:k}$, $\LimitingLabelRulePi{1:k'}$, $f$, and $g$ are all fixed functions while $\datvecraw_1$ and $\datvecraw_2$ have the same distribution as $V$. This confirms \eqref{eq:DecayingPairwiseCov2ndOrder1Term} holds.

 To establish \eqref{eq:DecayingPairwiseCov2ndOrder} note that by Fact \ref{fact:WsqFormula}, $W_1^2=\sum_{k=1}^K c_k^2 \big( W_1^{(k)} \big)^2$ and $W_2^{2} =\sum_{k'=1}^K c_{k'}^2 \big( W_2^{(k')} \big)^{2}$. 
 Thus by \eqref{eq:DecayingPairwiseCov2ndOrder1Term}, $$\lim_{N \to \infty} \cov\big(W_1^2 f(\datvecraw_1), W_2^{2} g(\datvecraw_2) \big)  = \sum_{k=1}^K \sum_{k'=1}^K c_k^2 c_{k'}^{2}  \lim_{N \to \infty} \cov\big( \big(W_1^{(k)} \big)^2 f(\datvecraw_1), \big( W_2^{(k')} \big)^{2} g(\datvecraw_2) \big) =0.$$ 

 \end{proof}
 
\section{Asymptotics for M-estimators}\label{sec:AsymptoticsForMestimators}

In this appendix we prove the consistency, $\sqrt{N}$-consistency, and asymptotic linearity of M-estimators under \samplingSchemeName{.} Taken together, these results establish Theorem \ref{theorem:AsymptoticLInearMestsStacked} and Corollary \ref{cor:PTDEstAsymptoticallyLinear} from the main text.

Throughout this appendix, it will be convenient to define empirical (weighted) averaging operators $\bfs_N$, $\bfts_N$ and $\bftp_N$, such that for any $p^* \in \mathbb{Z}_+$ and fixed function $f : \mathbb{R}^p \to \mathbb{R}^{p^*}$,\begin{equation}\label{eq:AveragingOperatorDefs} \bfs_N f = \frac{1}{N} \sum_{i=1}^N W_i f(\goodX_i), \quad \bfts_N f= \frac{1}{N} \sum_{i=1}^N W_i f(\proxyX_i), \quad \text{and} \quad \bftp_N f = \frac{1}{N} \sum_{i=1}^N f(\proxyX_i). \end{equation} 
Note that using this notation, $\htPTD$ consists of the following 3 component estimators given in Equation \eqref{eq:PTDComponentEstimatorsDEF}, \begin{equation}\label{eq:PTDComponentEstimatorsOperatorNotation}
    \htc = \argmin_{\theta \in \Theta} \bfs_N l_{\theta}, \quad \hgc = \argmin_{\theta \in \Theta} \bfts_N l_{\theta}, 
    \quad \text{and} \quad \hga = \argmin_{\theta \in \Theta} \bftp_N l_{\theta}.
\end{equation} In this appendix we study the asymptotic properties of these 3 component estimators. 

In the next subsection we present a helpful lemma, and some properties of the empirical weighted averaging operators $\bfs_N$, $\bfts_N$ and $\bftp_N$. These properties of averages are subsequently used to prove, consistency, $\sqrt{N}$-consistency and asymptotic linearity of the 3 estimators in \eqref{eq:PTDComponentEstimatorsOperatorNotation}.

\subsection{Helpful properties of weighted averages}

\begin{lemma}\label{lemma:VarianceOfWeightedAveragesGeneric}
    Under \samplingSchemeName{} and Assumptions \ref{assump:IIDUnderlyingData} and \ref{assump:LabellingRuleOverlap}, for any fixed, measurable function $g: \mathbb{R}^q \to \mathbb{R}$, $$\var \Big( \frac{1}{N} \sum_{i=1}^N W_i g(\datvecraw_i) \Big) \leq \frac{1}{N b^{2K}} \cdot \e[g^2(\datvecraw)].$$  If in addition $\e[g^2(\datvecraw)] < \infty$, $$\frac{1}{N} \sum_{i=1}^N W_i g(\datvecraw_i) \xrightarrow{p} \e[g(\datvecraw)] \quad \text{as} \quad N \to \infty.$$ 
\end{lemma}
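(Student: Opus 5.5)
We expand the variance of the weighted average into diagonal and off-diagonal terms and use the fact, established in Proposition \ref{prop:NoCovarianceMultiwaveIPW}, that the multiwave inverse probability weights induce no pairwise correlations. Without loss of generality we assume $\e[g^2(\datvecraw)]<\infty$; otherwise the bound is trivial. Then Fact \ref{fact:WeightBound} ensures every $W_i g(\datvecraw_i)$ has a finite second moment, so all covariances below are well defined.

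We write
\[
\var\Big( \frac{1}{N} \sum_{i=1}^N W_i g(\datvecraw_i) \Big) = \frac{1}{N^2} \sum_{i=1}^N \var\big(W_i g(\datvecraw_i)\big) + \frac{1}{N^2} \sum_{i \neq i'} \cov\big(W_i g(\datvecraw_i), W_{i'} g(\datvecraw_{i'})\big).
\]
By Proposition \ref{prop:NoCovarianceMultiwaveIPW}, applied with $f=g$, each off-diagonal covariance vanishes. That proposition is stated for arbitrary measurable functions, but the expectations it manipulates are finite under our square-integrability. For the diagonal terms we use $\var(W_i g(\datvecraw_i)) \le \e[W_i^2 g^2(\datvecraw_i)]$. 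Fact \ref{fact:WeightBound} gives $W_i^2 \le b^{-2K}$ almost surely, so this is at most $b^{-2K}\e[g^2(\datvecraw_i)] = b^{-2K}\e[g^2(\datvecraw)]$ by Assumption \ref{assump:IIDUnderlyingData}. Summing the $N$ diagonal terms and dividing by $N^2$ yields the bound $\e[g^2(\datvecraw)]/(N b^{2K})$.

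For the consistency claim, Proposition \ref{prop:WeightsEnableUnbiasedMeanEst} gives that the average has mean exactly $\e[g(\datvecraw)]$. Chebyshev's inequality combined with the variance bound then shows that the probability of a deviation exceeding $\epsilon$ is at most $\e[g^2(\datvecraw)]/(N b^{2K}\epsilon^2)\to 0$.

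The only delicate point is integrability: Proposition \ref{prop:NoCovarianceMultiwaveIPW} relies on Lemma \ref{lemma:SimplifyWeightedExpectations2Terms}, and that lemma's tower-property manipulations need the products $g(\datvecraw_i)g(\datvecraw_{i'})$ to be integrable. By independence together with Cauchy--Schwarz, this holds once $\e[g^2(\datvecraw)]<\infty$, and the boundedness of the weights handles the remaining factors.
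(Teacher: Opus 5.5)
Your proposal is correct and follows the paper's proof essentially step for step: Proposition~\ref{prop:NoCovarianceMultiwaveIPW} kills the off-diagonal covariances, Fact~\ref{fact:WeightBound} together with Assumption~\ref{assump:IIDUnderlyingData} bounds the diagonal terms, and Chebyshev's inequality with the unbiasedness from Proposition~\ref{prop:WeightsEnableUnbiasedMeanEst} gives consistency. Your explicit integrability remark (reducing to $\e[g^2(\datvecraw)]<\infty$ so that the covariances and tower-property manipulations are well defined) is a harmless refinement that the paper leaves implicit.
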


\begin{proof}
    Fix a measurable function $g : \mathbb{R}^q \to \mathbb{R}$. Recall that by Proposition \ref{prop:NoCovarianceMultiwaveIPW}, for each $i,i' \in [N]$ such that $i \neq i'$, $\cov \big(  W_i g(\datvecraw_i),  W_{i'} g(\datvecraw_{i'}) \big) =0$  and that by Fact \ref{fact:WeightBound}, $\vert W_i \vert \leq b^{-K}$ almost surely. Hence,
    
    $$\var \Big( \frac{1}{N} \sum_{i=1}^N W_i g(\datvecraw_i) \Big) 
         = \frac{1}{N^2} \sum_{i=1}^N \var \big(   W_i g(\datvecraw_i) \big) 
       \leq \frac{1}{N^2} \sum_{i=1}^N \e \big[  W_i^2 g^2(\datvecraw_i) \big] \leq \frac{1}{N^2 b^{2K}} \sum_{i=1}^N \e \big[  g^2(\datvecraw_i) \big].$$ Because $\e[g^2(\datvecraw_i)]=\e[g^2(\datvecraw)]$ for each $i \in [N]$ by Assumption \ref{assump:IIDUnderlyingData}, the above inequality simplifies to $$\var \Big( \frac{1}{N} \sum_{i=1}^N W_i g(\datvecraw_i) \Big) \leq \frac{1 }{N b^{2K}} \cdot \e[g^2(\datvecraw)].$$

    To complete the proof, further suppose that $\e[g^2(\datvecraw)]<\infty$. Note that under this constraint the above inequality implies that $$\lim_{N \to \infty} \var \Big( \frac{1}{N} \sum_{i=1}^N W_i g(\datvecraw_i) \Big)=0.$$ By Proposition \ref{prop:WeightsEnableUnbiasedMeanEst}, $\e[\frac{1}{N} \sum_{i=1}^N W_i g(\datvecraw_i)]=\e[g(\datvecraw)]$, so by Chebyshev's inequality, for any $\epsilon>0$, $$\limsup_{N \to \infty} \mathbb{P} \Big( \Big| \frac{1}{N} \sum_{i=1}^N W_i g(\datvecraw_i)  - \e[g(\datvecraw)] \Big| > \epsilon \Big) \leq \limsup_{N \to \infty} \frac{\var \big( \frac{1}{N} \sum_{i=1}^N W_i g(\datvecraw_i) \big)}{\epsilon^2 } =0.$$ Thus $\frac{1}{N} \sum_{i=1}^N W_i g(\datvecraw_i) \xrightarrow{p} \e[g(\datvecraw)]$ as $N \to \infty$ when $ \e[g^2(\datvecraw)]<\infty.$

\end{proof}

The following corollary gives some notable properties of the empirical weighted averaging operators $\bfs_N$ and $\bfts_N$. 

\begin{corollary}\label{cor:PropertiesOfWeightedAveragees} Under \samplingSchemeName{} and Assumptions \ref{assump:IIDUnderlyingData} and \ref{assump:LabellingRuleOverlap}, for any $p^* \in \mathbb{Z}_+$ and fixed, measurable $f,\tilde{f}: \mathbb{R}^p \to \mathbb{R}^{p^*}$ the following three properties hold.

\begin{enumerate}[(I)]
    \item Unbiasedness: $\e[\bfs_N f]=\e[f(\goodX)]$ and $\e[\bfts_N \tilde{f}]=\e[\tilde{f}(\proxyX)]$.
 
    \item Consistency: if we further suppose that $\e \big[[f(\goodX)]_j^2 \big] < \infty$ and $\e \big[[\tilde{f}(\proxyX)]_j^2 \big] < \infty$ for each $j \in [p^*]$, then as $N \to \infty$, $\bfs_N f \xrightarrow{p} \e[f(\goodX)]$ and $\bfts_N \tilde{f} \xrightarrow{p} \e[\tilde{f}(\proxyX)]$.

    \item Variance upper bound: If $p^*=1$, $$\var(\bfs_N f) \leq \frac{\e  [ f^2(\goodX)  ]}{N b^{2K}}  , \quad \text{and} \quad \var(\bfts_N \tilde{f}) \leq \frac{\e  [ \tilde{f}^2(\proxyX)  ]}{N b^{2K}}.$$
\end{enumerate}

\end{corollary}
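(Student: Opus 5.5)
This corollary follows by applying the earlier results coordinatewise with suitable choices of the function on $\mathbb{R}^q$. Given a fixed measurable $f:\mathbb{R}^p\to\mathbb{R}^{p^*}$ and $j\in[p^*]$, define $g_j:\mathbb{R}^q\to\mathbb{R}$ by $g_j(\datvecraw)\equiv [f(\goodX)]_j$. Here $\goodX=(\xobs,\xmiss)$ is a coordinate projection of $\datvecraw=(\xobs,\txmiss,\xmiss)$, so $g_j$ is a fixed measurable function of $\datvecraw$. Analogously, set $\tilde g_j(\datvecraw)\equiv[\tilde f(\proxyX)]_j$, using that $\proxyX=(\xobs,\txmiss)$ is also a projection of $\datvecraw$. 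With these choices,
\[
[\bfs_N f]_j=\frac1N\sum_{i=1}^N W_i g_j(\datvecraw_i)
\quad\text{and}\quad
[\bfts_N \tilde f]_j=\frac1N\sum_{i=1}^N W_i \tilde g_j(\datvecraw_i).
\]
Each claim then reduces to a scalar statement about weighted averages of a function of $\datvecraw$.

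For (I), apply Proposition \ref{prop:WeightsEnableUnbiasedMeanEst} to $g_j$ and to $\tilde g_j$ for each $j$. This gives $\e[[\bfs_N f]_j]=\e[g_j(\datvecraw)]=\e[[f(\goodX)]_j]$, and likewise for the proxy version. Stacking over $j\in[p^*]$ yields the vector identity. For this step, integrability of $f(\goodX)$ is implicitly needed so that the expectations exist, as in the statement of that proposition.

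For (II), the hypothesis $\e[[f(\goodX)]_j^2]<\infty$ is exactly $\e[g_j^2(\datvecraw)]<\infty$. The second part of Lemma \ref{lemma:VarianceOfWeightedAveragesGeneric} therefore gives $[\bfs_N f]_j\xrightarrow{p}\e[[f(\goodX)]_j]$, and the same argument applies to $\tilde g_j$. Coordinatewise convergence in probability over the finitely many $j\in[p^*]$ implies joint convergence in probability of the vectors, via a union bound.

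For (III), take $p^*=1$ and apply the first part of Lemma \ref{lemma:VarianceOfWeightedAveragesGeneric} to $g_1$ and to $\tilde g_1$. This directly gives $\var(\bfs_N f)\le \e[f^2(\goodX)]/(Nb^{2K})$ and the analogous bound for $\bfts_N\tilde f$.

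None of these steps presents a genuine obstacle. The only point needing care is to check that composing with the coordinate projections keeps $g_j$ and $\tilde g_j$ fixed and measurable, so that the earlier results, which are stated for functions of $\datvecraw$, apply verbatim.
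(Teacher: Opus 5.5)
Your proposal is correct and follows the paper's proof essentially step for step. Like the paper, you reduce to scalar coordinates by viewing $[f(X)]_j$ and $[\tilde f(\tilde X)]_j$ as functions of $V$. You then invoke Proposition~\ref{prop:WeightsEnableUnbiasedMeanEst} for (I) and Lemma~\ref{lemma:VarianceOfWeightedAveragesGeneric} for (II) and (III), and pass from entrywise to joint convergence in probability.
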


\begin{proof}

    To prove (I) and (II), fix $p^* \in \mathbb{Z}_+$ and measurable functions $f,\tilde{f}: \mathbb{R}^p \to \mathbb{R}^{p^*}$. 
    For each $j \in [p^*]$ define $f_j,\tilde{f}_j : \mathbb{R}^p \to \mathbb{R}$ to be functions that give the $j$th component of $f$ and $\tilde{f}$, given by $f_j(x)=e_j^\tran f(x)$ and $\tilde{f}_j(x)=e_j^\tran \tilde{f}(x)$ for $x \in \mathbb{R}^p$.
    
    To prove (I), note that 
    for each $j \in [p^*]$, by Proposition \ref{prop:WeightsEnableUnbiasedMeanEst} and because $\proxyX_i$ and $\goodX_i$ are each given by a subset of entries of $\datvecraw_i$, $$e_j^\tran \e[\bfs_N f]=\e \Big[ \frac{1}{N} \sum_{i=1}^N W_i f_j(\goodX_i) \Big]=\e[f_j(\goodX)] \quad \text{and} \quad e_j^\tran \e[ \bfts_N \tilde{f}]=\e \Big[ \frac{1}{N} \sum_{i=1}^N W_i \tilde{f}_j(\proxyX_i) \Big]=\e[\tilde{f}_j(\proxyX)].$$ Since the above expression holds for each $j \in [p^*]$, $\e[\bfs_N f]=\e[f(\goodX)]$ and $\e[ \bfts_N \tilde{f}]=\e[\tilde{f}(\proxyX)]$, proving (I).

    To prove (II), further suppose that for each $j \in [p^*]$, $\e[f_j^2(\goodX)]< \infty$ and $\e[\tilde{f}_j^2(\proxyX)]< \infty$. Next fix $j \in [p^*]$. 
    Note that because $\proxyX_i$ and $\goodX_i$ are each given by a subset of entries of $\datvecraw_i$, by Lemma \ref{lemma:VarianceOfWeightedAveragesGeneric}, as $N \to \infty$, $$e_j^\tran \bfs_N f = \frac{1}{N} \sum_{i=1}^N W_i f_j(\goodX_i) \xrightarrow{p} \e[f_j(\goodX)] \quad \text{and} \quad e_j^\tran \bfts_N \tilde{f} = \frac{1}{N} \sum_{i=1}^N W_i \tilde{f}_j(\proxyX_i) \xrightarrow{p} \e[\tilde{f}_j(\proxyX)].$$ Thus we have shown that for each $j \in [p^*]$, as  $N \to \infty$, $e_j^\tran \bfs_N f \xrightarrow{p} \e[f_j(\goodX)]$ and $e_j^\tran \bfts_N \tilde{f} \xrightarrow{p} \e[\tilde{f}_j(\proxyX)]$. Because entrywise convergence in probability implies convergence in probability (e.g., Theorem 2.7 in \cite{VanderVaartTextbook}), it follows that $\bfs_N f \xrightarrow{p} \e[f(\goodX)]$ and $\bfts_N \tilde{f} \xrightarrow{p} \e[\tilde{f}(\proxyX)]$ as $N \to \infty$, proving (II).

    To prove (III), fix $f,\tilde{f}: \mathbb{R}^p \to \mathbb{R}$.  
    Because $\proxyX_i$ and $\goodX_i$ are each given by a subset of entries of $\datvecraw_i$, by Lemma \ref{lemma:VarianceOfWeightedAveragesGeneric},  $$\var(\bfs_N f) =\var \Big( \frac{1}{N} \sum_{i=1}^N W_i f(\goodX_i) \Big) \leq \frac{ \e  [ f^2(\goodX)  ]}{N b^{2K}},   \quad \text{and}$$ $$\var(\bfts_N \tilde{f}) =\var \Big( \frac{1}{N} \sum_{i=1}^N W_i \tilde{f}(\proxyX_i) \Big)\leq \frac{ \e  [ \tilde{f}^2(\proxyX)  ]}{N b^{2K}},$$ completing the proof of (III).
\end{proof}

\subsection{Proof of point estimator consistency}

In the following proposition we establish consistency of $\htc$, $\hgc$, and $\hga$. The proof leverages the convexity of the loss function and uses an argument similar to that seen in \cite{PPI++}.

\begin{proposition}\label{prop:ComponentEstimatorConsistency}
    Under \samplingSchemeName{} and Assumptions \ref{assump:IIDUnderlyingData}-- \ref{assump:SmoothEnoughForAsymptoticLineariaty}, $$\htc \xrightarrow{p} \thetaTarg, \quad \hgc \xrightarrow{p} \gammaTarg,\quad  \text{and} \quad \hga \xrightarrow{p} \gammaTarg \quad \text{as } N \to \infty.$$
\end{proposition}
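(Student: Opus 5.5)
We will use the standard convexity argument for consistency of M-estimators (as in \cite{PPI++}; cf.\ the convexity lemma in \cite{VanderVaartTextbook}). The key fact is that pointwise convergence in probability of convex random functions to a convex limit is automatically uniform on compact sets. We give the argument for $\htc$; the arguments for $\hgc$ and $\hga$ are identical, with $\bfs_N$ replaced by $\bfts_N$ and $\bftp_N$ and with $(L,\thetaTarg)$ replaced by $(\tilde L,\gammaTarg)$. For $\hga$ the weights are identically $1$, so the ordinary weak law of large numbers suffices.

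\textbf{Step 1 (pointwise convergence).} Fix $\theta$ in a closed ball $\bar B\subset \mathcal{L}_{\thetaTarg}\cap\Theta$ centred at $\thetaTarg$. Such a ball exists because $\thetaTarg$ is interior by Assumption~\ref{assump:SmoothEnoughForAsymptoticLineariaty}(ii). By the Lipschitz bound in Assumption~\ref{assump:SmoothEnoughForAsymptoticLineariaty}(iv),
\[
|l_\theta(\goodX)|\le |l_{\thetaTarg}(\goodX)|+M(\goodX)\,\|\theta-\thetaTarg\|.
\]
Together with Assumption~\ref{assump:SmoothEnoughForAsymptoticLineariaty}(iv) and (vi), this gives $\e[l_\theta^2(\goodX)]<\infty$. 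Corollary~\ref{cor:PropertiesOfWeightedAveragees}(II) then yields $\bfs_N l_\theta\xrightarrow{p}L(\theta)$ for every fixed $\theta\in\bar B$. This is the only place the multiwave dependence enters, and it has already been handled by the uncorrelatedness of the weighted summands (Proposition~\ref{prop:NoCovarianceMultiwaveIPW}).

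\textbf{Step 2 (uniformity on $\bar B$).} The map $\theta\mapsto \bfs_N l_\theta$ is convex, since $W_i\ge 0$ and each $l_\theta$ is convex by Assumption~\ref{assump:SmoothEnoughForAsymptoticLineariaty}(i). We will upgrade pointwise convergence on a countable dense subset of $\bar B$ to $\sup_{\theta\in\bar B'}|\bfs_N l_\theta-L(\theta)|\xrightarrow{p}0$ on a slightly smaller closed ball $\bar B'$. To avoid invoking the almost-sure version of the convexity lemma, we can argue directly: fix $\epsilon>0$ and take a finite grid in $\bar B$. Convex functions that are close to the continuous convex limit $L$ at the grid points (including points slightly outside $\bar B'$) are uniformly close on $\bar B'$, because convexity gives upper bounds by interpolation and lower bounds by extrapolation. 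A union bound over the finitely many grid points then gives the claim.

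\textbf{Step 3 (localization by convexity).} Let $\delta>0$ be such that the sphere $S_\delta=\{\|\theta-\thetaTarg\|=\delta\}$ lies in $\bar B'$. By uniqueness of the minimizer and continuity of $L$ (Assumption~\ref{assump:SmoothEnoughForAsymptoticLineariaty}(ii),(v)), we have $\epsilon_\delta:=\min_{S_\delta}L-L(\thetaTarg)>0$. On the event where the uniform deviation is below $\epsilon_\delta/3$, every $\theta\in S_\delta$ satisfies
\[
\bfs_N l_\theta>\bfs_N l_{\thetaTarg}.
\]
Convexity then forces any minimizer over $\Theta$ to lie inside the ball. Indeed, if $\htc$ were outside, the segment from $\thetaTarg$ to $\htc$ would cross $S_\delta$ at a point whose objective value is at most the maximum of the two endpoint values, a contradiction. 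This event has probability tending to one, so $\mathbb P(\|\htc-\thetaTarg\|\le\delta)\to1$.

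The main obstacle is Step 2, specifically making the deterministic convexity-to-uniformity argument rigorous in probability. Everything else reduces to Corollary~\ref{cor:PropertiesOfWeightedAveragees}. A minor further point is that the argument does not need $\htc$ to be a unique minimizer: it applies to any measurable selection from the argmin set.
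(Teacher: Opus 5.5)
Your proposal is correct. Its skeleton is the same as the paper's: a weighted LLN at fixed $\theta$ via Corollary~\ref{cor:PropertiesOfWeightedAveragees}(II), then uniform convergence on a small closed ball around $\thetaTarg$, then localization by convexity. The real difference is how you obtain uniformity.

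The paper does not use convexity for that step. It uses the square-integrable Lipschitz envelope $M$ from Assumption~\ref{assump:SmoothEnoughForAsymptoticLineariaty}(iv) to bound $\sup_{\theta\in B(\thetaTarg;\epsilon)}|\bfs_N l_\theta-L(\theta)|$ by $(\bfs_N M+\e[M(\goodX)])\delta$ plus the maximal deviation over a finite $\delta$-cover. It then applies Corollary~\ref{cor:PropertiesOfWeightedAveragees}(II) both to $M$ and to each $l_{\theta^*}$ at the cover points. Given (iv), this is a short covering argument that is complete as written. Your route instead requires the interpolation/extrapolation convexity lemma: convex functions converging pointwise to a continuous limit converge uniformly on compacta. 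That lemma is standard, but it is exactly the step you flagged as delicate, and you would need to write it out.

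What your route buys is generality. You use the envelope only to certify $\e[l_\theta^2(\goodX)]<\infty$ near $\thetaTarg$. So the argument would go through under any condition giving pointwise square-integrability, and the dependent-weight LLN is needed only at fixed parameter values.

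Your localization step is the paper's idea. The crossing point on $S_\delta$ has objective value at most $\max\{\bfs_N l_{\htc},\bfs_N l_{\thetaTarg}\}=\bfs_N l_{\thetaTarg}$. You state this slightly more cleanly than the paper's chain, whose step $\lambda^{-1}(\bfs_N l_{\theta'}-\bfs_N l_{\thetaTarg})\ge \bfs_N l_{\theta'}-\bfs_N l_{\thetaTarg}$ tacitly needs the bracket to be nonnegative.

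One minor citation point. Continuity of $L$ on $\bar B$ and on $S_\delta$ should be sourced from (iv), or from convexity plus finiteness of $L$ near $\thetaTarg$, rather than from (v), which only controls $L$ near $\thetaTarg$. Alternatively, for small $\delta$, (v) gives the positive gap on $S_\delta$ directly through the quadratic lower bound.
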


\begin{proof}

First note that under Assumptions \ref{assump:IIDUnderlyingData} and \ref{assump:SmoothEnoughForAsymptoticLineariaty}, $\hga \xrightarrow{p} \gammaTarg$, by standard consistency arguments for M-estimators based on i.i.d. samples (e.g., one can show this by applying Proposition 1 of \cite{PPI++} for the case where $\lambda=0$). We thus will show that $\htc \xrightarrow{p} \thetaTarg$, and we remark that the proof that $\hgc \xrightarrow{p} \gammaTarg$ follows from an analogous argument.

To do this fix $\epsilon_1>0$. For any $\epsilon>0$, define $B(\thetaTarg;\epsilon) \equiv \{ \theta \in \mathbb{R}^d \ : \ \vert \vert \theta-\thetaTarg \vert \vert_2 \leq \epsilon \}$ to be a ball of radius $\epsilon$ about $\thetaTarg$. Next fix $\epsilon \in (0,\epsilon_1]$ to be small enough such that $B(\thetaTarg;\epsilon) \subset \Theta$ (which is possible by Assumption \ref{assump:SmoothEnoughForAsymptoticLineariaty}(ii)) and such that $B(\thetaTarg;\epsilon) \subset \mathcal{L}_{\thetaTarg}$, where $\mathcal{L}_{\thetaTarg}$ is the neighborhood about $\thetaTarg$ guaranteed by Assumption \ref{assump:SmoothEnoughForAsymptoticLineariaty}(iv). Recall $M: \mathbb{R}^p \to (0,\infty)$ is a function guaranteed by Assumption \ref{assump:SmoothEnoughForAsymptoticLineariaty}(iv) that satisfies $\e[M^2(\goodX)]<\infty$ and $\vert l_{\theta}(\argx)-l_{\theta'}(\argx) \vert < M(\argx) \vert \vert \theta-\theta' \vert \vert$ for all $\argx \in \xSpace$ and $\theta,\theta' \in \mathcal{L}_{\thetaTarg}$. 

We start by using a standard covering argument to show that \begin{equation}\label{eq:UniformEmiricalLossConvergence}
\sup\limits_{\theta \in B(\thetaTarg;\epsilon)} \vert \bfs_N l_{\theta} - L(\theta) \vert \xrightarrow{p} 0, \quad \text{as } N \to \infty.\end{equation} To do this fix $\eta>0$. Let $\delta = \eta/(3 \e[M(\goodX)])  \in (0,\infty)$ and take $\mathcal{C}_{\delta} \subset B(\thetaTarg;\epsilon)$ to be a finite $\delta$-covering of $B(\thetaTarg;\epsilon)$. Observe that for any $\theta \in B(\thetaTarg; \epsilon)$ there exists a $\theta' \in \mathcal{C}_{\delta}$ such that $\vert \vert \theta-\theta' \vert \vert_2 < \delta$, and hence for such a choice of $\theta' \in \mathcal{C}_{\delta}$, $$\begin{aligned} \vert \bfs_N l_{\theta} -L(\theta) \vert & \leq \vert \bfs_N l_{\theta} -\bfs_N l_{\theta'} \vert + \vert \bfs_N l_{\theta'} -L(\theta') \vert + \vert L(\theta')-L(\theta) \vert
\\ & \leq \vert \bfs_N  (l_{\theta}-l_{\theta'}) \vert + \vert \e[l_{\theta'}(\goodX)-l_{\theta}(\goodX)] \vert  + \vert \bfs_N l_{\theta'} -L(\theta') \vert
\\ & \leq \bfs_N \vert l_{\theta}-l_{\theta'} \vert +  \e[ \vert l_{\theta'}(\goodX)-l_{\theta}(\goodX) \vert ]   + \vert \bfs_N l_{\theta'} -L(\theta') \vert
\\ & \leq \vert \vert \theta - \theta'  \vert \vert_2 \cdot \bfs_N M  +  \e[ M(\goodX) \cdot \vert \vert \theta' - \theta  \vert \vert_2  ]   + \vert \bfs_N l_{\theta'} -L(\theta') \vert
\\ &  \leq \big( \bfs_N M + \e[M(\goodX)] \big) \delta + \sup_{\theta^* \in \mathcal{C}_{\delta}} \vert \bfs_N l_{\theta^*} -L(\theta^*) \vert
\\ & =  \frac{2 \eta}{3} + (\bfs_N M-\e[M(\goodX)])\delta +\sup_{\theta^* \in \mathcal{C}_{\delta}} \vert \bfs_N l_{\theta^*} -L(\theta^*) \vert.
\end{aligned}$$
Defining $T_1 (\delta) \equiv (\bfs_N M-\e[M(\goodX)])\delta$ and $T_2 (\delta) \equiv \sup_{\theta^* \in \mathcal{C}_{\delta}} \vert \bfs_N l_{\theta^*} -L(\theta^*) \vert$, since the above argument inequality holds for any $\theta \in B(\thetaTarg; \epsilon)$, it follows that $$\sup_{\theta \in B(\thetaTarg; \epsilon)} \vert \bfs_N l_{\theta} -L(\theta) \vert \leq  \frac{2 \eta}{3} + T_1(\delta) +T_2(\delta),$$ and hence $$\mathbb{P} \Big( \sup_{\theta \in B(\thetaTarg; \epsilon)} \vert \bfs_N l_{\theta} -L(\theta) \vert > \eta \Big) \leq \mathbb{P}\Big(T_1(\delta)+T_2(\delta) > \frac{\eta}{3} \Big) .$$ Since $\e[M^2(\goodX)]< \infty$, by Corollary  \ref{cor:PropertiesOfWeightedAveragees}, $\bfs_N M \xrightarrow{p} \e[M(\goodX)]$ and thus $T_1(\delta) \xrightarrow{p}0$. In addition for any $\theta^* \in \mathcal{C}_{\delta}$, $\vert \theta^* -\thetaTarg \vert < \epsilon$, so by the Cauchy-Schwartz inequality and Assumption \ref{assump:SmoothEnoughForAsymptoticLineariaty}(iv), $$\begin{aligned} \e[l_{\theta^*}^2(\goodX)] & = \e[l_{\thetaTarg}^2(\goodX)]+\e[\vert l_{\theta^*}(\goodX) -l_{\thetaTarg}(\goodX) \vert^2] + 2 \e \big[l_{\thetaTarg}(\goodX) \big(l_{\theta^*}(\goodX) -l_{\thetaTarg}(\goodX) \big) \big] 
\\ & \leq  \e[l_{\thetaTarg}^2(\goodX)]+\epsilon^2 \e[M^2(\goodX)] + 2\sqrt{\epsilon^2 \e[M^2(\goodX)] \e[l_{\thetaTarg}^2(\goodX)]} < \infty .\end{aligned}$$ Above the claim of finiteness follows from Assumptions \ref{assump:SmoothEnoughForAsymptoticLineariaty}(iv) and \ref{assump:SmoothEnoughForAsymptoticLineariaty}(vi). Since $\e[l_{\theta^*}^2(\goodX)] < \infty$ for all $\theta^* \in \mathcal{C}_{\delta}$, by Corollary \ref{cor:PropertiesOfWeightedAveragees}, $\bfs_N l_{\theta^*} \xrightarrow{p} L(\theta^*)$ for all $\theta^* \in \mathcal{C}_{\delta}$. Recalling $T_2 (\delta) \equiv \sup_{\theta^* \in \mathcal{C}_{\delta}} \vert \bfs_N l_{\theta^*} -L(\theta^*) \vert$ and that $\mathcal{C}_{\delta}$ is a finite set, it follows that $T_2(\delta) \xrightarrow{p} 0$. By the continuous mapping theorem and an earlier result $T_1(\delta)+T_2(\delta) \xrightarrow{p} 0$. Hence taking the limit as $N \to \infty$ of each side of an inequality displayed above implies that $$\lim_{N \to \infty} \mathbb{P} \Big( \sup_{\theta \in B(\thetaTarg; \epsilon)} \vert \bfs_N l_{\theta} -L(\theta) \vert > \eta \Big)=0.$$ Since this argument holds for any fixed $\eta>0$, result \eqref{eq:UniformEmiricalLossConvergence} holds.

Now let $\partial B(\thetaTarg;\epsilon) \equiv \{ \theta \in \mathbb{R}^d \ : \ \vert \vert \theta-\thetaTarg \vert \vert_2 =\epsilon \}$. Note that by Assumption \ref{assump:SmoothEnoughForAsymptoticLineariaty}(iv), for any $\theta,\theta' \in \partial B(\thetaTarg ; \epsilon)$, $\vert L(\theta) -L(\theta') \vert \leq \e[ \vert l_{\theta}(\goodX) -l_{\theta'}(\goodX) \vert] \leq \e[M(\goodX)] \vert \vert \theta - \theta' \vert \vert_2$. As a consequence $\theta \mapsto L(\theta)$ is continuous on $\partial B(\thetaTarg;\epsilon)$. Because $\partial B(\thetaTarg;\epsilon)$ is compact, by the Bolzano-Weierstrass theorem and the fact that $\theta \mapsto L(\theta)$ is continuous on $\partial B(\thetaTarg;\epsilon)$, there exists a $\theta_* \in \partial B(\thetaTarg;\epsilon)$ such that $L(\theta_*)= \inf_{\theta \in \partial B(\thetaTarg;\epsilon)} L(\theta)$. Letting $\delta_* \equiv L(\theta_*)-L(\thetaTarg)$, by the uniqueness of $\thetaTarg$ as the minimizer of $\theta \mapsto L(\theta)$ (see Assumption \ref{assump:SmoothEnoughForAsymptoticLineariaty}(ii)), $$\delta_*=  L(\theta_*) -L(\thetaTarg) =  \inf_{\theta \in \partial B(\thetaTarg;\epsilon)} \big( L(\theta) -L(\thetaTarg) \big) > 0.$$ 

Now fix any $\tilde{\theta} \in \Theta \setminus B(\thetaTarg;\epsilon)$ and we will find a lower bound on $\bfs_N l_{\tilde{\theta}}-\bfs_N l_{\thetaTarg}$ that does not depend on the specific choice of $\tilde{\theta}$. To do this define $$\lambda \equiv \frac{\epsilon}{\vert \vert \tilde{\theta}-\thetaTarg \vert \vert_2} \in (0,1] \quad \text{and} \quad \theta' \equiv \lambda \tilde{\theta} +(1-\lambda) \thetaTarg.$$ First observe that $\vert \vert \theta' -\thetaTarg \vert \vert_2= \lambda \vert \vert \tilde{\theta}-\thetaTarg \vert \vert_2  =\epsilon$ and hence $\theta' \in \partial B(\thetaTarg;\epsilon)$. 
Also note that by the definition of convexity and by Assumption \ref{assump:SmoothEnoughForAsymptoticLineariaty}(i), for all $\argx \in \xSpace$ $$l_{\theta'}(\argx) \leq \lambda l_{\tilde{\theta}}(\argx)+(1-\lambda) l_{\thetaTarg}(\argx) \Rightarrow l_{\tilde{\theta}}(\argx)-l_{\thetaTarg}(\argx) \geq \frac{1}{\lambda} \big( l_{\theta'}(\argx)-l_{\thetaTarg}(\argx)\big) \quad \text{for all } \argx \in \xSpace.$$
Since $\bfs_N$ is a linear operator that takes a positive weighted sum of $N$ terms, we can apply the $\bfs_N$ operator to each side of the above inequality to get that if $\goodX_i \in \xSpace$ for each $i \in[N]$ then, 
$$\begin{aligned} \bfs_N l_{\tilde{\theta}} - \bfs_N l_{\thetaTarg} & \geq \frac{1}{\lambda} \big( \bfs_N l_{\theta'} - \bfs_N l_{\thetaTarg} \big)
\\ & \geq \bfs_N l_{\theta'} - \bfs_N l_{\thetaTarg} 
\\ & = \big( \bfs_N l_{\theta'} -L(\theta') \big)+ \big( L(\theta')-L(\thetaTarg) \big) + \big( L(\thetaTarg)- \bfs_N l_{\thetaTarg} \big) 
\\ & \geq \delta_*- 2 \sup_{\theta \in B(\thetaTarg;\epsilon)} \vert \bfs_N l_{\theta} - L(\theta) \vert,
\end{aligned}$$
where the last inequality follows from a previous result and because $\theta' \in \partial B(\thetaTarg;\epsilon) \subset B(\thetaTarg;\epsilon)$. Note that the above lower bound holds for all $\tilde{\theta} \in \Theta \setminus B(\thetaTarg;\epsilon)$, provided that $\goodX_i \in \xSpace$ for each $i \in[N]$ (an almost sure occurrence). Hence taking the infimum of both sides of the above inequality across $\tilde{\theta} \in \Theta \setminus B(\thetaTarg;\epsilon)$, it follows that almost surely $$\inf\limits_{\tilde{\theta} \in \Theta \setminus B(\thetaTarg;\epsilon)} \bfs_N l_{\tilde{\theta}} - \bfs_N l_{\thetaTarg}  \geq \delta_*- 2 \sup_{\theta \in B(\thetaTarg;\epsilon)} \vert \bfs_N l_{\theta} - L(\theta) \vert.$$ Thus if $\vert \vert \htc - \thetaTarg \vert \vert_2 > \epsilon$, then almost surely $$\inf\limits_{\tilde{\theta} \in \Theta \setminus B(\thetaTarg;\epsilon)} \bfs_N l_{\tilde{\theta}} \leq \bfs_N l_{\thetaTarg} \Rightarrow \delta_* \leq 2 \sup_{\theta \in B(\thetaTarg;\epsilon)} \vert \bfs_N l_{\theta} - L(\theta) \vert \Rightarrow \sup_{\theta \in B(\thetaTarg;\epsilon)} \vert \bfs_N l_{\theta} - L(\theta)  \vert > \frac{\delta_*}{3}.$$ By monotonicity of probability measure and recalling that $\epsilon_1 \geq \epsilon$ it follows that $$\mathbb{P}\big( \vert \vert \htc - \thetaTarg \vert \vert_2 > \epsilon_1 \big) \leq \mathbb{P}\big( \vert \vert \htc - \thetaTarg \vert \vert_2 > \epsilon \big) \leq \mathbb{P} \Big( \sup_{\theta \in B(\thetaTarg;\epsilon)} \vert \bfs_N l_{\theta} - L(\theta)  \vert > \frac{\delta_*}{3} \Big).$$
By \eqref{eq:UniformEmiricalLossConvergence}, the right hand side goes to zero as $N \to \infty$, so by nonnegativity of probability measure, $\lim_{N \to \infty} \mathbb{P}\big( \vert \vert \htc - \thetaTarg \vert \vert_2 > \epsilon_1 \big)=0$. Since this argument holds for any fixed $\epsilon_1>0$, $\htc \xrightarrow{p} \theta$. An analogous argument shows that $\hgc \xrightarrow{p} \gamma$.

\end{proof}

\subsection{Control on local empirical process via symmetrization and chaining}\label{sec:ControlLocalEmpiricalProcessResults}

For each $\theta \in \Theta$ define
 \begin{equation}\label{eq:DeltaDef}\Delta_N(\theta) \equiv \big( \bfs_N l_{\theta}-L(\theta) \big)-\big(\bfs_N l_{\thetaTarg} -L(\thetaTarg) \big) \quad \text{and} \quad \tilde{\Delta}_N(\theta) \equiv \big( \bfts_N l_{\theta}-\tilde{L}(\theta) \big)-\big(\bfts_N l_{\gammaTarg} -\tilde{L}(\gammaTarg) \big).\end{equation} To study the asymptotics of $\htc$ and $\hgc$, we will need to control fluctuations of $\Delta_N(\theta)$ and $\tilde{\Delta}_N(\theta)$ in a neighborhood of $\thetaTarg$ and $\gammaTarg$, respectively. More formally for any $\delta' > 0$ and $\theta' \in \mathbb{R}^d$ define the Euclidean ball $$B(\theta'; \delta') \equiv \{ \theta \in \mathbb{R}^d \ : \ \vert \vert \theta-\theta' \vert \vert_2 \leq \delta' \},$$ and define for each $\delta>0$ and $\delta_0>0$ the following moduli of continuity \begin{equation}\label{eq:ModuliOfContinuityDef}
     \omega_{N,\delta_0}(\delta) \equiv \sup\limits_{\substack{\theta,\theta' \in B(\thetaTarg;\delta_0) \\ \vert \vert \theta-\theta' \vert \vert_2 \leq \delta }} \Big| \Delta_N(\theta)-\Delta_N(\theta') \Big| \quad \text{and} \quad \tilde{\omega}_{N,\delta_0}(\delta) \equiv \sup\limits_{\substack{\theta,\theta' \in B(\gammaTarg;\delta_0) \\ \vert \vert \theta-\theta' \vert \vert_2 \leq \delta }} \Big| \tilde{\Delta}_N(\theta)-\tilde{\Delta}_N(\theta') \Big|.
 \end{equation} In the following lemmas, we use a chaining and symmetrization argument to upper bound $\e[\omega_{N,\delta_0}(\delta)]$ and $\e[\tilde{\omega}_{N,\delta_0}(\delta)]$.  This upperbound is later used to establish $\sqrt{N}$-consistency and asymptotic linearity of $\htc$ and $\hgc$.

To do this it helps to first define for each $\delta_0,\delta$ the function classes
\begin{equation}\label{eq:FunctionClassFDefinition}\begin{split}
     \mathcal{F}_{\delta_0}(\delta) & \equiv \Bigl\{ l_{\theta}-l_{\theta'} : \mathbb{R}^p \to \mathbb{R} \ \text{ such that } \ \theta,\theta' \in B(\thetaTarg; \delta_0), \vert \vert \theta-\theta' \vert \vert_2 \leq \delta \Bigr\}, \text{ and } \\
      \tilde{\mathcal{F}}_{\delta_0}(\delta) & \equiv \Bigl\{ l_{\theta}-l_{\theta'} : \mathbb{R}^p \to \mathbb{R} \ \text{ such that } \ \theta,\theta' \in B(\gammaTarg; \delta_0), \vert \vert \theta-\theta' \vert \vert_2 \leq \delta \Bigr\},
       \end{split}
\end{equation} and to note that the following lemma holds by applying a standard chaining argument.

\begin{lemma}\label{lemma:ApplyChaining}
    Assume Assumption \ref{assump:SmoothEnoughForAsymptoticLineariaty} holds and fix any $\delta >0$, $N \in \mathbb{Z}_+$, and $\delta_0 >0$ such that $B(\thetaTarg; \delta_0) \subseteq \mathcal{L}_{\thetaTarg}$ and $B(\gammaTarg; \delta_0) \subseteq \mathcal{L}_{\gammaTarg}$. If we let $R_1,\dots,R_N \stackrel{\textnormal{i.i.d.}}{\sim} \textnormal{Unif}\{-1,1\}$ be i.i.d. Rademacher variables then for any fixed sequences $a_1,a_2,\dots,a_N \in [0,b^{-K}]$, $\argx_1,\argx_2,\dots,\argx_N \in \xSpace$ and $\argxProxy_1,\argxProxy_2,\dots,\argxProxy_N \in \xProxySpace$, 
    $$\e\Big[ \sup_{f \in \mathcal{F}_{\delta_0}(\delta)} \big|  \sum_{i=1}^N   R_i a_i f(\argx_i)  \big|   \Big] \leq 4 b^{-K} C_u \sqrt{d} \sqrt{\delta \delta_0}   \sqrt{\sum_{i=1}^N M^2(\argx_i)} \quad \text{and}$$ $$\e\Big[ \sup_{f \in \tilde{\mathcal{F}}_{\delta_0}(\delta)} \big|  \sum_{i=1}^N   R_i a_i f(\argxProxy_i)  \big|   \Big] \leq 4 b^{-K} C_u \sqrt{d} \sqrt{\delta \delta_0}   \sqrt{\sum_{i=1}^N \tilde{M}^2(\argxProxy_i)}.$$
    Above $C_u \in (0,\infty)$ is a universal constant that does not depend on $N, \delta, \delta_0,b,K$, $d$, or the sequences $(a_i)_{i=1}^N$, $(\argx_i)_{i=1}^N$, and $(\argxProxy_i)_{i=1}^N$.
\end{lemma}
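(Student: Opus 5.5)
We prove the first bound; the second follows identically with $\tilde{M}$, $\gammaTarg$ and $\mathcal{L}_{\gammaTarg}$ in place of $M$, $\thetaTarg$ and $\mathcal{L}_{\thetaTarg}$. Throughout, $(a_i)$ and $(\argx_i)$ are fixed, so only the Rademacher variables are random.

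\textbf{Step 1 (the process and its increments).} Index the Rademacher process by pairs
\[
Z(\theta,\theta') \equiv \sum_{i=1}^N R_i a_i \big(l_{\theta}(\argx_i)-l_{\theta'}(\argx_i)\big),
\qquad (\theta,\theta') \in T \equiv \{\theta,\theta'\in B(\thetaTarg;\delta_0): \|\theta-\theta'\|_2\le\delta\}.
\]
Because $B(\thetaTarg;\delta_0)\subseteq\mathcal{L}_{\thetaTarg}$, Assumption \ref{assump:SmoothEnoughForAsymptoticLineariaty}(iv) and $a_i\le b^{-K}$ give, for $(\theta_1,\theta_1'),(\theta_2,\theta_2')\in T$,
\[
\|Z(\theta_1,\theta_1')-Z(\theta_2,\theta_2')\|_{\psi_2}
\le C\, b^{-K}\sigma_N\big(\|\theta_1-\theta_2\|_2+\|\theta_1'-\theta_2'\|_2\big),
\qquad \sigma_N\equiv\Big(\sum_{i=1}^N M^2(\argx_i)\Big)^{1/2}.
\]
This uses Hoeffding's inequality for Rademacher sums together with $|l_{\theta_1}-l_{\theta_1'}-l_{\theta_2}+l_{\theta_2'}|\le M(\|\theta_1-\theta_2\|+\|\theta_1'-\theta_2'\|)$. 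So $Z$ has subgaussian increments with respect to the metric $\rho=b^{-K}\sigma_N(\|\cdot\|+\|\cdot\|)$ on $T\subset\mathbb{R}^{2d}$.

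\textbf{Step 2 (reduce to a supremum of increments).} Since $Z(\theta,\theta)=0$ and the diagonal pair $(\thetaTarg,\thetaTarg)$ lies in $T$,
\[
\e\sup_T|Z| \le \e\sup_{t,s\in T}|Z_t-Z_s|.
\]
Apply Dudley's entropy integral (e.g.\ \cite{Vershynin_2018}, Thm.~8.1.3):
\[
\e\sup_T|Z| \le C\int_0^{\mathrm{diam}_\rho(T)}\sqrt{\log \mathcal{N}(T,\rho,\epsilon)}\,d\epsilon .
\]
Two facts control this integral. First, $T\subseteq B(\thetaTarg;\delta_0)^2$, so $\mathcal{N}(T,\rho,\epsilon)\le (C b^{-K}\sigma_N\delta_0/\epsilon)^{2d}$. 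Second, because $\|\theta-\theta'\|\le\delta$, each $|Z_t|$ has $\psi_2$-norm at most $Cb^{-K}\sigma_N\delta$.

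\textbf{Step 3 (obtaining $\sqrt{\delta\delta_0}$; the main obstacle).} A naive Dudley bound gives roughly $b^{-K}\sigma_N\delta\sqrt{d\log(\delta_0/\delta)}$, which is not of the stated $\sqrt{\delta\delta_0}$ form. The plan is to truncate the chaining at scale $\epsilon_0 = b^{-K}\sigma_N\delta$, exploiting that the process values themselves are of size $b^{-K}\sigma_N\delta$.
\begin{itemize}
\item Below $\epsilon_0$, run Dudley's integral from $0$ to $\epsilon_0$.
\item At the coarse net of size $\mathcal{N}(T,\rho,\epsilon_0)$, use a maximal inequality for finitely many subgaussian variables, $\e\max\le C\,b^{-K}\sigma_N\delta\sqrt{2d\log(C\delta_0/\delta)}$.
\end{itemize}
Then apply the elementary bound $\delta\sqrt{\log(e\delta_0/\delta)}\le C\sqrt{\delta\delta_0}$, valid for $\delta\le\delta_0$ because $x\sqrt{\log(e/x)}\le C\sqrt{x}$ on $(0,1]$. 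If $\delta>\delta_0$, the constraint $\|\theta-\theta'\|\le\delta$ is weaker than the diameter bound $2\delta_0$, so replace $\delta$ by $\min(\delta,2\delta_0)\le 2\sqrt{\delta\delta_0}$ and the same argument goes through.

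\textbf{Step 4 (constants).} Collecting constants gives
\[
\e\sup_T|Z| \le 4b^{-K}C_u\sqrt{d}\,\sqrt{\delta\delta_0}\,\sigma_N ,
\]
with $C_u$ universal: the only inputs are Dudley's constant, the entropy bound for Euclidean balls, and the elementary inequality in Step 3, none of which depend on $N,\delta,\delta_0,b,K,d$ or the sequences. The covering count contributes the factor $\sqrt{2d}$, which is absorbed into $\sqrt{d}$. The delicate part is Step 3, namely getting the $\sqrt{\delta\delta_0}$ dependence with a universal constant while tracking the $d$-dependence through the covering numbers.
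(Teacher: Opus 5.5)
Your argument is correct, but it takes a different route from the paper.

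\textbf{The paper's route.} The paper works with the single-index process $Z(\theta)=\sum_i R_i a_i l_\theta(\argx_i)$ on $B(\thetaTarg;\delta_0)$. It observes that the target quantity is exactly the modulus of continuity of $Z$ at scale $\delta s_N$, for the metric $\rho_N(\theta,\theta')=s_N\|\theta-\theta'\|_2$ with $s_N=b^{-K}\sigma_N$. It then invokes the modulus-of-continuity form of Dudley's inequality (Corollary 2.2.9 in \cite{VDVAndWellnerTextbook}), which cuts the entropy integral off at $\delta s_N$ automatically. The factor $\sqrt{\delta\delta_0}$ comes out in one line from $\log(1+x)\le x$, since $\int_0^{\delta s_N}\sqrt{4s_N\delta_0/\epsilon}\,d\epsilon=4s_N\sqrt{\delta\delta_0}$. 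This holds uniformly in $\delta>0$, so no case split is needed.

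\textbf{Your route.} Your pair-indexed process with truncated chaining re-derives that modulus inequality by hand. The coarse-net maximal inequality, which uses $\|Z_t\|_{\psi_2}\lesssim s_N\delta$ on $T$, plays the role of the coarse-scale term in the proof of that corollary. This makes your proof more self-contained and yields the sharper intermediate bound $s_N\delta\sqrt{d\log(C\delta_0/\delta)}$. The cost is more bookkeeping and a separate treatment of $\delta>\delta_0$.

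\textbf{A correction to Step 3.} The untruncated Dudley integral over $T$ runs up to $\mathrm{diam}_\rho(T)\asymp s_N\delta_0$, because both coordinates of a pair can range over the whole ball. It therefore gives order $s_N\delta_0\sqrt d$, not $s_N\delta\sqrt{d\log(\delta_0/\delta)}$, and that is the real reason truncation is needed. The expression you call the naive bound is already of the required form, by the same elementary inequality you invoke afterwards. Likewise, the integral from $0$ to $\epsilon_0$ in your first bullet can be bounded directly with $\log(1+x)\le x$, exactly as the paper does.
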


\begin{proof}
    Fix $N \in \mathbb{Z}_+, \delta >0, \delta_0>0$ such that $B(\thetaTarg; \delta_0) \subseteq \mathcal{L}_{\thetaTarg}$ and $B(\gammaTarg; \delta_0) \subseteq \mathcal{L}_{\gammaTarg}$ ($\mathcal{L}_{\thetaTarg}$ is the neighborhood from Assumption \ref{assump:SmoothEnoughForAsymptoticLineariaty}(iv) in which $\theta \mapsto l_{\theta}(\goodX)$ is locally $M(\goodX)$-Lipschitz). Further fix $a_1,a_2,\dots,a_N \in [0,b^{-K}]$, $\argx_1,\argx_2,\dots,\argx_N \in \xSpace$ and $\argxProxy_1,\argxProxy_2,\dots,\argxProxy_N \in \xProxySpace$, and let $R_1,\dots,R_N \stackrel{\text{i.i.d.}}{\sim} \text{Unif}\{-1,1\}$ be i.i.d. Rademacher variables. 
    
Note that by Assumption \ref{assump:SmoothEnoughForAsymptoticLineariaty}(iv), $$\vert l_{\theta}(\argx_i)- l_{\theta'}(\argx_i) \vert  \leq M(\argx_i) \vert \vert \theta-\theta' \vert \vert_2 \quad \text{for all } \quad \theta,\theta' \in B(\thetaTarg; \delta_0) \quad \text{and} \quad i \in [N].$$ Next define $$Z(\theta) \equiv \sum_{i=1}^N R_i a_i l_{\theta}(\argx_i) \quad \text{for all } \theta \in B(\thetaTarg; \delta_0),$$ and observe that by Definition of $\mathcal{F}_{\delta_0}(\delta)$ at \eqref{eq:FunctionClassFDefinition}, $$\e\Bigg[ \sup_{f \in \mathcal{F}_{\delta_0}(\delta)} \big|  \sum_{i=1}^N   R_i a_i f(\argx_i)  \big|   \Bigg]= \e \Bigg[ \sup_{\substack{\theta,\theta' \in B(\thetaTarg;\delta_0) \\ \vert \vert \theta-\theta' \vert \vert  \leq \delta}} \big| Z(\theta)-Z(\theta')   \big|  \Bigg].$$ Now let $\vert \vert \cdot \vert \vert_{\psi_2}$ denote the sub-Gaussian norm. Applying Lemma 2.2.8 of \cite{VDVAndWellnerTextbook}, for any $\theta,\theta' \in B(\thetaTarg;\delta_0)$,  $$\begin{aligned} \vert \vert Z(\theta)-Z(\theta') \vert \vert_{\psi_2} & = \big| \big| \sum_{i=1}^N R_i a_i \big( l_{\theta}(\argx_i)-l_{\theta'}(\argx_i) \big) \big| \big|_{\psi_2} 
\\ & \leq \sqrt{6} \Big( \sum_{i=1}^N a_i^2 \big( l_{\theta}(\argx_i)-l_{\theta'}(\argx_i) \big)^2 \Big)^{1/2} 
\\ & \leq \sqrt{6}  \Big( \sum_{i=1}^N b^{-2K} M^2(\argx_i) \vert \vert \theta - \theta' \vert \vert_2^2 \Big)^{1/2}
\\ & = \sqrt{6} \cdot b^{-K} \sqrt{\sum_{i=1}^N  M^2(\argx_i)} \cdot \vert \vert \theta-\theta' \vert \vert_2. \end{aligned}$$ Thus letting $s_N \equiv  b^{-K}  \sqrt{\sum_{i=1}^N M^2(\argx_i)}$ and letting $$\rho_N(\theta,\theta') \equiv  s_N \vert \vert \theta - \theta' \vert \vert_2 \quad \text{for all} \quad \theta,\theta' \in B(\thetaTarg;\delta_0),$$ it follows that $\vert \vert Z(\theta)-Z(\theta') \vert \vert_{\psi_2} \leq \sqrt{6} \rho_N(\theta,\theta')$ for all $\theta,\theta' \in B(\thetaTarg;\delta_0)$. Hence the stochastic process $\{ Z(\theta) \ : \ \theta \in B(\thetaTarg;\delta_0) \}$ is sub-Gaussian with respect to the semi-metric $\rho_N$. $\{ Z(\theta) \ : \ \theta \in B(\thetaTarg;\delta_0) \}$ is also a separable stochastic process because there exists a countably dense subset of $B(\thetaTarg;\delta_0)$, and for each fixed realization of $(R_i)_{i=1}^N$, $\theta \mapsto Z(\theta)$ is continuous.

Since $\{ Z(\theta) \ : \ \theta \in B(\thetaTarg;\delta_0) \}$ is a separable and sub-Gaussian process with respect to the semi-metric $\rho_N$, we can apply a variant of Dudley's integral inequality found in Corollary 2.2.9 of \cite{VDVAndWellnerTextbook} to obtain that $$\e \Bigg[ \sup_{\substack{\theta,\theta' \in B(\thetaTarg;\delta_0) \\ \rho_N(\theta, \theta')  \leq \delta s_N}} \vert Z(\theta)-Z(\theta') \vert   \Bigg] \leq C_u \int_0^{\delta s_N} \sqrt{\log \mathcal{D}(\epsilon, \rho_N)} \rd \epsilon ,$$ where $C_u \in (0,\infty)$ is a universal constant and $\mathcal{D}(\epsilon, \rho_N)$ is the maximum number of $\epsilon$-separated points in the semimetric space $\big(B(\thetaTarg;\delta_0), \rho_N \big)$. Next, for any $\epsilon>0$ and seminorm $\rho$, let $\mathcal{N}\big(\epsilon, B(\thetaTarg;\delta_0), \rho \big)$ be the minimal number of $\epsilon$-balls (with respect to the seminorm $\rho$) that cover $B(\thetaTarg;\delta_0)$, and observe that $$\mathcal{D}(\epsilon, \rho_N) \leq \mathcal{N} \big(\epsilon/2, B(\thetaTarg;\delta_0), \rho_N \big) \leq \mathcal{N} \Big( \frac{\epsilon }{2 s_N}, B(\thetaTarg;\delta_0), \vert \vert \cdot \vert \vert_2 \Big) \leq  \Big(1+ \frac{4 s_N \delta_0}{\epsilon } \Big)^d.$$ Hence, we can combine the two previous inequalities to get that $$\begin{aligned} \e \Bigg[\sup_{\substack{\theta,\theta' \in B(\thetaTarg;\delta_0) \\ \rho_N(\theta, \theta')  \leq \delta s_N}} \vert Z(\theta)-Z(\theta') \vert   \Bigg] & \leq C_u \int_0^{\delta s_N} \sqrt{d \log \big( 1+ \frac{4 s_N \delta_0}{\epsilon } \big)} \rd \epsilon 
\\ & \leq C_u \sqrt{d} \int_0^{\delta s_N} \sqrt{\frac{4 s_N \delta_0}{\epsilon}} \rd \epsilon
\\ & = 4 C_u \sqrt{d} \sqrt{\delta \delta_0} s_N.
\end{aligned}$$ Now recalling an earlier formula and noting that for $\theta,\theta' \in B(\thetaTarg;\delta_0)$, $\vert \vert \theta - \theta' \vert \vert \leq \delta$ if and only if $\rho_N(\theta,\theta') \leq \delta s_N$, it follows that  $$\begin{aligned} \e\Bigg[ \sup_{f \in \mathcal{F}_{\delta_0}(\delta)} \big|  \sum_{i=1}^N   R_i a_i f(\argx_i)  \big|   \Bigg] & = \e \Bigg[ \sup_{\substack{\theta,\theta' \in B(\thetaTarg;\delta_0) \\ \vert \vert \theta-\theta' \vert \vert  \leq \delta}} \vert Z(\theta)-Z(\theta') \vert    \Bigg]
\\ & = \e \Bigg[\sup_{\substack{\theta,\theta' \in B(\thetaTarg;\delta_0) \\ \rho_N(\theta, \theta')  \leq \delta s_N}} \vert Z(\theta)-Z(\theta') \vert   \Bigg] 
\\ & \leq 4 C_u \sqrt{d} \sqrt{\delta \delta_0} s_N=4 b^{-K} C_u \sqrt{d} \sqrt{\delta \delta_0}   \sqrt{\sum_{i=1}^N M^2(\argx_i)}.
\end{aligned}.$$ An identical argument shows that
$$\e\Big[ \sup_{f \in \tilde{\mathcal{F}}_{\delta_0}(\delta)} \big|  \sum_{i=1}^N   R_i a_i f(\argxProxy_i)  \big|   \Big] \leq 4 b^{-K} C_u \sqrt{d} \sqrt{\delta \delta_0}   \sqrt{\sum_{i=1}^N \tilde{M}^2(\argxProxy_i)}.$$
\end{proof}

\begin{lemma}\label{lemma:ModContinuityExpectationBound}
    
 Under \samplingSchemeName{} and Assumptions \ref{assump:IIDUnderlyingData}, \ref{assump:LabellingRuleOverlap}, and \ref{assump:SmoothEnoughForAsymptoticLineariaty}, there exists a constant $C \in (0,\infty)$, such that for all $\delta >0$, $N \in \mathbb{Z}_+$, and $\delta_0 >0$ such that $B(\thetaTarg; \delta_0) \subseteq \mathcal{L}_{\thetaTarg}$ and $B(\gammaTarg; \delta_0) \subseteq \mathcal{L}_{\gammaTarg}$, \begin{equation*}
    \e \big[  \omega_{N,\delta_0}(\delta) \big] \leq C \sqrt{\delta \delta_0}  N^{-1/2} \quad \text{and} \quad \e \big[  \tilde{\omega}_{N,\delta_0}(\delta) \big] \leq C \sqrt{\delta \delta_0}  N^{-1/2},
\end{equation*} where $\omega_{N,\delta_0}(\delta)$ and $\tilde{\omega}_{N,\delta_0}(\delta)$ are defined at \eqref{eq:ModuliOfContinuityDef}.
\end{lemma}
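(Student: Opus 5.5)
We prove the bound for $\omega_{N,\delta_0}(\delta)$; the bound for $\tilde{\omega}_{N,\delta_0}(\delta)$ follows by the same argument with $\goodX_i$, $M$, $L$ replaced by $\proxyX_i$, $\tilde{M}$, $\tilde{L}$. Write $\mathcal{F}=\mathcal{F}_{\delta_0}(\delta)$. By definition,
$$\omega_{N,\delta_0}(\delta)=\sup_{f\in\mathcal{F}}\Big|\frac1N\sum_{i=1}^N \big(W_i f(\goodX_i)-\e[f(\goodX)]\big)\Big|.$$
The weights are dependent, so the plan is to decompose the centered process into wave-by-wave martingale-difference pieces. Each piece is a sum of terms that are conditionally independent given earlier data, and can be symmetrized conditionally.

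\textbf{Decomposition.} Use $W_i=\sum_k c_kW_i^{(k)}$ and $W_i^{(k)}=\prod_{j=1}^K W_i^{(k,j)}$ from \eqref{eq:WkFormulaAlt}. Set $P_i^{(k,\kappa)}\equiv\prod_{j\le\kappa}W_i^{(k,j)}$, with $P_i^{(k,0)}=1$. Telescoping gives
$$W_i^{(k)}f(\goodX_i)-\e f(\goodX)=\sum_{\kappa=1}^{K}P_i^{(k,\kappa-1)}\big(W_i^{(k,\kappa)}-1\big)f(\goodX_i)+\big(f(\goodX_i)-\e f(\goodX)\big).$$
The supremum is bounded by $\sum_k c_k\sum_\kappa$ of the corresponding suprema plus the i.i.d. term, so it suffices to bound $K+1$ expected suprema, each by $C'\sqrt{\delta\delta_0}N^{-1/2}$.

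\textbf{The i.i.d. term.} This is handled by standard symmetrization. Apply Lemma \ref{lemma:ApplyChaining} with $a_i=1\le b^{-K}$, then Jensen, which gives $\e\sqrt{\sum_i M^2(\goodX_i)}\le\sqrt{N\e M^2(\goodX)}$.

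\textbf{The wave-$\kappa$ terms.} Condition on $\mathcal{G}_{\kappa-1}\equiv\sigma(\cd_{\kappa-1},\datvecraw_1,\dots,\datvecraw_N)$. Since $U^{(\kappa)}$ is independent of this $\sigma$-field, the indicators $I_i^{(\kappa)}$ are conditionally independent Bernoulli$\big(\LabelRulePi{\kappa}(\proxyX_i)\big)$ (Fact \ref{fact:CondMeanAndIndep}). Hence $\e[W_i^{(k,\kappa)}\giv\mathcal{G}_{\kappa-1}]=1$. The summands $Y_i(f)=P_i^{(k,\kappa-1)}(W_i^{(k,\kappa)}-1)f(\goodX_i)$ are therefore conditionally independent and mean zero, with $P_i^{(k,\kappa-1)}$ and $f(\goodX_i)$ fixed given $\mathcal{G}_{\kappa-1}$.

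Conditional symmetrization (e.g.\ Lemma 2.3.1 of \cite{VDVAndWellnerTextbook}, applied under the conditional law) gives
$$\e\Big[\sup_f\Big|\sum_i Y_i(f)\Big|\,\Big|\,\mathcal{G}_{\kappa-1}\Big]\le 2\,\e\Big[\sup_f\Big|\sum_i R_i a_i f(\goodX_i)\Big|\,\Big|\,\mathcal{G}_{\kappa-1}\Big],$$
where $a_i=P_i^{(k,\kappa-1)}|W_i^{(k,\kappa)}-1|$ and $R_i$ are Rademacher signs. Strictly, the signs multiply $P_i(W_i^{(k,\kappa)}-1)$. Since a Rademacher times a sign is again Rademacher and independent across $i$ given everything else, we may write the coefficients as nonnegative $a_i$. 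With $\kappa$ factors each at most $b^{-1}$, we have $a_i\le b^{-\kappa}\le b^{-K}$.

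Next, condition further on the $I^{(\kappa)}$ and apply Lemma \ref{lemma:ApplyChaining} to get a bound of $8b^{-K}C_u\sqrt d\sqrt{\delta\delta_0}\sqrt{\sum_i M^2(\goodX_i)}$. Take expectations and apply Jensen as before. Finally divide by $N$, sum the pieces, and collect constants into $C$, which depends on $b,K,d,\e M^2,\e\tilde M^2$ but not on $\delta,\delta_0,N$.

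\textbf{Main obstacle.} The delicate step is justifying conditional symmetrization when the function class is indexed uncountably. This is settled by separability: the process is continuous in $\theta$ on $B(\thetaTarg;\delta_0)$ by the Lipschitz condition, so the supremum can be taken over a countable dense set. A second point to check is measurability of the conditional independence structure given $\mathcal{G}_{\kappa-1}$; it includes unobserved $\xmiss$, but the $U^{(\kappa)}$ are still independent of it, as argued in the proof of Fact \ref{fact:CondMeanAndIndep}.
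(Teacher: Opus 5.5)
Your proposal is correct and follows essentially the same route as the paper. Both use the same telescoping of $W_i^{(k)}$ into wave-by-wave increments $P_i^{(k,\kappa-1)}(W_i^{(k,\kappa)}-1)$, conditional symmetrization given all data up to the previous wave (including the unobserved $\xmiss_i$), and then Lemma~\ref{lemma:ApplyChaining} followed by Jensen. The only cosmetic difference is in the symmetrization step: the paper uses an explicit ghost copy $U_i^*$ of the wave-$\kappa$ uniforms and ends with coefficients $a_i=P_i^{(k,\kappa)}$. You instead invoke a conditional symmetrization lemma and absorb the signs of $W_i^{(k,\kappa)}-1$ into the Rademachers, giving $a_i=P_i^{(k,\kappa-1)}|W_i^{(k,\kappa)}-1|\le b^{-K}$; both are valid.
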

\begin{proof}

 Because the proof deriving upper bounds for $\e \big[  \omega_{N,\delta_0}(\delta) \big]$ and $\e \big[  \tilde{\omega}_{N,\delta_0}(\delta) \big]$ is lengthy, we first give an overview of the steps below: 
\begin{enumerate}
    \setlength{\itemsep}{1pt}
    \setlength{\parskip}{1pt}
    \item deriving an upper bound on $\omega_{N,\delta_0}(\delta)$ in which each term in the upper bound has an expectation that can be bounded via a symmetrization argument (see Inequality \eqref{eq:moduliOfcontinuityUpperBoundDecomposition}),
    \item developing a symmetrization argument that is specific to our multiwave sampling setting where the weights are not statistically independent, 
    \item applying the chaining-based result in Lemma \ref{lemma:ApplyChaining} conditionally on the data to upper bound the expectation of the symmetrized processes,
    \item upper bounding the expectation of a remaining term in Inequality \eqref{eq:moduliOfcontinuityUpperBoundDecomposition} using a standard symmetrization and chaining argument for i.i.d. processes, and
    \item combining terms in an upperbound for $\e[\omega_{N,\delta_0}(\delta)]$ and noting that an analogous argument gives an upper bound on $\e \big[  \tilde{\omega}_{N,\delta_0}(\delta) \big]$.
\end{enumerate}  Throughout the proof we will fix $N \in \mathbb{Z}_+, \delta >0, \delta_0>0$ such that $B(\thetaTarg; \delta_0) \subseteq \mathcal{L}_{\thetaTarg}$ and $B(\gammaTarg; \delta_0) \subseteq \mathcal{L}_{\gammaTarg}$ ($\mathcal{L}_{\thetaTarg}$ is the neighborhood from Assumption \ref{assump:SmoothEnoughForAsymptoticLineariaty}(iv) in which $\theta \mapsto l_{\theta}(\goodX)$ is locally $M(\goodX)$-Lipschitz).

\paragraph{Upper bounding $\omega_{N,\delta_0}(\delta)$:} Recalling the definitions in Formulas \eqref{eq:ModuliOfContinuityDef}, \eqref{eq:DeltaDef}, and \eqref{eq:FunctionClassFDefinition} $$\begin{aligned}   \omega_{N,\delta_0}(\delta) & =  \sup\limits_{\substack{\theta,\theta' \in B(\thetaTarg;\delta_0) \\ \vert \vert \theta-\theta' \vert \vert_2 \leq \delta }} \Big| \bfs_N l_{\theta}-L(\theta) -\big( \bfs_N l_{\theta'}-L(\theta') \big)  \Big| 
\\ & = \sup\limits_{\substack{\theta,\theta' \in B(\thetaTarg;\delta_0) \\ \vert \vert \theta-\theta' \vert \vert_2 \leq \delta }} \Big| \frac{1}{N}\sum_{i=1}^N W_i \big(l_{\theta}(\goodX_i)-l_{\theta'}(\goodX_i) \big)-\e[l_{\theta}(\goodX)-l_{\theta'}(\goodX)]  \Big| 
\\ & =  \sup_{f \in \mathcal{F}_{\delta_0}(\delta)} \Big| \frac{1}{N} \sum_{i=1}^N W_i f(\goodX_i) -\e[f(\goodX)]  \Big| 
\\ & =  \sup_{f \in \mathcal{F}_{\delta_0}(\delta)} \Big| \sum_{k=1}^K c_k \Big( \frac{1}{N} \sum_{i=1}^N  W_i^{(k)} f(\goodX_i) - \e[f(\goodX)] \Big)  \Big|
\\ & \leq \sum_{k=1}^K  \sup_{f \in \mathcal{F}_{\delta_0}(\delta)} \Big|  \frac{1}{N} \sum_{i=1}^N  W_i^{(k)} f(\goodX_i) - \e[f(\goodX)]  \Big|,
\end{aligned}$$ where the last two steps hold because $\sum_{k=1}^K c_k=1$ with $c_k \in [0,1]$ for each $k \in [K]$.

To simplify the upper bound above, for each $k \in [K]$ and $i \in [N]$  recall from Definition \eqref{eq:IPWsingleSingle} and Equation \eqref{eq:WkFormulaAlt}, that $$W_i^{(k,j)} = \begin{cases} \frac{I_i^{(j)}}{\LabelRulePi{j}(\proxyX_i)} & \text{if } j=k 
\\  \frac{1-I_i^{(j)}}{1-\LabelRulePi{j}(\proxyX_i) } & \text{if } j < k 
\\ 1 & \text{if } j > k \end{cases} \quad \quad \text{and} \quad  W_i^{(k)} = \prod_{j=1}^k W_i^{(k,j)}.$$ For each $k \in [K]$ and $i \in [N]$ define $$T_i^{(k,0)} \equiv 1, \quad \text{and} \quad T_i^{(k,k')} \equiv \prod_{j=1}^{k'} W_i^{(k,j)} \quad \text{for } k' \in [k],$$ and note that with these definitions, for any $i \in [N]$ and $k \in [K]$, $$W_i^{(k)}=T_i^{(k,k)}-T_i^{(k,0)}+1=1+  \sum_{k'=0}^{k-1} \big( T_i^{(k,k'+1)}-T_i^{(k,k')} \big) = 1+ \sum_{k'=0}^{k-1} \big( W_i^{(k,k'+1)}- 1 \big) T_i^{(k,k')}.$$ Plugging this expression into a previous inequality, $$\begin{aligned} \omega_{N,\delta_0}(\delta) & \leq  \sum_{k=1}^K  \sup_{f \in \mathcal{F}_{\delta_0}(\delta)} \Big|  \frac{1}{N} \sum_{i=1}^N  W_i^{(k)} f(\goodX_i) - \e[f(\goodX)]  \Big|
\\ & = \sum_{k=1}^K  \sup_{f \in \mathcal{F}_{\delta_0}(\delta)} \Big|  \frac{1}{N} \sum_{i=1}^N  \sum_{k'=0}^{k-1} \big( W_i^{(k,k'+1)}- 1 \big) T_i^{(k,k')} f(\goodX_i)  + \frac{1}{N} \sum_{i=1}^N f(\goodX_i)- \e[f(\goodX)] \Big| 
\\ & \leq \sum_{k=1}^K \sum_{k'=0}^{k-1} \sup_{f \in \mathcal{F}_{\delta_0}(\delta)} \Big|  \frac{1}{N} \sum_{i=1}^N   \big( W_i^{(k,k'+1)}- 1 \big) T_i^{(k,k')} f(\goodX_i) \Big| +K \sup_{f \in \mathcal{F}_{\delta_0}(\delta)} \Big|  \frac{1}{N} \sum_{i=1}^N f(\goodX_i)-\e[f(\goodX)] \Big|.
\end{aligned}$$ Thus if we define,
\begin{equation}
\begin{split}
   \omega_{N,\delta_0}^{(k,k')}(\delta) & \equiv  \sup_{f \in \mathcal{F}_{\delta_0}(\delta)} \Big|  \frac{1}{N} \sum_{i=1}^N   \big( W_i^{(k,k'+1)}- 1 \big) T_i^{(k,k')} f(\goodX_i) \Big| \quad \text{ for } k \in [K] \text{ and } k' \in \{0\} \cup [k-1] \text{ and} \\
\omega_{N,\delta_0}^{(\text{IID})}(\delta) & \equiv \sup_{f \in \mathcal{F}_{\delta_0}(\delta)} \Big|  \frac{1}{N} \sum_{i=1}^N f(\goodX_i)-\e[f(\goodX)] \Big|,
\end{split}
\end{equation} then 
\begin{equation}\label{eq:moduliOfcontinuityUpperBoundDecomposition}
    \omega_{N,\delta_0}(\delta) \leq \sum_{k=1}^K \sum_{k'=0}^{k-1} \omega_{N,\delta_0}^{(k,k')}(\delta) +  K \cdot \omega_{N,\delta_0}^{(\text{IID})}(\delta).
\end{equation} Thus to upper bound $\e[\omega_{N,\delta_0}(\delta)]$ it suffices to find an upper bound for $\e[\omega_{N,\delta_0}^{(\text{IID})}(\delta)]$ (which can be done using a standard symmetrization and chaining argument for i.i.d. data) and an upper bound for $\e[\omega_{N,\delta_0}^{(k,k')}(\delta)]$ for each $k \in [K]$ and $k' \in \{0\} \cup [k-1]$, which we derive next.

\paragraph{Bounding $\e[\omega_{N,\delta_0}^{(k,k')}(\delta)]$ with a symmetrization argument for 2-phase multiwave sampling:}Fix $k \in [K]$ and $k' \in \{0\} \cup [k-1]$ and we will find an upper bound on $\e[\omega_{N,\delta_0}^{(k,k')}(\delta)]$ using a modification of a symmetrization argument that is appropriate in \samplingSchemeName{} settings. To do this, let $r=\text{sgn}(k'+1-k)$ and recall the functions $\phi_{-1}, \phi_0:  [0,1] \to [0,1]$ from Equation \eqref{eq:ComplementOr1Operator}, that are given by $\phi_{-1}(s)=1-s$ and $\phi_0(s)=s$ for all $s \in [0,1]$. By the definitions of $W_i^{(k,k'+1)}$  observe that $$W_i^{(k,k'+1)}= \frac{\phi_r(I_i^{(k'+1)})}{\phi_r \big( \LabelRulePiDoubleArg{k'+1}{k'} (\proxyX_i) \big)}=  \frac{\phi_r \big( \mathbbm{1} \bigl\{ U_i^{(k'+1)} \leq \LabelRulePiDoubleArg{k'+1}{k'} (\proxyX_i) \bigr\} \big)}{\phi_r \big( \LabelRulePiDoubleArg{k'+1}{k'} (\proxyX_i) \big)}  \quad \text{for each } i \in [N],$$ where recall that $U_i^{(k'+1)} \stackrel{\text{i.i.d.}}{\sim} \text{Unif}[0,1]$ were generated independently of the data $\cd_{k'}$.

Next for each $k^* \in [K]$ define $$\cd_{k^*}^+ \equiv \Big( \big(I_i^{(j)},U_i^{(j)},\LabelRulePi{j}(\proxyX_i) \big)_{j=1}^{k^*}, \xmiss_i, \proxyX_i \Big)_{i=1}^N$$ to be an augmented version of $\cd_{k^*}$ (recall $\cd_{k^*}= \big((I_i^{(j)},I_i^{(j)} \xmiss_i)_{j=1}^{k^*},\proxyX_i) \big)_{i=1}^N$ is the observed data and labelling indicators after the $k^*$th wave). The augmentation includes all $\goodX_i=(\xobs_i,\xmiss_i)$ values (not just the fully observed ones) and additional sampling information $\big( U_i^{(j)},\LabelRulePi{j}(\proxyX_i) \big)_{j=1}^{k^*}$ from the first $k^*$ waves, but crucially the augmentation does not contain any information about the $U_i^{(j)}$ for $j>k^*$. Note that for each $i \in [N]$, $W_i^{(k,k'+1)}$, $T_i^{(k,k')}$, and $f(\goodX_i)$ can all be written as measurable functions of $\cd_{k'+1}^+$, so $\e[W_i^{(k,k'+1)} T_i^{(k,k')} f(\goodX_i) \giv \cd_{k'+1}^+]=W_i^{(k,k'+1)} T_i^{(k,k')} f(\goodX_i)$ for each $i \in [N]$. Next note that in \samplingSchemeName{} $U_i^{(k'+1)} \stackrel{\text{i.i.d.}}{\sim} \text{Unif}[0,1]$ were generated independently of the data $\cd_{k'}^+$. 

We introduce additional variables to symmetrize by generating $U_i^* \stackrel{\text{i.i.d.}}{\sim} \text{Unif}[0,1]$ independently of $(U_i^{(k'+1)})_{i=1}^N$, $(I_i^{(k'+1)})_{i=1}^N$ and $\cd_{k'}^+$. Hence $U_i^* \giv \cd_{k'+1}^+ \stackrel{\text{i.i.d.}}{\sim} \text{Unif}[0,1]$. Define $$W_i^{*(k,k'+1)} \equiv \frac{\phi_r \big( \mathbbm{1} \bigl\{ U_i^* \leq \LabelRulePiDoubleArg{k'+1}{k'} (\proxyX_i) \bigr\} \big)}{\phi_r \big( \LabelRulePiDoubleArg{k'+1}{k'} (\proxyX_i) \big)} \quad \text{ for each } i \in [N],$$ and note that regardless of whether $r=0$ or $r=-1,$ $$1 = \e \Big[  \frac{\phi_r \big( \mathbbm{1} \bigl\{ U_i^* \leq \LabelRulePiDoubleArg{k'+1}{k'} (\proxyX_i) \bigr\} \big)}{\phi_r \big( \LabelRulePiDoubleArg{k'+1}{k'} (\proxyX_i) \big)} \Big| \cd_{k'+1}^+ \Big] =\e[ W_i^{*(k,k'+1)} \giv \cd_{k'+1}^+].$$ Combining previous results and definitions, $$\begin{aligned} \omega_{N,\delta_0}^{(k,k')}(\delta) & \equiv  \sup_{f \in \mathcal{F}_{\delta_0}(\delta)} \Big|  \frac{1}{N} \sum_{i=1}^N   \big( W_i^{(k,k'+1)}- 1 \big) T_i^{(k,k')} f(\goodX_i) \Big| 
\\ & =   \sup_{f \in \mathcal{F}_{\delta_0}(\delta)} \Bigg|  \frac{1}{N} \sum_{i=1}^N   \Big( W_i^{(k,k'+1)}- \e \big[ W_i^{*(k,k'+1)} \big| \cd_{k'+1}^+ \big] \Big) T_i^{(k,k')} f(\goodX_i) \Bigg|
\\ & =   \sup_{f \in \mathcal{F}_{\delta_0}(\delta)} \Bigg| \e \Big[ \frac{1}{N} \sum_{i=1}^N   \big( W_i^{(k,k'+1)}-  W_i^{*(k,k'+1)}  \big) T_i^{(k,k')} f(\goodX_i) \Big| \cd_{k'+1}^+ \Big] \Bigg|
\\ & \leq  \e \Bigg[ \sup_{f \in \mathcal{F}_{\delta_0}(\delta)} \Big| \frac{1}{N} \sum_{i=1}^N   \big( W_i^{(k,k'+1)}-  W_i^{*(k,k'+1)}  \big) T_i^{(k,k')} f(\goodX_i)  \Big| \Bigg| \cd_{k'+1}^+ \Bigg].
\end{aligned}$$ Since $\cd_{k'}^+$ contains a subset of the variables in $\cd_{k'+1}^+$, by taking $\e[\cdot \giv \cd_{k'}^+]$ of each side of the above inequality and the tower property, \begin{equation} \label{eq:SymmetrizationStep1UpperBound}
\e \big[ \omega_{N,\delta_0}^{(k,k')}(\delta) \big| \cd_{k'}^+ \big] \leq   \e \Bigg[ \sup_{f \in \mathcal{F}_{\delta_0}(\delta)} \Big| \frac{1}{N} \sum_{i=1}^N   \big( W_i^{(k,k'+1)}-  W_i^{*(k,k'+1)}  \big) T_i^{(k,k')} f(\goodX_i)  \Big| \Bigg| \cd_{k'}^+ \Bigg].\end{equation}

Now let $R_1,R_2,\dots,R_N \stackrel{\text{i.i.d.}}{\sim} \text{Unif}\{-1,1\}$ be $N$ independent Rademacher variables that are independent of all previously described random variables (notably these Rademacher variables are independent of $\cd_k^+$, $\cd_{k'}^+$, $(U_i^{(k'+1)})_{i=1}^N$, $(U_i^*)_{i=1}^N$). Next recall that conditionally on $\cd_{k'}^+$, for each $i \in [N]$, $\LabelRulePiDoubleArg{k'+1}{k'}(\proxyX_i)$ is a constant. In addition, conditionally on  $\cd_{k'}^+$, $(U_i^{(k'+1)})_{i=1}^N \stackrel{\text{i.i.d.}}{\sim} \text{Unif}[0,1]$ and independently $(U_i^*)_{i=1}^N \stackrel{\text{i.i.d.}}{\sim} \text{Unif}[0,1]$. Thus because $$W_i^{(k,k'+1)}-  W_i^{*(k,k'+1)}    =   \frac{\phi_r \big( \mathbbm{1} \bigl\{ U_i^{(k'+1)} \leq \LabelRulePiDoubleArg{k'+1}{k'} (\proxyX_i) \bigr\} \big)}{\phi_r \big( \LabelRulePiDoubleArg{k'+1}{k'} (\proxyX_i) \big)}  -   \frac{\phi_r \big( \mathbbm{1} \bigl\{ U_i^* \leq \LabelRulePiDoubleArg{k'+1}{k'} (\proxyX_i) \bigr\} \big)}{\phi_r \big( \LabelRulePiDoubleArg{k'+1}{k'} (\proxyX_i) \big)},$$ when conditioning on $\cd_{k'}^+$, $\big( W_i^{(k,k'+1)}-  W_i^{*(k,k'+1)} \big)_{i=1}^N$ is a sequence of $N$ independent random variables. Moreover, when conditioning on $\cd_{k'}^+$, for each $i \in [N]$, $$\begin{aligned} W_i^{(k,k'+1)}-  W_i^{*(k,k'+1)}   &  =   \frac{\phi_r \big( \mathbbm{1} \bigl\{ U_i^{(k'+1)} \leq \LabelRulePiDoubleArg{k'+1}{k'} (\proxyX_i) \bigr\} \big)}{\phi_r \big( \LabelRulePiDoubleArg{k'+1}{k'} (\proxyX_i) \big)}  -   \frac{\phi_r \big( \mathbbm{1} \bigl\{ U_i^* \leq \LabelRulePiDoubleArg{k'+1}{k'} (\proxyX_i) \bigr\} \big)}{\phi_r \big( \LabelRulePiDoubleArg{k'+1}{k'} (\proxyX_i) \big)} 
\\ & \stackrel{\text{dist}}{=}   \frac{\phi_r \big( \mathbbm{1} \bigl\{ U_i^* \leq \LabelRulePiDoubleArg{k'+1}{k'} (\proxyX_i) \bigr\} \big)}{\phi_r \big( \LabelRulePiDoubleArg{k'+1}{k'} (\proxyX_i) \big)}  -   \frac{\phi_r \big( \mathbbm{1} \bigl\{ U_i^{(k'+1)} \leq \LabelRulePiDoubleArg{k'+1}{k'} (\proxyX_i) \bigr\} \big)}{\phi_r \big( \LabelRulePiDoubleArg{k'+1}{k'} (\proxyX_i) \big)}
\\ & =  -(W_i^{(k,k'+1)}-W_i^{*(k,k'+1)}),
\end{aligned}$$ and as a consequence for each $i \in [N]$, $W_i^{(k,k'+1)}-  W_i^{*(k,k'+1)}$ and $R_i \big( W_i^{(k,k'+1)}-  W_i^{*(k,k'+1)} \big)$ have the same distribution when conditioning on $\cd_{k'}^+$ (because $R_i \sim \text{Unif}\{-1,1\}$ independently of $\cd_{k'}^+$). Since conditionally on $\cd_{k'}^+$, $(R_i)_{i=1}^N$ and $\big( W_i^{(k,k'+1)}-  W_i^{*(k,k'+1)} \big)_{i=1}^N$ are both sequences of $N$ independent variables (and moreover the sequences are independent of each other), it follows that conditionally on $\cd_{k'}^+$, $\big( R_i ( W_i^{(k,k'+1)}-  W_i^{*(k,k'+1)}) \big)_{i=1}^N$ is a sequence of $N$ independent random variables. Combining these results we have that conditionally on $\cd_{k'}^+$, both $\big( W_i^{(k,k'+1)}-  W_i^{*(k,k'+1)} \big)_{i=1}^N$ and $\big( R_i ( W_i^{(k,k'+1)}-  W_i^{*(k,k'+1)}) \big)_{i=1}^N$ are sequences of $N$ independent random variables and that conditionally on $\cd_{k'}^+$, $W_i^{(k,k'+1)}-  W_i^{*(k,k'+1)} \stackrel{\text{dist}}{=} R_i ( W_i^{(k,k'+1)}-  W_i^{*(k,k'+1)})$ for each $i \in [N]$. Since two random vectors that each have independent components and the same coordinate-wise distribution must have the same joint distribution it follows that conditionally on $\cd_{k'}^+$, $$\big( W_i^{(k,k'+1)}-  W_i^{*(k,k'+1)} \big)_{i=1}^N \stackrel{\text{dist}}{=} \big( R_i ( W_i^{(k,k'+1)}-  W_i^{*(k,k'+1)}) \big)_{i=1}^N.$$ Next recall that conditionally on $\cd_{k'}^+$, for each $i \in [N]$, $T_i^{(k,k')}$ and $f(\goodX_i)$ are constants. Combining this with the previous result, conditionally on $\cd_{k'}^+$, $$\Big( \big(W_i^{(k,k'+1)}-  W_i^{*(k,k'+1)}\big) T_i^{(k,k')} f(\goodX_i) \Big)_{i=1}^N \stackrel{\text{dist}}{=} \Big( R_i \big( W_i^{(k,k'+1)}-  W_i^{*(k,k'+1)} \big) T_i^{(k,k')} f(\goodX_i) \Big)_{i=1}^N.$$ 

We can now combine the above result with the inequality at \eqref{eq:SymmetrizationStep1UpperBound} to get a symmetrization bound: $$\begin{aligned} 
\e \big[ \omega_{N,\delta_0}^{(k,k')}(\delta) \big| \cd_{k'}^+ \big] & 
\leq   \e \Bigg[ \sup_{f \in \mathcal{F}_{\delta_0}(\delta)} \Big| \frac{1}{N} \sum_{i=1}^N   \big( W_i^{(k,k'+1)}-  W_i^{*(k,k'+1)}  \big) T_i^{(k,k')} f(\goodX_i)  \Big| \Bigg| \cd_{k'}^+ \Bigg]
\\ & = \e \Bigg[ \sup_{f \in \mathcal{F}_{\delta_0}(\delta)} \Big| \frac{1}{N} \sum_{i=1}^N   R_i \big( W_i^{(k,k'+1)}-  W_i^{*(k,k'+1)}  \big) T_i^{(k,k')} f(\goodX_i)  \Big| \Bigg| \cd_{k'}^+ \Bigg]
\\ & \leq \e \Big[ \sup_{f \in \mathcal{F}_{\delta_0}(\delta)} \big| \frac{1}{N} \sum_{i=1}^N   R_i   W_i^{(k,k'+1)}  T_i^{(k,k')} f(\goodX_i)  \big| \Big| \cd_{k'}^+ \Big] \\ & \quad + \e \Big[ \sup_{f \in \mathcal{F}_{\delta_0}(\delta)} \big| \frac{1}{N} \sum_{i=1}^N   R_i   W_i^{*(k,k'+1)}  T_i^{(k,k')} f(\goodX_i)  \big| \Big| \cd_{k'}^+ \Big]
\\ & = 2 \cdot \e \Big[ \sup_{f \in \mathcal{F}_{\delta_0}(\delta)} \big| \frac{1}{N} \sum_{i=1}^N   R_i   W_i^{(k,k'+1)}  T_i^{(k,k')} f(\goodX_i)  \big| \Big| \cd_{k'}^+ \Big],
\end{aligned}$$ where the last step holds because conditionally on $\cd_{k'}^+$, $\big(  W_i^{(k,k'+1)} \big)_{i=1}^N$ and $\big(  W_i^{*(k,k'+1)} \big)_{i=1}^N$ have the same joint distribution. Noting that $W_i^{(k,k'+1)}  T_i^{(k,k')}=T_i^{(k,k'+1)}$, by taking the expectation of each side of the above inequality with respect to $\cd_{k'}^+$ and applying the tower property, $$\begin{aligned} \ \e \big[ \omega_{N,\delta_0}^{(k,k')}(\delta) \big] & \leq 2  \e \Big[ \sup_{f \in \mathcal{F}_{\delta_0}(\delta)} \big| \frac{1}{N} \sum_{i=1}^N   R_i T_i^{(k,k'+1)} f(\goodX_i)  \big| \Big] \\ & = \frac{2}{N} \cdot \e \Bigg[ \e\Big[ \sup_{f \in \mathcal{F}_{\delta_0}(\delta)} \big|  \sum_{i=1}^N   R_i T_i^{(k,k'+1)} f(\goodX_i)  \big| \Big| \cd_{k}^+ \Big] \Bigg] \end{aligned}$$ where the outermost expectation is with respect to $\cd_{k}^+$. 

\paragraph{Applying the chaining result from Lemma \ref{lemma:ApplyChaining}:} Since conditionally on $\cd_{k}^+$, $T_i^{(k,k'+1)} f(\goodX_i)$ are constants while $R_i \stackrel{\text{i.i.d.}}{\sim} \text{Unif}\{-1,1\}$, by applying Lemma \ref{lemma:ApplyChaining} with $a_i=T_i^{(k,k'+1)} \in [0,b^{-K}]$ (where almost surely, $a_i \in [0,b^{-K}]$ by Assumption \ref{assump:LabellingRuleOverlap}) and with $\argx_i = \goodX_i \in \xSpace$ for each $i \in [N]$, when conditioning on $\cd_{k}^+$, $$\e\Big[ \sup_{f \in \mathcal{F}_{\delta_0}(\delta)} \big|  \sum_{i=1}^N   R_i T_i^{(k,k'+1)} f(\goodX_i)  \big| \Big| \cd_{k}^+ \Big] \leq 4 b^{-K} C_u \sqrt{d} \sqrt{\delta \delta_0}   \sqrt{\sum_{i=1}^N M^2(\goodX_i)},$$ where $C_u \in (0,\infty)$ is a universal constant.  
Combining this with the previously displayed result, $$\e \big[ \omega_{N,\delta_0}^{(k,k')}(\delta) \big] \leq \frac{2}{N} \cdot \e \Big[ 4 b^{-K} C_u \sqrt{d} \sqrt{\delta \delta_0}   \Big( \sum_{i=1}^N M^2(\goodX_i) \Big)^{1/2} \Big]=\frac{8 C_u\sqrt{d} \sqrt{\delta \delta_0}}{b^K \sqrt{N}} \e \Big[ \Big( \frac{1}{N} \sum_{i=1}^N M^2(\goodX_i) \Big)^{1/2} \Big]$$ Recalling by Assumption \ref{assump:SmoothEnoughForAsymptoticLineariaty}(iv) $\e[M^2(\goodX)]< \infty$ we can let $C_{M,2} \equiv \e[M^2(\goodX)] \in (0,\infty)$ and note that since 2nd moments are always bigger than the square of a first moment, $$\e \Big[ \Big( \frac{1}{N} \sum_{i=1}^N M^2(\goodX_i) \Big)^{1/2} \Big] \leq \Big( \e \Big[  \frac{1}{N} \sum_{i=1}^N M^2(\goodX_i)  \Big] \Big)^{1/2} =\sqrt{\e[M^2(\goodX)]}=\sqrt{C_{M,2}}.$$ Thus combining this with a previous inequality $$ \e \big[ \omega_{N,\delta_0}^{(k,k')}(\delta) \big] \leq C_1 \sqrt{\delta \delta_0} N^{-1/2} \quad \text{ where }  C_1 \equiv 8 b^{-K} C_u\sqrt{d} \sqrt{ C_{M,2}}.$$  Since the above argument holds for any $k \in [K]$ and $k' \in \{0\} \cup [k-1]$ we have thus shown that \begin{equation}\label{eq:UpperBoundAfterSymmetraizatioAndChainingSingle_kkprime}
    \e \big[ \omega_{N,\delta_0}^{(k,k')}(\delta) \big] \leq C_1 \sqrt{\delta \delta_0} N^{-1/2} \quad \text{for each } k \in [K] \text{ and } k' \in \{0\} \cup [k-1].
\end{equation}

\paragraph{Upperbounding $\e[\omega_{N,\delta_0}^{(\text{IID})}(\delta)]$ using techniques for i.i.d. processes:}We next find an upperbound on $\e[\omega_{N,\delta_0}^{(\text{IID})}(\delta)]$ using a standard symmetrization and chaining argument for empirical processes of i.i.d. data. Recall that by Assumption \ref{assump:IIDUnderlyingData}, $\goodX_1,\dots,\goodX_N$ are i.i.d. Next let $R_1^*,\dots,R_N^* \stackrel{\text{i.i.d.}}{\sim} \text{Unif}\{-1,1\}$ be $N$ independent Rademacher variables such that $(R_i^*)_{i=1}^N \indep (\goodX_i)_{i=1}^N$. By the a standard symmetrization result (e.g., see Lemma 2.3.1 in \cite{VDVAndWellnerTextbook} or Exercise 8.3.24 in \cite{Vershynin_2018}), $$\e[\omega_{N,\delta_0}^{(\text{IID})}(\delta)]  = \e \Big[ \sup_{f \in \mathcal{F}_{\delta_0}(\delta)} \Big|  \frac{1}{N} \sum_{i=1}^N f(\goodX_i)-\e[f(\goodX)] \Big| \Big] \leq 2 \cdot \e \Big[ \sup_{f \in \mathcal{F}_{\delta_0}(\delta)} \Big|  \frac{1}{N} \sum_{i=1}^N R_i^* f(\goodX_i) \Big| \Big] .$$ Note that by applying Lemma \ref{lemma:ApplyChaining} with $a_i=1 \in [0,b^{-K}]$ and with $\argx_i = \goodX_i \in \xSpace$ for each $i \in [N]$, when conditioning on $(\goodX_i)_{i=1}^N$, $$\e\Big[ \sup_{f \in \mathcal{F}_{\delta_0}(\delta)} \big|  \sum_{i=1}^N   R_i^*  f(\goodX_i)  \big| \Big| (\goodX_i)_{i=1}^N \Big] \leq 4 b^{-K} C_u \sqrt{d} \sqrt{\delta \delta_0}   \Big( \sum_{i=1}^N M^2(\goodX_i)\Big)^{1/2},$$ where $C_u \in (0,\infty)$ is a universal constant. Thus combining previous results and the definitions of $C_{M,2}$ and $C_1$ above, by the tower property, $$\begin{aligned} 
\e[\omega_{N,\delta_0}^{(\text{IID})}(\delta)] & \leq \frac{2}{N} \cdot \e \Big[ \sup_{f \in \mathcal{F}_{\delta_0}(\delta)} \Big|  \sum_{i=1}^N R_i^* f(\goodX_i) \Big| \Big]
\\ & = \frac{2}{N} \cdot \e \Bigg[ \e \Big[ \sup_{f \in \mathcal{F}_{\delta_0}(\delta)} \big|  \sum_{i=1}^N R_i^* f(\goodX_i) \big| \Big| (\goodX_i)_{i=1}^N \Big] \Bigg]
\\ & \leq \frac{2}{N} \cdot \e \Big[ 4 b^{-K} C_u \sqrt{d} \sqrt{\delta \delta_0}   \Big( \sum_{i=1}^N M^2(\goodX_i)\Big)^{1/2} \Big]
\\ & = \frac{8 b^{-K} C_u \sqrt{d} \sqrt{\delta \delta_0}}{\sqrt{N}} \cdot \e\Big[ \Big( \frac{1}{N} \sum_{i=1}^N M^2(\goodX_i)\Big)^{1/2} \Big]
\\ & \leq \frac{8 b^{-K} C_u \sqrt{d} \sqrt{\delta \delta_0} \sqrt{C_{M,2}}}{\sqrt{N}} = C_1 \sqrt{\delta \delta_0} N^{-1/2}.
\end{aligned}$$ 

\paragraph{Combining terms and completing the proof:}Combining the above inequality with the inequalities \eqref{eq:moduliOfcontinuityUpperBoundDecomposition} and \eqref{eq:UpperBoundAfterSymmetraizatioAndChainingSingle_kkprime},  $$\begin{aligned}
    \e[\omega_{N,\delta_0}(\delta)] & \leq \sum_{k=1}^K \sum_{k'=0}^{k-1} \e[\omega_{N,\delta_0}^{(k,k')}(\delta)] +  K \cdot \e[\omega_{N,\delta_0}^{(\text{IID})}(\delta)]
    \\ &  \leq \sum_{k=1}^K \sum_{k'=0}^{k-1} C_1 \sqrt{\delta \delta_0} N^{-1/2} +  K C_1 \sqrt{\delta \delta_0} N^{-1/2}
    \\ & \leq K^2 C_1 \sqrt{\delta \delta_0} N^{-1/2}. 
\end{aligned}$$

Thus taking $C_2=K^2 C_1$, $\e[\omega_{N,\delta_0}(\delta)] \leq C_2 \sqrt{\delta \delta_0} N^{-1/2}$. Since the proof holds for any fixed $N \in \mathbb{Z}_+$, $\delta >0$ and $\delta_0>0$ such that $B(\thetaTarg;\delta_0) \subseteq \mathcal{L}_{\thetaTarg}$, $$\e[\omega_{N,\delta_0}(\delta)] \leq C_2 \sqrt{\delta \delta_0} N^{-1/2} \quad \text{ for all } N \in \mathbb{Z}_+,\delta>0, \delta_0>0 \text{ such that } B(\thetaTarg;\delta_0) \subseteq \mathcal{L}_{\thetaTarg},$$ where $C_2 \in (0,\infty)$ is a constant that does not depend on $N$, $\delta$ or $\delta_0$.

An analogous argument shows that for some constant $\tilde{C}_2 \in (0,\infty)$, $$\e[\tilde{\omega}_{N,\delta_0}(\delta)] \leq \tilde{C}_2 \sqrt{\delta \delta_0} N^{-1/2} \quad \text{ for all } N \in \mathbb{Z}_+,\delta>0, \delta_0>0 \text{ such that } B(\gammaTarg;\delta_0) \subseteq \mathcal{L}_{\gammaTarg}.$$ Taking $C= \max \{C_2,\tilde{C}_2 \} \in (0,\infty)$, completes the proof.

\end{proof}

\subsection{Proof of \texorpdfstring{$\sqrt{N}$}{Root-N}-consistency of point estimators}

 Using the point estimator consistency result (Proposition \ref{prop:ComponentEstimatorConsistency}) and Lemma \ref{lemma:ModContinuityExpectationBound} which controls local fluctuations of the empirical process, we can set up the use of a rate-of-convergence proof technique (e.g., see Theorem 5.52 in \cite{VanderVaartTextbook}) that establishes a stronger, $\sqrt{N}$-consistency result. The $\sqrt{N}$-consistency of $\htc$, $\hgc$, and $\hga$ is formalized in the following theorem. We remark that this $\sqrt{N}$-consistency result should be interpreted with caution as the variance of $\hga$ can be orders of magnitude smaller than the variances of $\htc$ and $\hgc$, especially when the labelling probabilities are close to zero. Nonetheless, establishing $\sqrt{N}$-consistency is a critical step in establishing that an estimator is asymptotically linear. 

\begin{theorem}\label{theorem:RootNConsistency}
    Under \samplingSchemeName{} and Assumptions \ref{assump:IIDUnderlyingData}, \ref{assump:LabellingRuleOverlap}, and \ref{assump:SmoothEnoughForAsymptoticLineariaty}, $$\sqrt{N} (\htc-\thetaTarg) =O_p(1), \quad \sqrt{N}(\hgc - \gammaTarg)=O_p(1), \quad \text{and}  \quad \sqrt{N}(\hga - \gammaTarg)=O_p(1),$$ where $O_p(1)$ denotes a sequence that is bounded in probability as $N \to \infty$.
\end{theorem}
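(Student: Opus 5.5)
We follow the standard rate-of-convergence argument (Theorem 5.52 of \cite{VanderVaartTextbook}), with Lemma \ref{lemma:ModContinuityExpectationBound} playing the role of the maximal inequality. We treat $\htc$ in detail; $\hgc$ is identical with $\bfts_N$, $\tilde{L}$, $\tilde{\Delta}_N$ and $\tilde{\omega}_{N,\delta_0}$ in place of $\bfs_N$, $L$, $\Delta_N$ and $\omega_{N,\delta_0}$. For $\hga$ the data are i.i.d., so the same argument works with $a_i \equiv 1$ in Lemma \ref{lemma:ApplyChaining}, or one can cite standard results.

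\textbf{Step 1: quadratic growth of $L$.} By Assumption \ref{assump:SmoothEnoughForAsymptoticLineariaty}(ii),(v), $\nabla L(\thetaTarg)=0$ since $\thetaTarg$ is an interior minimizer. The Hessian $H_{\thetaTarg}$ is nonsingular and positive semidefinite (convexity), hence positive definite. The second-order Taylor expansion then gives constants $c>0$ and $\delta_0>0$ such that $L(\theta)-L(\thetaTarg)\geq c\|\theta-\thetaTarg\|_2^2$ on $B(\thetaTarg;\delta_0)$. We shrink $\delta_0$ so that $B(\thetaTarg;\delta_0)\subseteq \mathcal{L}_{\thetaTarg}\cap\Theta$, which puts us in the setting of Lemma \ref{lemma:ModContinuityExpectationBound}.

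\textbf{Step 2: peeling.} Fix $M_0>0$ and define shells $S_{j}=\{\theta: 2^{j-1}<\sqrt{N}\|\theta-\thetaTarg\|_2\leq 2^j\}$. Suppose $\htc\in S_j\cap B(\thetaTarg;\delta_0)$. Because $\htc$ minimizes $\bfs_N l_\theta$, we have $\bfs_N l_{\htc}-\bfs_N l_{\thetaTarg}\leq 0$, so
\[
c\,2^{2j-2}/N\leq L(\htc)-L(\thetaTarg)\leq -\Delta_N(\htc)\leq \omega_{N,\delta_0}(2^j/\sqrt{N}),
\]
using $\Delta_N(\thetaTarg)=0$ and taking $\theta'=\thetaTarg$ in the modulus. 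Markov's inequality together with Lemma \ref{lemma:ModContinuityExpectationBound} gives
\[
\mathbb{P}(\htc\in S_j\cap B(\thetaTarg;\delta_0))\leq \frac{4N}{c\,2^{2j}}\cdot C\sqrt{\frac{2^j\delta_0}{\sqrt{N}}}\,N^{-1/2}.
\]
This bound carries a residual factor $N^{1/4}$, because Lemma \ref{lemma:ModContinuityExpectationBound} uses the outer radius $\delta_0$ rather than a radius adapted to the shell. This is the main obstacle.

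To fix it, we note that the modulus relevant on shell $j$ only involves $\theta,\theta'\in B(\thetaTarg;2^j/\sqrt{N})$. We therefore apply Lemma \ref{lemma:ModContinuityExpectationBound} with both the radius and the distance set to $\delta_0'=\delta=2^j/\sqrt{N}$. The lemma is valid for any such radius since $B(\thetaTarg;\delta_0')\subseteq B(\thetaTarg;\delta_0)\subseteq\mathcal{L}_{\thetaTarg}$. The expectation bound becomes $C\,2^j/N$, and
\[
\mathbb{P}(\htc\in S_j\cap B(\thetaTarg;\delta_0))\leq \frac{4C}{c}\,2^{-j}.
\]
Summing over $j>M_0$ gives a tail of order $2^{-M_0}$, uniformly in $N$.

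\textbf{Step 3: conclusion.} We have
\[
\mathbb{P}(\sqrt{N}\|\htc-\thetaTarg\|_2>2^{M_0})\leq \sum_{j>M_0}\frac{4C}{c}2^{-j}+\mathbb{P}(\|\htc-\thetaTarg\|_2>\delta_0).
\]
The last term vanishes as $N\to\infty$ by Proposition \ref{prop:ComponentEstimatorConsistency}, so $\limsup_N$ of the left side can be made arbitrarily small by taking $M_0$ large. This is exactly $\sqrt{N}(\htc-\thetaTarg)=O_p(1)$. The same argument applied to $\hgc$ and $\hga$ completes the proof.
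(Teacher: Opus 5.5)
Your final argument is correct and essentially matches the paper's proof. The paper also applies Lemma \ref{lemma:ModContinuityExpectationBound} with radius equal to distance to get $\e[\sup_{\theta\in B(\thetaTarg;\delta)}|\Delta_N(\theta)|]\le C\delta/\sqrt N$, combines this with quadratic growth of $L$, peels over dyadic shells, and cites standard results for $\hga$. One small slip: on the outermost shell, where $2^{j-1}/\sqrt N<\delta_0<2^j/\sqrt N$, the inclusion $B(\thetaTarg;2^j/\sqrt N)\subseteq B(\thetaTarg;\delta_0)$ fails. The paper avoids this by splitting on $\|\htc-\thetaTarg\|_2>\epsilon$ with $\epsilon$ equal to half the admissible radius; alternatively, you can simply use radius $\delta_0$ on that shell.
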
 

\begin{proof}

   Recall that $\hga \xrightarrow{p} \gammaTarg$ by Proposition \ref{prop:ComponentEstimatorConsistency}, so by standard M-estimation theory (e.g., see Corollary 5.53 in \cite{VanderVaartTextbook}) $\sqrt{N}(\hga - \gammaTarg)=O_p(1)$. We focus on showing that $\sqrt{N}(\htc - \thetaTarg)=O_p(1)$ and the result that $\sqrt{N}(\hgc - \gammaTarg)=O_p(1)$ will follow from an analogous argument. The proof follows the same general strategy as seen in Theorem 5.52 and Corollary 5.53 in \cite{VanderVaartTextbook}, but it is reproduced and modified for our notation and setting in order to establish that the results still hold in spite of the statistically dependent sample weights.
    
    To show $\sqrt{N}(\htc - \thetaTarg)=O_p(1)$, recall the definitions of $\Delta_N(\cdot)$ and $\omega_{N,\delta_0}(\delta)$ from \eqref{eq:DeltaDef} and \eqref{eq:ModuliOfContinuityDef} and the definition that $B(\thetaTarg;\delta)$ is a ball of radius $\delta$ about $\thetaTarg$, and observe that $\Delta_N(\thetaTarg)=0$. Moreover, $$\sup_{\theta \in B(\thetaTarg;\delta)} \vert \Delta_N (\theta) \vert = \sup_{\theta \in B(\thetaTarg;\delta)} \vert \Delta_N (\theta) -\Delta_N(\thetaTarg) \vert \leq \sup\limits_{\substack{\theta,\theta' \in B(\thetaTarg;\delta) \\ \vert \vert \theta-\theta' \vert \vert_2 \leq \delta }} \Big| \Delta_N(\theta)-\Delta_N(\theta') \Big| = \omega_{N,\delta}(\delta).$$ Let $\delta_* >0$ be small enough such that $B(\thetaTarg;\delta) \subseteq \mathcal{L}_{\thetaTarg}$ for all $\delta \in (0,\delta_*)$. By the above inequality and applying Lemma \ref{lemma:ModContinuityExpectationBound} for the special case where $\delta=\delta_0$, there exists a $C \in (0,\infty)$ such that 
 \begin{equation}\label{eq:SupDeltaExpectationBound}
    \e\Big[ \sup_{\theta \in B(\thetaTarg;\delta)} \vert \Delta_N (\theta) \vert \Big] \leq \e[\omega_{N,\delta}(\delta)] \leq \frac{C \delta}{\sqrt{N}} \quad \text{for all } N \in \mathbb{Z}_+ \text{ and } \delta \in (0,\delta_*).
\end{equation} 

Next we will show that there exists a $C_2 \in (0,\infty)$ and $\delta_2 \in (0,\infty)$ such that \begin{equation}\label{eq:GrowthFactorControl}
    L(\thetaTarg)-L(\theta)  \leq - C_2 \vert \vert \theta-\thetaTarg \vert \vert_2^2 \quad \text{for all} \quad \theta \in B(\thetaTarg;\delta_2).
\end{equation} To verify this recall that by Assumption \ref{assump:SmoothEnoughForAsymptoticLineariaty}(v), $\theta \mapsto L(\theta)$ admits a 2nd order Taylor expansion about $\thetaTarg=\argmin_{\theta \in \Theta} L(\theta)$, and hence $\nabla L(\thetaTarg)=0$. Moreover, by a Taylor expansion, for $\theta$ in a neighborhood of $\thetaTarg$, for some function $h: \mathbb{R}^d \to \mathbb{R}$ such $\lim_{t \to 0} h(t)=0$, $$L(\theta)-L(\thetaTarg)=\frac{1}{2}(\theta-\thetaTarg)^\tran H_{\thetaTarg} (\theta-\thetaTarg) + h(\theta-\thetaTarg) \vert \vert \theta-\thetaTarg \vert \vert_2^2 \geq \frac{1}{2} \lambda_{\text{min}} ( H_{\thetaTarg}) \vert \vert \theta-\thetaTarg \vert \vert_2^2 +  h(\theta-\thetaTarg) \vert \vert \theta-\thetaTarg \vert \vert_2^2,$$ where $\lambda_{\text{min}} (\cdot)$ is an operator that gives the smallest eigenvalue of a matrix and $H_{\thetaTarg} =\nabla^2 L(\thetaTarg)$ is the Hessian. Now since by Assumption \ref{assump:SmoothEnoughForAsymptoticLineariaty}(i), $\theta \mapsto l_{\theta}(\goodX)$ is convex almost surely $\theta \mapsto L(\theta)=\e[l_{\theta}(\goodX)]$ is convex and thus $H_{\thetaTarg} =\nabla^2 L(\thetaTarg) \succeq 0$. Since $H_{\thetaTarg}$ is also nonsingular (by Assumption \ref{assump:SmoothEnoughForAsymptoticLineariaty}(v)), $\lambda_{\text{min}} ( H_{\thetaTarg} )>0$. Since $\lim_{t \to 0} h(t)=0$, we can choose $\delta_2 >0$ to be small enough such that for all $\theta \in B(\thetaTarg;\delta_2)$ both $\vert h(\theta-\thetaTarg) \vert \leq \lambda_{\text{min}} ( H_{\thetaTarg} )/4$ and the Taylor expansion displayed above holds. Combining this with a previous result it follows that for all $\theta \in B(\thetaTarg;\delta_2)$, $$L(\theta)-L(\thetaTarg) \geq \frac{1}{4} \lambda_{\text{min}} (H_{\thetaTarg}) \vert \vert \theta-\thetaTarg \vert \vert_2^2  \Rightarrow L(\thetaTarg)- L(\theta) \leq - C_2 \vert \vert \theta-\thetaTarg \vert \vert_2^2,$$ where $C_2 \equiv \lambda_{\text{min}} ( H_{\thetaTarg} )/4 \in (0,\infty)$. This verifies Inequality \eqref{eq:GrowthFactorControl}.

Having established Inequalities \eqref{eq:SupDeltaExpectationBound} and \eqref{eq:GrowthFactorControl}, the rest of the proof follows from a standard rate of convergence argument for M-estimators (e.g., Theorem 5.52 in \cite{VanderVaartTextbook}) using a ``peeling" or ``shelling" technique which we exhibit below. To do this, take $\epsilon = \min \{\delta_*/2,\delta_2/2 \}$, where $\delta_*,\delta_2$ are the small positive constants below which Inequalities \eqref{eq:SupDeltaExpectationBound} and \eqref{eq:GrowthFactorControl} hold. Next fix $N,r \in \mathbb{Z}_+$ and we will find an upper bound for $\mathbb{P}(\sqrt{N} \vert \vert \htc -\thetaTarg \vert \vert_2>2^r)$. Further let $j_* \equiv \min \{j \in \mathbb{N} \ : \ 2^j > \epsilon \sqrt{N} \}$ and we first consider the case where $j_* \geq r+1$. Observe that if $\sqrt{N} \vert \vert \htc -\thetaTarg \vert \vert_2>2^r$, then either $ \sqrt{N} \vert \vert \htc -\thetaTarg \vert \vert_2 \in (2^{j-1},2^j]$ for some $j \in \{r+1,r+2,\dots, j_*\}$ or $\vert \vert \htc -\thetaTarg \vert \vert_2 > \epsilon$. Hence by the union bound, if we define $\Theta_j \equiv \{ \theta \in \Theta \ : \  \sqrt{N} \vert \vert \theta -\thetaTarg \vert \vert_2 \in (2^{j-1},2^j]\}$, $$\begin{aligned} 
\mathbb{P}(\sqrt{N} \vert \vert \htc -\thetaTarg \vert \vert_2>2^r) & \leq \sum_{j=r+1}^{j_*} \mathbb{P} \Big( \sqrt{N} \vert \vert \htc -\thetaTarg \vert \vert_2 \in (2^{j-1},2^j] \Big) + \mathbb{P}(\vert \vert \htc -\thetaTarg \vert \vert_2 > \epsilon)
\\ & \leq \sum_{j=r+1}^{j_*} \mathbb{P} \Big( \inf\limits_{\theta \in \Theta_j} \bfs_N l_{\theta} < \bfs_N l_{\thetaTarg} \Big) + \mathbb{P}(\vert \vert \htc -\thetaTarg \vert \vert_2 > \epsilon)
\\ & \leq \sum_{j=r+1}^{j_*} \mathbb{P} \Big( \inf\limits_{\theta \in \Theta_j} \big( \bfs_N l_{\theta} +L(\thetaTarg)- L(\theta) \big) < \bfs_N l_{\thetaTarg}  + \sup\limits_{\theta \in \Theta_j} \big(L(\thetaTarg)- L(\theta) \big) \Big)
\\ & \quad + \mathbb{P}(\vert \vert \htc -\thetaTarg \vert \vert_2 > \epsilon)
\\ & = \sum_{j=r+1}^{j_*} \mathbb{P} \Big( \inf\limits_{\theta \in \Theta_j}   \Delta_N(\theta) < \sup\limits_{\theta \in \Theta_j} \big(L(\thetaTarg)- L(\theta) \big) \Big)  + \mathbb{P}(\vert \vert \htc -\thetaTarg \vert \vert_2 > \epsilon), \end{aligned}$$ where the above inequalities follow by monotonicity of probability measure and the formulas and definitions of $\htc$, $\bfs_N l_{\theta}$ and $\Delta_N(\theta)$ at \eqref{eq:PTDComponentEstimatorsOperatorNotation}, \eqref{eq:AveragingOperatorDefs}, and \eqref{eq:DeltaDef}. The above inequality was shown for the case where $j_* \geq r+1$ but it also holds for the case where $j_* < r+1$ (in this case $2^r> \epsilon \sqrt{N}$ so $\mathbb{P}(\sqrt{N} \vert \vert \htc -\thetaTarg \vert \vert_2>2^r) \leq \mathbb{P}(\vert \vert \htc -\thetaTarg \vert \vert_2>\epsilon)$).

Next note that for $j \leq j_*$, $2^j/\sqrt{N} < \delta_2$ so by Inequality \eqref{eq:GrowthFactorControl}, $\sup\limits_{\theta \in \Theta_j} \big(L(\thetaTarg)- L(\theta) \big) \leq - C_2 \big(2^j/\sqrt{N} \big)^2$ for each $j \leq j_*$. Combining this with the previous inequality,

$$\begin{aligned} \mathbb{P}(\sqrt{N} \vert \vert \htc -\thetaTarg \vert \vert_2>2^r)
& \leq \sum_{j=r+1}^{j_*} \mathbb{P} \Big( \inf\limits_{\theta \in \Theta_j}   \Delta_N(\theta) < -C_2 \cdot 2^{2j}/N \Big) + \mathbb{P}(\vert \vert \htc -\thetaTarg \vert \vert_2 > \epsilon)
\\ & \leq \sum_{j=r+1}^{j_*} \mathbb{P} \Big( \sup\limits_{\theta \in \Theta_j} \vert   \Delta_N(\theta) \vert > C_2 \cdot 2^{2j}/N \big) + \mathbb{P}(\vert \vert \htc -\thetaTarg \vert \vert_2 > \epsilon)
\\ & \leq \sum_{j=r+1}^{j_*} \frac{N \cdot \e \big[ \sup_{\theta \in \Theta_j} \vert   \Delta_N(\theta) \vert \big]}{C_2 \cdot 2^{2j} } + \mathbb{P}(\vert \vert \htc -\thetaTarg \vert \vert_2 > \epsilon)
\\ & \leq \frac{N }{C_2 }\sum_{j=r+1}^{j_*}  2^{-2j} \e \Bigg[ \sup_{\theta \in B(\thetaTarg;\frac{2^j}{\sqrt{N}})} \vert   \Delta_N(\theta) \vert \Bigg] + \mathbb{P}(\vert \vert \htc -\thetaTarg \vert \vert_2 > \epsilon)
\\ & \leq \frac{N }{C_2}\sum_{j=r+1}^{j_*}  2^{-2j} \frac{C \cdot 2^j/\sqrt{N}}{\sqrt{N}} + \mathbb{P}(\vert \vert \htc -\thetaTarg \vert \vert_2 > \epsilon)
\\ & \leq \frac{C }{C_2}\sum_{j=r+1}^{j_*}  2^{-j}  + \mathbb{P}(\vert \vert \htc -\thetaTarg \vert \vert_2 > \epsilon).
 \end{aligned}$$ Above the penultimate inequality follows from an application of Inequality \eqref{eq:SupDeltaExpectationBound} which applies because $2^j/\sqrt{N} \leq \delta_*$ for all $j \leq j_*$. Noting that $\sum_{j=r+1}^{j_*} 2^{-j} < \sum_{j=r+1}^{\infty} 2^{-j} \leq 2^{-r}$ letting $C_3= C/C_2 \in (0,\infty)$ the previous inequality implies that $$\mathbb{P} \big(\sqrt{N} \vert \vert \htc -\thetaTarg \vert \vert_2>2^r \big) \leq C_3 2^{-r} + \mathbb{P}(\vert \vert \htc -\thetaTarg \vert \vert_2> \epsilon).$$ Note that the above argument holds for any fixed $N, r \in \mathbb{Z}_+$ and recall that $\epsilon=\min\{\delta_*/2,\delta_2/2\}$ did not depend on $N$ and $r$. Hence $$\mathbb{P} \big(\sqrt{N} \vert \vert \htc -\thetaTarg \vert \vert_2>2^r \big) \leq C_3 2^{-r} + \mathbb{P}(\vert \vert \htc -\thetaTarg \vert \vert_2 > \epsilon) \quad \text{for all } \quad  N, r \in \mathbb{Z}_+.$$ Now recall by Proposition \ref{prop:ComponentEstimatorConsistency}, $\htc \xrightarrow{p} \thetaTarg$. Thus for any fixed $\eta>0$ we can let $N_*$ be an $N$ such that for $N> N_*$, $\mathbb{P}(\vert \vert \htc -\thetaTarg \vert \vert_2> \epsilon)< \eta/2$ and let $r_*$ be an integer sufficiently large such that $C_3 2^{-r_*} < \eta/2$. Thus $$ \mathbb{P} \big(\sqrt{N} \vert \vert \htc -\thetaTarg \vert \vert_2>2^{r_*} \big) \leq C_3 2^{-r_*} + \mathbb{P}(\vert \vert \htc -\thetaTarg \vert \vert_2> \epsilon) < \eta \quad \text{ for all } N>N_*,$$ and moreover for any $\eta>0$ such a $2^{r_*} \in (0,\infty)$ and an $N_* \in \mathbb{Z}_+$ exist that satisfy the above statement. Thus $\sqrt{N} \vert \vert \htc -\thetaTarg \vert \vert_2=O_p(1)$. $\sqrt{N} \vert \vert \hgc -\gammaTarg \vert \vert_2=O_p(1)$ by an analogous argument.
\end{proof}

 \subsection{Helpful results for proving asymptotic linearity}

To show, $\htc$ and $\hgc$ are asymptotically linear, we must first establish some properties of the gradient of the loss function evaluated at $\thetaTarg$ and $\gammaTarg$ under Assumption \ref{assump:SmoothEnoughForAsymptoticLineariaty}. Recall that for all $\theta' \in \Theta$ and $\argx \in \xSpace \cup \xProxySpace$, $\dot{l}_{\theta'}(\argx)$ denotes the vector of upper right-hand Dini partial derivatives of $\theta \mapsto l_{\theta}(\argx)$ evaluated at $\theta=\theta'$.
Thus by Assumption \ref{assump:SmoothEnoughForAsymptoticLineariaty}(iii), almost surely $$\dot{l}_{\thetaTarg}(\goodX)= \nabla_{\theta} l_{\theta}(\goodX) \big|_{\theta=\thetaTarg} \quad \text{and} \quad \dot{l}_{\gammaTarg}(\proxyX)= \nabla_{\theta} l_{\theta}(\proxyX) \big|_{\theta=\gammaTarg}.$$ The following fact about  $\dot{l}_{\thetaTarg}$ and $\dot{l}_{\gammaTarg}$ will be used throughout the remainder of the appendix.

\begin{fact}\label{fact:PropertiesOfGradients}
    Under Assumption \ref{assump:SmoothEnoughForAsymptoticLineariaty}, $\e[\dot{l}_{\thetaTarg}(\goodX)]=0$ and $\e[\dot{l}_{\gammaTarg}(\proxyX)]=0$, and moreover,  for any $\argx \in \xSpace$ and $\argxProxy \in \xProxySpace$, $\vert \vert \dot{l}_{\thetaTarg}(\argx) \vert \vert_{\infty} \leq M(\argx)$ and $\vert \vert \dot{l}_{\gammaTarg}(\argxProxy) \vert \vert_{\infty} \leq \tilde{M}(\argxProxy)$, where $M(\cdot)$ and $\tilde{M}(\cdot)$ are the Lipschitz functions guaranteed by Assumption \ref{assump:SmoothEnoughForAsymptoticLineariaty}(iv).
\end{fact}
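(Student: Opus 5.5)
The statement to prove is Fact~\ref{fact:PropertiesOfGradients}. It has two parts, a pointwise bound and a mean-zero property, and I will prove the bound first because the mean-zero argument uses it. Throughout, the arguments for $\thetaTarg$ and for $\gammaTarg$ are identical, so I will write out only the $\thetaTarg$ case.

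\textbf{The pointwise bound.} Fix $\argx \in \xSpace$ and $j \in [d]$. By definition, $[\dot{l}_{\thetaTarg}(\argx)]_j$ is the upper right Dini derivative
\[
\limsup_{t \downarrow 0} \frac{l_{\thetaTarg + t e_j}(\argx) - l_{\thetaTarg}(\argx)}{t}.
\]
For $t$ small enough, $\thetaTarg + t e_j \in \mathcal{L}_{\thetaTarg}$. Assumption~\ref{assump:SmoothEnoughForAsymptoticLineariaty}(iv) then bounds the absolute value of each difference quotient by $M(\argx)\, t / t = M(\argx)$. Taking the limsup gives $\big|[\dot{l}_{\thetaTarg}(\argx)]_j\big| \leq M(\argx)$. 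Convexity ensures the Dini derivative is a genuine one-sided derivative, so it is finite. Taking the maximum over $j$ yields $\vert\vert \dot{l}_{\thetaTarg}(\argx) \vert\vert_\infty \leq M(\argx)$.

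\textbf{The mean-zero property.} Assumption~\ref{assump:SmoothEnoughForAsymptoticLineariaty}(v) says $L$ admits a second-order Taylor expansion at the interior minimizer $\thetaTarg$, so $L$ is differentiable there and $\nabla L(\thetaTarg)=0$. It remains to justify exchanging gradient and expectation, i.e.\ $\nabla L(\thetaTarg) = \e[\dot{l}_{\thetaTarg}(\goodX)]$. For each $j$, the difference quotients $\big(l_{\thetaTarg+te_j}(\goodX)-l_{\thetaTarg}(\goodX)\big)/t$ converge almost surely to $[\dot{l}_{\thetaTarg}(\goodX)]_j$ as $t \to 0$, by Assumption~\ref{assump:SmoothEnoughForAsymptoticLineariaty}(iii). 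They are also dominated by $M(\goodX)$, which is integrable since $\e[M^2(\goodX)]<\infty$. Dominated convergence then gives
\[
\partial_j L(\thetaTarg) = \lim_{t\to 0} \frac{L(\thetaTarg+te_j)-L(\thetaTarg)}{t} = \e\big[[\dot{l}_{\thetaTarg}(\goodX)]_j\big],
\]
and so $\e[\dot{l}_{\thetaTarg}(\goodX)]=0$. The same argument with $\tilde{L}$, $\gammaTarg$ and $\tilde{M}$ handles the proxy case.

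The only subtle step is the interchange. It requires the almost-sure differentiability in (iii) together with the Lipschitz envelope from the bound proved first, and both are available.
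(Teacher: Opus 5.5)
Your proof is correct and follows the paper's argument: dominated convergence with the envelope $M(\goodX)$ from Assumption~\ref{assump:SmoothEnoughForAsymptoticLineariaty}(iv) and almost-sure differentiability from (iii) give $\nabla L(\thetaTarg)=\e[\dot{l}_{\thetaTarg}(\goodX)]$, which vanishes at the interior minimizer, and the sup-norm bound comes from taking the limsup of the Lipschitz-bounded difference quotients. The differences are cosmetic: you prove the bound first, although the interchange uses the envelope in (iv) directly rather than the bound on $\dot{l}_{\thetaTarg}$, and you get $\nabla L(\thetaTarg)=0$ from the first-order condition at an interior point of differentiability, whereas the paper cites convexity and unique minimization.
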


\begin{proof}
   Fix $j \in [d]$.
   Note that there exists an $h_0$ such that for all $h \in (0,h_0)$, $\thetaTarg+h e_j \in \mathcal{L}_{\thetaTarg}$. Hence for all $h \in (0,h_0)$, by Assumption \ref{assump:SmoothEnoughForAsymptoticLineariaty}(iv), 
   $$\Big| \frac{l_{\thetaTarg+he_j}(\goodX)-l_{\thetaTarg}(\goodX)}{h} \Big| \leq M(\goodX) \quad \text{where} \quad \e[M(\goodX)] < \e[1+M^2(\goodX)] < \infty.$$ Hence by the definition of a gradient and the dominated convergence theorem, 
   $$\begin{aligned}   \ \big[ \nabla L(\thetaTarg) \big]_j= \lim_{h \to 0} \frac{L(\thetaTarg+he_j)-L(\thetaTarg)}{h} & = \lim_{h \to 0}  \e \Big[ \frac{l_{\thetaTarg+he_j}(\goodX)-l_{\thetaTarg}(\goodX)}{h} \Big]
   \\ & = \e \Big[ \lim_{h \to 0} \frac{l_{\thetaTarg+he_j}(\goodX)-l_{\thetaTarg}(\goodX)}{h} \Big] = \e[ e_j^\tran \dot{l}_{\thetaTarg}(\goodX)], \end{aligned}$$ where the last step holds because $\theta \mapsto l_{\theta}(\goodX)$ is differentiable at $\theta=\thetaTarg$ almost surely by Assumption \ref{assump:SmoothEnoughForAsymptoticLineariaty}(iii). 
   Since this argument holds for each $j \in [d]$, it follows that $\nabla L(\thetaTarg)=\e[ \dot{l}_{\thetaTarg}(\goodX)]$. 

    Now because under Assumptions \ref{assump:SmoothEnoughForAsymptoticLineariaty}(i), \ref{assump:SmoothEnoughForAsymptoticLineariaty}(v) and \ref{assump:SmoothEnoughForAsymptoticLineariaty}(ii), $\theta \mapsto L(\theta)$ is convex, twice differentiable at $\thetaTarg$, and uniquely minimized at $\thetaTarg$, its gradient at $\thetaTarg$ must be zero (i.e., $\nabla L(\thetaTarg)=0$). Combining this with the previous result, $\e[ \dot{l}_{\thetaTarg}(\goodX)]=\nabla L(\thetaTarg)=0$. A similar argument shows $\e[ \dot{l}_{\gammaTarg}(\proxyX)]=\nabla \tilde{L}(\gammaTarg)=0$.

    To establish an upper bound on $\vert \vert \dot{l}_{\thetaTarg}(\argx) \vert \vert_{\infty}$, fix $\argx \in \xSpace$ and $j \in [d]$. Note that there exists an $h_0$ such that for all $h \in (0,h_0)$, $\thetaTarg+h e_j \in \mathcal{L}_{\thetaTarg}$. Hence by the definition of an upper right-hand Dini partial derivative and by Assumption \ref{assump:SmoothEnoughForAsymptoticLineariaty}(iv), $$\vert [\dot{l}_{\thetaTarg}(\argx)]_j \vert=\Big| \limsup_{h \downarrow 0} \frac{l_{\thetaTarg+he_j}(\argx)-l_{\thetaTarg}(\argx)}{h}  \Big| \leq \limsup_{h \downarrow 0}\Big|  \frac{l_{\thetaTarg+he_j}(\argx)-l_{\thetaTarg}(\argx)}{h}  \Big| \leq M(\argx).$$ Since the above argument holds for any fixed $\argx \in \xSpace$ and $j \in [d]$, it follows that $\vert \vert \dot{l}_{\thetaTarg}(\argx) \vert \vert_{\infty} \leq M(\argx)$ for all $\argx \in \xSpace$. An analogous argument shows that $\vert \vert \dot{l}_{\gammaTarg}(\argxProxy) \vert \vert_{\infty} \leq \tilde{M}(\argxProxy)$ for all $\argxProxy \in \xProxySpace$.
\end{proof}

To show $\htc$ and $\hgc$ are asymptotically linear, we must also prove a helpful lemma about certain centered and scaled empirical processes. In particular, define $\mathbb{G}_N$ and $\tilde{\mathbb{G}}_N$ to be operators such that for any $p^* \in \mathbb{Z}_+$ and function $f : \mathbb{R}^p \to \mathbb{R}^{p^*}$, \begin{equation}\label{eq:G_operator}
    \mathbb{G}_N f \equiv \sqrt{N} \big(\bfs_N f -\e[f(\goodX)] \big) \quad \text{and} \quad \tilde{\mathbb{G}}_N f \equiv \sqrt{N} \big(\bfts_N f -\e[f(\proxyX)] \big).
\end{equation} It also helps to define for each $h \in \mathbb{R}^d$ and $N \in \mathbb{Z}_+$, \begin{equation}\label{eq:ZnH_Def}
    Z_N(h) \equiv \mathbb{G}_N \Big( \sqrt{N}(l_{\thetaTarg+h/\sqrt{N}}-l_{\thetaTarg}) \Big) - h^\tran \mathbb{G}_N \dot{l}_{\thetaTarg}  \quad \text{and} \quad \tilde{Z}_N(h) = \tilde{\mathbb{G}}_N \Big( \sqrt{N}(l_{\gammaTarg+h/\sqrt{N}}-l_{\gammaTarg}) \Big) - h^\tran \tilde{\mathbb{G}}_N \dot{l}_{\gammaTarg}.
\end{equation}

With these definitions, we can prove the following lemma which is subsequently used to establish asymptotic linearity.

\begin{lemma}\label{lemma:UniformPConvToZero}
    Under Assumptions \ref{assump:IIDUnderlyingData},\ref{assump:LabellingRuleOverlap}, and \ref{assump:SmoothEnoughForAsymptoticLineariaty}, for any $r \in (0,\infty)$, 
    $$\sup_{\substack{h \in \mathbb{R}^d \\ \vert \vert h \vert \vert_2 \leq r}  } Z_N(h) \xrightarrow{p} 0 \quad \text{and} \quad \sup_{\substack{h \in \mathbb{R}^d \\ \vert \vert h \vert \vert_2 \leq r}  } \tilde{Z}_N(h) \xrightarrow{p} 0,$$ where  $Z_N(h)$ and $\tilde{Z}_N(h)$ are defined at \eqref{eq:ZnH_Def}. 
\end{lemma}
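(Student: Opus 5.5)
The plan is a finite-net argument: pointwise convergence of $Z_N(h)$ to $0$ for each fixed $h$, combined with stochastic equicontinuity of $h \mapsto Z_N(h)$ on the ball $\{\vert\vert h \vert\vert_2 \le r\}$. The equicontinuity comes from Lemma \ref{lemma:ModContinuityExpectationBound}. I will in fact show $\sup_{\vert\vert h\vert\vert_2 \le r} \vert Z_N(h)\vert \xrightarrow{p} 0$, which implies the stated claim. I treat $Z_N$; the argument for $\tilde{Z}_N$ is identical with $\bfts_N$, $\tilde{M}$, $\gammaTarg$ and $\tilde{\omega}$ in place of $\bfs_N$, $M$, $\thetaTarg$ and $\omega$.

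\emph{Step 1 (pointwise convergence).} Fix $h$ and set
$$g_{N,h} \equiv \sqrt{N}\big(l_{\thetaTarg+h/\sqrt{N}}-l_{\thetaTarg}\big) - h^\tran \dot{l}_{\thetaTarg}.$$
By linearity of $\mathbb{G}_N$ we have $Z_N(h)=\mathbb{G}_N g_{N,h}$, which has mean zero by Corollary \ref{cor:PropertiesOfWeightedAveragees}(I). By Corollary \ref{cor:PropertiesOfWeightedAveragees}(III),
$$\var(Z_N(h)) = N \var(\bfs_N g_{N,h}) \le b^{-2K}\,\e[g_{N,h}^2(\goodX)].$$
For $N$ large enough that $\thetaTarg + h/\sqrt{N} \in \mathcal{L}_{\thetaTarg}$, Assumption \ref{assump:SmoothEnoughForAsymptoticLineariaty}(iv) and Fact \ref{fact:PropertiesOfGradients} give $\vert g_{N,h}(\goodX)\vert \le \vert\vert h\vert\vert_2 M(\goodX) + \sqrt{d}\,\vert\vert h\vert\vert_2 M(\goodX)$. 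This bound is square integrable. By Assumption \ref{assump:SmoothEnoughForAsymptoticLineariaty}(iii), $g_{N,h}(\goodX)\to 0$ almost surely. Dominated convergence then gives $\e[g_{N,h}^2(\goodX)]\to 0$, and Chebyshev's inequality gives $Z_N(h)\xrightarrow{p}0$. Consequently, for any finite set of $h$'s, the maximum of $\vert Z_N(h)\vert$ over that set tends to $0$ in probability.

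\emph{Step 2 (equicontinuity).} For $\vert\vert h\vert\vert_2,\vert\vert h'\vert\vert_2\le r$ with $\vert\vert h-h'\vert\vert_2\le \eta$, the constants $L(\cdot)$ and $\e[\dot{l}_{\thetaTarg}(\goodX)]=0$ cancel, giving
$$Z_N(h)-Z_N(h') = N\big(\Delta_N(\thetaTarg+h/\sqrt{N})-\Delta_N(\thetaTarg+h'/\sqrt{N})\big) - (h-h')^\tran \mathbb{G}_N \dot{l}_{\thetaTarg}.$$
Hence the supremum over such pairs is at most
$$N\,\omega_{N,r/\sqrt{N}}(\eta/\sqrt{N}) + \eta\,\vert\vert \mathbb{G}_N\dot{l}_{\thetaTarg}\vert\vert_2 .$$
Once $B(\thetaTarg;r/\sqrt{N})\subseteq\mathcal{L}_{\thetaTarg}$, Lemma \ref{lemma:ModContinuityExpectationBound} bounds the expectation of the first term by
$$N\cdot C\sqrt{\eta r}\,N^{-1/2}\,N^{-1/2} = C\sqrt{\eta r},$$
uniformly in $N$. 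For the second term, Corollary \ref{cor:PropertiesOfWeightedAveragees}(III) and Fact \ref{fact:PropertiesOfGradients} give $\e\big[\vert\vert \mathbb{G}_N\dot{l}_{\thetaTarg}\vert\vert_2^2\big]\le d\,b^{-2K}\e[M^2(\goodX)]$, so that term is $\eta\, O_p(1)$ with a bound uniform in $N$.

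\emph{Step 3 (combine).} Given $\epsilon,\beta>0$, first choose $\eta$ small enough that $C\sqrt{\eta r}+\eta\sqrt{d}\,b^{-K}\sqrt{\e[M^2(\goodX)]}<\epsilon\beta/2$. Markov's inequality then makes the modulus from Step 2 exceed $\epsilon/2$ with probability less than $\beta$, for all large $N$. Next, take a finite $\eta$-net of the ball $\{\vert\vert h\vert\vert_2\le r\}$ and apply Step 1 on the net. Together these give $\mathbb{P}\big(\sup_{\vert\vert h\vert\vert_2\le r}\vert Z_N(h)\vert>\epsilon\big)\le 2\beta$ eventually, which proves the claim.

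The main obstacle is Step 2, namely controlling the dependent weighted process uniformly over shrinking neighbourhoods of $\thetaTarg$. That work has already been done in Lemma \ref{lemma:ModContinuityExpectationBound}. What remains is to check the scaling $\delta_0=r/\sqrt{N}$, $\delta=\eta/\sqrt{N}$, and to verify that the neighbourhood condition $B(\thetaTarg;r/\sqrt{N})\subseteq\mathcal{L}_{\thetaTarg}$ holds for all large $N$, so that both Step 1 and Step 2 apply.
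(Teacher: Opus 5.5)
Your proposal is correct and follows essentially the same route as the paper. The paper likewise proves pointwise convergence (mean zero, the variance bound from Corollary~\ref{cor:PropertiesOfWeightedAveragees}(III), dominated convergence, Chebyshev), gets stochastic equicontinuity from Lemma~\ref{lemma:ModContinuityExpectationBound} with $\delta_0=r/\sqrt{N}$ and $\delta=\eta/\sqrt{N}$ together with a uniform bound on $\e[\vert\vert \mathbb{G}_N\dot{l}_{\thetaTarg}\vert\vert_2]$, and closes with a finite-net and Markov argument. Your version, $\sup_{\vert\vert h\vert\vert_2\le r}\vert Z_N(h)\vert\xrightarrow{p}0$, is a harmless strengthening and is in fact the form used later in the proof of Theorem~\ref{theorem:AsymptoticLInearMestsStacked}.
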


\begin{proof}
    Fix $r \in (0,\infty)$ and let $B_r = \{ h \in \mathbb{R}^d \ : \ \vert \vert h \vert \vert_2 \leq r \}$ denote a Euclidean ball about $0$. 
Next observe that by \eqref{eq:ZnH_Def} for each $N \in \mathbb{Z}_+$ and $h \in B_r$, 
$$Z_N(h) \equiv \mathbb{G}_N \Big( \sqrt{N}(l_{\thetaTarg+h/\sqrt{N}}-l_{\thetaTarg}) \Big) - h^\tran \mathbb{G}_N \dot{l}_{\thetaTarg} = N \cdot \Delta_N \big(\thetaTarg+\frac{h}{\sqrt{N}} \big) -h^\tran \mathbb{G}_N \dot{l}_{\thetaTarg},$$ where $\Delta_N$ and $\mathbb{G}_N$ are defined at \eqref{eq:DeltaDef} and \eqref{eq:G_operator}, respectively. 

We will first show that for each $h \in B_r$, $Z_N(h) \xrightarrow{p} 0$. To do this, fix $h \in B_r$ and observe that $\e[Z_N(h)]=0$ as a consequence of result (I) in Corollary \ref{cor:PropertiesOfWeightedAveragees}. Next note that by result (III) in Corollary \ref{cor:PropertiesOfWeightedAveragees}, $$\begin{aligned} \var\big(Z_N(h) \big) & = N \cdot \var \Big( \bfs_N  \big( \sqrt{N}(l_{\thetaTarg+h/\sqrt{N}}-l_{\thetaTarg}) - h^\tran \dot{l}_{\thetaTarg} \big) \Big) 
\\ & \leq b^{-2K} \cdot \e \Big[ \Big(  \sqrt{N}\big(l_{\thetaTarg+h/\sqrt{N}} (\goodX)-l_{\thetaTarg}(\goodX)  \big) - h^\tran \dot{l}_{\thetaTarg}(\goodX)  \Big)^2 \Big]. \end{aligned}$$ By Assumption \ref{assump:SmoothEnoughForAsymptoticLineariaty}(iii), $\theta \mapsto l_{\theta}(\goodX)$ is differentiable at $\thetaTarg$ almost surely, and hence almost surely, $$\lim_{N \to \infty} \sqrt{N}\big(l_{\thetaTarg+h/\sqrt{N}} (\goodX)-l_{\thetaTarg}(\goodX)  \big) - h^\tran \dot{l}_{\thetaTarg}(\goodX)=0.$$ Also observe that for almost every $\goodX$, by Assumption \ref{assump:SmoothEnoughForAsymptoticLineariaty}(iv) for $N$ sufficiently large such that $\thetaTarg+h/\sqrt{N} \in \mathcal{L}_{\thetaTarg}$, $$\begin{aligned}
\Big(  \sqrt{N}\big(l_{\thetaTarg+h/\sqrt{N}} (\goodX)-l_{\thetaTarg}(\goodX)  \big) - h^\tran \dot{l}_{\thetaTarg}(\goodX)  \Big)^2 & \leq \Big(  \sqrt{N} \cdot \big| l_{\thetaTarg+h/\sqrt{N}} (\goodX)-l_{\thetaTarg}(\goodX)  \big| + \vert h^\tran \dot{l}_{\thetaTarg}(\goodX)  \vert \Big)^2 
\\ & \leq \Big(  \vert \vert h \vert \vert_2 M(\goodX) + \vert \vert h \vert \vert_1 \vert \vert \dot{l}_{\thetaTarg}(\goodX)  \vert \vert_{\infty} \Big)^2 
\\ & \leq \Big(  \vert \vert h \vert \vert_2 M(\goodX) + \vert \vert h \vert \vert_1  M(\goodX) \Big)^2
\\ & = (\vert \vert h \vert \vert_2  +\vert \vert h \vert \vert_1 )^2 \big( M(\goodX) \big)^2,
\end{aligned}$$ where above, the penultimate step follows from Fact \ref{fact:PropertiesOfGradients}. Since the expectation of the right hand side of the above inequality is finite by Assumption \ref{assump:SmoothEnoughForAsymptoticLineariaty}(iv), we can apply the dominated convergence and a previous pointwise convergence result to get that $$  \lim\limits_{N \to \infty} \e \Big[ \Big(  \sqrt{N}\big(l_{\thetaTarg+h/\sqrt{N}} (\goodX)-l_{\thetaTarg}(\goodX)  \big) - h^\tran \dot{l}_{\thetaTarg}(\goodX)  \Big)^2 \Big] =0.$$ Combining this with an earlier inequality, it follows that $\lim_{N \to \infty} \var \big( Z_N(h) \big)=0$. Recalling that $\e[Z_N(h)]=0$, by Chebyshev's inequality, for any $\eta>0$, $$\limsup_{N \to \infty} \mathbb{P} \big( \vert Z_N(h) \vert > \eta \big) \leq \limsup_{N \to \infty} \var \big( Z_N(h) \big) \cdot \eta^{-2}=0,$$ and hence $Z_N(h) \xrightarrow{p} 0$. Moreover, this argument holds for any fixed $h \in B_r$.

Next let $N_0 \in \mathbb{Z}_+$ be large enough such that for all $N>N_0$, $B(\thetaTarg; rN^{-1/2}) \subseteq \mathcal{L}_{\thetaTarg}$, where $B(\thetaTarg; r N^{-1/2})$ denotes a ball of radius $rN^{-1/2}$ about $\thetaTarg$. Moreover note that for all $h \in B_r$ and $N>N_0$,  $\thetaTarg +hN^{-1/2} \in B(\thetaTarg; r N^{-1/2}) \subseteq \mathcal{L}_{\thetaTarg}$. Observe that for any $\epsilon >0$ and $N>N_0$, by rearranging terms and applying the Cauchy-Schwartz inequality and Lemma \ref{lemma:ModContinuityExpectationBound}, $$\begin{aligned} \e \Bigg[ \sup\limits_{\substack{h_1,h_2 \in B_r \\ \vert \vert h_1 -h_2 \vert \vert_2 \leq \epsilon }} \Big| Z_N(h_1) - Z_N(h_2) \Big| \Bigg] & \leq  N \cdot \e \Bigg[ \sup\limits_{\substack{h_1,h_2 \in B_r \\ \vert \vert h_1 -h_2 \vert \vert_2 \leq \epsilon }} \Big|  \Delta_N (\thetaTarg+h_1N^{-1/2}) -\Delta_N (\thetaTarg+h_2 N^{-1/2}) \Big| \Bigg]
\\ & \quad + \e \Bigg[ \sup\limits_{\substack{h_1,h_2 \in B_r \\ \vert \vert h_1 -h_2 \vert \vert_2 \leq \epsilon }} \Big|   (h_2-h_1)^\tran \mathbb{G}_N \dot{l}_{\thetaTarg} \Big| \Bigg]
\\ & \leq  N \cdot \e \Bigg[ \sup\limits_{\substack{\theta,\theta' \in B(\thetaTarg; r N^{-1/2} ) \\ \vert \vert \theta -\theta' \vert \vert_2 \leq \epsilon N^{-1/2} }} \Big|  \Delta_N (\theta) -\Delta_N (\theta') \Big| \Bigg] + \epsilon \cdot \e \big[ \vert \vert \mathbb{G}_N \dot{l}_{\thetaTarg} \vert \vert_2 \big] 
\\ & =  N \cdot \e \big[ \omega_{N,rN^{-1/2}}( \epsilon N^{-1/2} )\big] + \epsilon \cdot \e \big[ \vert \vert \mathbb{G}_N \dot{l}_{\thetaTarg} \vert \vert_2 \big] 
\\ & \leq N \cdot C \sqrt{\epsilon N^{-1/2} \cdot r N^{-1/2}} \cdot N^{-1/2} + \epsilon \cdot \e \big[ \vert \vert \mathbb{G}_N \dot{l}_{\thetaTarg} \vert \vert_2 \big] 
\\ & = C \sqrt{\epsilon r}  + \epsilon \cdot \e \big[ \vert \vert \mathbb{G}_N \dot{l}_{\thetaTarg} \vert \vert_2 \big].
\end{aligned}$$ Also observe that since by result (I) in Corollary \ref{cor:PropertiesOfWeightedAveragees}, $\e[\mathbb{G}_N f]=0$ for all $f : \mathbb{R}^p \to \mathbb{R}$, 
$$ \e \big[ \vert \vert \mathbb{G}_N \dot{l}_{\thetaTarg} \vert \vert_2 \big] = \e \Big[ \sqrt{ \sum_{j=1}^d  \big( \mathbb{G}_N (e_j^\tran \dot{l}_{\thetaTarg}) \big)^2 } \Big]
 \leq \sqrt{\e \Big[  \sum_{j=1}^d  \big( \mathbb{G}_N (e_j^\tran \dot{l}_{\thetaTarg}) \big)^2 \Big] }
 = \sqrt{   \sum_{j=1}^d  \var \big( \mathbb{G}_N (e_j^\tran \dot{l}_{\thetaTarg}) \big)  },$$
and thus recalling the definition of $\mathbb{G}_N$ at \eqref{eq:G_operator} and result (III) of Corollary \ref{cor:PropertiesOfWeightedAveragees},
$$  \e \big[ \vert \vert \mathbb{G}_N \dot{l}_{\thetaTarg} \vert \vert_2 \big]  = \sqrt{   \sum_{j=1}^d N \var \big( \bfs_N (e_j^\tran \dot{l}_{\thetaTarg}) \big)  } 
 \leq \sqrt{   \sum_{j=1}^d \frac{1}{b^{2K}} \cdot \e\big[ \big( e_j^\tran \dot{l}_{\thetaTarg}(\goodX) \big)^2 \big]}
 \leq  \frac{\sqrt { d \cdot \e\big[ \vert \vert  \dot{l}_{\thetaTarg}(\goodX) \vert \vert_{\infty}^2 \big]}}{b^K}.$$ Letting $C_2=b^{-K} \sqrt{d \cdot \e[M^2(\goodX)]} \in (0,\infty)$, $\e \big[ \vert \vert \mathbb{G}_N \dot{l}_{\thetaTarg} \vert \vert_2 \big] \leq C_2$ by Fact \ref{fact:PropertiesOfGradients} and the above inequality. Combining this with an earlier inequality (which held for any $\epsilon>0$ and $N>N_0$), $$ \e \Bigg[ \sup\limits_{\substack{h_1,h_2 \in B_r \\ \vert \vert h_1 -h_2 \vert \vert_2 \leq \epsilon }} \Big| Z_N(h_1) - Z_N(h_2) \Big| \Bigg] \leq C \sqrt{\epsilon r} + C_2 \epsilon \quad \text{for any } \quad \epsilon >0, N>N_0.$$

Now fix $\eta>0$. Next fix $\epsilon>0$ and define $\mathcal{C}_{\epsilon} \subseteq B_r $ to be a finite $\epsilon$-covering of $B_r$ (i.e., $\mathcal{C}_{\epsilon}$ is a finite set such that for any $h \in B_r$, there is an $h' \in \mathcal{C}_{\epsilon}$ for which $\vert \vert h-h' \vert \vert \leq \epsilon$). Observe that for any  $N > N_0$, by Markov's inequality and the above result, $$\begin{aligned}  \mathbb{P} \big( \big|
    \sup_{h \in B_r}  Z_N(h) \big| > \eta \big)   & = \mathbb{P} \Big( \sup\limits_{\substack{h' \in \mathcal{C}_{\epsilon},h \in B_r \\ \vert \vert h -h' \vert \vert_2 \leq \epsilon }} \Big| Z_N(h') + Z_N(h) - Z_N(h') \Big| > \eta \Big)
    \\ & \leq \mathbb{P} \big(
    \sup_{h' \in \mathcal{C}_{\epsilon}} \vert Z_N(h') \vert > \frac{\eta}{2} \big) + \mathbb{P} \Big( \sup\limits_{\substack{h,h' \in B_r \\ \vert \vert h -h' \vert \vert_2 \leq \epsilon }} \Big| Z_N(h) - Z_N(h') \Big| > \frac{\eta}{2} \Big)
      \\ & \leq \mathbb{P} \big(
    \max_{h \in \mathcal{C}_{\epsilon}} \vert Z_N(h) \vert > \frac{\eta}{2} \big) + \frac{2C \sqrt{\epsilon r} + 2C_2 \epsilon}{\eta}.
\end{aligned}$$ Since, $\mathcal{C}_{\epsilon}$ is a finite subset of $B_r$ and we showed that $Z_N(h) \xrightarrow{p} 0$ for any fixed $h \in B_r$, it follows that $\max_{h \in \mathcal{C}_{\epsilon}} \vert Z_N(h) \vert \xrightarrow{p} 0$. Thus taking the limsup as $N \to \infty$ of each side of the above inequality, we get that $$\limsup_{N \to \infty} \mathbb{P} \big( \big|
    \sup_{h \in B_r}  Z_N(h) \big| > \eta \big) \leq 0 + \frac{2C \sqrt{\epsilon r} + 2C_2 \epsilon}{\eta}.$$
Since the above argument holds for any $\epsilon>0$, we can consider $\epsilon \downarrow 0$ and it follows that $$\limsup_{N \to \infty} \mathbb{P} \big( \big|
    \sup_{h \in B_r}  Z_N(h) \big| > \eta \big) \leq 0 \Rightarrow   \lim_{N \to \infty} \mathbb{P} \big( \big|
    \sup_{h \in B_r}  Z_N(h) \big| > \eta \big) =0  .$$
Since this argument holds for any $\eta>0$, $\sup_{h \in B_r} \big(   Z_N(h) \big) \xrightarrow{p} 0$, so by definition of $B_r$, the first claim in the lemma holds. By an analogous argument, $$\sup_{\substack{h \in \mathbb{R}^d \\ \vert \vert h \vert \vert_2 \leq r}  } \tilde{Z}_N(h) \xrightarrow{p} 0 .$$
\end{proof}

We are now ready to prove Theorem \ref{theorem:AsymptoticLInearMestsStacked}, which establishes asymptotic linearity of our M-estimators of interest. The proof mirrors that in Theorem 5.23 of \cite{VanderVaartTextbook} which assumes an i.i.d. setting, and cites our previous findings from Theorem \ref{theorem:RootNConsistency} and Lemma \ref{lemma:UniformPConvToZero} which apply to our \samplingSchemeName{} setting where statistical dependency is induced by the sampling process.

\subsection{Proof of asymptotic linearity (Theorem \ref{theorem:AsymptoticLInearMestsStacked})}

    We will show that $\sqrt{N}(\htc-\thetaTarg) = -\sqrt{N} H_{\thetaTarg}^{-1} \bfs_N \dot{l}_{\thetaTarg}+o_p(1)$. $\sqrt{N}(\hgc-\gammaTarg) = -\sqrt{N} H_{\gammaTarg}^{-1} \bfts_N \dot{l}_{\gammaTarg}+o_p(1)$ will hold by an analogous argument while $\sqrt{N}(\hga-\gammaTarg) = -\sqrt{N} H_{\gammaTarg}^{-1} \bftp_N \dot{l}_{\gammaTarg}+o_p(1)$ follows from a standard result for M-estimators (e.g., Theorem 5.23 in \cite{VanderVaartTextbook}). 
    
    To show $\sqrt{N}(\htc-\thetaTarg) = -\sqrt{N} H_{\thetaTarg}^{-1} \bfs_N \dot{l}_{\thetaTarg}+o_p(1)$, fix any sequence $\{ h_N \}_{N=1}^{\infty}$ of elements of $\mathbb{R}^d$ that are bounded in probability (denoted by $h_N =O_p(1)$). 
     We will start by showing $Z_N(h_N) \xrightarrow{p} 0$, where $Z_N(\cdot)$ is defined at $\eqref{eq:ZnH_Def}$. To do this fix $\epsilon>0$. Next fix $\eta>0$. Since $h_N =O_p(1)$, there exists an $N_{\eta} \in \mathbb{Z}_+$ and an $r_{\eta} \in (0,\infty)$ such that $\mathbb{P}( \vert \vert h_N \vert \vert_2 > r_{\eta} ) \leq \eta$ for all $N>N_{\eta}$. Thus choosing such an $N_{\eta}$ and $r_{\eta}$, for any $N>N_{\eta}$ $$\begin{aligned}
    \mathbb{P} \big( \vert Z_N(h_N) \vert > \epsilon \big) \leq \mathbb{P}( \vert \vert h_N \vert \vert_2 > r_{\eta} ) + \mathbb{P} \Big( \Big| \sup_{\substack{h \in \mathbb{R}^d \\ \vert \vert h \vert \vert_2 \leq r_{\eta}}  } Z_N(h) \Big| > \epsilon \Big) \leq \eta + \mathbb{P} \Big( \Big| \sup_{\substack{h \in \mathbb{R}^d \\ \vert \vert h \vert \vert_2 \leq r_{\eta}}  } Z_N(h) \Big| > \epsilon \Big).
    \end{aligned}$$ Taking the $\limsup$ as $N \to \infty$ of each side of the above inequality and noting that by Lemma \ref{lemma:UniformPConvToZero} 
    the second term converges to $0$ as $N \to \infty$, it follows that $\limsup_{N \to \infty} \mathbb{P} \big( \vert Z_N(h_N) \vert > \epsilon \big) \leq \eta$. Since this argument holds for any $\eta >0$ no matter how small, $\limsup_{N \to \infty} \mathbb{P} \big( \vert Z_N(h_N) \vert > \epsilon \big) \leq 0$ implying that $\lim_{N \to \infty} \mathbb{P} \big( \vert Z_N(h_N) \vert > \epsilon \big)=0$. Because this argument holds for any $\epsilon>0$, $Z_N(h_N) \xrightarrow{p} 0$. Because $Z_N(h_N) \xrightarrow{p} 0$, the equivalent statement (see definitions for $\mathbb{G}_N$ and $Z_N(\cdot)$ at \eqref{eq:G_operator} and \eqref{eq:ZnH_Def}) that 
    $$N \bfs_N \big( l_{\thetaTarg+h_N/\sqrt{N}} -l_{\thetaTarg} \big) = N \big( L(\thetaTarg+h_N/\sqrt{N}) - L(\thetaTarg) \big) +   h_N^\tran \mathbb{G}_N \dot{l}_{\thetaTarg}  +o_p(1),$$ must hold. Now since $ L(\cdot)$ has a 2nd-order Taylor expansion about $\thetaTarg$ (by Assumption \ref{assump:SmoothEnoughForAsymptoticLineariaty}(v)), and because $\nabla L(\thetaTarg)=0$ (by Assumptions \ref{assump:SmoothEnoughForAsymptoticLineariaty}(i), \ref{assump:SmoothEnoughForAsymptoticLineariaty}(ii), and \ref{assump:SmoothEnoughForAsymptoticLineariaty}(v)), there is a function a $\zeta : \mathbb{R}^d \to \mathbb{R}$ that is continuous at $\thetaTarg$ that satisfies $\lim_{z \to \thetaTarg} \zeta(z)=0$ such that $$N \big( L(\thetaTarg+h_N/\sqrt{N}) - L(\thetaTarg) \big) = N \Big( \frac{1}{2 N} h_N^\tran H_{\thetaTarg} h_N + \zeta \big(\thetaTarg+\frac{h_N}{\sqrt{N}} \big) \big| \big| \frac{h_N}{\sqrt{N}} \big| \big|_2^2 \Big) = \frac{1}{2} h_N^\tran H_{\thetaTarg} h_N + o_p(1).$$ Above the last step holds because $h_N =O_p(1)$ so $\vert \vert h_N \vert \vert_2^2=O_p(1)$ and $\thetaTarg+h_N/\sqrt{N} \xrightarrow{p} \thetaTarg$, and hence because $\lim_{z \to \thetaTarg} \zeta(z)=0$, $\zeta(\thetaTarg+h_N/\sqrt{N}) \xrightarrow{p} 0$ by the continuous mapping theorem. Combining the two expressions displayed above, it follows that \begin{equation}\label{eq:ToolForProvingAsymptoticLinearity}
        N \bfs_N \big( l_{\thetaTarg+h_N/\sqrt{N}} -l_{\thetaTarg} \big) = \frac{1}{2} h_N^\tran H_{\thetaTarg} h_N +  h_N^\tran \mathbb{G}_N   \dot{l}_{\thetaTarg} +o_p(1).
    \end{equation} Notably, \eqref{eq:ToolForProvingAsymptoticLinearity} holds for any sequence $\{ h_N \}_{N=1}^{\infty}$ of elements of $\mathbb{R}^d$ that satisfy $h_N=O_p(1)$.

    Now let $$\hat{h}_N \equiv \sqrt{N} (\htc-\thetaTarg) \quad \text{and} \quad \hat{h}_N^* \equiv -H_{\thetaTarg}^{-1} \mathbb{G}_N \dot{l}_{\thetaTarg} \quad \text{for each } N \in \mathbb{Z}_+.$$ By Theorem \ref{theorem:RootNConsistency}, $\hat{h}_N=O_p(1)$, and we will next show that $\hat{h}_N^*=O_p(1)$. To do this observe that for any $j \in [d]$, $\e[e_j^\tran \hat{h}_N^*]=0$ by result (I) in Corollary \ref{cor:PropertiesOfWeightedAveragees}. Next note that for any $\kappa>0$ and $j \in [d]$, by Chebyshev's inequality, the definition for $\mathbb{G}_N$ at \eqref{eq:G_operator}, and by applying result (III) of Corollary \ref{cor:PropertiesOfWeightedAveragees}, $$\mathbb{P}( \vert e_j^\tran \hat{h}_N^* \vert > \kappa ) \leq \frac{\var( e_j^\tran \hat{h}_N^*) }{\kappa^2} =   \frac{ N \var \big(\mathbb{S}_N (e_j^\tran H_{\thetaTarg}^{-1} \dot{l}_{\thetaTarg}) \big) }{\kappa^2} \leq \frac{ \e \big[ \big( e_j^\tran H_{\thetaTarg}^{-1} \dot{l}_{\thetaTarg}(\goodX) \big)^2 \big]}{b^{2K} \kappa^2}.$$ This further implies that for each $\kappa>0$ and $j \in [d]$,
    $$\mathbb{P}( \vert e_j^\tran \hat{h}_N^* \vert > \kappa ) \leq \frac{1}{\kappa^2 b^{2K}} \cdot \e \big[ \big( d \vert \vert e_j^\tran H_{\thetaTarg}^{-1} \vert \vert_{\infty} \vert \vert \dot{l}_{\thetaTarg}(\goodX) \vert \vert_{\infty} \big)^2 \big] \leq \frac{ d^2}{\kappa^2 \cdot b^{2K}} \cdot \vert \vert e_j^\tran H_{\thetaTarg}^{-1} \vert \vert_{\infty}^2 \e[M^2(\goodX)],$$ where the last step follows from Fact \ref{fact:PropertiesOfGradients}. Since $\e[M^2(\goodX)]< \infty$ (by Assumption \ref{assump:SmoothEnoughForAsymptoticLineariaty}(iv)), and $\vert \vert e_j^\tran H_{\thetaTarg}^{-1} \vert \vert_{\infty} < \infty$  (by Assumption \ref{assump:SmoothEnoughForAsymptoticLineariaty}(v)), all terms except $\kappa$ on the right hand side of the above inequality do not depend on $\kappa$ or $N$ and are finite. Hence for any $\epsilon>0$, we can pick $\kappa$ to be sufficiently large such that the above inequality implies $\mathbb{P}( \vert e_j^\tran \hat{h}_N^* \vert > \kappa ) \leq \epsilon$. This implies $e_j^\tran \hat{h}_N^*=O_p(1)$, and because this argument holds for each $j \in [d]$, $\hat{h}_N^*=O_p(1)$.

    Since $\hat{h}_N=O_p(1)$ and $\hat{h}_N^*=O_p(1)$, and moreover $\thetaTarg+\hat{h}_N/\sqrt{N}=\htc$ and $\mathbb{G}_N \dot{l}_{\thetaTarg}=-H_{\thetaTarg} \hat{h}_N^*$, by plugging in $\hat{h}_N$ and then $\hat{h}_N^*$ into Equation \eqref{eq:ToolForProvingAsymptoticLinearity}, $$N \bfs_N (l_{\htc}-l_{\thetaTarg}) = \frac{1}{2} \hat{h}_N^\tran H_{\thetaTarg} \hat{h}_N +   \hat{h}_N^\tran  \mathbb{G}_N \dot{l}_{\thetaTarg}  + o_p(1), \quad \text{and}$$  $$N \bfs_N \big( l_{\thetaTarg+\hat{h}_N^*/\sqrt{N}} -l_{\thetaTarg} \big) = \frac{1}{2} (\hat{h}_N^*)^\tran H_{\thetaTarg} \hat{h}_N^* +   (\hat{h}_N^*)^\tran  \mathbb{G}_N \dot{l}_{\thetaTarg}  +o_p(1)=-\frac{1}{2} (\hat{h}_N^*)^\tran H_{\thetaTarg} \hat{h}_N^* +o_p(1).$$ Observing that $\htc$ minimizes $\theta \mapsto \bfs_N l_{\theta}$ (see \eqref{eq:PTDComponentEstimatorsOperatorNotation}), $\bfs_N (l_{\htc}-l_{\thetaTarg}) \leq \bfs_N (l_{\thetaTarg+\hat{h}_N^*/\sqrt{N}}-l_{\thetaTarg})$, and hence by the previous two results $$\begin{aligned}
        0 & \geq N \bfs_N (l_{\htc}-l_{\thetaTarg})- N \bfs_N (l_{\thetaTarg+\hat{h}_N^*/\sqrt{N}}-l_{\thetaTarg})
        \\ & = \frac{1}{2} \hat{h}_N^\tran H_{\thetaTarg} \hat{h}_N + \hat{h}_N^\tran \mathbb{G}_N \dot{l}_{\thetaTarg} +\frac{1}{2} (\hat{h}_N^*)^\tran H_{\thetaTarg} \hat{h}_N^*  +o_p(1)
        \\ & = \frac{1}{2} \big(\hat{h}_N - \hat{h}_N^* \big)^\tran H_{\thetaTarg} \big(\hat{h}_N - \hat{h}_N^* \big) + \frac{1}{2} (\hat{h}_N^*)^\tran H_{\thetaTarg}  \hat{h}_N + \frac{1}{2} \hat{h}_N^\tran H_{\thetaTarg}  \hat{h}_N^* + \hat{h}_N^\tran \mathbb{G}_N \dot{l}_{\thetaTarg}+o_p(1)
        \\ & = \frac{1}{2} \big(\hat{h}_N - \hat{h}_N^* \big)^\tran H_{\thetaTarg} \big(\hat{h}_N - \hat{h}_N^* \big)+o_p(1),
         \\ & \geq \frac{1}{2} \lambda_{\text{min}} ( H_{\thetaTarg} ) \vert \vert \hat{h}_N - \hat{h}_N^* \vert \vert_2^2 +o_p(1),
    \end{aligned}$$  $\lambda_{\text{min}}(\cdot)$ is an operator giving the minimum eigenvalue of a matrix. Above the penultimate step follows because $H_{\thetaTarg}$ is a symmetric matrix and because $H_{\thetaTarg}  \hat{h}_N^*=-\mathbb{G}_N \dot{l}_{\thetaTarg}$. To complete the proof, note that by Assumptions \ref{assump:SmoothEnoughForAsymptoticLineariaty}(i) and \ref{assump:SmoothEnoughForAsymptoticLineariaty}(v), $H_{\thetaTarg} \succ 0$, and hence $\lambda_{\text{min}} ( H_{\thetaTarg} ) >0$. Thus dividing each side of the above inequality by $\lambda_{\text{min}} ( H_{\thetaTarg} )/2$, a positive constant, it follows that $$0 \geq \vert \vert \hat{h}_N - \hat{h}_N^* \vert \vert_2^2 +o_p(1).$$ Since norms are nonnegative, it must be the case that $\vert \vert \hat{h}_N - \hat{h}_N^* \vert \vert_2^2 \xrightarrow{p} 0$ and hence $\hat{h}_N=  \hat{h}_N^* +o_p(1)$. Recalling our definitions of $\hat{h}_N$ and $\hat{h}_N^*$, $$\sqrt{N} ( \htc - \thetaTarg) = - H_{\thetaTarg}^{-1} \mathbb{G}_N \dot{l}_{\thetaTarg}+o_p(1)= -\sqrt{N} H_{\thetaTarg}^{-1} \bfs_N \dot{l}_{\thetaTarg} +o_p(1),$$ where the last step above holds by Definition \eqref{eq:G_operator} for $\mathbb{G}_N$ and because $\e[\dot{l}_{\thetaTarg}(\goodX)]=0$ by Fact \ref{fact:PropertiesOfGradients}. An analogous argument shows that $\sqrt{N}(\hgc-\gammaTarg) = -\sqrt{N} H_{\gammaTarg}^{-1} \bfts_N \dot{l}_{\gammaTarg}+o_p(1)$.

    Combining these results with the standard asymptotic linear expansion of $\sqrt{N}(\hga-\gammaTarg)$ for M-estimation in i.i.d. settings, $$\sqrt{N} \Bigg( \begin{bmatrix}
        \htc \\ \hgc \\ \hga
    \end{bmatrix} - \begin{bmatrix}
        \thetaTarg \\ \gammaTarg \\ \gammaTarg
    \end{bmatrix}  \Bigg) = - \sqrt{N} \begin{bmatrix}
        H_{\thetaTarg}^{-1} \bfs_N \dot{l}_{\thetaTarg}
        \\ H_{\gammaTarg}^{-1} \bfts_N \dot{l}_{\gammaTarg}
        \\ H_{\gammaTarg}^{-1} \bftp_N \dot{l}_{\gammaTarg}
    \end{bmatrix} + o_p(1).$$ Plugging in the definitions of the averaging operators $\bfs_N$, $\bfts_N$, and $\bftp_N$ at \eqref{eq:AveragingOperatorDefs}, $$\sqrt{N} \Bigg( \begin{bmatrix}
        \htc \\ \hgc \\ \hga
    \end{bmatrix} - \begin{bmatrix}
        \thetaTarg \\ \gammaTarg \\ \gammaTarg
    \end{bmatrix}  \Bigg) = - \frac{1}{\sqrt{N}} \sum_{i=1}^N \begin{bmatrix}
       H_{\thetaTarg}^{-1} & 0 & 0 
       \\ 0 & H_{\gammaTarg}^{-1} & 0   
       \\ 0 & 0 & H_{\gammaTarg}^{-1}
    \end{bmatrix} \begin{bmatrix}
       W_i \dot{l}_{\thetaTarg}(\goodX_i)
        \\ W_i \dot{l}_{\gammaTarg}(\proxyX_i)
        \\ \dot{l}_{\gammaTarg}(\proxyX_i)
    \end{bmatrix} + o_p(1).$$ Moreover, the left hand side of the above equation is $O_p(1)$ as a direct consequence of Theorem \ref{theorem:RootNConsistency}.

\subsection{Proof of asymptotic linearity of the Multiwave PTD estimator (Corollary \ref{cor:PTDEstAsymptoticallyLinear})}

In the setting of Corollary \ref{cor:PTDEstAsymptoticallyLinear}, $\hat{\Omega}=\Omega+o_p(1)$ and Assumptions \ref{assump:IIDUnderlyingData}, \ref{assump:LabellingRuleOverlap}, and \ref{assump:SmoothEnoughForAsymptoticLineariaty} hold. Recalling the definition of $\htPTD$ at \eqref{eq:PTD_estimator} by applying Proposition \ref{prop:ComponentEstimatorConsistency}, $$\htPTD= \hat{\Omega} \hga + (\htc- \hat{\Omega} \hgc) = \big(\Omega+o_p(1) \big) \big(\gammaTarg+o_p(1) \big) + \Big( \thetaTarg+o_p(1)- \big(\Omega+o_p(1) \big) \big(\gammaTarg+o_p(1) \big) \Big)=\thetaTarg+o_p(1).$$ Hence $\htPTD \xrightarrow{p} \thetaTarg$ as $N \to \infty$.

Since $\sqrt{N}(\hgc-\gammaTarg)=O_p(1)$ and $\sqrt{N}(\hga-\gammaTarg)=O_p(1)$ by Theorem \ref{theorem:RootNConsistency},

$$\begin{aligned}
        \sqrt{N} \big( \htPTD -\thetaTarg \big) & = \sqrt{N} \big( \hat{\Omega} \hga + (\htc- \hat{\Omega} \hgc) - \thetaTarg \big)
        \\ & = \sqrt{N}(\htc -\thetaTarg) + \hat{\Omega} \Big( -\sqrt{N}(\hgc -\gammaTarg) +\sqrt{N}(\hga -\gammaTarg) \Big) 
        \\ & = \sqrt{N}(\htc -\thetaTarg) + \big( \Omega+o_p(1) \big) \Big( -\sqrt{N}(\hgc -\gammaTarg) +\sqrt{N}(\hga -\gammaTarg) \Big) 
          \\ & = \sqrt{N}(\htc -\thetaTarg) + \Omega \Big( -\sqrt{N}(\hgc -\gammaTarg) +\sqrt{N}(\hga -\gammaTarg) \Big) + o_p(1)O_p(1).
    \end{aligned}$$

Letting  $A_{\Omega} \equiv \begin{bmatrix}
    I_{d \times d} & -\Omega & \Omega
 \end{bmatrix} \in \mathbb{R}^{d \times 3d}$ and by applying Theorem \ref{theorem:AsymptoticLInearMestsStacked} and rearranging terms, the previous expression simplifies as follows: $$\begin{aligned} \sqrt{N} (\htPTD-\thetaTarg) & = A_{\Omega} \sqrt{N} \Bigg( \begin{bmatrix}
        \htc \\ \hgc \\ \hga
    \end{bmatrix} - \begin{bmatrix}
        \thetaTarg \\ \gammaTarg \\ \gammaTarg
    \end{bmatrix}  \Bigg) +o_p(1) O_p(1)
    \\ & = A_{\Omega} \Bigg( - \frac{1}{\sqrt{N}}  \sum_{i=1}^N \begin{bmatrix}
       H_{\thetaTarg}^{-1} & 0 & 0 
       \\ 0 & H_{\gammaTarg}^{-1} & 0   
       \\ 0 & 0 & H_{\gammaTarg}^{-1}
    \end{bmatrix} \begin{bmatrix}
       W_i \dot{l}_{\thetaTarg}(\goodX_i)
        \\ W_i \dot{l}_{\gammaTarg}(\proxyX_i)
        \\ \dot{l}_{\gammaTarg}(\proxyX_i)
    \end{bmatrix} + o_p(1) \Bigg) +o_p(1)
    \\ & = -\frac{1}{\sqrt{N}} \sum_{i=1}^N \Big( W_i H_{\thetaTarg}^{-1} \dot{l}_{\thetaTarg}(\goodX_i) +(1-W_i) \Omega H_{\gammaTarg}^{-1} \dot{l}_{\gammaTarg}(\proxyX_i) \Big)  + o_p(1). \end{aligned}$$

\section{Establishing asymptotic normality}\label{sec:CLTProof}

To establish asymptotic normality of $\htPTD$ from its asymptotic linear expansion at \eqref{eq:AsympLinearExpansion} we introduce the following weights 

\begin{equation}\label{eq:IndepAggregatedWeightsMainText}\bar{W}_i \equiv \sum_{k=1}^K c_k \prod_{j=1}^{k-1} \Big( \frac{1-\mathbbm{1} \{ U_i^{(j)} \leq \LimitingLabelRulePi{j}(\proxyX_i) \} }{1- \LimitingLabelRulePi{j}(\proxyX_i) } \Big) \frac{\mathbbm{1} \{ U_i^{(k)} \leq \LimitingLabelRulePi{k}(\proxyX_i) \} }{\LimitingLabelRulePi{k}(\proxyX_i) } \quad \text{for} \quad i \in [N].\end{equation} Above $\LimitingLabelRulePi{k} : \xProxySpace \to [b,1-b]$ is a fixed function introduced in Assumption \ref{assump:CLTRgularityConditions}(ii) for each $k \in [K]$, while $U_i^{(k)} \stackrel{\text{i.i.d.}}{\sim} \text{Unif}[0,1]$ for $i \in [N]$, $k \in [K]$ were generated independently of the data during \samplingSchemeName{.} As a consequence $(\bar{W}_i)_{i=1}^N$ are i.i.d. Although these weights cannot be calculated from the data, they still exist as useful theoretical tools.

In the next subsection we prove some results about the i.i.d. weights $\bar{W}_i$, which help us establish asymptotic normality of $\htPTD$. Notably, in Proposition \ref{prop:CanReduceToIIDWeights}, we prove that $N^{-1/2} \sum_{i=1}^N (W_i-\bar{W}_i) f(\datvecraw_i) \xrightarrow{p} 0$ as $N \to \infty$ under our assumptions for any $f:\mathbb{R}^q \to \mathbb{R}$ satisfying a certain bounded moment condition. Proposition \ref{prop:CanReduceToIIDWeights} enables us to prove a CLT for $\htPTD$ by removing the statistical dependency from the $W_i$ in the asymptotic linear expansion for $\htPTD$. In particular, using Proposition \ref{prop:CanReduceToIIDWeights}, the asymptotic linear expansion of $\htPTD$ at \eqref{eq:AsympLinearExpansion} can be restated in terms of the i.i.d. weights $(\bar{W}_i)_{i=1}^N$. With i.i.d. samples, the multivariate central limit theorem applies enabling us to derive a CLT for $\htPTD$ under \samplingSchemeName{.}

\subsection{Construction and properties of i.i.d. approximations to the weights}

To study properties of the weights $\bar{W}_i$ we start by introducing and motivating an alternative formula for $\bar{W}_i$. Recall from \eqref{eq:ComplementOr1Operator} that $\phi_{-1},\phi_0,\phi_1: [0,1] \to [0,1]$ are functions given by $$\phi_{-1}(t) = 1-t, \quad \phi_0(t)=t,  \quad  \text{and} \quad \phi_1(t)=1 \quad  \text{ for all } t \in [0,1],$$ from \eqref{eq:IPWsingleSingle} and \eqref{eq:WkFormulaAlt} that for each $i \in [N]$ and $k \in [K]$ $$W_i^{(k)}= \prod_{j=1}^k  W_i^{(k,j)} =\prod_{j=1}^K  W_i^{(k,j)} \quad \text{where} \quad  W_i^{(k,j)} =  \frac{\phi_{\text{sgn}(j-k)} \big( I_i^{(j)} \big)}{\phi_{\text{sgn}(j-k)} \big( \LabelRulePi{j}(\proxyX_i) \big)} \quad \text{ for each } j \in [K],$$ and from \eqref{eq:aggregated_Wk} that $W_i =\sum_{k=1}^K c_k W_i^{(k)}$ for each $i \in [N]$. Also recall that for each $j \in [K]$ and $i \in [N]$, $I_i^{(j)} = \mathbbm{1} \{ U_i^{(j)} \leq \LabelRulePi{j}(\proxyX_i) \}$ where the  $U_i^{(j)}$ are i.i.d. Unif$[0,1]$ random variables. For each $j \in [K]$ let $\LimitingLabelRulePi{j} : \mathbb{R}^p \to [b,1-b]$ denote the limiting labelling rule specified in Assumption \ref{assump:CLTRgularityConditions}. We can thus construct i.i.d. weights that are approximately equal to $W_i^{(k,j)}$ and $W_i^{(k)}$ by defining for each $i \in [N]$ and $k \in [K]$ \begin{equation}\label{eq:WkIndepVersionDecomp}
    \bar{W}_i^{(k)} \equiv \prod_{j=1}^k  \bar{W}_i^{(k,j)} =\prod_{j=1}^K  \bar{W}_i^{(k,j)} \quad \text{where} \quad  \bar{W}_i^{(k,j)} \equiv  \frac{\phi_{\text{sgn}(j-k)} \big( \mathbbm{1} \{ U_i^{(j)} \leq \LimitingLabelRulePi{j}(\proxyX_i) \} \big)}{\phi_{\text{sgn}(j-k)} \big( \LimitingLabelRulePi{j}(\proxyX_i) \big)} \quad \text{ for each } j \in [K].
\end{equation} Using these definitions and recalling the definition of $\bar{W}_i$ from Formula \eqref{eq:IndepAggregatedWeightsMainText}, observe that \begin{equation}\label{eq:WkAggregatedIndepVersion}
\bar{W}_i = \sum_{k=1}^K c_k \bar{W}_i^{(k)}  \quad \text{for each } \quad i \in [N].\end{equation} The following lemmas state some helpful properties of the weights $\bar{W}_i$ and the terms $\bar{W}_i^{(k)}$ and $\bar{W}_i^{(k,j)}$ in its decomposition.

\begin{lemma}\label{lemma:PropertiesOfIndependentVersionsOfWeights}
    Under \samplingSchemeName{} and Assumptions \ref{assump:IIDUnderlyingData} and \ref{assump:CLTRgularityConditions}, the following properties hold
    \begin{enumerate}[(I)]
        \item $(\bar{W}_i,\datvecraw_i)_{i=1}^N$ is an i.i.d. sequence of $N$ random vectors in $\mathbb{R}^{q+1}$. Moreover, for each $i \in [N]$, $(W_i,\bar{W}_i,\datvecraw_i)$ has the same joint distribution as $(W_1,\bar{W}_1,\datvecraw_1)$.
        \item For any $i \in [N]$ and measurable $f: \mathbb{R}^q \to \mathbb{R}$, $\e[ \bar{W}_i^{(k)} f(\datvecraw_i)]=\e[f(\datvecraw)]$ for each $k \in [K]$ and moreover $\e[ \bar{W}_i f(\datvecraw_i)]=\e[f(\datvecraw)]$. 
        \item For any $i \in [N]$ and measurable $f,g: \mathbb{R}^q \to \mathbb{R}$, $\cov \big(\bar{W}_i f(\datvecraw_i), g(\datvecraw_i) \big) = \cov \big(f(\datvecraw), g(\datvecraw) \big)$.
        \item For each $i \in [N]$ and measurable $f,g: \mathbb{R}^q \to \mathbb{R}$ such that $\e[f(V)]=0$ and $\e[g(V)]=0$, $$\cov \big( \bar{W}_i f(\datvecraw_i), \bar{W}_i g(\datvecraw_i) \big)= \sum_{k=1}^K c_k^2 \e \Big[ \frac{ f(\datvecraw) g(\datvecraw)}{\LimitingLabelRulePi{1:k}(\proxyX)} \Big], \quad \text{where } \LimitingLabelRulePi{1:k}(\cdot) \text{ is defined at \eqref{eq:PiBarProd1tok}}. $$
    \end{enumerate}
    
\end{lemma}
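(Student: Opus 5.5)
For (I), I will write $\bar{W}_i$ as a fixed measurable function of $(\proxyX_i, U_i^{(1)},\dots,U_i^{(K)})$ via \eqref{eq:IndepAggregatedWeightsMainText}, using that the $\LimitingLabelRulePi{j}$ are nonrandom. Because $(U_i^{(j)})_{i,j}$ are i.i.d.\ Unif$[0,1]$ and independent of $(\datvecraw_i)_{i=1}^N$, which are i.i.d.\ by Assumption \ref{assump:IIDUnderlyingData}, the vectors $(\datvecraw_i, U_i^{(1)},\dots,U_i^{(K)})$ are i.i.d. Hence $(\bar{W}_i,\datvecraw_i)_{i=1}^N$ is i.i.d. This independence holds because each $U_i^{(j)}$ is generated independently of $\cd_{j-1}$ and of the unobserved $\xmiss$ values (Fact \ref{fact:CondMeanAndIndep}), and jointly the $U$'s are independent of $(\datvecraw_i)$ since no $U$ influences any $\datvecraw$. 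For the second claim, I will note that $(W_i,\bar{W}_i,\datvecraw_i)$ is a fixed measurable function of $\xi_i$ from \eqref{eq:xiDef_ObsDatVec}, since $\xi_i$ contains $\datvecraw_i$, the $U_i^{(j)}$, and the $\LabelRulePi{j}(\proxyX_i)$. Lemma \ref{lemma:StrongerExchangeability} then gives $\xi_i \stackrel{\text{dist}}{=} \xi_1$, so the triple has the same law as $(W_1,\bar{W}_1,\datvecraw_1)$.

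For (II), I will condition on $\datvecraw_i$. Given $\datvecraw_i$, the factors $\bar{W}_i^{(k,j)}$ for $j\in[K]$ are independent, since they involve distinct independent $U_i^{(j)}$, and each has conditional mean $1$, because $\e[\phi_r(\mathbbm{1}\{U\le \bar\pi\})]=\phi_r(\bar\pi)$ for $r\in\{-1,0,1\}$. Therefore $\e[\bar{W}_i^{(k)}\mid \datvecraw_i]=1$, which gives $\e[\bar{W}_i^{(k)}f(\datvecraw_i)]=\e[f(\datvecraw)]$. Summing with weights $c_k$, which satisfy $\sum_k c_k = 1$, gives the claim for $\bar{W}_i$. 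I will treat integrability as in Proposition \ref{prop:WeightsEnableUnbiasedMeanEst}, where the identity is read whenever either side is defined; boundedness of $\bar{W}_i$ by $b^{-K}$ makes this routine.

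For (III), since $\bar{W}_i^{(k)}$ is a function of $(\proxyX_i, U_i^{(\cdot)})$, I will apply (II) to $fg$ and to $f$ to obtain
\[
\e[\bar{W}_i f g]-\e[\bar{W}_i f]\e[g]=\e[fg]-\e[f]\e[g].
\]

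For (IV), which I expect to be the only step needing care, I will first prove the analogue of Fact \ref{fact:WsqFormula}: $\bar{W}_i^{(k)}\bar{W}_i^{(k')}=0$ for $k\ne k'$. The reason is that the factor at $j=\min\{k,k'\}$ is a product of $\mathbbm{1}\{U\le\bar\pi\}$ and $1-\mathbbm{1}\{U\le\bar\pi\}$. Consequently $\bar{W}_i^2=\sum_k c_k^2(\bar{W}_i^{(k)})^2$. Next I will condition on $\datvecraw_i$ again. Since $\phi_r(\cdot)^2=\phi_r(\cdot)$ on $\{0,1\}$, each squared factor has conditional mean $1/\phi_{\text{sgn}(j-k)}(\LimitingLabelRulePi{j}(\proxyX_i))$. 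Taking the product over $j$ gives $1/\LimitingLabelRulePi{1:k}(\proxyX_i)$, so that
\[
\e[\bar{W}_i^2 fg]=\sum_k c_k^2\,\e\big[fg/\LimitingLabelRulePi{1:k}(\proxyX)\big].
\]
Finally, the product of means vanishes by (II) together with $\e f=\e g=0$, which completes (IV).
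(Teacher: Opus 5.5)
Your proposal is correct and follows the paper's proof essentially step for step: (I) by writing $\bar{W}_i$ as a fixed function of $(\datvecraw_i,(U_i^{(j)})_{j=1}^K)$ and invoking Lemma~\ref{lemma:StrongerExchangeability}, (II)--(III) by conditioning on $\datvecraw_i$ and applying (II) to $f$ and $fg$, and (IV) via $\bar{W}_i^{(k)}\bar{W}_i^{(k')}=0$ for $k\neq k'$ together with the idempotence of the indicator factors. Applying $\xi_i\stackrel{\text{dist}}{=}\xi_1$ directly in (I), and adding an explicit integrability remark, are only cosmetic differences from the paper's argument.
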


\begin{proof} 

To prove (I), recall from Assumption \ref{assump:IIDUnderlyingData} that $(\datvecraw_i)_{i=1}^N$ are i.i.d. Also recall that under \samplingSchemeName{,} $\big( (U_i^{(j)})_{j=1}^K \big)_{i=1}^N$ are each i.i.d. uniform random variables generated independently of the underlying data $(\datvecraw_i)_{i=1}^N$. Thus $\big(\datvecraw_i, (U_i^{(j)})_{j=1}^K \big)_{i=1}^N$ is a sequence of $N$ i.i.d. random vectors in $\mathbb{R}^{q+K}$. 
Noting that for $j \in [K]$ the function $\bar{\pi}^{(j)} : \mathbb{R}^p \to [b,1-b]$ is a fixed measurable function given by Assumption \ref{assump:CLTRgularityConditions}(ii), and recalling that by \eqref{eq:WkIndepVersionDecomp} and \eqref{eq:WkAggregatedIndepVersion} $$\bar{W}_i= \sum_{k=1}^K c_k \prod_{j=1}^k \frac{\phi_{\text{sgn}(j-k)} \big( \mathbbm{1} \{ U_i^{(j)} \leq \LimitingLabelRulePi{j}(\proxyX_i) \} \big)}{\phi_{\text{sgn}(j-k)} \big( \LimitingLabelRulePi{j}(\proxyX_i) \big)} \quad \text{and} \quad \datvecraw_i=(\proxyX_i,\xmiss_i) \quad  \text{for each} \quad i \in [N],$$ it is clear that there exists a fixed, measurable function $h: \mathbb{R}^{q+K} \to \mathbb{R}$ such that almost surely $\bar{W}_i =h \big(\datvecraw_i, (U_i^{(j)})_{j=1}^K \big)$ for each $i \in [N]$. Let $h$ be such a fixed measurable function. Since $\big(\datvecraw_i, (U_i^{(j)})_{j=1}^K \big)_{i=1}^N$ is a sequence of $N$ i.i.d. random vectors and $\bar{W}_i$ is a fixed, measurable function of the $i$th vector in the sequence, $\big(\bar{W}_i,\datvecraw_i, (U_i^{(j)})_{j=1}^K \big)_{i=1}^N$ are i.i.d. This further implies $\big(\bar{W}_i,\datvecraw_i \big)_{i=1}^N$ is an i.i.d. sequence of $N$ random vectors in $\mathbb{R}^{q+1}$. 

Now fix $i \in [N]$, and note that as a consequence of Lemma \ref{lemma:StrongerExchangeability}, $\big(W_i,\datvecraw_i,(U_i^{(j)})_{j=1}^K \big)$ has the same joint distribution as $\big(W_1,\datvecraw_1,(U_1^{(j)})_{j=1}^K \big)$. It follows that 
$$(\bar{W}_i,W_i,\datvecraw_i) = \Big(h \big(\datvecraw_i, (U_i^{(j)})_{j=1}^K \big),W_i,\datvecraw_i \Big) \stackrel{\text{dist}}{=} \Big(h \big(\datvecraw_1, (U_i^{(j)})_{j=1}^K \big),W_1,\datvecraw_1 \Big) =(\bar{W}_1,W_1,\datvecraw_1).$$ We have thus shown $(\bar{W}_i,W_i,\datvecraw_i)$ has the same joint distribution as $(\bar{W}_1,W_1,\datvecraw_1)$, and this argument holds for any $i \in [N]$, completing the proof of (I).

To prove (II), fix a measurable $f: \mathbb{R}^q \to \mathbb{R}$ and $i \in [N]$. Since $U_i^{(1)},\dots,U_i^{(K)} \stackrel{\text{i.i.d.}}{\sim} \text{Unif}[0,1]$ and are independent of $\datvecraw_i =(\proxyX_i,\xmiss_i)$ it follows that conditionally on $\datvecraw_i$, $U_i^{(1)},\dots,U_i^{(K)} \stackrel{\text{i.i.d.}}{\sim} \text{Unif}[0,1]$. Thus, by the definition of $\bar{W}_i^{(k)}$ at \eqref{eq:WkIndepVersionDecomp}, linearity of expectation and the tower property, for each $k \in [K]$ $$\begin{aligned} 
\e [\bar{W}_i^{(k)} f(\datvecraw_i)] & =  \e \Bigg[ \e \Big[ \prod_{j=1}^k \frac{\phi_{\text{sgn}(j-k)} \big( \mathbbm{1} \{ U_i^{(j)} \leq \LimitingLabelRulePi{j}(\proxyX_i) \} \big)}{\phi_{\text{sgn}(j-k)} \big( \LimitingLabelRulePi{j}(\proxyX_i) \big)} \cdot f(\datvecraw_i) \Big| \datvecraw_i \Big]\Bigg]
\\ & =  \e \Bigg[   \frac{ \e \big[ \prod_{j=1}^k  \phi_{\text{sgn}(j-k)} \big( \mathbbm{1} \{ U_i^{(j)} \leq \LimitingLabelRulePi{j}(\proxyX_i) \} \big) \giv \datvecraw_i\big]}{\prod_{j=1}^k \phi_{\text{sgn}(j-k)} \big( \LimitingLabelRulePi{j}(\proxyX_i) \big)} \cdot f(\datvecraw_i) \Bigg]
\\ & =  \e \Bigg[   \frac{\prod_{j=1}^k \e \big[ \phi_{\text{sgn}(j-k)} \big( \mathbbm{1} \{ U_i^{(j)} \leq \LimitingLabelRulePi{j}(\proxyX_i) \} \big) \giv \datvecraw_i\big]}{\prod_{j=1}^k \phi_{\text{sgn}(j-k)} \big( \LimitingLabelRulePi{j}(\proxyX_i) \big)} \cdot f(\datvecraw_i) \Bigg]
\\ & = \e \Bigg[   \frac{ \e \big[ \mathbbm{1} \{ U_i^{(k)} \leq \LimitingLabelRulePi{k}(\proxyX_i) \}  \giv \datvecraw_i\big] \cdot \prod_{j=1}^{k-1}  \big( 1-\e \big[\mathbbm{1} \{ U_i^{(j)} \leq \LimitingLabelRulePi{j}(\proxyX_i) \} \giv \datvecraw_i\big] \big) }{\LimitingLabelRulePi{k}(\proxyX_i) \prod_{j=1}^{k-1} \big( 1- \LimitingLabelRulePi{j}(\proxyX_i) \big)} \cdot f(\datvecraw_i) \Bigg]
\\ & =  \e \Bigg[  \frac{\LimitingLabelRulePi{k}(\proxyX_i) \prod_{j=1}^{k-1} \big( 1- \LimitingLabelRulePi{j}(\proxyX_i) \big)} {\LimitingLabelRulePi{k}(\proxyX_i) \prod_{j=1}^{k-1} \big( 1- \LimitingLabelRulePi{j}(\proxyX_i) \big)} \cdot f(\datvecraw_i) \Bigg]
\\ & = \e [  f(\datvecraw_i) ] = \e[f(\datvecraw)].
\end{aligned}$$ Above the third step follows from the independence of $U_i^{(1)},\dots,U_i^{(K)}$ conditionally on $\datvecraw_i$, the fourth step uses the definition of $\phi_{-1}$ and $\phi_0$ at \eqref{eq:ComplementOr1Operator} and the final step uses Assumption \ref{assump:IIDUnderlyingData}. Thus we have shown that $\e [\bar{W}_i^{(k)} f(\datvecraw_i)]= \e[f(\datvecraw)]$ for each $k \in [K]$. Combining this with the Equation \eqref{eq:WkAggregatedIndepVersion} that $\bar{W}_i =\sum_{k=1}^K c_k \bar{W}_i^{(k)}$ and the fact that $\sum_{k=1}^K c_k=1$, $\e[\bar{W}_i f(\datvecraw_i)]=\e[f(\datvecraw)]$. Since this argument holds for any fixed, measurable $f: \mathbb{R}^q \to \mathbb{R}$ and $i \in [N]$, (II) holds.

To prove (III), we directly apply (II). In particular, fix $i \in [N]$ and measurable $f,g: \mathbb{R}^q \to \mathbb{R}$. Observe that the product of $f$ and $g$ denoted by $fg: \mathbb{R}^q \to \mathbb{R}$ is measurable, so applying property (II) to both $f$ and $fg$, $$\begin{aligned} \cov \big( \bar{W}_i f(\datvecraw_i), g(\datvecraw_i) \big)= \e[\bar{W}_i f(\datvecraw_i) g(\datvecraw_i)]-\e[\bar{W}_i f(\datvecraw_i)] \cdot \e[g(\datvecraw_i)] & =\e[f(\datvecraw) g(\datvecraw)]-\e[f(\datvecraw)] \cdot  \e[g(\datvecraw)]. \end{aligned}$$ Hence, simplifying the right hand side, $\cov \big( \bar{W}_i f(\datvecraw_i), g(\datvecraw_i) \big)=\cov \big( f(\datvecraw), g(\datvecraw) \big)$, and this argument holds for any fixed $i \in [N]$ and measurable $f,g: \mathbb{R}^q \to \mathbb{R}$. 

To prove (IV), fix $i \in [N]$ and $f,g : \mathbb{R}^q \to \mathbb{R}$ such that $\e[f(\datvecraw)]=0$ and $\e[g(\datvecraw)]=0$. Observe from \eqref{eq:WkIndepVersionDecomp}, that for $k,k' \in [K]$ such that $k \neq k'$, $\bar{W}_i^{(k)} \bar{W}_i^{(k')}=0$. Combining this with formula \eqref{eq:WkAggregatedIndepVersion} that $\sum_{k=1}^K c_k \bar{W}_i^{(k)}$, it follows that $$\bar{W}_i^2 = \sum_{k=1}^K c_k^2 \big( \bar{W}_i^{(k)} \big)^2=\sum_{k=1}^K c_k^2 \Bigg( \prod_{j=1}^{k-1} \Big( \frac{  1- \mathbbm{1} \{ U_i^{(j)} \leq \LimitingLabelRulePi{j}(\proxyX_i) \} }{  1- \LimitingLabelRulePi{j}(\proxyX_i) } \Big) \frac{  \mathbbm{1} \{ U_i^{(k)} \leq \LimitingLabelRulePi{k}(\proxyX_i) \} }{ \LimitingLabelRulePi{k}(\proxyX_i) } \Bigg)^2.$$ By recalling definition $\eqref{eq:PiBarProd1tok}$ and noting that for $t \in \{0,1\}$, $t^2=t$, this expression simplifies to $$\bar{W}_i^2= \sum_{k=1}^K  \frac{c_k^2}{\big( \LimitingLabelRulePi{1:k}(\proxyX_i) \big)^2}  \prod_{j=1}^{k-1} \big( 1- \mathbbm{1} \{ U_i^{(j)} \leq \LimitingLabelRulePi{j}(\proxyX_i) \}  \big) \cdot  \mathbbm{1} \{ U_i^{(k)} \leq \LimitingLabelRulePi{k}(\proxyX_i) \}.$$ Recalling that $\datvecraw_i =(\proxyX_i,\xmiss_i)$ and that conditionally on $\datvecraw_i$, $U_i^{(1)},\dots,U_i^{(K)} \stackrel{\text{i.i.d.}}{\sim} \text{Unif}[0,1]$, $$\begin{aligned} 
\e [\bar{W}_i^2 f(\datvecraw_i) g(\datvecraw_i) ] & = \e \big[  f(\datvecraw_i) g(\datvecraw_i) \cdot \e[\bar{W}_i^2 \giv \datvecraw_i ] \big]
\\ & = \e \Bigg[  \sum_{k=1}^K  \frac{f(\datvecraw_i) g(\datvecraw_i) c_k^2}{\big( \LimitingLabelRulePi{1:k}(\proxyX_i) \big)^2} \cdot \e \Big[\prod_{j=1}^{k-1} \big( 1- \mathbbm{1} \{ U_i^{(j)} \leq \LimitingLabelRulePi{j}(\proxyX_i) \}  \big) \cdot  \mathbbm{1} \{ U_i^{(k)} \leq \LimitingLabelRulePi{k}(\proxyX_i) \} \Big| \datvecraw_i  \Big] \Bigg]
\\ & = \e \Bigg[ \sum_{k=1}^K  \frac{f(\datvecraw_i) g(\datvecraw_i) c_k^2}{\big( \LimitingLabelRulePi{1:k}(\proxyX_i) \big)^2} \cdot \prod_{j=1}^{k-1} \e \big[ 1-  \mathbbm{1} \{ U_i^{(j)} \leq \LimitingLabelRulePi{j}(\proxyX_i) \} \big| \datvecraw_i \big]  \cdot  \e \big[ \mathbbm{1} \{ U_i^{(k)} \leq \LimitingLabelRulePi{k}(\proxyX_i) \} \big| \datvecraw_i  \big] \Bigg]
\\ & = \e \Bigg[ \sum_{k=1}^K  \frac{f(\datvecraw_i) g(\datvecraw_i) c_k^2}{\big( \LimitingLabelRulePi{1:k}(\proxyX_i) \big)^2} \cdot \prod_{j=1}^{k-1} \big(1-\LimitingLabelRulePi{j}(\proxyX_i) \big)  \cdot  \LimitingLabelRulePi{k}(\proxyX_i) \Bigg]
\\ & = \e \Bigg[ \sum_{k=1}^K  \frac{f(\datvecraw_i) g(\datvecraw_i) c_k^2}{ \LimitingLabelRulePi{1:k}(\proxyX_i) } \Bigg]
\\ & = \sum_{k=1}^K c_k^2 \e \Big[ \frac{f(\datvecraw) g(\datvecraw)}{ \LimitingLabelRulePi{1:k}(\proxyX) } \Big].
\end{aligned}$$ Above the penultimate step follows from \eqref{eq:PiBarProd1tok} and the last step by Assumption \ref{assump:IIDUnderlyingData}. Combining this result with the assumption that $\e[f(\datvecraw)]=0$ and $\e[g(\datvecraw)]=0$ and Property (II) proved earlier, $$\begin{aligned} \cov \Big(\bar{W}_i f(\datvecraw_i) , \bar{W}_i g(\datvecraw_i) \Big) & = \e [\bar{W}_i^2 f(\datvecraw_i) g(\datvecraw_i) ]-\e [\bar{W}_i f(\datvecraw_i) ] \cdot \e[\bar{W}_i g(\datvecraw_i) ] 
\\ & =\e [\bar{W}_i^2 f(\datvecraw_i) g(\datvecraw_i) ]-\e[f(\datvecraw)]\e[g(\datvecraw)]
\\ & = \sum_{k=1}^K c_k^2 \e \Big[ \frac{f(\datvecraw) g(\datvecraw)}{ \LimitingLabelRulePi{1:k}(\proxyX) } \Big]. \end{aligned}$$ Noting that this result holds for any fixed $i \in [N]$ and $f,g : \mathbb{R}^q \to \mathbb{R}$ such that $\e[f(\datvecraw)]=0$ and $\e[g(\datvecraw)]=0$, completes the proof of (IV).

\end{proof}

The next lemma can be shown using a similar proof strategy as that used to prove Proposition \ref{prop:NoCovarianceMultiwaveIPW}.

\begin{lemma}\label{lemma:CovWeightInepAndStandardIsZero}

   Under \samplingSchemeName{} and Assumptions \ref{assump:IIDUnderlyingData} and \ref{assump:CLTRgularityConditions}, for any measurable $f,g: \mathbb{R}^q \to \mathbb{R}$, $\cov \big ( W_i f(\datvecraw_i), \bar{W}_{i'} g(\datvecraw_{i'}) \big)=0$ for each $i,i' \in [N]$ such that $i \neq i'$.
\end{lemma}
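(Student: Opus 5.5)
We plan to mirror the proof of Proposition \ref{prop:NoCovarianceMultiwaveIPW}: reduce to single-wave components by bilinearity, then apply the tower property recursively over waves. By \eqref{eq:aggregated_Wk} and \eqref{eq:WkAggregatedIndepVersion}, $W_i=\sum_k c_k W_i^{(k)}$ and $\bar{W}_{i'}=\sum_{k'} c_{k'}\bar{W}_{i'}^{(k')}$. It therefore suffices to show that for every $k,k'\in[K]$ and $i\neq i'$
$$\e\big[W_i^{(k)}\bar{W}_{i'}^{(k')} f(\datvecraw_i)g(\datvecraw_{i'})\big]=\e[f(\datvecraw_i)g(\datvecraw_{i'})].$$
Once this holds, we combine it with Corollary \ref{cor:SimplifyWeightedExpectations1Term} and Lemma \ref{lemma:PropertiesOfIndependentVersionsOfWeights}(II). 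These give $\e[W_i^{(k)}f(\datvecraw_i)]=\e[f(\datvecraw)]$ and $\e[\bar{W}_{i'}^{(k')}g(\datvecraw_{i'})]=\e[g(\datvecraw)]$. Each component covariance then equals $\cov(f(\datvecraw_i),g(\datvecraw_{i'}))$, which is $0$ by Assumption \ref{assump:IIDUnderlyingData}, and summing with weights $c_kc_{k'}$ finishes the proof. As in the proof of Proposition \ref{prop:NoCovarianceMultiwaveIPW}, we assume enough integrability for the expectations to exist.

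To prove the displayed identity we factor both weights by wave, using \eqref{eq:WkFormulaAlt} and \eqref{eq:WkIndepVersionDecomp}: $W_i^{(k)}=\prod_{j=1}^K W_i^{(k,j)}$ and $\bar{W}_{i'}^{(k')}=\prod_{j=1}^K\bar{W}_{i'}^{(k',j)}$. We then show by induction on $j^*$, as in \eqref{eq:InductiveStep1stOrderExpectations}, that
$$\e\Big[\prod_{j=1}^{j^*}W_i^{(k,j)}\bar{W}_{i'}^{(k',j)}\,f(\datvecraw_i)g(\datvecraw_{i'})\Big]=\e[f(\datvecraw_i)g(\datvecraw_{i'})].$$
For the step from $j^*$ to $j^*+1$ we condition on the augmented $\sigma$-field $\cd_{j^*}^+$ from the proof of Lemma \ref{lemma:ModContinuityExpectationBound}, which also contains all $\datvecraw$ and $U^{(j)}$ for $j\le j^*$. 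Every factor with index $j\le j^*$ is measurable with respect to $\cd_{j^*}^+$, and so are $f(\datvecraw_i)g(\datvecraw_{i'})$ and $\bar{\pi}^{(j^*+1)}(\proxyX_{i'})$. The new factor $W_i^{(k,j^*+1)}$ depends on $U_i^{(j^*+1)}$ and on $\LabelRulePiDoubleArg{j^*+1}{j^*}(\proxyX_i)$, which is $\cd_{j^*}^+$-measurable. The new factor $\bar{W}_{i'}^{(k',j^*+1)}$ depends on $U_{i'}^{(j^*+1)}$ and the fixed function $\bar{\pi}^{(j^*+1)}$.

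Since $i\neq i'$, the variables $U_i^{(j^*+1)}$ and $U_{i'}^{(j^*+1)}$ are i.i.d.\ uniform and independent of $\cd_{j^*}^+$. Hence the conditional expectation of the product of the two new factors factorizes. Each factor has conditional mean $1$ by the computation $\e[\phi_r(\mathbbm{1}\{U\le p\})]=\phi_r(p)$ used in Lemma \ref{lemma:SimplifyCondExpectations}. Taking the outer expectation and applying the inductive hypothesis completes the step, and the base case $j^*=1$, conditioning on $\cd_0^+$, is identical.

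The main obstacle is the choice of conditioning $\sigma$-field. The original argument conditions on $\cd_{j-1}$, but $\bar{W}_{i'}^{(k',j)}$ for $j<j^*$ is generally \emph{not} a function of $\cd_{j^*}$, because the limiting rule $\bar{\pi}$ need not agree with the realized indicators. The augmented field $\cd_{j^*}^+$ fixes this, since it contains the $U$'s of past waves, so we must verify the conditional independence of the wave-$(j^*+1)$ uniforms from $\cd_{j^*}^+$. This follows from the sampling scheme, because $U^{(j^*+1)}$ is drawn independently of all previous data and uniforms, exactly as used in Lemma \ref{lemma:ModContinuityExpectationBound}.
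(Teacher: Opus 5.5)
Your proposal is correct and takes essentially the same route as the paper. You reduce to the single-wave products $W_i^{(k)}\bar{W}_{i'}^{(k')}$, prove the product identity by induction over waves while conditioning on an augmented $\sigma$-field containing all $\datvecraw$'s and past uniforms (the paper's $\cd_{j-1}^{\text{aug}}$ plays exactly the role of your $\cd_{j^*}^+$, and your reason for augmenting matches the paper's), and finish with Corollary \ref{cor:SimplifyWeightedExpectations1Term} and Lemma \ref{lemma:PropertiesOfIndependentVersionsOfWeights}(II).
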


\begin{proof}
Fix measurable functions $f,g : \mathbb{R}^q \to \mathbb{R}$. Next fix $i, i' \in [N]$ such that $i \neq i'$. Next recall that for $k \in [K]$, $\cd_{k} =\big( (I_i^{(j)}, I_i^{(j)} \xmiss_i)_{j=1}^k, \proxyX_i \big)_{i=1}^N$ is the observed data after wave $k$. Next for $k \in [K]$, let $$\cd_{k}^{\text{aug}} \equiv \Big( \big(I_i^{(j)},U_i^{(j)},\LabelRulePi{j}(\proxyX_i), I_i^{(j)} \xmiss_i \big)_{j=1}^k, \proxyX_i,\datvecraw_i \Big)_{i=1}^N$$ be an augmented version of the data after wave $k$ (which contains the unobserved $\datvecraw_i$ values, the labelling probabilities $\LabelRulePi{j}(\proxyX_i)$, and the uniform $U_i^{(j)}$ variable for $j=1,\dots,k$, but crucially does not carry information about $U_i^{(j)}$ in later waves).  We will first show that \begin{equation}\label{eq:Step1CovNullWithIndepWeights_CondExpSimplification}
    \e \big[  W_i^{(k,j)}   \bar{W}_{i'}^{(k',j)}  f(\datvecraw_i) g(\datvecraw_{i'}) \giv \cd_{j-1}^{\text{aug}} \big]= f(\datvecraw_i) g(\datvecraw_{i'})  \quad \text{for each } k,k',j \in [K],
\end{equation} where $W_i^{(k,j)}$ and $\bar{W}_i^{(k,j)}$ are defined at \eqref{eq:IPWsingleSingle} and \eqref{eq:WkIndepVersionDecomp}. 

To show \eqref{eq:Step1CovNullWithIndepWeights_CondExpSimplification} further fix $k,k',j \in [K]$. For convenience, let $r = \text{sgn}(j-k)$ and $r' = \text{sgn}(j-k')$. Recall that by \eqref{eq:ComplementOr1Operator} $\phi_{-1},\phi_0,\phi_1: [0,1] \to [0,1]$ are functions given by $\phi_{-1}(t)=1-t$, $\phi_{0}(t)=t$, and $\phi_1(t)=1$. Since conditionally on $\cd_{j-1}^{\text{aug}}$, $U_{i}^{(j)},U_{i'}^{(j)} \sim \text{Unif}[0,1]$ and since $\LabelRulePi{j}(\proxyX_i)$, $\proxyX_i$ and $\proxyX_{i'}$ are each components of $\cd_{j-1}^{\text{aug}}$, by considering all 3 possible values of $r$ and $r'$, $$\e \big[\phi_{r} \big( \mathbbm{1} \{ U_i^{(j)} \leq \LabelRulePi{j}(\proxyX_i) \} \big) \big| \cd_{j-1}^{\text{aug}} \big] = \phi_{r} \big( \LabelRulePi{j}(\proxyX_i) \big) \quad  \text{and}$$  $$\e \big[\phi_{r'} \big( \mathbbm{1} \{ U_{i'}^{(j)} \leq \LimitingLabelRulePi{j}(\proxyX_{i'}) \} \big) \big| \cd_{j-1}^{\text{aug}} \big] = \phi_{r'} \big( \LimitingLabelRulePi{j}(\proxyX_{i'}) \big).$$  Under \samplingSchemeName{} $U_{i}^{(j)} \indep U_{i'}^{(j)} \giv \cd_{j-1}^{\text{aug}}$, so combining this with the above results $$\e \big[\phi_{r} \big( \mathbbm{1} \{ U_i^{(j)} \leq \LabelRulePi{j}(\proxyX_i) \} \big) \phi_{r'} \big( \mathbbm{1} \{ U_{i'}^{(j)} \leq \LimitingLabelRulePi{j}(\proxyX_{i'}) \} \big) \big| \cd_{j-1}^{\text{aug}} \big]= \phi_{r} \big( \LabelRulePi{j}(\proxyX_i) \big) \cdot  \phi_{r'} \big( \LimitingLabelRulePi{j}(\proxyX_{i'}) \big).$$

Thus by definition \eqref{eq:IPWsingleSingle} and \eqref{eq:WkIndepVersionDecomp} and the above result,  $$\begin{aligned} \e [  W_i^{(k,j)}  \bar{W}_{i'}^{(k',j)}  f(\datvecraw_i) g(\datvecraw_{i'}) \giv \cd_{j-1}^{\text{aug}} ] & =  \e \Bigg[ \frac{ \phi_{r} \big( \mathbbm{1} \{ U_i^{(j)} \leq \LabelRulePi{j}(\proxyX_i) \} \big) \cdot \phi_{r'} \big( \mathbbm{1} \{ U_{i'}^{(j)} \leq \LimitingLabelRulePi{j}(\proxyX_{i'}) \} \big) f(\datvecraw_i) g(\datvecraw_{i'}) }{ \phi_r \big( \LabelRulePi{j}(\proxyX_i) \big) \cdot \phi_{r'} \big( \LimitingLabelRulePi{j}(\proxyX_{i'}) \big) }  \Bigg| \cd_{j-1}^{\text{aug}}\Bigg]
    \\ & = \frac{ f(\datvecraw_i) g(\datvecraw_{i'}) \cdot \e \big[  \phi_{r} \big( \mathbbm{1} \{ U_i^{(j)} \leq \LabelRulePi{j}(\proxyX_i) \} \big) \cdot \phi_{r'} \big( \mathbbm{1} \{ U_{i'}^{(j)} \leq \LimitingLabelRulePi{j}(\proxyX_{i'}) \} \big)   \big| \cd_{j-1}^{\text{aug}} \big] }{ \phi_r \big( \LabelRulePi{j}(\proxyX_i) \big) \cdot \phi_{r'} \big( \LimitingLabelRulePi{j}(\proxyX_{i'}) \big) } 
    \\ & = f(\datvecraw_i) g(\datvecraw_{i'}). \end{aligned}$$  
    Noting that the above result holds for each fixed $k,k',j \in [K]$ proves \eqref{eq:Step1CovNullWithIndepWeights_CondExpSimplification}.

    Again fix $k,k' \in [K]$ and we will show that $\cov \big(  W_i^{(k)}  f(\datvecraw_i), \bar{W}_{i'}^{(k')} g(\datvecraw_{i'})  \big)=0$. To do this we will first prove by induction that for each $j^* \in [K]$, \begin{equation}\label{eq:InductiveStep1stOrderExpectationsIndepWeightVersion}
   \e \Big[ \Big( \prod_{j=1}^{j^*} W_i^{(k,j)} \bar{W}_{i'}^{(k',j)} \Big) f(\datvecraw_i) g(\datvecraw_{i'})  \Big] = \e [ f(\datvecraw_i) g(\datvecraw_{i'})]. \end{equation} In the case where $j^*=1$ \eqref{eq:InductiveStep1stOrderExpectationsIndepWeightVersion} holds because by the tower property and \eqref{eq:Step1CovNullWithIndepWeights_CondExpSimplification}, $$\e \big[  W_i^{(k,1)} \bar{W}_{i'}^{(k',1)} f(\datvecraw_i) g(\datvecraw_{i'})  \big] = \e \big[  \e[ W_i^{(k,1)} \bar{W}_{i'}^{(k',1)} f(\datvecraw_i) g(\datvecraw_{i'}) \giv \cd_0^{\text{aug}}]  \big] =\e [ f(\datvecraw_i) g(\datvecraw_{i'})].$$ Next fix some $j^* \in [K-1]$ and assume \eqref{eq:InductiveStep1stOrderExpectationsIndepWeightVersion} holds for $j^*$. Since for $j \leq j^*$, $ W_i^{(k,j)}$ and $\bar{W}_{i'}^{(k',j)}$ can be written as fixed, measurable functions of $\cd_{j^*}^{\text{aug}}$, by the tower property and \eqref{eq:Step1CovNullWithIndepWeights_CondExpSimplification}, $$\begin{aligned} \e \Big[ \prod_{j=1}^{j^* +1} W_i^{(k,j)} \bar{W}_{i'}^{(k',j)} f(\datvecraw_i) g(\datvecraw_{i'})  \Big] & = \e \Big[ \Big( \prod_{j=1}^{j^* } W_i^{(k,j)} \bar{W}_{i'}^{(k',j)}  \Big) \e[ W_i^{(k,j^*+1)} \bar{W}_{i'}^{(k',j^*+1)} f(\datvecraw_i) g(\datvecraw_{i'}) \giv \cd_{j^*}^{\text{aug}}]  \Big] 
   \\ & = \e \Big[ \Big( \prod_{j=1}^{j^* } W_i^{(k,j)} \bar{W}_{i'}^{(k',j)} \Big)  f(\datvecraw_i) g(\datvecraw_{i'})   \Big]
   \\ & = \e [ f(\datvecraw_i) g(\datvecraw_{i'})], \end{aligned}$$ where the last step holds provided that \eqref{eq:InductiveStep1stOrderExpectationsIndepWeightVersion} holds for $j^*$. Hence we have shown that for any $j^* \in [K-1]$, if \eqref{eq:InductiveStep1stOrderExpectationsIndepWeightVersion} holds for $j^*$, then \eqref{eq:InductiveStep1stOrderExpectationsIndepWeightVersion} also holds for $j^*+1$, and additionally, \eqref{eq:InductiveStep1stOrderExpectationsIndepWeightVersion} holds in the case where $j^*=1$. Thus by induction \eqref{eq:InductiveStep1stOrderExpectationsIndepWeightVersion} holds for all $j^* \in [K]$. 
   
   By recalling the formulas for $W_i^{(k)}$ and $\bar{W}_{i'}^{(k')}$ at \eqref{eq:WkFormulaAlt} and \eqref{eq:WkIndepVersionDecomp}, and applying Equation \eqref{eq:InductiveStep1stOrderExpectationsIndepWeightVersion} in the case where $j^*=K$, $$\e \big[ W_i^{(k)} \bar{W}_{i'}^{(k')} f(\datvecraw_i) g(\datvecraw_{i'})  \big]= \e \Big[ \prod_{j=1}^K W_i^{(k,j)} \bar{W}_{i'}^{(k',j)} f(\datvecraw_i) g(\datvecraw_{i'})  \Big]=\e[f(\datvecraw_i) g(\datvecraw_{i'})]=\e[f(\datvecraw_i)]\e[g(\datvecraw_{i'})],$$ where the last step above holds by Assumption \ref{assump:IIDUnderlyingData} since $i \neq i'$. By the definition of covariance, Property (II) in Lemma \ref{lemma:PropertiesOfIndependentVersionsOfWeights}, and Corollary \ref{cor:SimplifyWeightedExpectations1Term}, and the above result, $$\begin{aligned}\cov \big(  W_i^{(k)}  f(\datvecraw_i), \bar{W}_{i'}^{(k')} g(\datvecraw_{i'})  \big) & = \e \big[ W_i^{(k)} \bar{W}_{i'}^{(k')} f(\datvecraw_i) g(\datvecraw_{i'})  \big]- \e[W_i^{(k)}  f(\datvecraw_i)] \e[\bar{W}_{i'}^{(k')} g(\datvecraw_{i'}) ] 
   \\ & = \e[f(\datvecraw_i)]\e[g(\datvecraw_{i'})]-\e[f(\datvecraw_i)]\e[g(\datvecraw_{i'})]=0. \end{aligned}$$ Thus we have shown that $\cov \big(  W_i^{(k)}  f(\datvecraw_i), \bar{W}_{i'}^{(k')} g(\datvecraw_{i'})  \big)=0$ and this argument holds for any fixed $k,k' \in [K]$.

   To complete the proof recalling \eqref{eq:aggregated_Wk} and \eqref{eq:WkAggregatedIndepVersion}, by the previous result, $$\cov \big( W_i f(\datvecraw_i), \bar{W}_i g(\datvecraw_{i'}) \big) = \sum_{k=1}^K \sum_{k'=1}^K c_k c_{k'} \cov \big(  W_i^{(k)}  f(\datvecraw_i), \bar{W}_{i'}^{(k')} g(\datvecraw_{i'})  \big)=0.$$ We have thus shown the desired result for any fixed measurable functions $f,g : \mathbb{R}^q \to \mathbb{R}$ and $i, i' \in [N]$ such that $i \neq i'$.
\end{proof}

\begin{lemma}\label{lemma:GapWithIndepIPWWeightsDisappearsInExpecation} Under \samplingSchemeName{} and Assumptions \ref{assump:IIDUnderlyingData}, \ref{assump:LabellingRuleOverlap}, and \ref{assump:CLTRgularityConditions}, $\lim_{N \to \infty} \e \big[ \vert W_i^{(k,j)}-\bar{W}_i^{(k,j)} \vert \big]=0$ for each $i \in [N]$, and $k,j \in [K]$ such that $j \leq k$.
\end{lemma}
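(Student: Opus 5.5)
The plan is to fix $i$, $k$, and $j \leq k$, write $r = \textnormal{sgn}(j-k) \in \{-1,0\}$, and split the difference into two pieces:
\[
W_i^{(k,j)} - \bar{W}_i^{(k,j)} = \underbrace{\phi_r(I_i^{(j)})\Big(\frac{1}{\phi_r(\LabelRulePi{j}(\proxyX_i))} - \frac{1}{\phi_r(\LimitingLabelRulePi{j}(\proxyX_i))}\Big)}_{A}
+ \underbrace{\frac{\phi_r(I_i^{(j)}) - \phi_r\big(\mathbbm{1}\{U_i^{(j)} \leq \LimitingLabelRulePi{j}(\proxyX_i)\}\big)}{\phi_r(\LimitingLabelRulePi{j}(\proxyX_i))}}_{B}.
\]
The first piece $A$ isolates the effect of the estimated labelling probability appearing in the denominator. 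The second piece $B$ isolates the effect of the indicator itself, since $I_i^{(j)} = \mathbbm{1}\{U_i^{(j)} \leq \LabelRulePi{j}(\proxyX_i)\}$ is built from the estimated rule rather than the limiting one.

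For $A$, note that $0 \leq \phi_r(\cdot) \leq 1$ on indicators. So $\e|A|$ is bounded by $\e\big|1/\phi_r(\LabelRulePi{j}(\proxyX_i)) - 1/\phi_r(\LimitingLabelRulePi{j}(\proxyX_i))\big|$. Lemma \ref{lemma:InverseLpConvergence} with $l = 1$ shows this tends to $0$. That lemma requires $i$ not to grow with $N$, which holds here because $i$ is fixed. Alternatively, exchangeability via Lemma \ref{lemma:StrongerExchangeability} reduces everything to $i = 1$.

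For $B$, Assumption \ref{assump:CLTRgularityConditions}(ii) places $\LimitingLabelRulePi{j}$ in $[b,1-b]$, so the denominator is at least $b$. The numerator differs from zero only if $U_i^{(j)}$ lies between $\LabelRulePi{j}(\proxyX_i)$ and $\LimitingLabelRulePi{j}(\proxyX_i)$, and in that case its absolute value is at most $1$. The key point is that $U_i^{(j)}$ is $\textnormal{Unif}[0,1]$ independently of $\cd_{j-1}$ and of $\proxyX_i$, while both thresholds are measurable with respect to $(\cd_{j-1}, \proxyX_i)$. Conditioning on $\cd_{j-1}$ therefore gives
\[
\e|B| \leq b^{-1}\, \e\big[\,|\LabelRulePi{j}(\proxyX_i) - \LimitingLabelRulePi{j}(\proxyX_i)|\,\big].
\]
By exchangeability (Lemma \ref{lemma:StrongerExchangeability}) this equals $b^{-1}\,\e\big[|\LabelRulePi{j}(\proxyX_1) - \LimitingLabelRulePi{j}(\proxyX_1)|\big]$, which tends to $0$ by Assumption \ref{assump:CLTRgularityConditions}(ii).

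Combining the two pieces with the triangle inequality gives $\e|W_i^{(k,j)} - \bar{W}_i^{(k,j)}| \to 0$. The main point needing care is the conditioning step for $B$. One must check that $U_i^{(j)}$ is independent of the pair of thresholds. This follows from Fact \ref{fact:CondMeanAndIndep}'s construction, because $U_i^{(j)}$ is drawn independently of $\cd_{j-1}$, and $\cd_{j-1}$ contains $\proxyX_i$. The remaining steps are routine.
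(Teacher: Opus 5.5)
Your proposal is correct and follows essentially the same route as the paper. The paper introduces the intermediate quantity $T = \phi_r(\mathbbm{1}\{U_i^{(j)} \leq \LabelRulePi{j}(\proxyX_i)\})/\phi_r(\LimitingLabelRulePi{j}(\proxyX_i))$, so that $W_i^{(k,j)} - T$ and $T - \bar{W}_i^{(k,j)}$ are exactly your pieces $A$ and $B$; it bounds the first via Lemma \ref{lemma:InverseLpConvergence} and the second by $b^{-1}\e\big[|\LabelRulePi{j}(\proxyX_i) - \LimitingLabelRulePi{j}(\proxyX_i)|\big]$, using conditioning on $\cd_{j-1}$ and then exchangeability.
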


\begin{proof}
    Fix any $i \in [N]$ and $k, j \in [K]$ such that $j \leq k$. Let $r=\text{sgn}(j-k)$ and recall from \eqref{eq:ComplementOr1Operator} that $\phi_r: [0,1] \to [0,1]$ is given by $\phi_r(t)=t$ if $r=0$ and $\phi_r(t)=1-t$ if $r=-1$. Next let $$T \equiv \frac{\phi_{r} \big( \mathbbm{1} \{ U_i^{(j)} \leq \LabelRulePi{j}(\proxyX_i) \} \big)}{\phi_{r} \big( \LimitingLabelRulePi{j}(\proxyX_i) \big)},$$ and we will show $\lim_{N \to \infty} \e[\vert W_i^{(k,j)} - T \vert]=0$ and $\lim_{N \to \infty} \e[\vert T- \bar{W}_i^{(k,j)} \vert]=0$. Note that by the definition of $W_i^{(k,j)}$ at \eqref{eq:IPWsingleSingle},
    
    $$\begin{aligned} \e[\vert W_i^{(k,j)} - T \vert ] & = \e \Bigg[ \Bigg| \phi_{r} \big( \mathbbm{1} \{ U_i^{(j)} \leq \LabelRulePi{j}(\proxyX_i) \} \big) \Big( \frac{1}{\phi_{r} \big( \LabelRulePi{j}(\proxyX_i) \big)}  - \frac{1}{\phi_{r} \big( \LimitingLabelRulePi{j}(\proxyX_i) \big)} \Big) \Bigg| \Bigg]  
    \\ & \leq \e\Bigg[ \Bigg|  \frac{1}{\phi_{r} \big( \LabelRulePi{j}(\proxyX_i) \big)}  - \frac{1}{\phi_{r} \big( \LimitingLabelRulePi{j}(\proxyX_i) \big)}  \Bigg| \Bigg]. \end{aligned}$$ Taking the $\limsup_{N \to \infty}$ of each side of the above inequality and applying Lemma \ref{lemma:InverseLpConvergence} it follows that $\lim_{N \to \infty} \e[\vert W_i^{(k,j)} - T \vert]=0$. Next letting $$L_{\text{lower}}^{(j)}(\proxyX_i,\cd_{j-1})  \equiv \frac{\LabelRulePi{j}(\proxyX_i)+\LimitingLabelRulePi{j}(\proxyX_i)}{2}- \frac{ \vert \LabelRulePi{j}(\proxyX_i)-\LimitingLabelRulePi{j}(\proxyX_i) \vert}{2} \quad \text{and}$$  $$L_{\text{upper}}^{(j)}(\proxyX_i,\cd_{j-1}) \equiv \frac{\LabelRulePi{j}(\proxyX_i)+\LimitingLabelRulePi{j}(\proxyX_i)}{2}+ \frac{ \vert \LabelRulePi{j}(\proxyX_i)-\LimitingLabelRulePi{j}(\proxyX_i) \vert}{2},$$
    by the definition for $\bar{W}_i^{(k,j)}$ at \eqref{eq:WkIndepVersionDecomp}, regardless of whether $r=0$ or $r=-1$
    $$\begin{aligned}
   \e[ \vert T -\bar{W}_i^{(k,j)} \vert ] & = \e \Bigg[ \Bigg| \frac{1}{\phi_{r} \big( \LimitingLabelRulePi{j}(\proxyX_i) \big)} \Big( \phi_{r} \big( \mathbbm{1} \{ U_i^{(j)} \leq \LabelRulePi{j}(\proxyX_i) \} \big) - \phi_{r} \big( \mathbbm{1} \{ U_i^{(j)} \leq \LimitingLabelRulePi{j}(\proxyX_i) \} \big) \Big) \Bigg| \Bigg]
    \\ & \leq b^{-1} \e \Big[ \Big| \phi_{r} \big( \mathbbm{1} \{ U_i^{(j)} \leq \LabelRulePi{j}(\proxyX_i) \} \big) - \phi_{r} \big( \mathbbm{1} \{ U_i^{(j)} \leq \LimitingLabelRulePi{j}(\proxyX_i) \} \big) \Big| \Big]
    \\ & = b^{-1} \e \big[ \big|  \mathbbm{1} \{ U_i^{(j)} \leq \LabelRulePi{j}(\proxyX_i) \}  -  \mathbbm{1} \{ U_i^{(j)} \leq \LimitingLabelRulePi{j}(\proxyX_i) \}  \big| \big]
    \\ & \leq b^{-1} \e \Big[   \mathbbm{1} \bigr\{ U_i^{(j)} \in  [L_{\text{lower}}^{(j)}(\proxyX_i,\cd_{j-1}),L_{\text{upper}}^{(j)}(\proxyX_i,\cd_{j-1})]  \bigl\}   \Big]
    \\ & = b^{-1} \e \Big[ \e \big[  \mathbbm{1} \bigr\{ U_i^{(j)} \in  [L_{\text{lower}}^{(j)}(\proxyX_i,\cd_{j-1}),L_{\text{upper}}^{(j)}(\proxyX_i,\cd_{j-1})]  \bigl\}  \big| \cd_{j-1} \big]   \Big]
    \\ & = b^{-1} \e \Big[  \big| L_{\text{upper}}^{(j)}(\proxyX_i,\cd_{j-1})-L_{\text{lower}}^{(j)}(\proxyX_i,\cd_{j-1})    \big|    \Big]
    \\ & = b^{-1} \e \big[ \vert \LabelRulePi{j}(\proxyX_i)-\LimitingLabelRulePi{j}(\proxyX_i) \vert \big].
    \end{aligned}$$ Above the second step follows from Assumption \ref{assump:LabellingRuleOverlap} and the penultimate step follows because in \samplingSchemeName{,} $U_i^{(j)} \sim \text{Unif}[0,1]$ and is generated independently of $\cd_{j-1}$ while $L_{\text{lower}}^{(j)}(\proxyX_i,\cd_{j-1})$ and $L_{\text{upper}}^{(j)}(\proxyX_i,\cd_{j-1})$ can be expressed as measurable functions of $\cd_{j-1}$. 
    
    Taking the limsup as $N \to \infty$ of each side of the above inequality and by applying the exchangeability result in Lemma \ref{lemma:StrongerExchangeability} and Assumption \ref{assump:CLTRgularityConditions}, $$ \limsup_{N \to \infty} \e[ \vert T -\bar{W}_i^{(k,j)} \vert ] \leq \limsup_{N \to \infty} b^{-1} \e \big[ \vert \LabelRulePi{j}(\proxyX_i)-\LimitingLabelRulePi{j}(\proxyX_i) \vert \big]=b^{-1} \cdot \limsup_{N \to \infty}\e \big[ \vert \LabelRulePi{j}(\proxyX_1)-\LimitingLabelRulePi{j}(\proxyX_1) \vert \big]=0.$$ 
    
    Combining this with an earlier result, $$0 \leq \limsup_{N \to \infty} \e[ \vert W_i^{(k,j)} - \bar{W}_i^{(k,j)} \vert ] \leq \limsup_{N \to \infty} \Big( \e[ \vert W_i^{(k,j)} -T \vert] +  \e[ \vert T-\bar{W}_i^{(k,j)} \vert] \Big) \leq 0.$$ Thus we have shown that $\lim_{N \to \infty} \e[ \vert W_i^{(k,j)} - \bar{W}_i^{(k,j)} \vert ]=0$ and this argument held for any fixed $i \in [N]$ and $k, j \in [K]$ such that $j \leq k$.
    
\end{proof}

As a corollary to the previous lemma, we can also use boundedness of the $W^{(k,j)}$ and $\bar{W}^{(k,j)}$ terms to establish that $\lim_{N \to \infty} \e[ \vert W_i -\bar{W}_i \vert ] =0$ for each $i \in [N]$. 

\begin{corollary}\label{cor:GapWithIndepWeightsDisappearsInExpecation}
    Under \samplingSchemeName{} and Assumptions \ref{assump:IIDUnderlyingData}, \ref{assump:LabellingRuleOverlap}, and \ref{assump:CLTRgularityConditions}, $\lim_{N \to \infty} \e \big[ \vert W_i-\bar{W}_i \vert \big]=0$ for each $i \in [N]$.
\end{corollary}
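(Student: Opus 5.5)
The plan is to write $W_i-\bar{W}_i=\sum_{k=1}^K c_k\big(W_i^{(k)}-\bar{W}_i^{(k)}\big)$ using \eqref{eq:aggregated_Wk} and \eqref{eq:WkAggregatedIndepVersion}. The triangle inequality then gives
$$\e\big[\vert W_i-\bar{W}_i\vert\big]\le \sum_{k=1}^K c_k\, \e\big[\vert W_i^{(k)}-\bar{W}_i^{(k)}\vert\big],$$
so it suffices to show that $\e[\vert W_i^{(k)}-\bar{W}_i^{(k)}\vert]\to 0$ for each fixed $k\in[K]$. Since $K$ is fixed and $c_k\in[0,1]$, the finite sum then vanishes in the limit.

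Fix $k$. By \eqref{eq:WkFormulaAlt} and \eqref{eq:WkIndepVersionDecomp}, $W_i^{(k)}=\prod_{j=1}^k W_i^{(k,j)}$ and $\bar{W}_i^{(k)}=\prod_{j=1}^k \bar{W}_i^{(k,j)}$. I would telescope the difference of products:
$$W_i^{(k)}-\bar{W}_i^{(k)}=\sum_{m=1}^{k}\Big(\prod_{j<m}\bar{W}_i^{(k,j)}\Big)\big(W_i^{(k,m)}-\bar{W}_i^{(k,m)}\big)\Big(\prod_{j>m,\,j\le k}W_i^{(k,j)}\Big).$$

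Each factor $W_i^{(k,j)}$ lies in $[0,b^{-1}]$ almost surely by Assumption \ref{assump:LabellingRuleOverlap}, as in the proof of Fact \ref{fact:WeightBound}. Each factor $\bar{W}_i^{(k,j)}$ also lies in $[0,b^{-1}]$, because $\LimitingLabelRulePi{j}$ takes values in $[b,1-b]$ by Assumption \ref{assump:CLTRgularityConditions}(ii), so $\phi_r(\LimitingLabelRulePi{j}(\proxyX_i))\ge b$ while the numerator lies in $\{0,1\}$. Hence each summand in the telescoping expansion is bounded in absolute value by $b^{-(k-1)}\vert W_i^{(k,m)}-\bar{W}_i^{(k,m)}\vert$. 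This gives
$$\e\big[\vert W_i^{(k)}-\bar{W}_i^{(k)}\vert\big]\le b^{-(K-1)}\sum_{m=1}^k \e\big[\vert W_i^{(k,m)}-\bar{W}_i^{(k,m)}\vert\big].$$

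Every index here satisfies $m\le k$, so Lemma \ref{lemma:GapWithIndepIPWWeightsDisappearsInExpecation} applies to each term and shows it tends to $0$. The bound therefore vanishes as $N\to\infty$, which completes the argument.

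The only step needing care is the deterministic bound on the bar-factors. That bound holds because $\LimitingLabelRulePi{j}$ maps into $[b,1-b]$, and I expect no real obstacle beyond it.
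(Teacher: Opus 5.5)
Your proposal is correct and follows the same route as the paper: decompose $W_i-\bar{W}_i$ over waves, telescope the difference of the products $\prod_{j\le k}W_i^{(k,j)}-\prod_{j\le k}\bar{W}_i^{(k,j)}$, bound every factor by $b^{-1}$, and apply Lemma~\ref{lemma:GapWithIndepIPWWeightsDisappearsInExpecation} to each term. The only differences are that you place the barred factors before the differenced index rather than after it, and that you use the constant $b^{-(K-1)}$ in place of the paper's $b^{-k}$; neither affects the argument.
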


\begin{proof}
    Fix $i \in [N]$. Define $$T_i^{(k,j^*)} \equiv \Big( \prod_{j=1}^{j^*} W_i^{(k,j)} \Big) \Big( \prod_{j=j^*+1}^{k} \bar{W}_i^{(k,j)} \Big) \quad \text{ for each } k \in [K] \text { and } j^* \in \{0 \} \cup [k],$$ where above we use the convention that a product equals $1$ if the lower limit is greater than the upper limit, and where $W_i^{(k,j)}$ and  $\bar{W}_i^{(k,j)}$ are defined at \eqref{eq:IPWsingleSingle} and \eqref{eq:WkIndepVersionDecomp}. Next note for each $k \in [K]$, $W_i^{(k)}= T_i^{(k,k)}$ by \eqref{eq:WkFormulaAlt} while $\bar{W}_i^{(k)}= T_i^{(k,0)}$ by \eqref{eq:WkIndepVersionDecomp}, and hence $$W_i^{(k)}-\bar{W}_i^{(k)} = T_i^{(k,k)} - T_i^{(k,0)} =\sum_{j'=1}^k \big( T_i^{(k,j')} - T_i^{(k,j'-1)} \big),$$
    for each $k \in [K]$. 

    Note that for each $k,j \in [K]$ and $j \leq K$, $\vert W_i^{(k,j)} \vert \leq b^{-1}$ almost surely by Assumption \ref{assump:LabellingRuleOverlap} and $\vert \bar{W}_i^{(k,j)} \vert \leq b^{-1}$ almost surely by the fact that $\LimitingLabelRulePi{k}(\argxProxy) \in [b,1-b]$ for each $\argxProxy \in \xProxySpace$. Combining this with the previous expressions, for each $k \in [K]$, $$\begin{aligned} \e [ \vert W_i^{(k)} -\bar{W}_i^{(k)} \vert ] & \leq \sum_{j'=1}^k \e \big[ \vert T_i^{(k,j')} - T_i^{(k,j'-1)} \vert \big] 
    \\ & = \sum_{j'=1}^k \e \Bigg[ \Bigg| \Big(W_i^{(k,j')}-\bar{W}_i^{(k,j')} \Big) \Big( \prod_{j=1}^{j'-1} W_i^{(k,j)} \Big) \Big( \prod_{j=j'+1}^{k} \bar{W}_i^{(k,j)} \Big) \Bigg| \Bigg]  
    \\ & \leq \sum_{j'=1}^k b^{-k} \e[ \vert W_i^{(k,j')}-\bar{W}_i^{(k,j')} \vert]. \end{aligned}$$  Next recall by \eqref{eq:aggregated_Wk} and \eqref{eq:WkAggregatedIndepVersion} that $W_i=\sum_{k=1}^K c_k W_i^{(k)}$ and $\bar{W}_i=\sum_{k=1}^K c_k \bar{W}_i^{(k)}$ where $c_k \in [0,1]$ for each $k \in [K]$, and hence the previous result implies that $$\begin{aligned} \e[ \vert W_i-\bar{W}_i \vert]  = \e \big[ \big| \sum_{k=1}^K c_k (W_i^{(k)}-\bar{W}_i^{(k)}) \big| \big] 
    \leq \sum_{k=1}^K \e[ \vert W_i^{(k)}-\bar{W}_i^{(k)} \vert] \leq \sum_{k=1}^K \sum_{j'=1}^k b^{-k} \e[ \vert W_i^{(k,j')}-\bar{W}_i^{(k,j')} \vert].
    \end{aligned}$$  Taking the limsup as $N \to \infty$ of each side of the above inequality and applying Lemma \ref{lemma:GapWithIndepIPWWeightsDisappearsInExpecation}, $\lim_{N \to \infty} \e [ \vert W_i -\bar{W}_i \vert ]=0$.
\end{proof}

\begin{proposition}\label{prop:CanReduceToIIDWeights}
    Under \samplingSchemeName{} and Assumptions \ref{assump:IIDUnderlyingData}, \ref{assump:LabellingRuleOverlap}, and \ref{assump:CLTRgularityConditions}, for any measurable $f: \mathbb{R}^q \to \mathbb{R}$ such that $\e[\vert f(\datvecraw) \vert^{2+\eta} ] < \infty$ for some $\eta >0$, $$\frac{1}{\sqrt{N}} \sum_{i=1}^N (W_i-\bar{W}_i) f(\datvecraw_i) \xrightarrow{p} 0 \quad \text{as } N \to \infty.$$ 
\end{proposition}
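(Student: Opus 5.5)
We will show that $S_N \equiv N^{-1/2}\sum_{i=1}^N (W_i-\bar{W}_i) f(\datvecraw_i)$ satisfies $\e[S_N]=0$ and $\var(S_N)\to 0$; Chebyshev's inequality then gives $S_N\xrightarrow{p}0$.

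The mean is zero by Proposition \ref{prop:WeightsEnableUnbiasedMeanEst} together with Lemma \ref{lemma:PropertiesOfIndependentVersionsOfWeights}(II). Expanding $\var(S_N)=N^{-1}\sum_{i,i'}\cov(D_i,D_{i'})$ with $D_i\equiv (W_i-\bar{W}_i)f(\datvecraw_i)$, we split the sum into diagonal and off-diagonal terms.

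\textbf{Off-diagonal terms.} Fix $i\neq i'$ and write
\[
\cov(D_i,D_{i'})=\cov(W_if_i,W_{i'}f_{i'})-\cov(W_if_i,\bar{W}_{i'}f_{i'})-\cov(\bar{W}_if_i,W_{i'}f_{i'})+\cov(\bar{W}_if_i,\bar{W}_{i'}f_{i'}).
\]
The first term vanishes by Proposition \ref{prop:NoCovarianceMultiwaveIPW}. The middle two vanish by Lemma \ref{lemma:CovWeightInepAndStandardIsZero}, whose hypotheses cover both orderings since $i,i'$ are arbitrary distinct indices. The last vanishes because, by Lemma \ref{lemma:PropertiesOfIndependentVersionsOfWeights}(I), the pairs $(\bar{W}_i,\datvecraw_i)$ are i.i.d. Hence all cross terms are exactly zero. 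This is the key structural point: no $N^2$-many small terms need to be controlled.

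\textbf{Diagonal terms.} We have $\var(S_N)\le N^{-1}\sum_i \e[D_i^2]=\e[(W_1-\bar{W}_1)^2 f^2(\datvecraw_1)]$, using exchangeability from Lemma \ref{lemma:PropertiesOfIndependentVersionsOfWeights}(I). It remains to show this last expectation tends to $0$.

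\textbf{Main obstacle.} Only $L^1$ convergence of $W_1-\bar{W}_1$ is available (Corollary \ref{cor:GapWithIndepWeightsDisappearsInExpecation}), whereas $f^2$ is merely integrable. We combine boundedness with Hölder's inequality. Since $|W_1-\bar{W}_1|\le 2b^{-K}$ almost surely (Fact \ref{fact:WeightBound}, and the analogous bound for $\bar{W}_1$), we have $(W_1-\bar{W}_1)^2\le 2b^{-K}|W_1-\bar{W}_1|$. Applying Hölder with exponents $p=(2+\eta)/2$ and $q=(2+\eta)/\eta$ gives
\[
\e\big[(W_1-\bar{W}_1)^2f^2\big]\le \big(\e|f|^{2+\eta}\big)^{2/(2+\eta)}\big(\e|W_1-\bar{W}_1|^{2q}\big)^{1/q}.
\]
Bounding $|W_1-\bar{W}_1|^{2q}\le (2b^{-K})^{2q-1}|W_1-\bar{W}_1|$ reduces the right-hand side to a constant times $\e|W_1-\bar{W}_1|^{1/q}$, which tends to $0$ by Corollary \ref{cor:GapWithIndepWeightsDisappearsInExpecation}. (Alternatively, truncate $f^2$ at level $M$ and use uniform integrability.) Therefore $\var(S_N)\to0$, and Chebyshev's inequality completes the proof.
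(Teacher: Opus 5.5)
Your proposal is correct and follows essentially the same route as the paper: zero mean via Proposition~\ref{prop:WeightsEnableUnbiasedMeanEst} and Lemma~\ref{lemma:PropertiesOfIndependentVersionsOfWeights}(II), exact vanishing of all cross covariances via the same three results, reduction to $\e[(W_1-\bar{W}_1)^2 f^2(\datvecraw_1)]$, and the same H\"older split. The only difference there is cosmetic: your direct bound $|W_1-\bar{W}_1|^{2q}\le (2b^{-K})^{2q-1}|W_1-\bar{W}_1|$ replaces the paper's extra Cauchy--Schwarz step.

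Your parenthetical truncation alternative is worth keeping. The bound $\e[(W_1-\bar{W}_1)^2 f^2(\datvecraw_1)]\le M b^{-K}\e|W_1-\bar{W}_1| + b^{-2K}\e[f^2(\datvecraw)\mathbbm{1}\{f^2(\datvecraw)>M\}]$ has a tail term that is uniform in $N$. It therefore shows that $\e[f^2(\datvecraw)]<\infty$ already suffices, so the $2+\eta$ moment is not actually needed for this step.
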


 \begin{proof}
    First fix a measurable $f: \mathbb{R}^q \to \mathbb{R}$ such that $\e[\vert f(\datvecraw) \vert^{2+\eta} ] < \infty$ for some $\eta>0$ and let $\eta \in (0,\infty)$ be one of the values  such that $\e[\vert f(\datvecraw) \vert^{2+\eta} ] < \infty$. Next define $$C_f \equiv \Big( \e[\vert f(\datvecraw) \vert^{2+\eta} ] \Big)^{\frac{2}{2+\eta}} \in (0,\infty).$$
    
    Now fix $i,i' \in [N]$ such that $i \neq i'$. Note that $\cov\big( W_i f(\datvecraw_i), W_{i'} f(\datvecraw_{i'}) \big)=0$ by Proposition \ref{prop:NoCovarianceMultiwaveIPW} and $\cov\big( \bar{W}_i f(\datvecraw_i), \bar{W}_{i'} f(\datvecraw_{i'}) \big)=0$ since $(\bar{W}_i,\datvecraw_i)_{i=1}^N$ are i.i.d. by Property (I) in Lemma \ref{lemma:PropertiesOfIndependentVersionsOfWeights}. Meanwhile by Lemma \ref{lemma:CovWeightInepAndStandardIsZero}, $\cov\big( W_i f(\datvecraw_i), \bar{W}_{i'} f(\datvecraw_{i'}) \big)=0$ and $ \cov\big( \bar{W}_i f(\datvecraw_i), W_{i'} f(\datvecraw_{i'}) \big)=0$. 
    Combining these results, which hold for any fixed $i,i' \in [N]$ such that $i \neq i'$,
    
    $$\cov\Big( (W_i-\bar{W}_i) f(\datvecraw_i), (W_{i'}-\bar{W}_{i'}) f(\datvecraw_{i'}) \Big) =0 \quad \text{for all } i,i' \in [N] \text{ such that } i \neq i'.$$ Hence $$\begin{aligned}\var \Big( \frac{1}{\sqrt{N}} \sum_{i=1}^N (W_i-\bar{W}_i) f(\datvecraw_i) \Big) & = \frac{1}{N} \sum_{i=1}^N \sum_{i'=1}^N \cov\Big( (W_i-\bar{W}_i) f(\datvecraw_i), (W_{i'}-\bar{W}_{i'}) f(\datvecraw_{i'}) \Big) 
    \\ & = \frac{1}{N} \sum_{i=1}^N \var \big( (W_i-\bar{W}_i) f(\datvecraw_i) \big) 
    \\ & = \var \big( (W_1-\bar{W}_1) f(\datvecraw_1) \big) 
    \\ & \leq \e[ (W_1-\bar{W}_1)^2 f^2(\datvecraw_1)]
    \\ & \leq \Big( \e\Big[ \big| W_1-\bar{W}_1\big|^{ \frac{2 (2+\eta)}{\eta}} \Big]  \Big)^{\frac{\eta}{2+\eta}} \Big( \e\Big[ \big| f(\datvecraw_1) \big|^{\frac{2(2+\eta)}{2}} \Big] \Big)^{\frac{2}{2+\eta}}
    \\ & = C_f \cdot \Big( \e\Big[ \big| W_1-\bar{W}_1\big|^{2+4/\eta} \Big]  \Big)^{\frac{\eta}{2+\eta}}. 
    \end{aligned}$$ Above we use Property (I) in Lemma \ref{lemma:PropertiesOfIndependentVersionsOfWeights} in the third step, Hölder's inequality for the $(2+\eta)/\eta$ and $(2+\eta)/2$  pair in the penultimate step, and we use Assumption \ref{assump:IIDUnderlyingData} and the definition of $C_f$ in the final step. Next note that by the Cauchy-Schwartz inequality, $$\begin{aligned} \e\big[ \big| W_1-\bar{W}_1\big|^{2+4/\eta} \big] = \e\big[ \big|  W_1-\bar{W}_1\big|^{3/2+4/\eta} \cdot \big| W_1-\bar{W}_1\big|^{1/2}   \big] & \leq \sqrt{  \e\big[ \vert W_1-\bar{W}_1\vert^{3+8/\eta} \big] \cdot \e[ \vert W_1-\bar{W}_1\vert]  } 
    \\ & \leq \sqrt{b^{-K(3+8/\eta)} \cdot \e[ \vert W_1-\bar{W}_1\vert]  },
    \end{aligned}$$ where the second step holds because $W_1 \in [0,b^{-K}]$ almost surely by Fact \ref{fact:WeightBound} and $\bar{W}_1 \in [0,b^{-K}]$ almost surely because $\LimitingLabelRulePi{k}(\argxProxy) \in [b,1-b]$ for each $\argxProxy \in \xProxySpace$. Combining two previous inequalities, $$\var \Big( \frac{1}{\sqrt{N}} \sum_{i=1}^N (W_i-\bar{W}_i) f(\datvecraw_i) \Big) \leq C_f \Big( b^{-K(3+8/\eta)} \Big)^{\frac{\eta}{2(2+\eta)}} \cdot \Big( \e[ \vert W_1 -\bar{W}_1 \vert ] \Big)^{\frac{\eta}{2(2+\eta)}}.$$ Since $\lim_{N \to \infty} \e[ \vert W_1 -\bar{W}_1 \vert ]=0$ by Corollary \ref{cor:GapWithIndepWeightsDisappearsInExpecation}, taking the limsup as $N \to \infty$ of each side of the above inequality implies that $$\lim_{N \to \infty} \var \Big( \frac{1}{\sqrt{N}} \sum_{i=1}^N (W_i-\bar{W}_i) f(\datvecraw_i) \Big) =0.$$ Next note by Proposition \ref{prop:WeightsEnableUnbiasedMeanEst} and Property (II) in Lemma \ref{lemma:PropertiesOfIndependentVersionsOfWeights}, $$\e\Big[ \frac{1}{\sqrt{N}} \sum_{i=1}^N (W_i-\bar{W}_i) f(\datvecraw_i) \Big] =\sqrt{N} \e \big[  \frac{1}{N} \sum_{i=1}^N W_i f(\datvecraw_i) \big] - \frac{1}{\sqrt{N}} \sum_{i=1}^N \e[\bar{W}_i f(\datvecraw_i)]=\sqrt{N} \e[f(\datvecraw)]-\sqrt{N}\e[f(\datvecraw)]=0.$$ Thus for any $\epsilon >0$, by Chebyshev's inequality and the previous results, $$\limsup_{N \to \infty} \mathbb{P} \Bigg( \Bigg| \frac{1}{\sqrt{N}} \sum_{i=1}^N (W_i-\bar{W}_i) f(\datvecraw_i)  \Bigg| > \epsilon \Bigg) \leq \limsup_{N \to \infty} \Bigg( \epsilon^{-2} \cdot \var \Big( \frac{1}{\sqrt{N}} \sum_{i=1}^N (W_i-\bar{W}_i) f(\datvecraw_i) \Big) \Bigg) =0.$$ Hence we have shown that $$\frac{1}{\sqrt{N}} \sum_{i=1}^N (W_i-\bar{W}_i) f(\datvecraw_i) \xrightarrow{p} 0 \quad \text{as } N \to \infty,$$ and this argument holds for any measurable $f: \mathbb{R}^q \to \mathbb{R}$ such that $\e[\vert f(\datvecraw) \vert^{2+\eta} ] < \infty$ for some $\eta >0$.

\end{proof}

\subsection{Proof of Theorem \ref{theorem:CLT_PTDEstimator}}
The proof involves three main parts. In the first part, we use Proposition \ref{prop:CanReduceToIIDWeights} above to express the asymptotic linear expansion of $\htPTD$ from Corollary \ref{cor:PTDEstAsymptoticallyLinear} as an asymptotic linear expansion with the i.i.d. weights $(\bar{W}_i)_{i=1}^N$. This alternate asymptotic linear expansion is stated in Equation \eqref{eq:asympLinearExpansionPTDMidProofIndepWights}. The second part of the proof involves deploying the standard multivariate CLT. The third part involves simplifying the asymptotic covariance matrix expression derived in the CLT.

\paragraph{Expressing the asymptotic linear expansion with i.i.d. weights.} By Corollary \ref{cor:PTDEstAsymptoticallyLinear}, $$ \sqrt{N} \big( \htPTD -\thetaTarg \big) = -\frac{1}{\sqrt{N}} \sum_{i=1}^N \Big( W_i  H_{\thetaTarg}^{-1} \dot{l}_{\thetaTarg}(\goodX_i) +(1-W_i) \Omega H_{\gammaTarg}^{-1} \dot{l}_{\gammaTarg}(\proxyX_i) \Big) + o_p(1).$$ Next define $g_1,g_2: \mathbb{R}^p \to \mathbb{R}^d$ to be given by $$g_1(\argx)=H_{\thetaTarg}^{-1} \dot{l}_{\thetaTarg}(\argx) \quad \text{and} \quad  g_2(\argxProxy) = \Omega H_{\gammaTarg}^{-1} \dot{l}_{\gammaTarg}(\argxProxy) \quad \text{ for } \argx \in \xSpace,\argxProxy \in \xProxySpace.$$ Also let $\bar{W}_i$ be defined as in \eqref{eq:IndepAggregatedWeightsMainText} for each $i \in [N]$, and note that we can rewrite the asymptotic linear expansion for $\htPTD$ as $$\begin{aligned}
    \sqrt{N} \big( \htPTD -\thetaTarg \big) & = -\frac{1}{\sqrt{N}} \sum_{i=1}^N \Big( W_i  g_1 (\goodX_i) +(1-W_i) g_2(\proxyX_i) \Big) + o_p(1)
    \\ & = -\frac{1}{\sqrt{N}} \sum_{i=1}^N \Big( \bar{W}_i  g_1 (\goodX_i) +(1-\bar{W}_i) g_2(\proxyX_i) \Big) + o_p(1)
    \\ & \quad - \frac{1}{\sqrt{N}} \sum_{i=1}^N (W_i-\bar{W}_i) g_1(\goodX_i) +\frac{1}{\sqrt{N}} \sum_{i=1}^N (W_i-\bar{W}_i) g_2(\proxyX_i).
\end{aligned}$$ We will next show that the last two terms in the above expression are $o_p(1)$. To do this fix $j \in [d]$. Note that since for each $i \in [N]$, $\goodX_i$ and $\proxyX_i$ are each subsets of the components of $\datvecraw_i$, we can let $f_j^{(1)},f_j^{(2)} : \mathbb{R}^q \to \mathbb{R}$ be the measurable functions such that $f_j^{(1)}(\datvecraw)=[g_1(\goodX)]_j$ and $f_j^{(2)}(\datvecraw)=[g_2(\proxyX)]_j$ for any realization of $\datvecraw=(\xobs,\txmiss,\xmiss)$. Now letting $\eta_*>0$ be the constant from Assumption \ref{assump:CLTRgularityConditions}(iii), and since $t \mapsto \vert t \vert^{2+\eta_*}$ is convex, by Jensen's inequality and Assumption \ref{assump:CLTRgularityConditions}(iii), $$\e \big[\vert \vert \dot{l}_{\thetaTarg}(\goodX) \vert \vert_{1}^{2+\eta_*}  \big] = d^{2+\eta_*} \cdot \e \Big[ \Big(  \sum_{j'=1}^d d^{-1} \cdot \big| [\dot{l}_{\thetaTarg}(\goodX)]_{j'} \big| \Big)^{2+\eta_*} \Big] \leq d^{2+\eta_*} \cdot \e \Big[ \sum_{j'=1}^d d^{-1} \big| [\dot{l}_{\thetaTarg}(\goodX)]_{j'} \big|^{2+\eta_*} \Big] < \infty.$$ Since $H_{\thetaTarg}$ is nonsingular by Assumption \ref{assump:SmoothEnoughForAsymptoticLineariaty}(v), by the definition of $f_j^{(1)}$ and the previous inequality,

$$\e[ \vert f_j^{(1)} (\datvecraw) \vert^{2+\eta_*}]= \e[ \vert e_j^{\tran} H_{\thetaTarg}^{-1} \dot{l}_{\thetaTarg}(\goodX) \vert^{2+\eta_*}] \leq \vert \vert e_j^\tran H_{\thetaTarg}^{-1} \vert \vert_{\infty}^{2+\eta_*} \cdot \e \big[\vert \vert \dot{l}_{\thetaTarg}(\goodX) \vert \vert_{1}^{2+\eta_*}  \big]< \infty.$$ Thus we can apply Proposition \ref{prop:CanReduceToIIDWeights} to get that $$\frac{1}{\sqrt{N}} \sum_{i=1}^N (W_i-\bar{W}_i) [g_1(\goodX_i)]_j = \frac{1}{\sqrt{N}} \sum_{i=1}^N (W_i-\bar{W}_i) f_j^{(1)} (\datvecraw_i)=o_p(1).$$ Since the above result holds for each $j \in [d]$, $$\frac{1}{\sqrt{N}} \sum_{i=1}^N (W_i-\bar{W}_i) g_1(\goodX_i)=o_p(1) \quad \text{and} \quad \frac{1}{\sqrt{N}} \sum_{i=1}^N (W_i-\bar{W}_i) g_2(\proxyX_i)=o_p(1),$$ where the latter claim follows from an analogous argument. Combining previous this result with an earlier expression for $\sqrt{N} \big( \htPTD -\thetaTarg \big)$ we have thus shown that \begin{equation}\label{eq:asympLinearExpansionPTDMidProofIndepWights} \sqrt{N} \big( \htPTD -\thetaTarg \big)=
     -\frac{1}{\sqrt{N}} \sum_{i=1}^N \Big( \bar{W}_i  g_1 (\goodX_i) +(1-\bar{W}_i) g_2(\proxyX_i) \Big) + o_p(1).
\end{equation} 

\paragraph{Applying the multivariate CLT.} Note that by Property (II) in Lemma \ref{lemma:PropertiesOfIndependentVersionsOfWeights} and by Fact \ref{fact:PropertiesOfGradients}, for each $i \in [N]$, $$\e[\bar{W}_i g_1(\goodX_i)+(1-\bar{W}_i) g_2(\proxyX_i)]=\e[g_1(\goodX)]+\e[g_2(\proxyX)]-\e[g_2(\proxyX)]= H_{\thetaTarg}^{-1} \e[  \dot{l}_{\thetaTarg} (\goodX)]=0.$$ In addition, $$\begin{aligned} \e \Big[ \big| \big| \bar{W}_i g_1(\goodX_i)+(1-\bar{W}_i) g_2(\proxyX_i) \big| \big|_2^2 \Big] < \infty \end{aligned}$$ because as $ \vert \bar{W}_i \vert \leq b^{-K} < \infty$ almost surely, and because both $$\e[\vert \vert g_1(\goodX_i) \vert \vert_2^2 ] =\e[\vert \vert H_{\thetaTarg}^{-1} \dot{l}_{\thetaTarg}(\goodX) \vert \vert_2^2 ] < \infty \quad \text{and} \quad \e[\vert \vert g_2(\proxyX_i) \vert \vert_2^2 ] = \e[\vert \vert \Omega H_{\gammaTarg}^{-1} \dot{l}_{\gammaTarg}(\proxyX) \vert \vert_2^2 ] < \infty$$ as consequences of Assumptions \ref{assump:IIDUnderlyingData}, \ref{assump:SmoothEnoughForAsymptoticLineariaty}(v), and  \ref{assump:CLTRgularityConditions}(iii). Moreover, by Property (I) in Lemma \ref{lemma:PropertiesOfIndependentVersionsOfWeights} $(\bar{W}_i, \goodX_i,\proxyX_i)_{i=1}^N$ is a sample of $N$ i.i.d. random vectors. Thus in \eqref{eq:asympLinearExpansionPTDMidProofIndepWights} the summation is over $N$ i.i.d. random vectors with mean $0$ and bounded second moment, so by the multivariate central limit theorem and Slutsky's lemma, $$ \sqrt{N} \big( \htPTD -\thetaTarg \big) =
     -\frac{1}{\sqrt{N}} \sum_{i=1}^N \Big( \bar{W}_i  g_1 (\goodX_i) +(1-\bar{W}_i) g_2(\proxyX_i) \Big) + o_p(1) \xrightarrow{d} \mathcal{N} \big( 0, \Sigma_{\Omega} \big),$$ where
   $$\Sigma_{\Omega} \equiv \var \Big( \bar{W}_1  g_1 (\goodX_1) +(1-\bar{W}_1) g_2(\proxyX_1) \Big).
     $$

\paragraph{Simplifying the asymptotic variance expression.} To complete the proof it remains to check that $\Sigma_{\Omega}$ defined above equals $\Sigma^{\PTDSuperScriptAcr}(\Omega)$ defined in \eqref{eq:AsympVarPTDFormula}. Observe that by Property (III) from Lemma \ref{lemma:PropertiesOfIndependentVersionsOfWeights}, $$\cov \big(\bar{W}_1 \big[g_1(\goodX_1)-g_2(\proxyX_1)]_j,[g_2(\proxyX_1)]_{j'}\big)=\cov \big( \big[g_1(\goodX)-g_2(\proxyX)]_j,[g_2(\proxyX)]_{j'}\big) \quad \text {for each } j,j' \in [d].$$ By this result and Assumption \ref{assump:IIDUnderlyingData}, the formula formula for $\Sigma_{\Omega}$ can thus be simplified as follows

$$\begin{aligned} \Sigma_{\Omega} & = \var \Big( \bar{W}_1  \big( g_1(\goodX_1) - g_2(\proxyX_1) \big) \Big) + \var \big( g_2(\proxyX_1) \big)
\\ & \quad + \cov \Big( \bar{W}_1  \big( g_1(\goodX_1) - g_2(\proxyX_1) \big) , g_2(\proxyX_1) \Big) + \cov \Big( g_2(\proxyX_1), \bar{W}_1  \big( g_1(\goodX_1) - g_2(\proxyX_1) \big) \Big)
\\ & = \var \Big( \bar{W}_1  \big( g_1(\goodX_1) - g_2(\proxyX_1) \big) \Big) + \var \big( g_2(\proxyX) \big) 
\\ & \quad + \cov \big(  g_1(\goodX) - g_2(\proxyX) , g_2(\proxyX) \big)+ \cov \big( g_2(\proxyX),  g_1(\goodX) - g_2(\proxyX)  \big).
\end{aligned}$$ This simplifies to \begin{equation}\label{eq:SigmaPTDIntermediateFormulaInProof}
   \Sigma_{\Omega}=  \var \Big( \bar{W}_1  \big( g_1(\goodX_1) - g_2(\proxyX_1) \big) \Big) + \cov \big( g_1(\goodX),  g_2(\proxyX) \big) + \cov \big( g_2(\proxyX),  g_1(\goodX)  \big) -  \var \big( g_2(\proxyX) \big).
\end{equation}

We next simplify each term in \eqref{eq:SigmaPTDIntermediateFormulaInProof}. To simplify the first term note as a consequence of Fact \ref{fact:PropertiesOfGradients}, $\e[g_1(\goodX)-g_2(\proxyX)]=0$. Hence for each $j,j' \in [d]$, by Property (IV) in Lemma \ref{lemma:PropertiesOfIndependentVersionsOfWeights}

$$\cov \Big( \bar{W}_1 \big[ g_1(\goodX_1) - g_2(\proxyX_1) \big]_j,\bar{W}_1 \big[ g_1(\goodX_1) - g_2(\proxyX_1) \big]_{j'} \Big)= \sum_{k=1}^K c_k^2 \e \Big[ \frac{ \big[ g_1(\goodX) - g_2(\proxyX) \big]_j  \big[ g_1(\goodX) - g_2(\proxyX) \big]_{j'}}{\LimitingLabelRulePi{1:k}(\proxyX)} \Big].$$ Since this holds for each $j,j' \in [d]$, $$\var \Big( \bar{W}_1  \big( g_1(\goodX_1) - g_2(\proxyX_1) \big) \Big)= \sum_{k=1}^K c_k^2 \e \Bigg[ \frac{ [ g_1(\goodX) - g_2(\proxyX) ]  [ g_1(\goodX) - g_2(\proxyX) ]^\tran}{\LimitingLabelRulePi{1:k}(\proxyX)} \Bigg].$$ Recalling that $g_1(\goodX)=H_{\thetaTarg}^{-1} \dot{l}_{\thetaTarg}(\goodX)$, $g_2(\proxyX) = \Omega H_{\gammaTarg}^{-1} \dot{l}_{\gammaTarg}(\proxyX)$, and recalling from definition \eqref{eq:GradientAsympCovFormulae}, 

$$    \Sigma_{11} \equiv  \sum_{k=1}^K c_k^2 \e \Big[ \frac{ \dot{l}_{\thetaTarg}(\goodX) [  \dot{l}_{\thetaTarg}(\goodX) ]^\tran}{\LimitingLabelRulePi{1:k}(\proxyX)} \Big], \Sigma_{12} \equiv  \sum_{k=1}^K c_k^2 \e \Big[ \frac{ \dot{l}_{\thetaTarg}(\goodX) [  \dot{l}_{\gammaTarg}(\proxyX) ]^\tran}{\LimitingLabelRulePi{1:k}(\proxyX)} \Big],  
    \Sigma_{22} \equiv \sum_{k=1}^K c_k^2 \e \Big[ \frac{ \dot{l}_{\gammaTarg}(\proxyX) [  \dot{l}_{\gammaTarg}(\proxyX) ]^\tran}{\LimitingLabelRulePi{1:k}(\proxyX)} \Big],$$ we get the following simplification $$\var \Big( \bar{W}_1  \big( g_1(\goodX_1) - g_2(\proxyX_1) \big) \Big) = H_{\thetaTarg}^{-1} \Sigma_{11} H_{\thetaTarg}^{-1} -H_{\thetaTarg}^{-1} \Sigma_{12} H_{\gammaTarg}^{-1} \Omega^\tran - \Omega H_{\gammaTarg}^{-1} \Sigma_{12}^\tran H_{\thetaTarg}^{-1} + \Omega H_{\gammaTarg}^{-1} \Sigma_{22} H_{\gammaTarg}^{-1} \Omega^\tran.$$

    Also since by Fact \ref{fact:PropertiesOfGradients}, $\e[g_1(\goodX)]=0$ and $\e[g_2(\proxyX)]=0$, and recalling from definition \eqref{eq:GradientAsympCovFormulae} that $\Sigma_{13}=\e\big[\dot{l}_{\thetaTarg}(\goodX) [\dot{l}_{\gammaTarg}(\proxyX)]^{\tran} \big]$ and $\Sigma_{33}=\e\big[\dot{l}_{\gammaTarg}(\proxyX) [\dot{l}_{\gammaTarg}(\proxyX)]^{\tran} \big]$, $$\cov \big( g_1(\goodX),g_2(\proxyX) \big)= \e \big[ g_1(\goodX) [g_2(\proxyX)]^\tran \big] = H_{\thetaTarg}^{-1} \Sigma_{13} H_{\gammaTarg}^{-1} \Omega^\tran,$$
    $$\cov \big( g_2(\proxyX), g_1(\goodX) \big)= \e \big[ g_2(\proxyX)  [g_1(\goodX)]^\tran \big] = \Omega H_{\gammaTarg}^{-1} \Sigma_{13}^\tran H_{\thetaTarg}^{-1}, \quad \text{and} $$ 
    $$\var \big( g_2 (\proxyX) \big) = \e \big[ g_2(\proxyX)  [g_2(\proxyX)]^\tran \big]= \Omega H_{\gammaTarg}^{-1} \Sigma_{33} H_{\gammaTarg}^{-1} \Omega^\tran.$$

Plugging the above expressions into \eqref{eq:SigmaPTDIntermediateFormulaInProof} and rearranging terms, 

$$\begin{aligned}
    \Sigma_{\Omega} 
     & = H_{\thetaTarg}^{-1} \Sigma_{11} H_{\thetaTarg}^{-1} -H_{\thetaTarg}^{-1} \Sigma_{12} H_{\gammaTarg}^{-1} \Omega^\tran - \Omega H_{\gammaTarg}^{-1} \Sigma_{12}^\tran H_{\thetaTarg}^{-1} + \Omega H_{\gammaTarg}^{-1} \Sigma_{22} H_{\gammaTarg}^{-1} \Omega^\tran
    \\ & \quad + H_{\thetaTarg}^{-1} \Sigma_{13} H_{\gammaTarg}^{-1} \Omega^\tran + \Omega H_{\gammaTarg}^{-1} \Sigma_{13}^\tran H_{\thetaTarg}^{-1} - \Omega H_{\gammaTarg}^{-1} \Sigma_{33} H_{\gammaTarg}^{-1} \Omega^\tran
    \\ & = H_{\thetaTarg}^{-1} \Sigma_{11} H_{\thetaTarg}^{-1} + \Omega H_{\gammaTarg}^{-1} (\Sigma_{22}-\Sigma_{33}) H_{\gammaTarg}^{-1} \Omega^\tran
    \\ & \quad + H_{\thetaTarg}^{-1} (\Sigma_{13}-\Sigma_{12}) H_{\gammaTarg}^{-1} \Omega^\tran + \big( H_{\thetaTarg}^{-1} (\Sigma_{13}-\Sigma_{12}) H_{\gammaTarg}^{-1} \Omega^\tran \big)^{\tran}
    \\ & = \Sigma^{\PTDSuperScriptAcr}(\Omega),
\end{aligned}$$ where
$\Sigma^{\PTDSuperScriptAcr}(\Omega)$ defined at \eqref{eq:AsympVarPTDFormula}. Since $\Sigma_{\Omega}=\Sigma^{\PTDSuperScriptAcr}(\Omega)$, and we showed earlier that $\sqrt{N}(\htPTD-\thetaTarg) \xrightarrow{d} \mathcal{N} (0, \Sigma_{\Omega})$ it follows that $\sqrt{N}(\htPTD-\thetaTarg) \xrightarrow{d} \mathcal{N} \big(0, \Sigma^{\PTDSuperScriptAcr}(\Omega) \big)$.

\section{Consistent covariance estimation and valid confidence intervals}\label{sec:AppendixProovingConsistentCovarianceAndValidCIs}

In this appendix, we prove that under Assumptions \ref{assump:IIDUnderlyingData}--\ref{assump:SmoothEnoughForConsitentVarEst} the covariance matrix component estimators defined at \eqref{eq:HessianAndSigmaEstimators} satisfy, $\hat{\Sigma}_{11} \xrightarrow{p} \Sigma_{11}$, $\hat{\Sigma}_{12} \xrightarrow{p} \Sigma_{12}$, $\hat{\Sigma}_{22} \xrightarrow{p} \Sigma_{22}$, $\hat{\Sigma}_{13} \xrightarrow{p} \Sigma_{13}$, $\hat{\Sigma}_{33} \xrightarrow{p} \Sigma_{33}$, $\hat{H}_{\thetaTarg} \xrightarrow{p} H_{\thetaTarg}$, and $\hat{H}_{\gammaTarg} \xrightarrow{p} H_{\gammaTarg}$ as $N \to \infty$. These consistency results are then used alongside the asymptotic normality result for $\htPTD$ (Theorem \ref{theorem:CLT_PTDEstimator}) to prove that the corresponding confidence intervals are asymptotically valid (Proposition \ref{prop:AsymptoticallyValidCIs}). We start by proving two lemmas that help streamline the proofs of consistency for the seven matrices $\hat{\Sigma}_{11}, \hat{\Sigma}_{12}, \hat{\Sigma}_{22}, \hat{\Sigma}_{13}, \hat{\Sigma}_{33}, \hat{H}_{\thetaTarg}$, and $ \hat{H}_{\gammaTarg}$.

\subsection{Lemmas for proving consistency of covariance estimators}

\begin{lemma}\label{lemma:DCT_with_SubsequenceTrick}
    Under \samplingSchemeName{} and Assumptions \ref{assump:IIDUnderlyingData}, \ref{assump:LabellingRuleOverlap}, \ref{assump:SmoothEnoughForAsymptoticLineariaty}, \ref{assump:CLTRgularityConditions}, and \ref{assump:SmoothEnoughForConsitentVarEst}, for each $j,j' \in [d]$,
    $$ \lim_{N \to \infty} \e \big[ \big| [\dot{l}_{\htc}(\goodX_1)-\dot{l}_{\thetaTarg}(\goodX_1)]_j  \big|^2 \big] = 0 \quad \text{and} \quad \lim_{N \to \infty} \e \big[ \big| [\ddot{l}_{\htc}(\goodX_1)-\ddot{l}_{\thetaTarg}(\goodX_1)]_{jj'}  \big| \big]=0,$$ and similarly,   $$ \lim_{N \to \infty} \e \big[ \big| [\dot{l}_{\hga}(\proxyX_1)-\dot{l}_{\gammaTarg}(\proxyX_1)]_j  \big|^2 \big] = 0 \quad \text{and} \quad \lim_{N \to \infty} \e \big[ \big| [\ddot{l}_{\hga}(\proxyX_1)-\ddot{l}_{\gammaTarg}(\proxyX_1)]_{jj'}  \big| \big]=0.$$
\end{lemma}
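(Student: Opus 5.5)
The plan is a subsequence argument that turns convergence in probability of $\htc$ and $\hga$ into almost-sure convergence, followed by the dominated convergence theorem, and then a transfer back to the full sequence. I write it for $A_N \equiv [\dot{l}_{\htc}(\goodX_1)-\dot{l}_{\thetaTarg}(\goodX_1)]_j$ and $B_N\equiv[\ddot{l}_{\htc}(\goodX_1)-\ddot{l}_{\thetaTarg}(\goodX_1)]_{jj'}$. The proxy versions with $\hga,\proxyX_1$ are identical, using $\tilde{L}_{jj'}$, $\tilde{\mathcal{B}}_{jj'}$, and the fact that $\hga$ is an i.i.d. M-estimator.

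\textbf{Step 1: reduction to subsequences.} A bounded real sequence $a_N$ converges to $0$ if every subsequence has a further subsequence along which $a_N\to0$. So fix a subsequence $(N_m)$. By Proposition~\ref{prop:ComponentEstimatorConsistency}, $\htc\xrightarrow{p}\thetaTarg$. Hence $(N_m)$ has a further subsequence $(N_{m_\ell})$ along which $\htc\to\thetaTarg$ almost surely.

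There is a subtlety: $\htc$ is defined on a different probability space for each $N$. I handle it by placing everything on one space via Skorokhod representation applied to the pair $(\htc,\goodX_1)$. Only the joint law of this pair enters the expectations, and $\goodX_1\sim\goodX$ for every $N$, so this is harmless.

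\textbf{Step 2: pointwise convergence.} By Assumption~\ref{assump:SmoothEnoughForConsitentVarEst}(i), $\theta\mapsto l_\theta(\goodX)$ is continuously twice differentiable at $\thetaTarg$ almost surely. The Dini derivatives then coincide with the classical gradient and Hessian in a neighborhood of $\thetaTarg$ and are continuous at $\thetaTarg$. Therefore $A_{N_{m_\ell}}\to0$ and $B_{N_{m_\ell}}\to0$ almost surely.

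\textbf{Step 3: domination.} Let $E_N=\{\htc\in\mathcal{B}\}$, where $\mathcal{B}\subseteq\bigcap_{j,j'}\mathcal{B}_{jj'}$ is a convex ball around $\thetaTarg$.
\begin{itemize}
\item On $E_N$, Assumption~\ref{assump:SmoothEnoughForConsitentVarEst}(ii) gives $|B_N|\le 2L_{jj'}(\goodX_1)$, which is integrable.
\item On $E_N$, the mean value theorem along the segment from $\thetaTarg$ to $\htc$ gives $|A_N|\le \|\htc-\thetaTarg\|_1\max_{j'}L_{jj'}(\goodX_1)$.
\end{itemize}
The second bound is square-integrable only if $L_{jj'}$ is. To avoid requiring this, I instead bound $|A_N|\le |[\dot{l}_{\htc}(\goodX_1)]_j|+|[\dot{l}_{\thetaTarg}(\goodX_1)]_j|$. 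The second term is in $L^2$ by Assumption~\ref{assump:SmoothEnoughForConsitentVarEst}(iii). For the first term, I use convexity (Assumption~\ref{assump:SmoothEnoughForAsymptoticLineariaty}(i)) and the Lipschitz envelope $M$ of Assumption~\ref{assump:SmoothEnoughForAsymptoticLineariaty}(iv). For $\htc$ inside $\mathcal{L}_{\thetaTarg}$, the right Dini partial of a convex function is bounded by a difference quotient over a small step $h$ that stays inside $\mathcal{L}_{\thetaTarg}$. This gives $|[\dot{l}_{\htc}(\goodX_1)]_j|\le M(\goodX_1)$, which is in $L^2$. Thus on $E_N$ both $A_N^2$ and $|B_N|$ are dominated by fixed integrable functions of $\goodX_1$. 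Dominated convergence along the subsequence then gives $\e[A_N^2\mathbf 1_{E_N}]\to0$ and $\e[|B_N|\mathbf 1_{E_N}]\to0$.

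\textbf{Step 4: the complement $E_N^c$ — the main obstacle.} We know $\mathbb{P}(E_N^c)\to0$. What is missing is uniform integrability of $A_N^2\mathbf 1_{E_N^c}$ and $|B_N|\mathbf 1_{E_N^c}$. No moment bound on the derivatives far from $\thetaTarg$ has been assumed.

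My first attempt exploits the special structure of $\htc$ together with convexity:
\begin{itemize}
\item Along each coordinate line, the right Dini derivative of a convex function is nondecreasing.
\item Hence on $E_N^c$, $[\dot{l}_{\htc}(\goodX_1)]_j$ is sandwiched by difference quotients of $l$ between $\htc$ and points on the boundary of $\mathcal{L}_{\thetaTarg}$.
\item Then use the first-order optimality of $\htc$ for $\bfs_N l_\theta$ to control the size of these quotients weighted by $W_1$.
\end{itemize}
If this fails, the fallback is a Vitali-type argument. By Corollary~\ref{cor:SameDistWithEstimators}, $\e[A_N^2]=\e\big[N^{-1}\sum_i A_{N,i}^2\big]$, where $A_{N,i}$ is $A_N$ with $\goodX_1$ replaced by $\goodX_i$. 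I would then bound this empirical average by the local envelope on $E_N$ and separately show that the contribution from $E_N^c$ vanishes in expectation.

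Completing Step 4 is, in my view, the crux of the lemma. Once it is done, combining it with Step 3 gives convergence along every sub-subsequence, and Step 1 then yields all four stated limits.
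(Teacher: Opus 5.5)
Your Steps 1--3 follow the same route as the paper's proof: a further subsequence with almost-sure convergence, continuity of $\dot l_\theta,\ddot l_\theta$ at $\thetaTarg$, and domination by $2L_{jj'}(\goodX_1)$ and $4M^2(\goodX_1)$. With the indicator $\mathbf 1_{E_N}$ in place, they correctly give $\e[A_N^2\mathbf 1_{E_N}]\to0$ and $\e[|B_N|\mathbf 1_{E_N}]\to0$.

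One small fix: drop the Skorokhod device. Applied to the pair $(\htc,\goodX_1)$, it lets the second coordinate move with $N$, and then continuity in $\theta$ for a fixed $x$ no longer applies. Instead, realize every $N$ from one i.i.d. sequence of $\datvecraw_i$ and $U_i^{(k)}$, so that $\goodX_1$ does not depend on $N$.

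The genuine gap is Step 4, and no argument can close it, because under the stated assumptions the lemma is false. Here is a counterexample.
\begin{itemize}
\item Take $d=1$, $K=1$ and $\pi^{(1)}\equiv 1/2$, so $W_i=2I_i$.
\item Let $\xmiss=(Y,Z)$ with $Y\sim\mathcal N(0,1)$, and $Z\ge 0$ independent of everything else with $\e[Z]<\infty=\e[Z^2]$. Let the proxy be $\txmiss=(Y,0)$.
\item Use the loss $l_\theta(y,z)=(\theta-y)^2/2+\phi(\theta)z$ with $\phi(\theta)=((|\theta|-1)_+)^3$.
\end{itemize}
The loss is convex on the supports and $\thetaTarg=\gammaTarg=0$. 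On $(-1,1)$ it is the squared loss, so Assumptions \ref{assump:IIDUnderlyingData}--\ref{assump:SmoothEnoughForConsitentVarEst} all hold, with $L_{11}\equiv1$ and $M(x)=2+|y|$.

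Let $G=\{I_1=0,\ \textstyle\sum_i I_i\ge 1,\ Y_i>3\text{ whenever }I_i=1\}$, so $\mathbb P(G)>0$. On $G$, the weighted objective has derivative proportional to $\sum_{I_i=1}(1-Y_i)<0$ at $\theta=1$. Hence $\htc>1$ and $\phi'(\htc)>0$. Moreover, $\mathbf 1_G\,\phi'(\htc)$ is a function of $(U_i)_{i\le N}$ and $(Y_i,Z_i)_{i\ge2}$, so it is independent of $Z_1$. Since $A_N=\htc+\phi'(\htc)Z_1$, with both terms nonnegative on $G$,
\[
\e[A_N^2]\;\ge\;\e\big[\mathbf 1_G\,\phi'(\htc)^2\big]\,\e[Z^2]\;=\;\infty\qquad\text{for every }N.
\]
If one allows $\e[Z]=\infty$, the Hessian claim fails the same way, since $B_N=\phi''(\htc)Z_1$.

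This example also shows why your two routes cannot work. On $\{I_1=0\}$, unit $1$ does not enter the first-order condition for $\htc$. And nothing in the assumptions controls $\dot l_\theta$ or $\ddot l_\theta$ away from a neighborhood of $\thetaTarg$, so the exchangeability rewrite has no envelope to use on $E_N^c$.

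The paper does not fill this gap either. It asserts a deterministic $r_0$ such that $\htc_{N_{m_r}}\in\mathcal L_{\thetaTarg}\cap\mathcal B_{jj'}$ almost surely for all $r>r_0$. That would require $\mathbb P(E_N^c)=0$ for large $N$, which fails in the example above. Almost-sure convergence only yields a random $r_0(\omega)$, and dominated convergence with domination only from a random index onward is false: $n\mathbf 1_{(0,1/n]}$ on $(0,1]$ is eventually $0$ everywhere yet has integral $1$.

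What is provable is exactly your truncated Step 3, together with its $\hga$ analogue, and that is enough for every downstream use. For instance, in Proposition \ref{prop:ConsistencyOfHessians},
\[
\mathbb P\Big(\Big|\frac1N\sum_{i=1}^N W_i\big[\ddot l_{\htc}(\goodX_i)-\ddot l_{\thetaTarg}(\goodX_i)\big]_{jj'}\Big|>\epsilon\Big)\;\le\;\mathbb P(E_N^c)+\frac{\e\big[|B_N|\mathbf 1_{E_N}\big]}{\epsilon\, b^{K}}.
\]
This uses Corollary \ref{cor:SameDistWithEstimators}, which applies because $E_N$ depends on $\htc$ alone. Lemma \ref{lemma:AsymptoticallyIgnoreEstimationErrorInAverages} is repaired in the same way by working on $E_N\cap\{\hga\in\tilde{\mathcal B}\}$.

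So you should either restate the lemma in this truncated form, or keep the untruncated statement and strengthen Assumption \ref{assump:SmoothEnoughForConsitentVarEst}(ii) and the Lipschitz envelope in Assumption \ref{assump:SmoothEnoughForAsymptoticLineariaty}(iv) to hold on all of $\Theta$.
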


\begin{proof}
    By Proposition \ref{prop:ComponentEstimatorConsistency}, $\htc \xrightarrow{p} \thetaTarg$ as $N \to \infty$. Throughout this proof we will use the notation $\htc_N=\htc$ to emphasize the dependence of the estimator $\htc$ on $N$. We will also fix $j,j' \in [d]$ throughout the proof.
    
    Fix $(N_m)_{m=1}^{\infty}$ to be any increasing subsequence of the natural numbers. Since $\htc_N \xrightarrow{p} \thetaTarg$ as $N \to \infty$, there must exist a further subsequence $(N_{m_r})_{r=1}^{\infty}$ satisfying $\{N_{m_r}\}_{r=1}^{\infty} \subset \{N_m\}_{m=1}^{\infty}$ and $m_1 < m_2 < \dots$ such that $\htc_{N_{m_r}} \xrightarrow{a.s.} \thetaTarg$ as $r \to \infty$. Fixing such a subsequence $(N_{m_r})_{r=1}^{\infty}$ and noting that with probability 1, $\theta \mapsto \dot{l}_{\theta}(\goodX_1)$ and $\theta \mapsto \ddot{l}_{\theta}(\goodX_1)$ are both continuous at $\theta=\thetaTarg$ (by Assumption \ref{assump:SmoothEnoughForConsitentVarEst}(i)), by the composite limit theorem,  $$\lim_{r \to \infty} \big| [\dot{l}_{\htc_{N_{m_r}}} (\goodX_1) - \dot{l}_{\thetaTarg}(\goodX_1) ]_j \big|^2=0 \quad \text{and} \quad \lim_{r \to \infty} \big| [\ddot{l}_{\htc_{N_{m_r}}} (\goodX_1) - \ddot{l}_{\thetaTarg}(\goodX_1) ]_{jj'} \big|=0 \quad  \text{with probability 1}.$$  It follows that $ \big| [\dot{l}_{\htc_{N_{m_r}}} (\goodX_1) - \dot{l}_{\thetaTarg}(\goodX_1) ]_j \big|^2 \xrightarrow{a.s.} 0$ and $\big| [\ddot{l}_{\htc_{N_{m_r}}} (\goodX_1) - \ddot{l}_{\thetaTarg}(\goodX_1) ]_{jj'} \big| \xrightarrow{a.s.} 0$ as $r \to \infty$.

    Since $\htc_{N_{m_r}} \xrightarrow{a.s.} \thetaTarg$, there exists an $r_0$ such that for all $r>r_0$, $\htc_{N_{m_r}} \in \mathcal{L}_{\thetaTarg} \cap \mathcal{B}_{jj'}$ almost surely, where $\mathcal{L}_{\thetaTarg}$ is the neighborhood of $\thetaTarg$ for the Local-Lipschitz condition in Assumption \ref{assump:SmoothEnoughForAsymptoticLineariaty}(iv) and $\mathcal{B}_{jj'}$ is the neighborhood of $\thetaTarg$ from Assumption \ref{assump:SmoothEnoughForConsitentVarEst}(ii) in which $[\ddot{l}_{\theta}(\goodX)]_{jj'}$ is bounded by an integrable function. For all $r>r_0$ since $\htc_{N_{m_r}},\thetaTarg \in  \mathcal{B}_{jj'}$ almost surely, by Assumption \ref{assump:SmoothEnoughForConsitentVarEst}(ii), almost surely $$ \big| [\ddot{l}_{\htc_{N_{m_r}}} (\goodX_1) - \ddot{l}_{\thetaTarg}(\goodX_1) ]_{jj'} \big| \leq 2 L_{jj'} (\goodX_1) \quad \text{for }  r> r_0 \quad \text{ where} \quad \e[L_{jj'}(\goodX_1)] < \infty.$$ Combining this with the earlier result that $\big| [\ddot{l}_{\htc_{N_{m_r}}} (\goodX_1) - \ddot{l}_{\thetaTarg}(\goodX_1) ]_{jj'} \big| \xrightarrow{a.s.} 0$ as $r \to \infty$, by the dominated convergence theorem, $$\lim_{r \to \infty} \e \big[ \big| [\ddot{l}_{\htc_{N_{m_r}}} (\goodX_1) - \ddot{l}_{\thetaTarg}(\goodX_1) ]_{jj'} \big| \big] =0.$$ Next note that for $\theta \in \mathcal{L}_{\thetaTarg}$ and $\argx \in \xSpace$, $\vert [\dot{l}_{\theta}(\argx)]_j \vert \leq M(\argx)$ as a consequence of Assumption \ref{assump:SmoothEnoughForAsymptoticLineariaty}(iv) (for a proof in the case where $\theta=\thetaTarg$, see Fact \ref{fact:PropertiesOfGradients}). Thus for all $r>r_0$ since $\htc_{N_{m_r}},\thetaTarg \in  \mathcal{L}_{\thetaTarg}$ almost surely, almost surely, $$ \big| [\dot{l}_{\htc_{N_{m_r}}} (\goodX_1) - \dot{l}_{\thetaTarg}(\goodX_1) ]_{j} \big|^2 \leq 4 M^2(\goodX_1) \quad \text{for }  r> r_0, \quad \text{ where } \quad \e[M^2(\goodX_1)] < \infty$$ by Assumption \ref{assump:SmoothEnoughForAsymptoticLineariaty}(iv). Combining this with the earlier result that $\big| [\dot{l}_{\htc_{N_{m_r}}} (\goodX_1) - \dot{l}_{\thetaTarg}(\goodX_1) ]_j \big|^2 \xrightarrow{a.s.} 0$ as $r \to \infty$, by the dominated convergence theorem $$\lim_{r \to \infty} \e \big[ \big| [\dot{l}_{\htc_{N_{m_r}}} (\goodX_1) - \dot{l}_{\thetaTarg}(\goodX_1) ]_j \big|^2 \big]=0.$$

   Thus we have shown that for any subsequence $(N_m)_{m=1}^{\infty}$, there exists a further subsequence  $(N_{m_r})_{r=1}^{\infty}$ such that $$\lim_{r \to \infty} \e \big[ \big| [\dot{l}_{\htc_{N_{m_r}}} (\goodX_1) - \dot{l}_{\thetaTarg}(\goodX_1) ]_j \big|^2 \big]=0 \quad \text{and} \quad \lim_{r \to \infty} \e \big[ \big| [\ddot{l}_{\htc_{N_{m_r}}} (\goodX_1) - \ddot{l}_{\thetaTarg}(\goodX_1) ]_{jj'} \big| \big] =0.$$ It follows by way of contradiction that $$\lim_{N \to \infty} \e \big[ \big| [\dot{l}_{\htc_{N}} (\goodX_1) - \dot{l}_{\thetaTarg}(\goodX_1) ]_j \big|^2 \big]=0 \quad \text{and} \quad \lim_{N \to \infty} \e \big[ \big| [\ddot{l}_{\htc_{N}} (\goodX_1) - \ddot{l}_{\thetaTarg}(\goodX_1) ]_{jj'} \big| \big] =0.$$ The proofs that $\lim_{N \to \infty} \e \big[ \big| [\dot{l}_{\hga}(\proxyX_1)-\dot{l}_{\gammaTarg}(\proxyX_1)]_j  \big|^2 \big] = 0$ and $\lim_{N \to \infty} \e \big[ \big| [\ddot{l}_{\hga}(\proxyX_1)-\ddot{l}_{\gammaTarg}(\proxyX_1)]_{jj'}  \big| \big]=0$ follow from analogous arguments.
\end{proof}

As a consequence of the above lemma, we can derive the following result which will be useful in proving consistency of $\hat{\Sigma}_{11}$, $\hat{\Sigma}_{12}$, $\hat{\Sigma}_{22}$, $\hat{\Sigma}_{13}$, and $\hat{\Sigma}_{33}$. Heuristically the following lemma states that asymptotically the errors in $\htc$ and $\hga$ can be ignored when taking averages.

\begin{lemma}\label{lemma:AsymptoticallyIgnoreEstimationErrorInAverages}
     Let $\hat{\zeta}_1 \equiv \htc$ and $\hat{\zeta}_2 \equiv \hga$, and let $\zeta_1 \equiv \thetaTarg$ and $\zeta_2 \equiv \gammaTarg$. Next, as shorthand notation, for each $i \in \mathbb{Z}_+$, define $Y_i^{(1)} \equiv \goodX_i$ and $Y_i^{(2)} \equiv \proxyX_i$.  Under \samplingSchemeName{} and Assumptions \ref{assump:IIDUnderlyingData}, \ref{assump:LabellingRuleOverlap}, \ref{assump:SmoothEnoughForAsymptoticLineariaty}, \ref{assump:CLTRgularityConditions}, and \ref{assump:SmoothEnoughForConsitentVarEst}, for any $s,s' \in \{1,2\}$ and $r \in \{0,1,2\}$, $$\frac{1}{N} \sum_{i=1}^N W_i^r \dot{l}_{\hat{\zeta}_s} ( Y_i^{(s)} ) \big[ \dot{l}_{\hat{\zeta}_{s'}} (Y_i^{(s')}) \big]^\tran = \frac{1}{N} \sum_{i=1}^N W_i^r \dot{l}_{\zeta_s}  (Y_i^{(s)})  \big[ \dot{l}_{\zeta_{s'}} ( Y_i^{(s')} ) \big]^\tran +o_p(1).$$
\end{lemma}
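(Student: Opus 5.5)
The plan is to work entrywise and control the difference in $L^1$, using Markov's inequality. Fix $s,s'\in\{1,2\}$, $r\in\{0,1,2\}$ and entries $j,j'\in[d]$. Abbreviate $a_i\equiv[\dot l_{\hat\zeta_s}(Y_i^{(s)})]_j$, $a_i^0\equiv[\dot l_{\zeta_s}(Y_i^{(s)})]_j$, $b_i\equiv[\dot l_{\hat\zeta_{s'}}(Y_i^{(s')})]_{j'}$ and $b_i^0\equiv[\dot l_{\zeta_{s'}}(Y_i^{(s')})]_{j'}$. The $(j,j')$ entry of the difference of the two sides is $N^{-1}\sum_i W_i^r(a_ib_i-a_i^0b_i^0)$. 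We decompose
\[
a_ib_i-a_i^0b_i^0=(a_i-a_i^0)(b_i-b_i^0)+(a_i-a_i^0)b_i^0+a_i^0(b_i-b_i^0).
\]
By Fact~\ref{fact:WeightBound}, $0\le W_i^r\le b^{-2K}$, so the weights can be bounded by a constant throughout. It therefore suffices to show that $\e\big[N^{-1}\sum_i|a_ib_i-a_i^0b_i^0|\big]\to0$, since $L^1$ convergence to zero implies convergence in probability.

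Next I reduce to a single index. By Corollary~\ref{cor:SameDistWithEstimators}, each tuple $(\htc,\hga,\goodX_i,\proxyX_i)$ has the same law as $(\htc,\hga,\goodX_1,\proxyX_1)$. Hence the expectation of the average equals $\e|a_1b_1-a_1^0b_1^0|$, and the problem becomes one about the first observation only. I then apply Cauchy--Schwarz to each of the three terms:
\[
\e|a_1b_1-a_1^0b_1^0|\le \|a_1-a_1^0\|_{L^2}\|b_1-b_1^0\|_{L^2}+\|a_1-a_1^0\|_{L^2}\|b_1^0\|_{L^2}+\|a_1^0\|_{L^2}\|b_1-b_1^0\|_{L^2}.
\]

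The two ingredients come from earlier results. First, Lemma~\ref{lemma:DCT_with_SubsequenceTrick} gives $\|a_1-a_1^0\|_{L^2}\to0$ and $\|b_1-b_1^0\|_{L^2}\to0$, in both the $\htc/\goodX$ case and the $\hga/\proxyX$ case. Second, $\|a_1^0\|_{L^2}$ and $\|b_1^0\|_{L^2}$ are fixed and finite, because Fact~\ref{fact:PropertiesOfGradients} bounds $|a_1^0|\le M(\goodX_1)$ (or $\tilde M(\proxyX_1)$), and these have finite second moments by Assumption~\ref{assump:SmoothEnoughForAsymptoticLineariaty}(iv). Consequently all three terms on the right-hand side vanish as $N\to\infty$.

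I expect the main subtlety to be the reduction to index $1$. The estimators $\htc$ and $\hga$ depend on every observation, so the summands are neither independent nor separable. Exchangeability (Corollary~\ref{cor:SameDistWithEstimators}) handles this, because only first moments of each summand are needed and no variance computation is required. The weights are harmless here since they are uniformly bounded. Finally, entrywise convergence in probability for every $(j,j')$ yields the stated matrix-level result.
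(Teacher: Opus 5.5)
Your proposal is correct and follows the paper's proof essentially step for step. It uses the same three-term decomposition, the bound $W_i^r\le b^{-2K}$ from Fact~\ref{fact:WeightBound}, Markov's inequality, the reduction to index $1$ via Corollary~\ref{cor:SameDistWithEstimators}, and Cauchy--Schwarz together with Lemma~\ref{lemma:DCT_with_SubsequenceTrick} and Fact~\ref{fact:PropertiesOfGradients}; the only cosmetic difference is that you bound all three terms at once, whereas the paper writes out the cross term $\bar{T}^{(1)}$ and says the other two follow similarly.
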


\begin{proof}

Fix $s,s' \in \{1,2\}$ and $r \in \{0,1,2\}$. Next define $$\bar{T}_{\hat{\zeta}} \equiv \frac{1}{N} \sum_{i=1}^N W_i^r \dot{l}_{\hat{\zeta}_s} ( Y_i^{(s)} ) \big[ \dot{l}_{\hat{\zeta}_{s'}} (Y_i^{(s')}) \big]^\tran \quad \text{and} \quad \bar{T}_{\zeta} \equiv  \frac{1}{N} \sum_{i=1}^N W_i^r \dot{l}_{\zeta_s}  (Y_i^{(s)})  \big[ \dot{l}_{\zeta_{s'}} ( Y_i^{(s')} ) \big]^\tran,$$ and we wish to show that $\bar{T}_{\hat{\zeta}}=\bar{T}_{\zeta}+o_p(1)$. To do this observe that $$\bar{T}_{\hat{\zeta}}=\bar{T}_{\zeta}+\bar{T}^{(1)}+\bar{T}^{(2)}+\bar{T}^{(3)},$$ where 

\begin{equation*}
\begin{split}
   \bar{T}^{(1)} & \equiv \frac{1}{N} \sum_{i=1}^N W_i^r \big[ \dot{l}_{\hat{\zeta}_s} (Y_i^{(s)} )-\dot{l}_{\zeta_s} (Y_i^{(s)} ) \big] \big[ \dot{l}_{\zeta_{s'}} ( Y_i^{(s')}) \big]^\tran,  \\  
   \bar{T}^{(2)} & \equiv \frac{1}{N} \sum_{i=1}^N W_i^r  \big[ \dot{l}_{\zeta_s} (Y_i^{(s)} ) \big] \big[ \dot{l}_{\hat{\zeta}_{s'}} ( Y_i^{(s')}) -\dot{l}_{\zeta_{s'}} ( Y_i^{(s')}) \big]^\tran, \quad \text{and} \\ 
   \bar{T}^{(3)} & \equiv \frac{1}{N} \sum_{i=1}^N W_i^r \big[ \dot{l}_{\hat{\zeta}_s} (Y_i^{(s)}) -\dot{l}_{\zeta_s} (Y_i^{(s)}) \big] \big[ \dot{l}_{\hat{\zeta}_{s'}} ( Y_i^{(s')}) -\dot{l}_{\zeta_{s'}} ( Y_i^{(s')}) \big]^\tran.
\end{split}
\end{equation*}

We will next show that $\bar{T}^{(1)} \xrightarrow{p} 0$. To do this fix $j, j' \in [d]$. Next fix $\epsilon>0$. By Fact \ref{fact:WeightBound}, for $r \in \{0,1,2\}$, $W_i^r \in [0, b^{-2K}]$ almost surely. Also by Corollary \ref{cor:SameDistWithEstimators}, for each fixed $N$ and $i \in [N]$, $(\htc,\hga,\goodX_i,\proxyX_i)$ has the same joint distribution as $(\htc,\hga,\goodX_1,\proxyX_1)$. Thus regardless of the value of $s,s' \in \{1,2\}$ and $r \in \{0,1,2\}$, by Markov's inequality and the previous statements,

$$\begin{aligned} 
\mathbb{P} \big( \big| [\bar{T}^{(1)}]_{jj'} \big| > \epsilon \big) & \leq \frac{1}{\epsilon} \e \Big[ \Big| \frac{1}{N} \sum_{i=1}^N W_i^r \big[ \dot{l}_{\hat{\zeta}_s} (Y_i^{(s)} )-\dot{l}_{\zeta_s} (Y_i^{(s)} ) \big]_j \big[ \dot{l}_{\zeta_{s'}} ( Y_i^{(s')}) \big]_{j'} \Big|  \Big]
\\ & \leq \frac{1}{N b^{2K} \cdot \epsilon}  \sum_{i=1}^N \e \Big[ \Big|  \big[ \dot{l}_{\hat{\zeta}_s} (Y_i^{(s)} )-\dot{l}_{\zeta_s} (Y_i^{(s)} ) \big]_j \big[ \dot{l}_{\zeta_{s'}} ( Y_i^{(s')}) \big]_{j'} \Big|  \Big]
\\ & \leq \frac{1}{ b^{2K} \cdot \epsilon} \cdot  \e \Big[ \Big|  \big[ \dot{l}_{\hat{\zeta}_s} (Y_1^{(s)} )-\dot{l}_{\zeta_s} (Y_1^{(s)} ) \big]_j \big[ \dot{l}_{\zeta_{s'}} ( Y_1^{(s')}) \big]_{j'} \Big|  \Big]
\\ & \leq \frac{1}{ b^{2K} \cdot \epsilon} \cdot \sqrt{\e \big[ \big| [ \dot{l}_{\hat{\zeta}_s} (Y_1^{(s)} )-\dot{l}_{\zeta_s} (Y_1^{(s)} ) ]_j \big|_2^2 \big] \cdot \e \big[ \big( [  \dot{l}_{\zeta_{s'}} ( Y_1^{(s')})]_{j'} \big)^2  \big] },
\end{aligned}$$ where the last step follows from the Cauchy-Schwartz inequality. By Fact \ref{fact:PropertiesOfGradients}, $\vert \vert \dot{l}_{\thetaTarg} (\goodX_1) \vert \vert_{\infty} \leq M(\goodX_1)$ and $\vert \vert \dot{l}_{\gammaTarg} (\proxyX_1) \vert \vert_{\infty} \leq \tilde{M}(\proxyX_1)$, so $$\e \big[ \big( [  \dot{l}_{\zeta_{s'}} ( Y_1^{(s')})]_{j'} \big)^2  \big] \leq \e\big[ M^2(\goodX_1)+\tilde{M}^2(\proxyX_1) \big]=\e[M^2(\goodX)]+\e[\tilde{M}^2(\proxyX)]< \infty,$$ where the last step follows from Assumption \ref{assump:SmoothEnoughForAsymptoticLineariaty}(iv). Moreover, by Lemma \ref{lemma:DCT_with_SubsequenceTrick}, regardless of whether $s$ equals 1 or 2, $\lim_{N \to \infty} \e \big[ \big| [ \dot{l}_{\hat{\zeta}_s} (Y_1^{(s)} )-\dot{l}_{\zeta_s} (Y_1^{(s)} ) ]_j \big|_2^2 \big]=0$. 
Thus, by taking limit as $N \to \infty$ of each side of the above inequality, $\lim_{N \to \infty} \mathbb{P} \big( \big| [\bar{T}^{(1)}]_{jj'} \big| > \epsilon \big) =0$. Since this argument holds for any fixed $\epsilon>0$, $[\bar{T}^{(1)}]_{jj'} \xrightarrow{p} 0$. Moreover, this argument holds for any fixed $j,j' \in [d]$, so $\bar{T}^{(1)} \xrightarrow{p} 0$. Similar arguments show that $\bar{T}^{(2)} \xrightarrow{p} 0$ and $\bar{T}^{(3)} \xrightarrow{p} 0$.

Combining this with an earlier decomposition, $$\bar{T}_{\hat{\zeta}}=\bar{T}_{\zeta}+\bar{T}^{(1)}+\bar{T}^{(2)}+\bar{T}^{(3)}=\bar{T}_{\zeta}+o_p(1).$$ Recalling the definitions of $\bar{T}_{\hat{\zeta}}$ and $\bar{T}_{\zeta}$, and noting the above argument holds for any $r \in \{0,1,2\}$ and $s,s' \in \{1,2\}$ completes the proof.
\end{proof}

\subsection{Proof of consistency of covariance matrix components}\label{sec:ConsistencyOfCovarianceMatrixComponents}

 In this subsection, we prove that under \samplingSchemeName{} and Assumptions \ref{assump:IIDUnderlyingData}--\ref{assump:SmoothEnoughForConsitentVarEst}, $\hat{\Sigma}_{11} \xrightarrow{p} \Sigma_{11}$, $\hat{\Sigma}_{12} \xrightarrow{p} \Sigma_{12}$, $\hat{\Sigma}_{22} \xrightarrow{p} \Sigma_{22}$, $\hat{\Sigma}_{13} \xrightarrow{p} \Sigma_{13}$, $\hat{\Sigma}_{33} \xrightarrow{p} \Sigma_{33}$, $\hat{H}_{\thetaTarg} \xrightarrow{p} H_{\thetaTarg}$, and $\hat{H}_{\gammaTarg} \xrightarrow{p} H_{\gammaTarg}$ as $N \to \infty$. We will prove this claim by breaking it up into 3 propositions (Propositions \ref{prop:ConsistencyOfHessians}, \ref{prop:ConsistencyOfSig13andSig33}, and \ref{prop:ConsistencyOfSigmaUpperLeft}), each of which establishes some of the above consistency results. Taken together, Propositions \ref{prop:ConsistencyOfHessians}, \ref{prop:ConsistencyOfSig13andSig33}, and \ref{prop:ConsistencyOfSigmaUpperLeft}, establish $(\hat{\Sigma}_{11}, \hat{\Sigma}_{12}, \hat{\Sigma}_{22}, \hat{\Sigma}_{13}, \hat{\Sigma}_{33}, \hat{H}_{\thetaTarg},\hat{H}_{\gammaTarg}) \xrightarrow{p} (\Sigma_{11}, \Sigma_{12}, \Sigma_{22}, \Sigma_{13}, \Sigma_{33}, H_{\thetaTarg},H_{\gammaTarg})$ as $N \to \infty$. We start by proving the consistency of the Hessian estimators where recall that from \eqref{eq:HessianAndSigmaEstimators}, that $$\hat{H}_{\thetaTarg} =\frac{1}{N} \sum_{i=1}^N W_i \ddot{l}_{\htc} (\goodX_i) \quad \text{and} \quad \hat{H}_{\gammaTarg} =\frac{1}{N} \sum_{i=1}^N  \ddot{l}_{\hga} (\proxyX_i).$$

\begin{proposition} \label{prop:ConsistencyOfHessians}  Under \samplingSchemeName{} and Assumptions \ref{assump:IIDUnderlyingData}--\ref{assump:SmoothEnoughForConsitentVarEst}, $\hat{H}_{\thetaTarg} \xrightarrow{p} H_{\thetaTarg}$ and $\hat{H}_{\gammaTarg} \xrightarrow{p} H_{\gammaTarg}$ as $N \to \infty$.
\end{proposition}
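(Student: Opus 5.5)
The plan is to split each Hessian estimator into an oracle term evaluated at the true parameter plus a plug-in error term, and to handle the two pieces separately. For $\hat{H}_{\thetaTarg}$ we write
$$\hat{H}_{\thetaTarg} = \frac{1}{N}\sum_{i=1}^N W_i \ddot{l}_{\thetaTarg}(\goodX_i) + \frac{1}{N}\sum_{i=1}^N W_i \big(\ddot{l}_{\htc}(\goodX_i)-\ddot{l}_{\thetaTarg}(\goodX_i)\big),$$
and analogously for $\hat{H}_{\gammaTarg}$, with unit weights and $\hga$, $\gammaTarg$, $\proxyX_i$ in place of $W_i$, $\htc$, $\thetaTarg$, $\goodX_i$.

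For the oracle term we work entrywise. Fix $j,j'\in[d]$ and set $f(\goodX)=[\ddot{l}_{\thetaTarg}(\goodX)]_{jj'}$. Assumption \ref{assump:SmoothEnoughForConsitentVarEst}(iii) gives $\e[f^2(\goodX)]<\infty$. Corollary \ref{cor:PropertiesOfWeightedAveragees}(II) then yields $\bfs_N f \xrightarrow{p} \e[f(\goodX)]$. By Assumption \ref{assump:SmoothEnoughForConsitentVarEst}(i) this limit equals $[H_{\thetaTarg}]_{jj'}$. For the unweighted proxy term $N^{-1}\sum_i \ddot{l}_{\gammaTarg}(\proxyX_i)$, the $\proxyX_i$ are i.i.d. under Assumption \ref{assump:IIDUnderlyingData}. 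The entries are integrable, being dominated by $\tilde{L}_{jj'}$ in Assumption \ref{assump:SmoothEnoughForConsitentVarEst}(ii), so the weak law of large numbers gives convergence to $\e[\ddot{l}_{\gammaTarg}(\proxyX)]=H_{\gammaTarg}$.

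For the remainder term we use Markov's inequality together with the weight bound in Fact \ref{fact:WeightBound}:
$$\mathbb{P}\Big(\Big|\frac{1}{N}\sum_{i=1}^N W_i[\ddot{l}_{\htc}(\goodX_i)-\ddot{l}_{\thetaTarg}(\goodX_i)]_{jj'}\Big|>\epsilon\Big)\le \frac{1}{\epsilon b^{K} N}\sum_{i=1}^N \e\big[|[\ddot{l}_{\htc}(\goodX_i)-\ddot{l}_{\thetaTarg}(\goodX_i)]_{jj'}|\big].$$
By Corollary \ref{cor:SameDistWithEstimators}, each summand has the same distribution as the $i=1$ summand, so the right-hand side reduces to $(\epsilon b^K)^{-1}\e\big[|[\ddot{l}_{\htc}(\goodX_1)-\ddot{l}_{\thetaTarg}(\goodX_1)]_{jj'}|\big]$. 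This tends to $0$ by Lemma \ref{lemma:DCT_with_SubsequenceTrick}. The same argument, with weights equal to one, handles the $\hga$ remainder, using the second half of Lemma \ref{lemma:DCT_with_SubsequenceTrick}.

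Combining the oracle and remainder pieces entrywise and using the fact that entrywise convergence in probability implies matrix convergence gives both claims. The step that needs most care is the remainder term, because $\htc$ depends on all of the data, including $\goodX_i$. The exchangeability result in Corollary \ref{cor:SameDistWithEstimators} is what removes this difficulty: it reduces every summand to the $i=1$ case, where the dominated-convergence argument of Lemma \ref{lemma:DCT_with_SubsequenceTrick} applies directly. Everything else is routine.
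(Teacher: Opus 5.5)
Your proposal is correct and follows essentially the same argument as the paper: the oracle-plus-plug-in-error decomposition, Markov's inequality with the weight bound and Corollary~\ref{cor:SameDistWithEstimators} to reduce the remainder to Lemma~\ref{lemma:DCT_with_SubsequenceTrick}, and the weighted law of large numbers for the oracle term (you cite Corollary~\ref{cor:PropertiesOfWeightedAveragees}(II), which the paper obtains from Lemma~\ref{lemma:VarianceOfWeightedAveragesGeneric}). As in the paper, the ordinary weak law of large numbers, with integrability supplied by $\tilde{L}_{jj'}$, handles the proxy term.
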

\begin{proof}

  Fix $j,j' \in [d]$. First observe that $$[\hat{H}_{\thetaTarg}]_{jj'}= \frac{1}{N} \sum_{i=1}^N W_i [\ddot{l}_{\thetaTarg}(\goodX_i)]_{jj'} + \frac{1}{N} \sum_{i=1}^N W_i [\ddot{l}_{\htc}(\goodX_i)-\ddot{l}_{\thetaTarg}(\goodX_i)]_{jj'}.$$ Now observe that for any $\epsilon>0$, by Markov's inequality, Fact \ref{fact:WeightBound}, and the fact that for a fixed $N$ and $i \in [N]$, $(\htc,\goodX_i)$ and $(\htc,\goodX_1)$ have the same joint distribution (see Corollary \ref{cor:SameDistWithEstimators}), $$\begin{aligned}  \mathbb{P} \Big( \Big| \frac{1}{N} \sum_{i=1}^N  W_i [\ddot{l}_{\htc}(\goodX_i)-\ddot{l}_{\thetaTarg}(\goodX_i)]_{jj'} \Big| > \epsilon \Big) & \leq \frac{1}{N \epsilon} \cdot  \sum_{i=1}^N  \e \big[ \big|  W_i [\ddot{l}_{\htc}(\goodX_i)-\ddot{l}_{\thetaTarg}(\goodX_i)]_{jj'} \big| \big] 
 \\ & \leq \frac{1}{N \epsilon b^K} \cdot  \sum_{i=1}^N  \e \big[ \big|  [\ddot{l}_{\htc}(\goodX_i)-\ddot{l}_{\thetaTarg}(\goodX_i)]_{jj'} \big| \big] 
 \\ & = (\epsilon b^K)^{-1}   \e \big[ \big|  [\ddot{l}_{\htc}(\goodX_1)-\ddot{l}_{\thetaTarg}(\goodX_1)]_{jj'} \big| \big]. \end{aligned}$$ Taking $N \to \infty$ of each side of the above inequality and applying Lemma \ref{lemma:DCT_with_SubsequenceTrick}, $$\limsup_{N \to \infty} \mathbb{P} \Big( \Big| \frac{1}{N} \sum_{i=1}^N  W_i [\ddot{l}_{\htc}(\goodX_i)-\ddot{l}_{\thetaTarg}(\goodX_i)]_{jj'} \Big| > \epsilon \Big) \leq (\epsilon b^K)^{-1}  \limsup_{N \to \infty} \e \big[ \big|  [\ddot{l}_{\htc}(\goodX_1)-\ddot{l}_{\thetaTarg}(\goodX_1)]_{jj'} \big| \big]=0.$$ Since the above holds for any $\epsilon>0,$ $\frac{1}{N} \sum_{i=1}^N W_i [\ddot{l}_{\htc}(\goodX_i)-\ddot{l}_{\thetaTarg}(\goodX_i)]_{jj'} \xrightarrow{p} 0$, so a previous expression simplifies to $$[\hat{H}_{\thetaTarg}]_{jj'}= \frac{1}{N} \sum_{i=1}^N W_i [\ddot{l}_{\thetaTarg}(\goodX_i)]_{jj'} + o_p(1).$$ A similar argument (that does not involve bounding the weights $W_i$) establishes that $$[\hat{H}_{\gammaTarg}]_{jj'}= \frac{1}{N} \sum_{i=1}^N  [\ddot{l}_{\gammaTarg}(\proxyX_i)]_{jj'} + \frac{1}{N} \sum_{i=1}^N [\ddot{l}_{\hga}(\proxyX_i)-\ddot{l}_{\gammaTarg}(\proxyX_i)]_{jj'} =\frac{1}{N} \sum_{i=1}^N  [\ddot{l}_{\gammaTarg}(\proxyX_i)]_{jj'} +o_p(1).$$

Now note that by Assumption \ref{assump:SmoothEnoughForConsitentVarEst}(iii), $\e\big[\big([\ddot{l}_{\thetaTarg}(\goodX)]_{jj'} \big)^2 \big]<\infty$, so by directly applying Lemma \ref{lemma:VarianceOfWeightedAveragesGeneric}, $$[\hat{H}_{\thetaTarg}]_{jj'}= \frac{1}{N} \sum_{i=1}^N W_i [\ddot{l}_{\thetaTarg}(\goodX_i)]_{jj'} + o_p(1) \xrightarrow{p} \e \big[ [\ddot{l}_{\thetaTarg}(\goodX)]_{jj'} \big] + 0=[H_{\thetaTarg}]_{jj'},$$ where the last step holds by Assumption \ref{assump:SmoothEnoughForConsitentVarEst}(i). Similarly, note that by Assumption \ref{assump:SmoothEnoughForConsitentVarEst}(ii), $\e\big[\big| [\ddot{l}_{\gammaTarg}(\proxyX)]_{jj'} \big| \big] \leq \e[ \tilde{L}_{jj'}(\proxyX)]<\infty$, and hence by the weak law of large numbers, $$[\hat{H}_{\gammaTarg}]_{jj'}= \frac{1}{N} \sum_{i=1}^N  [\ddot{l}_{\gammaTarg}(\proxyX_i)]_{jj'} + o_p(1) \xrightarrow{p} \e\big[ [\ddot{l}_{\gammaTarg}(\proxyX)]_{jj'} \big]+0=[H_{\gammaTarg}]_{jj'},$$ where the last step holds by Assumption \ref{assump:SmoothEnoughForConsitentVarEst}(i). 
Thus $[\hat{H}_{\thetaTarg}]_{jj'} \xrightarrow{p} [H_{\thetaTarg}]_{jj'}$ and $[\hat{H}_{\gammaTarg}]_{jj'} \xrightarrow{p} [H_{\gammaTarg}]_{jj'}$. Since this argument holds for any fixed $j,j' \in [d]$, $\hat{H}_{\thetaTarg} \xrightarrow{p} H_{\thetaTarg}$ and $\hat{H}_{\gammaTarg} \xrightarrow{p} H_{\gammaTarg}$.

\end{proof}

We next show that $\hat{\Sigma}_{13}$ and $\hat{\Sigma}_{33}$ are consistent, where recall from \eqref{eq:GradientAsympCovFormulae} that $$\Sigma_{13} =\e \big[\dot{l}_{\thetaTarg}(\goodX) [\dot{l}_{\gammaTarg}(\proxyX)]^\tran \big] \quad \text{and} \quad \Sigma_{33} =\e \big[\dot{l}_{\gammaTarg}(\proxyX) [\dot{l}_{\gammaTarg}(\proxyX)]^\tran \big], \quad \text{ and recall from \eqref{eq:HessianAndSigmaEstimators} that }$$ $$\hat{\Sigma}_{13} = \frac{1}{N} \sum_{i=1}^N W_i \dot{l}_{\htc} (\goodX_i) \big[ \dot{l}_{\hga} (\proxyX_i) \big]^\tran \quad \text{and} \quad \hat{\Sigma}_{33} \equiv \frac{1}{N} \sum_{i=1}^N  \dot{l}_{\hga} (\proxyX_i) \big[ \dot{l}_{\hga} (\proxyX_i) \big]^\tran.$$

\begin{proposition}\label{prop:ConsistencyOfSig13andSig33}
 Under \samplingSchemeName{} and Assumptions \ref{assump:IIDUnderlyingData}--\ref{assump:SmoothEnoughForConsitentVarEst}, $\hat{\Sigma}_{13} \xrightarrow{p} \Sigma_{13}$ and $\hat{\Sigma}_{33} \xrightarrow{p} \Sigma_{33}$ as $N \to \infty$.
\end{proposition}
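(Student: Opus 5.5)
The plan is to reduce each estimator to an average evaluated at the true parameters, using Lemma \ref{lemma:AsymptoticallyIgnoreEstimationErrorInAverages}, and then apply a law of large numbers to that average.

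For $\hat{\Sigma}_{13}$, we invoke the lemma with $s=1$, $s'=2$, $r=1$ (so $\hat{\zeta}_1=\htc$, $\hat{\zeta}_2=\hga$). This gives
$$\hat{\Sigma}_{13}=\frac{1}{N}\sum_{i=1}^N W_i\, \dot{l}_{\thetaTarg}(\goodX_i)[\dot{l}_{\gammaTarg}(\proxyX_i)]^\tran+o_p(1).$$
For $\hat{\Sigma}_{33}$, we invoke it with $s=s'=2$, $r=0$, which gives
$$\hat{\Sigma}_{33}=\frac{1}{N}\sum_{i=1}^N \dot{l}_{\gammaTarg}(\proxyX_i)[\dot{l}_{\gammaTarg}(\proxyX_i)]^\tran+o_p(1).$$

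Next we work entrywise, fixing $j,j'\in[d]$. Since $\goodX_i$ and $\proxyX_i$ are subvectors of $\datvecraw_i$, the $(j,j')$ entry of the first average has the form $N^{-1}\sum_i W_i g(\datvecraw_i)$ with
$$g(\datvecraw)=[\dot{l}_{\thetaTarg}(\goodX)]_j[\dot{l}_{\gammaTarg}(\proxyX)]_{j'}.$$
We apply Lemma \ref{lemma:VarianceOfWeightedAveragesGeneric}, which requires $\e[g^2(\datvecraw)]<\infty$. By Cauchy--Schwarz,
$$\e[g^2]\le \sqrt{\e\big[([\dot{l}_{\thetaTarg}(\goodX)]_j)^4\big]\;\e\big[([\dot{l}_{\gammaTarg}(\proxyX)]_{j'})^4\big]},$$
and both fourth moments are finite by Assumption \ref{assump:SmoothEnoughForConsitentVarEst}(iii). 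The lemma then yields convergence in probability to $\e[g(\datvecraw)]=[\Sigma_{13}]_{jj'}$.

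For $\hat{\Sigma}_{33}$ the summands are i.i.d. under Assumption \ref{assump:IIDUnderlyingData}. They have finite first moment because $|[\dot{l}_{\gammaTarg}]_j[\dot{l}_{\gammaTarg}]_{j'}|\le \tilde{M}^2(\proxyX)$ by Fact \ref{fact:PropertiesOfGradients}, and $\e[\tilde{M}^2(\proxyX)]<\infty$. The weak law of large numbers therefore gives convergence to $[\Sigma_{33}]_{jj'}$. Entrywise convergence implies matrix convergence, and Slutsky absorbs the $o_p(1)$ terms.

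No step is a real obstacle, because all the difficulty was isolated in Lemma \ref{lemma:AsymptoticallyIgnoreEstimationErrorInAverages}. The only point needing care is the moment condition for the weighted law of large numbers: it needs finite second moments of a product of gradients, which requires the fourth-moment assumption rather than the Lipschitz envelope alone.
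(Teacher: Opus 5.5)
Your proposal is correct and follows the paper's proof essentially step for step. Both arguments first reduce the two estimators to averages at $(\thetaTarg,\gammaTarg)$ via Lemma~\ref{lemma:AsymptoticallyIgnoreEstimationErrorInAverages}. For $\hat{\Sigma}_{13}$, both then apply Lemma~\ref{lemma:VarianceOfWeightedAveragesGeneric}, with the fourth-moment Cauchy--Schwarz bound from Assumption~\ref{assump:SmoothEnoughForConsitentVarEst}(iii). For $\hat{\Sigma}_{33}$, both use the i.i.d.\ weak law of large numbers, with the $\tilde{M}^2$ envelope from Fact~\ref{fact:PropertiesOfGradients}.
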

\begin{proof}
   First note that by direct application of Lemma \ref{lemma:AsymptoticallyIgnoreEstimationErrorInAverages}, $$\hat{\Sigma}_{13} = \frac{1}{N} \sum_{i=1}^N W_i \dot{l}_{\thetaTarg}(\goodX_i) \big[ \dot{l}_{\gammaTarg} (\proxyX_i) \big]^\tran +o_p(1) \quad \text{and} \quad \hat{\Sigma}_{33} = \frac{1}{N} \sum_{i=1}^N  \dot{l}_{\gammaTarg}(\proxyX_i) \big[ \dot{l}_{\gammaTarg} (\proxyX_i) \big]^\tran +o_p(1).$$ Note that by Fact \ref{fact:PropertiesOfGradients}, and Assumption \ref{assump:SmoothEnoughForAsymptoticLineariaty}(iv) for any $j,j' \in [d]$, $$\e \Big[ \Big| [\dot{l}_{\gammaTarg}(\proxyX) ]_j [\dot{l}_{\gammaTarg} (\proxyX) ]_{j'} \Big| \Big] \leq \e [ \tilde{M}^2(\proxyX) ]  < \infty.$$ Hence since $(\proxyX_i)_{i=1}^N$ are i.i.d. (Assumption \ref{assump:IIDUnderlyingData}) by the weak law of large numbers for any $j,j' \in [d]$, $$[\hat{\Sigma}_{33}]_{jj'} = \frac{1}{N} \sum_{i=1}^N  [\dot{l}_{\gammaTarg}(\proxyX_i)]_j [ \dot{l}_{\gammaTarg} (\proxyX_i)]_{j'} +o_p(1) \xrightarrow{p} \e\big[ [\dot{l}_{\gammaTarg}(\proxyX)]_j [ \dot{l}_{\gammaTarg} (\proxyX)]_{j'} \big] +0=[\Sigma_{33}]_{jj'},$$ and thus $\hat{\Sigma}_{33} \xrightarrow{p} \Sigma_{33}$. 
   
   Now fix $j,j' \in [d]$. 
   Note that by Assumption \ref{assump:SmoothEnoughForConsitentVarEst}(iii) and the Cauchy-Schwartz inequality, $$\e \Big[ \Big( [\dot{l}_{\thetaTarg}(\goodX)]_j [\dot{l}_{\gammaTarg} (\proxyX) ]_{j'} \Big)^2 \Big] \leq \sqrt{ \e \big[ \big( [\dot{l}_{\thetaTarg}(\goodX)]_j \big)^4 \big] \cdot \e \big[ \big( [\dot{l}_{\gammaTarg} (\proxyX) ]_{j'} \big)^4 \big]} < \infty.$$ Thus by noting that for each $i \in \mathbb{Z}_+$, $\proxyX_i$ and $\goodX_i$ are each random vectors that can be written as a subset of the components of the random vector $\datvecraw_i$, we can apply Lemma \ref{lemma:VarianceOfWeightedAveragesGeneric} to show $$\frac{1}{N} \sum_{i=1}^N W_i [\dot{l}_{\thetaTarg}(\goodX_i)]_j [\dot{l}_{\gammaTarg} (\proxyX_i) ]_{j'} \xrightarrow{p} \e\Big[ [\dot{l}_{\thetaTarg}(\goodX)]_j [\dot{l}_{\gammaTarg} (\proxyX) ]_{j'} \Big] = [\Sigma_{13}]_{jj'}.$$ Above the last step follows from the definition of $\Sigma_{13}$ at \eqref{eq:GradientAsympCovFormulae}. Combining this with an earlier result, $$[\hat{\Sigma}_{13}]_{jj'} = \frac{1}{N} \sum_{i=1}^N W_i [\dot{l}_{\thetaTarg}(\goodX_i)]_j [\dot{l}_{\gammaTarg} (\proxyX_i) ]_{j'} +o_p(1) \xrightarrow{p}  [\Sigma_{13}]_{jj'}.$$  Hence we have shown that $[\hat{\Sigma}_{13}]_{jj'} \xrightarrow{p} [\Sigma_{13}]_{jj'}$. Because this argument holds for any fixed $j,j' \in [d]$, $\hat{\Sigma}_{13} \xrightarrow{p} \Sigma_{13}$. 

\end{proof}

We next show that $\hat{\Sigma}_{11}$, $\hat{\Sigma}_{12}$, and $\hat{\Sigma}_{22}$ are consistent, where recall from \eqref{eq:GradientAsympCovFormulae} and \eqref{eq:HessianAndSigmaEstimators} that 

\begin{equation*}
\begin{split}
   \Sigma_{11} & \equiv  \sum_{k=1}^K c_k^2 \e \Big[ \frac{ \dot{l}_{\thetaTarg}(\goodX) [  \dot{l}_{\thetaTarg}(\goodX) ]^\tran}{\LimitingLabelRulePi{1:k}(\proxyX)} \Big], \quad 
   \quad \hat{\Sigma}_{11}  \equiv \frac{1}{N} \sum_{i=1}^N W_i^2 \dot{l}_{\htc} (\goodX_i) [ \dot{l}_{\htc} (\goodX_i)]^\tran, 
   \\ 
   \Sigma_{12} & \equiv  \sum_{k=1}^K c_k^2 \e \Big[ \frac{ \dot{l}_{\thetaTarg}(\goodX) [  \dot{l}_{\gammaTarg}(\proxyX) ]^\tran}{\LimitingLabelRulePi{1:k}(\proxyX)} \Big], \quad \quad \hat{\Sigma}_{12} \equiv \frac{1}{N} \sum_{i=1}^N W_i^2 \dot{l}_{\htc} (\goodX_i) [ \dot{l}_{\hga} (\proxyX_i)]^\tran, 
   \\   
    \Sigma_{22} & \equiv \sum_{k=1}^K c_k^2 \e \Big[ \frac{ \dot{l}_{\gammaTarg}(\proxyX) [  \dot{l}_{\gammaTarg}(\proxyX) ]^\tran}{\LimitingLabelRulePi{1:k}(\proxyX)} \Big], \quad \quad \hat{\Sigma}_{22}  \equiv \frac{1}{N} \sum_{i=1}^N W_i^2  \dot{l}_{\hga} (\proxyX_i) [ \dot{l}_{\hga} (\proxyX_i) ]^\tran. 
\end{split}
\end{equation*}

\begin{proposition}\label{prop:ConsistencyOfSigmaUpperLeft}
 Under \samplingSchemeName{} and Assumptions \ref{assump:IIDUnderlyingData}--\ref{assump:SmoothEnoughForConsitentVarEst}, $\hat{\Sigma}_{11} \xrightarrow{p} \Sigma_{11}$, $\hat{\Sigma}_{12} \xrightarrow{p} \Sigma_{12}$, and $\hat{\Sigma}_{22} \xrightarrow{p} \Sigma_{22}$ as $N \to \infty$.
\end{proposition}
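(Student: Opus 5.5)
The plan mirrors the proof of Proposition \ref{prop:ConsistencyOfSig13andSig33}, with one extra step to handle the squared weights. First, applying Lemma \ref{lemma:AsymptoticallyIgnoreEstimationErrorInAverages} with $r=2$ and $(s,s')\in\{(1,1),(1,2),(2,2)\}$ lets us replace $\htc,\hga$ by $\thetaTarg,\gammaTarg$ at an $o_p(1)$ cost. It then suffices to show, entrywise, that
$$A_N \equiv \frac{1}{N}\sum_{i=1}^N W_i^2 f(\datvecraw_i) \xrightarrow{p} \sum_{k=1}^K c_k^2\, \e\Big[\frac{f(\datvecraw)}{\LimitingLabelRulePi{1:k}(\proxyX)}\Big].$$
Here $f(\datvecraw)$ is one of $[\dot{l}_{\thetaTarg}(\goodX)]_j[\dot{l}_{\thetaTarg}(\goodX)]_{j'}$, $[\dot{l}_{\thetaTarg}(\goodX)]_j[\dot{l}_{\gammaTarg}(\proxyX)]_{j'}$, or $[\dot{l}_{\gammaTarg}(\proxyX)]_j[\dot{l}_{\gammaTarg}(\proxyX)]_{j'}$, each viewed as a function of $\datvecraw$. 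By Assumption \ref{assump:SmoothEnoughForConsitentVarEst}(iii) and Cauchy--Schwarz, each such $f$ satisfies $\e[f^2(\datvecraw)]<\infty$, so the moment condition of Proposition \ref{prop:Vanishing2ndOrderCov} and Corollary \ref{cor:SimplifyHigherOrderExpectationsOneTerm} holds with $\eta=1$.

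\textbf{Mean.} By Fact \ref{fact:WsqFormula}, $W_i^2=\sum_k c_k^2 (W_i^{(k)})^2$. By exchangeability (Proposition \ref{prop:ExchangeableData}, more precisely Lemma \ref{lemma:StrongerExchangeability}, since $W_i^{(k)}$ is a function of $\xi_i$), $\e[A_N]=\sum_k c_k^2\,\e[(W_1^{(k)})^2 f(\datvecraw_1)]$. Corollary \ref{cor:SimplifyHigherOrderExpectationsOneTerm} shows this converges to the target limit.

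\textbf{Variance.} Exchangeability again gives
$$\var(A_N)=\frac{1}{N}\var\big(W_1^2 f(\datvecraw_1)\big)+\frac{N-1}{N}\cov\big(W_1^2 f(\datvecraw_1),W_2^2 f(\datvecraw_2)\big).$$
The first term is at most $N^{-1} b^{-4K}\e[f^2(\datvecraw)]\to 0$ by Fact \ref{fact:WeightBound}. The second term tends to $0$ by Proposition \ref{prop:Vanishing2ndOrderCov} applied with $g=f$.

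Chebyshev's inequality, combined with the convergence of the mean, then gives $A_N\xrightarrow{p}$ the target limit. Assembling the entries yields $\hat{\Sigma}_{11}\xrightarrow{p}\Sigma_{11}$, $\hat{\Sigma}_{12}\xrightarrow{p}\Sigma_{12}$ and $\hat{\Sigma}_{22}\xrightarrow{p}\Sigma_{22}$.

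The main obstacle is the cross-covariance term. Unlike the linear-weight case, squared weights are genuinely correlated across units, so the argument cannot rely on exact zero covariance and must use the asymptotic decay from Proposition \ref{prop:Vanishing2ndOrderCov}, which in turn rests on the $L^1$ convergence of the labelling rules. Everything else is bookkeeping. The one point to verify is that the fourth-moment conditions supply the $(1+\eta)$-moment needed for $f$ in each of the three cases; this holds via Cauchy--Schwarz with the fourth moments in Assumption \ref{assump:SmoothEnoughForConsitentVarEst}(iii).
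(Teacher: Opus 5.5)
Your proposal is correct and follows essentially the same route as the paper's proof. Both arguments use Lemma~\ref{lemma:AsymptoticallyIgnoreEstimationErrorInAverages} with $r=2$ to remove estimation error, then obtain the limit of the mean from Fact~\ref{fact:WsqFormula}, exchangeability and Corollary~\ref{cor:SimplifyHigherOrderExpectationsOneTerm}, and show the variance vanishes via Fact~\ref{fact:WeightBound} and Proposition~\ref{prop:Vanishing2ndOrderCov} with $\eta=1$ and $g=f$, concluding by Chebyshev; the only differences are that you apply the lemma first rather than last and you write out all three cases instead of treating $\hat{\Sigma}_{12}$ and calling the others analogous.
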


\begin{proof}
    Fix $j,j' \in [d]$. We start by showing that $N^{-1} \sum_{i=1}^N W_i^2 [\dot{l}_{\thetaTarg} (\goodX_i)]_j [ \dot{l}_{\gammaTarg} (\proxyX_i)]_{j'} \xrightarrow{p} [\Sigma_{12}]_{jj'}$. To do this first note that since for a fixed $N$ and any $i,i' \in [N]$ such that $i \neq i'$, by Proposition \ref{prop:ExchangeableData},   $(W_i,\goodX_i,\proxyX_i,W_{i'},\goodX_{i'},\proxyX_{i'})$ has the same joint distribution as $(W_1,\goodX_1,\proxyX_1,W_2,\goodX_2,\proxyX_2)$. Hence
    $$\begin{aligned} \var \Big( \frac{1}{N} \sum_{i=1}^N W_i^2 [\dot{l}_{\thetaTarg} (\goodX_i)]_j [ \dot{l}_{\gammaTarg} (\proxyX_i)]_{j'} \Big) & = \frac{1}{N^2} \sum_{i=1}^N \sum_{i'=1}^N \cov \Big( W_i^2 [\dot{l}_{\thetaTarg} (\goodX_{i})]_j [ \dot{l}_{\gammaTarg} (\proxyX_i)]_{j'}, W_{i'}^2 [\dot{l}_{\thetaTarg} (\goodX_{i'})]_j [ \dot{l}_{\gammaTarg} (\proxyX_{i'})]_{j'}\Big)
    \\ & = \frac{N}{N^2} \cdot \var \Big( W_1^2 [\dot{l}_{\thetaTarg} (\goodX_{1})]_j [ \dot{l}_{\gammaTarg} (\proxyX_1)]_{j'} \Big)
    \\ & + \frac{N (N-1)}{N^2} \cdot \cov \Big( W_1^2 [\dot{l}_{\thetaTarg} (\goodX_{1})]_j [ \dot{l}_{\gammaTarg} (\proxyX_1)]_{j'}, W_{2}^2 [\dot{l}_{\thetaTarg} (\goodX_{2})]_j [ \dot{l}_{\gammaTarg} (\proxyX_{2})]_{j'}\Big).
    \end{aligned}$$
    
Now observe that by Cauchy-Schwartz and Assumption \ref{assump:SmoothEnoughForConsitentVarEst}(iii) $$\e \Big[ \Big( [\dot{l}_{\thetaTarg} (\goodX)]_j [ \dot{l}_{\gammaTarg} (\proxyX)]_{j'} \Big)^2 \Big] \leq \sqrt{\e \big[ \big([\dot{l}_{\thetaTarg}(\goodX)]_j \big)^4 \big] \cdot \e \big[ \big([\dot{l}_{\gammaTarg}(\proxyX)]_{j'} \big)^4 \big] } < \infty,$$ so by applying Proposition \ref{prop:Vanishing2ndOrderCov} (in the case where $\eta=1$ and $f=g$), $$\lim_{N \to \infty} \cov \Big( W_1^2 [\dot{l}_{\thetaTarg} (\goodX_{1})]_j [ \dot{l}_{\gammaTarg} (\proxyX_1)]_{j'}, W_{2}^2 [\dot{l}_{\thetaTarg} (\goodX_{2})]_j [ \dot{l}_{\gammaTarg} (\proxyX_{2})]_{j'}\Big) =0.$$ Moreover by Fact \ref{fact:WeightBound}, $$\var \Big( W_1^2 [\dot{l}_{\thetaTarg} (\goodX_{1})]_j [ \dot{l}_{\gammaTarg} (\proxyX_1)]_{j'} \Big) \leq \e \Big[  W_1^4 \Big( [\dot{l}_{\thetaTarg} (\goodX_{1})]_j [ \dot{l}_{\gammaTarg} (\proxyX_1)]_{j'} \Big)^2  \Big] \leq b^{-4K} \e \Big[   \Big( [\dot{l}_{\thetaTarg} (\goodX_{1})]_j [ \dot{l}_{\gammaTarg} (\proxyX_1)]_{j'} \Big)^2  \Big] < \infty.$$ Thus we can take $N \to \infty$ of each side of a previous equation to get that, $$\lim_{N \to \infty} \var \Big( \frac{1}{N} \sum_{i=1}^N W_i^2 [\dot{l}_{\thetaTarg} (\goodX_i)]_j [ \dot{l}_{\gammaTarg} (\proxyX_i)]_{j'} \Big) =0.$$

Next let $$\mu_{N} \equiv \e \Big[ \frac{1}{N} \sum_{i=1}^N W_i^2 [\dot{l}_{\thetaTarg} (\goodX_i)]_j [ \dot{l}_{\gammaTarg} (\proxyX_i)]_{j'} \Big]= \e \big[ W_1^2 [\dot{l}_{\thetaTarg} (\goodX_1)]_j [ \dot{l}_{\gammaTarg} (\proxyX_1)]_{j'} \big],$$ and note that by Chebyshev's inequality and the previous result, for any $\epsilon>0$, $$\limsup_{N \to \infty} \mathbb{P} \Big( \Big| \frac{1}{N} \sum_{i=1}^N W_i^2 [\dot{l}_{\thetaTarg} (\goodX_i)]_j [ \dot{l}_{\gammaTarg} (\proxyX_i)]_{j'} - \mu_{N} \Big| > \epsilon \Big) \leq 0,$$ and hence $N^{-1}\sum_{i=1}^N W_i^2 [\dot{l}_{\thetaTarg} (\goodX_i)]_j [ \dot{l}_{\gammaTarg} (\proxyX_i)]_{j'} -\mu_N \xrightarrow{p} 0.$ Next note that by Fact \ref{fact:WsqFormula}, $W_1^2 = \sum_{k=1}^K c_k^2 (W_1^{(k)})^2$, so by a previous expression, $$\mu_{N} =  \sum_{k=1}^K c_k^2 \e \big[ (W_1^{(k)})^2 [\dot{l}_{\thetaTarg} (\goodX_1)]_j [ \dot{l}_{\gammaTarg} (\proxyX_1)]_{j'} \big].$$ Taking the limit as $N \to \infty$ of each side of the above equation, by Corollary \ref{cor:SimplifyHigherOrderExpectationsOneTerm},  $$\lim_{N \to \infty} \mu_{N}  =  \sum_{k=1}^K c_k^2 \lim_{N \to \infty} \e \big[ (W_1^{(k)})^2 [\dot{l}_{\thetaTarg} (\goodX_1)]_j [ \dot{l}_{\gammaTarg} (\proxyX_1)]_{j'} \big] = \sum_{k=1}^K c_k^2 \e \Bigg[ \frac{[\dot{l}_{\thetaTarg} (\goodX)]_j [ \dot{l}_{\gammaTarg} (\proxyX)]_{j'}}{\LimitingLabelRulePi{1:k}(\proxyX)} \Bigg]=[\Sigma_{12}]_{jj'},$$ where the last step follows from \eqref{eq:GradientAsympCovFormulae}. 

Clearly, as a consequence $ \mu_{N}=[\Sigma_{12}]_{jj'}+o_p(1)$, so combining this with a previous result $$\begin{aligned} \frac{1}{N} \sum_{i=1}^N W_i^2 [\dot{l}_{\thetaTarg} (\goodX_i)]_j [ \dot{l}_{\gammaTarg} (\proxyX_i)]_{j'} & = \Big( \frac{1}{N} \sum_{i=1}^N W_i^2 [\dot{l}_{\thetaTarg} (\goodX_i)]_j [ \dot{l}_{\gammaTarg} (\proxyX_i)]_{j'} - \mu_N \Big) +(\mu_{N}-[\Sigma_{12}]_{jj'}) + [\Sigma_{12}]_{jj'}
\\ & = o_p(1)+o_p(1)+[\Sigma_{12}]_{jj'}=[\Sigma_{12}]_{jj'}+o_p(1). \end{aligned}$$ 
Since the above convergence in probability holds for any fixed $j,j' \in [d]$, $$\frac{1}{N} \sum_{i=1}^N W_i^2 \dot{l}_{\thetaTarg} (\goodX_i) [ \dot{l}_{\gammaTarg} (\proxyX_i)]^\tran \xrightarrow{p} \Sigma_{12}.$$ Thus recalling the definition of $\hat{\Sigma}_{12}$ at \eqref{eq:HessianAndSigmaEstimators} and applying Lemma \ref{lemma:AsymptoticallyIgnoreEstimationErrorInAverages} with $r=2$, $s=1$, and $s'=2$, $$\hat{\Sigma}_{12}= \frac{1}{N} \sum_{i=1}^N W_i^2 \dot{l}_{\htc} (\goodX_i) [ \dot{l}_{\hga} (\proxyX_i)]^\tran = \frac{1}{N} \sum_{i=1}^N W_i^2 \dot{l}_{\thetaTarg} (\goodX_i) [ \dot{l}_{\gammaTarg} (\proxyX_i)]^\tran +o_p(1) \xrightarrow{p} \Sigma_{12}.$$ Hence we have shown that $\hat{\Sigma}_{12} \xrightarrow{p} \Sigma_{12}$. Analogous arguments show that $\hat{\Sigma}_{11} \xrightarrow{p} \Sigma_{11}$ and $\hat{\Sigma}_{22} \xrightarrow{p} \Sigma_{22}$.

\end{proof}

\subsection{Proof of Proposition \ref{prop:AsymptoticallyValidCIs}}

Suppose $\hat{\Omega}$ is given by \eqref{eq:HatOmegaContinuousFunctionChoice} for some fixed $f: (\mathbb{R}^{d \times d})^7 \to \mathbb{R}^{d \times d}$ that does not depend on $N$ and that is continuous at $(\Sigma_{11}, \Sigma_{12}, \Sigma_{22}, \Sigma_{13}, \Sigma_{33}, H_{\thetaTarg},H_{\gammaTarg} )$. Define $$\Omega_f \equiv f \big(\Sigma_{11}, \Sigma_{12}, \Sigma_{22}, \Sigma_{13}, \Sigma_{33}, H_{\thetaTarg},H_{\gammaTarg} \big),$$ and as in the proposition statement let $\Sigma_f^{\PTDSuperScriptAcr}\equiv \Sigma^{\PTDSuperScriptAcr}(\Omega_f)$, where $\Sigma^{\PTDSuperScriptAcr}(\cdot)$ is defined at \eqref{eq:AsympVarPTDFormula}. Note that by the continuous mapping theorem and Propositions \ref{prop:ConsistencyOfHessians}, \ref{prop:ConsistencyOfSig13andSig33}, and \ref{prop:ConsistencyOfSigmaUpperLeft}, as $N \to \infty$, $$\hat{\Omega}= f \big(\hat{\Sigma}_{11}, \hat{\Sigma}_{12}, \hat{\Sigma}_{22}, \hat{\Sigma}_{13}, \hat{\Sigma}_{33}, \hat{H}_{\thetaTarg},\hat{H}_{\gammaTarg} \big) \xrightarrow{p} f \big(\Sigma_{11}, \Sigma_{12}, \Sigma_{22}, \Sigma_{13}, \Sigma_{33}, H_{\thetaTarg},H_{\gammaTarg} \big)=\Omega_f.$$ Thus by the above result, by Propositions \ref{prop:ConsistencyOfHessians}, \ref{prop:ConsistencyOfSig13andSig33}, and \ref{prop:ConsistencyOfSigmaUpperLeft}, and by the definitions for $\hat{\Sigma}^{\PTDSuperScriptAcr}$ and $ \Sigma^{\PTDSuperScriptAcr}(\cdot)$ at \eqref{eq:PTDAsympVarEstimator} and \eqref{eq:AsympVarPTDFormula},  $\hat{\Sigma}^{\PTDSuperScriptAcr} \xrightarrow{p} \Sigma_f^{\PTDSuperScriptAcr}$ as $N \to \infty$, establishing the first claim.

Next note that since $\hat{\Omega} \xrightarrow{p} \Omega_f$ and $\Sigma_f^{\PTDSuperScriptAcr} = \Sigma^{\PTDSuperScriptAcr}(\Omega_f)$, we can apply Theorem \ref{theorem:CLT_PTDEstimator} to show that $\sqrt{N} \big( \htPTD -\thetaTarg \big) \xrightarrow{d} \mathcal{N} \big(0,\Sigma_f^{\PTDSuperScriptAcr} \big)$. Now fix $j \in [d]$ such that $[\Sigma_f^{\PTDSuperScriptAcr}]_{jj} \neq 0$. Also fix $\alpha \in (0,1)$ and let $z_{1-\alpha/2}$ be the $(1-\alpha/2)$-quantile of a standard Gaussian distribution. By Slutsky's Lemma, since $\sqrt{N} \big( \htPTD_j -[\thetaTarg]_j \big) \xrightarrow{d} \mathcal{N} \big(0,[\Sigma_f^{\PTDSuperScriptAcr}]_{jj} \big)$ and $\hat{\Sigma}_{jj}^{\PTDSuperScriptAcr} \xrightarrow{p} [\Sigma_f^{\PTDSuperScriptAcr}]_{jj} \neq 0$, $$(\hat{\Sigma}_{jj}^{\PTDSuperScriptAcr} )^{-1/2}  \cdot \sqrt{N} \big( \htPTD_j -[\thetaTarg]_j \big) \xrightarrow{d} \mathcal{N} \big(0,1 \big).$$ Thus, recalling from \eqref{eq:CIDef} that $\mathcal{C}_j^{(1-\alpha)} \equiv \Big[ \htPTD_j - z_{1-\alpha/2} \sqrt{\hat{\Sigma}_{jj}^{\PTDSuperScriptAcr}/N},\htPTD_j + z_{1-\alpha/2} \sqrt{\hat{\Sigma}_{jj}^{\PTDSuperScriptAcr} /N} \Big],$ observe

$$ \lim_{N \to \infty} \mathbb{P} \big( [\thetaTarg]_j \in \mathcal{C}_j^{(1-\alpha)} \big)  =  \lim_{N \to \infty} \mathbb{P} \Big( (\hat{\Sigma}_{jj}^{\PTDSuperScriptAcr} )^{-1/2}  \cdot \sqrt{N} \big( \htPTD_j -[\thetaTarg]_j \big) \in \big[-z_{1-\alpha/2},z_{1-\alpha/2} \big]  \Big)=1-\alpha,$$ where the second step follows by definition of convergence in distribution, the symmetry of the standard Gaussian $\mathcal{N}(0,1)$, and the continuity of the CDF of $\mathcal{N}(0,1)$.

\section{Implications for general two-phase multiwave sampling}\label{sec:HowToGeneralizeToTwoPhaseMultiwave}

While the exposition of the paper and our theoretical results focus on \samplingSchemeName{}, in this appendix we briefly discuss how the theoretical results still apply to more general (not necessarily proxy-assisted) two-phase multiwave sampling. Notably, the theoretical results in this paper can be used to establish theoretical guarantees for an M-estimator in more general two-phase multiwave sampling settings.

In two-phase multiwave sampling designs, an investigator collects cheap-to-measure variables $X^{\textnormal{(I)}}$ from $N$ samples (or subjects) in Phase I, and then in Phase II measures more expensive variables $X^{\textnormal{(II)}}$ on a subset of the $N$ samples. Phase II is broken up into $K$ waves and the sampling scheme for each wave can depend on the $X^{\textnormal{(II)}}$ data from previous waves as well as the $X^{\textnormal{(I)}}$ data from Phase I. 

The only difference between two-phase multiwave sampling designs and the proxy-assisted ones we study is that in the former, the vector $X^{\textnormal{(I)}}$ need not contain components that are predictions (or estimates) of $X^{\textnormal{(II)}}$. At first glance, two-phase multiwave sampling designs may appear to be a strictly more general setting; however, they can always be reformulated as a \samplingSchemeName{} design by introducing a noninformative proxy. In particular, under any two-phase multiwave sampling design, one can define $\tilde{X}^{\textnormal{(II)}}=0 \in \mathbb{R}^{p_2}$ always, where $p_2$ is the dimension of the random vector $X^{\textnormal{(II)}}$. Then letting $\xobs=X^{\textnormal{(I)}}$, $\txmiss=\tilde{X}^{\textnormal{(II)}}=0$, $\xmiss= X^{\textnormal{(II)}}$, $\datvecraw= (X^{\textnormal{(I)}},\tilde{X}^{\textnormal{(II)}},X^{\textnormal{(II)}})$, $\goodX = (X^{\textnormal{(I)}},X^{\textnormal{(II)}})$, and $\proxyX = (X^{\textnormal{(I)}},\tilde{X}^{\textnormal{(II)}})$, the two-phase multiwave sampling design can be written as special case of a \samplingSchemeName{} design. In this reformulation, the noninformative $\txmiss$ are not used at all to determine the labelling probabilities $\LabelRulePi{k}(\proxyX_i)$ for $i \in [N]$ and $k \in [K]$. 

After reformulating a two-phase multiwave sampling design as a \samplingSchemeName{} design with a noninformative proxy $\txmiss=0$, one can conduct M-estimation tasks using the estimator $\htc$ given in Equation \eqref{eq:PTDComponentEstimatorsDEF} of the main text. This corresponds to the Multiwave Predict-Then-Debias estimator $\htPTD$ for the case where the tuning matrix is set as $\hat{\Omega}=0$. In this case, our theoretical results establishing asymptotic linearity (Theorem \ref{theorem:AsymptoticLInearMestsStacked}) and asymptotic normality (Theorem \ref{theorem:CLT_PTDEstimator}) of $\htc=\htPTD$ still hold, and the corresponding confidence intervals are still asymptotically valid under additional regularity conditions (Proposition \ref{prop:AsymptoticallyValidCIs}). Moreover, when setting $\hat{\Omega}=0$ (i.e., when using the estimator $\htc$), regularity conditions involving $\gammaTarg$, the proxy loss $l_{\theta}(\proxyX)$, and the proxy population loss $\tilde{L}(\theta)$ in Assumptions \ref{assump:SmoothEnoughForAsymptoticLineariaty}--\ref{assump:SmoothEnoughForConsitentVarEst} need not be verified.

\section{Finding efficient labelling strategies}\label{sec:EfficientLabellingStrategyAppendix}

In this appendix we first frame the task of minimizing asymptotic variance of a single component of the Multiwave Predict-Then-Debias estimator (using only the first $k^*$ waves) as an optimization problem. We then present a solution to a tractable modification of the functional optimization problem. Using this solution results in labelling probabilities that do not necessarily meet the overlap constraint. We then describe the post-hoc procedure that we used to enforce the overlap constraint while ensuring the budget constraint is met. We also give further details on the adaptive procedure that uses prespecified strata to determine the labelling rule. 

Throughout this appendix we fix $j \in [d]$ and focus on sampling strategies designed to reduce the asymptotic variance $\htPTD_j$. It will be convenient to define $\psi_j : \xSpace \times \xProxySpace \to \mathbb{R}$ to be a function given by \begin{equation}\label{eq:psi_j_DefSimplifyAsympCovjj}
  \psi_j(\argx,\argxProxy) \equiv \Big(e_j^\tran  H_{\thetaTarg}^{-1} \dot{l}_{\thetaTarg}(\argx) -e_j^\tran H_{\gammaTarg}^{-1} \dot{l}_{\gammaTarg}(\argxProxy) \Big)^2 
\quad \text{for each} \quad  \argx \in \xSpace ,\argxProxy \in \xProxySpace.
\end{equation}

\subsection{Formula for the greedy optimal objective}\label{sec:FormulaForGreedyOptimalObjective}

Fix $k^* \in [K]$ and we will derive a formula for the asymptotic variance of a variant of $\htPTD_j$ that uses Phase I data and only the first $k^*$ waves of Phase II data. We consider the case where $\hat{\Omega}=I_{d \times d}$ is the tuning matrix.
To state the asymptotic variance of $\htPTD_j$ when $\hat{\Omega}=I_{d \times d}$, it helps to define $$\Sigma_c \equiv  \begin{bmatrix}
    H_{\thetaTarg}^{-1} \\ -H_{\gammaTarg}^{-1} 
\end{bmatrix}^\tran
\begin{bmatrix}
   0 & \Sigma_{13} \\
    \Sigma_{13}^\tran & \Sigma_{33}
\end{bmatrix} 
\begin{bmatrix}
    H_{\thetaTarg}^{-1} \\ -H_{\gammaTarg}^{-1} 
\end{bmatrix}.$$ 

In the setting of Theorem \ref{theorem:CLT_PTDEstimator}, for any fixed $c_1,\dots,c_K \in [0,1]$, the asymptotic variance of $\htPTD_j$ is given by  
$$\big[\Sigma^{\PTDSuperScriptAcr} (I_{d \times d}) \big]_{jj} = \sum_{k=1}^{K} c_{k}^2 \e \Bigg[ \frac{\psi_j(\goodX,\proxyX)}{\LimitingLabelRulePi{1:k}(\proxyX)} \Bigg] -[\Sigma_c]_{jj},$$ as a consequence of Theorem \ref{theorem:CLT_PTDEstimator} and definition \eqref{eq:psi_j_DefSimplifyAsympCovjj}. Note that by construction of $\htPTD$ (see Equations \eqref{eq:aggregated_Wk}, \eqref{eq:PTDComponentEstimatorsDEF} and \eqref{eq:PTD_estimator}), the estimator $\htPTD$ will disregard Phase II data from waves after the $k^*$th wave if and only if $c_k=0$ for each $k>k^*$. Hence a version of the estimator $\htPTD$ that uses only the first $k^*$ waves of Phase II data uses fixed, prespecified wave specific weights that satisfy $c^{(k^*)}_1,\dots,c^{(k^*)}_K \in [0,1]$, $\sum_{k=1}^K c^{(k^*)}_k=1$, and $c^{(k^*)}_k=0$ for $k>k^*$. Thus, by the formula displayed above, a version of the estimator $\htPTD$ that uses the Phase I data and only the first $k^*$ waves of Phase II data (or that assumes that the $k^*$th wave is the last wave) has asymptotic variance of the $j$th component that satisfies $$\big[\Sigma^{\PTDSuperScriptAcr,(1:k^*)} (I_{d \times d}) \big]_{jj} = \sum_{k=1}^{k^*-1} (c^{(k^*)}_{k})^2 \cdot \e \Bigg[ \frac{\psi_j(\goodX,\proxyX)}{\LimitingLabelRulePi{1:k}(\proxyX)} \Bigg] + (c^{(k^*)}_{k^*})^2 \cdot \e \Bigg[ \frac{\psi_j(\goodX,\proxyX)}{\LimitingLabelRulePi{1:k^*}(\proxyX)} \Bigg] +0 -[\Sigma_c]_{jj}.$$ Thus if we define $$\kappa_1 \equiv c^{(k^*)}_{k^*} \quad \text{and} \quad \kappa_2 \equiv \sum_{k=1}^{k^*-1} (c^{(k^*)}_{k})^2 \cdot \e \Bigg[ \frac{\big( \big[ H_{\thetaTarg}^{-1} \dot{l}_{\thetaTarg}(\goodX) -H_{\gammaTarg}^{-1} \dot{l}_{\gammaTarg}(\proxyX) \big]_j \big)^2 }{\LimitingLabelRulePi{k}(\proxyX) \prod_{k'=1}^{k-1} \LimitingLabelRulePi{k'}(\proxyX)}  \Bigg]-[\Sigma_c]_{jj},$$ by definitions \eqref{eq:PiBarProd1tok} and \eqref{eq:psi_j_DefSimplifyAsympCovjj} and the previous result \begin{equation*} \big[\Sigma^{\PTDSuperScriptAcr,(1:k^*)} (I_{d \times d}) \big]_{jj} = \kappa_1^2 \cdot \e \Bigg[ \frac{\big( \big[  H_{\thetaTarg}^{-1} \dot{l}_{\thetaTarg}(\goodX) - H_{\gammaTarg}^{-1} \dot{l}_{\gammaTarg}(\proxyX) \big]_j \big)^2 }{ \LimitingLabelRulePi{k^*}(\proxyX) \prod_{k=1}^{k^*-1} \big(1-\LimitingLabelRulePi{k}(\proxyX) \big)
} \Bigg] + \kappa_2.\end{equation*} This confirms Equation \eqref{eq:AsympVarUntunedFirstKstarWaves} from the main text. Note that by \eqref{eq:GradientAsympCovFormulae} the matrix $\Sigma_c$ does not depend on $\LimitingLabelRulePi{k^*}$, and moreover, $\{c_k^{(k^*)}\}_{k=1}^K$ are prespecified constants that do not depend on $\LimitingLabelRulePi{k^*}$ either. Hence $\kappa_1$ and $\kappa_2$ are quantities that do not depend on the function $\LimitingLabelRulePi{k^*} \in \mathcal{P}$.

Prior to the start of wave $k^*$, the investigator uses a labelling strategy $\LabelStrategy{k^*}$ to choose a labelling rule $\LabelRulePi{k^*}=\LabelStrategy{k^*}(\cd_{k^*-1}) \in \mathcal{P}$. The choice of labelling strategy $\LabelStrategy{k^*}$ will influence $\LimitingLabelRulePi{k^*}$ (since $\LimitingLabelRulePi{k^*}$ is the asymptotic limit of $\LabelRulePi{k^*}$), but crucially the choice of labelling strategy for the $k^*$th wave does not impact the functions $\LimitingLabelRulePi{k}$ for $k< k^*$. Thus to find a labelling strategy for the $k^*$th wave that minimizes $\big[\Sigma^{\PTDSuperScriptAcr,(1:k^*)} (I_{d \times d}) \big]_{jj}$ we must find the function $\LimitingLabelRulePi{k^*}$ that minimizes $\big[\Sigma^{\PTDSuperScriptAcr,(1:k^*)} (I_{d \times d}) \big]_{jj}$ under a fixed values for the functions $\{\LimitingLabelRulePi{k}\}_{k=1}^{k^*-1}$. By the previous result and since $\kappa_1$ and $\kappa_2$ do not depend on $\LimitingLabelRulePi{k^*}$, we suppose that the investigator seeking to minimize $\big[\Sigma^{\PTDSuperScriptAcr,(1:k^*)} (I_{d \times d}) \big]_{jj}$ will choose the labelling strategy $\LabelStrategy{k^*}$ with the goal of achieving a near optimal solution to the optimization problem 
\begin{equation}\label{eq:OptimizationProblemForGreedyApproach}
\begin{split}
    \text{find a} & \quad \quad \bar{\pi}^{(k^*)}: \mathbb{R}^p \to (0,\infty) \\
    \text{minimizing} & \quad \quad \e \Bigg[ \frac{\big( \big[  H_{\thetaTarg}^{-1} \dot{l}_{\thetaTarg}(\goodX) - H_{\gammaTarg}^{-1} \dot{l}_{\gammaTarg}(\proxyX) \big]_j \big)^2}{\LimitingLabelRulePi{k^*}(\proxyX)  \prod_{k=1}^{k^*-1} \big(1-\LimitingLabelRulePi{k}(\proxyX) \big)} \Bigg]  \\  
    \text{subject to} & \quad \quad \e[\bar{\pi}^{(k^*)}(\proxyX)] \leq B_{k^*} \quad \text{and} \quad \bar{\pi}^{(k^*)}(\argxProxy) \in [b_{\text{targ}},1-b_{\text{targ}}] \quad \text{for } \argxProxy \in \xProxySpace.
\end{split}
\end{equation} Above $B_{k^*}$ imposes a budget constraint on the expected number of labels that can be collected in the $k^*$th wave and $b_{\text{targ}} \in (0,1/2)$ is a user-specified overlap bound that constrains the search to labelling rules that will satisfy Assumption \ref{assump:LabellingRuleOverlap}.

\subsection{A tractable modification to the optimization problem}\label{sec:TractableModifToOptimizationProblem}

As is common in the Active Inference literature~\citep{ActiveInferencePaper,ChenRecentSurrogatePoweredInferenceMultiwave} we now relax the constraint on the range of $\bar{\pi}^{(k^*)}(\cdot)$ to be a nonnegativity constraint. Ignoring this boundedness constraint then corresponds to an optimization problem with a solution that has appeared in the importance sampling literature \citep{ImportanceSamplingSimilarOptimizationProblem,OwenMCBook} (we state and re-derive the solution with our notation in Lemma \ref{lemma:OptimalDenominator_viaCS} below). In particular, by Lemma \ref{lemma:OptimalDenominator_viaCS} and by replacing the constraint that $\bar{\pi}^{(k^*)}(\argxProxy) \in [b_{\text{targ}},1-b_{\text{targ}}]$ with a weaker restriction that $\bar{\pi}^{(k^*)}(\argxProxy) \in (0,\infty)$ for all $\argxProxy \in \xProxySpace$, the optimization problem at \eqref{eq:OptimizationProblemForGreedyApproach} has an optimal $\bar{\pi}^{(k^*)}(\cdot)$ given by \begin{equation*}
    \bar{\pi}_{\text{opt}}^{(k^*)} (\proxyX)  \propto \sqrt{\frac{\varrho_j(\proxyX)}{\prod_{k=1}^{k^*-1} \big(1-\LimitingLabelRulePi{k}(\proxyX) \big)} }, \  \text{ where }  \  \varrho_j(\proxyX) \equiv \e \Big[  \big( \big[  H_{\thetaTarg}^{-1} \dot{l}_{\thetaTarg}(\goodX) - H_{\gammaTarg}^{-1} \dot{l}_{\gammaTarg}(\proxyX) \big]_j \big)^2  \Big| \proxyX \Big],
\end{equation*} and where $\bar{\pi}_{\text{opt}}^{(k^*)}(\cdot)$ is scaled by a proportionality constant so that $\e[\bar{\pi}_{\text{opt}}^{(k^*)}(\proxyX)]=B_{k^*}$. The above formula for $\bar{\pi}_{\text{opt}}^{(k^*)} (\cdot)$ matches the approximate, greedy optimal choice for $\LimitingLabelRulePi{k^*}$ stated in Equation \eqref{eq:OptimalSolutionForGreedy_IgnoreBounds} of the main text.

\begin{lemma}\label{lemma:OptimalDenominator_viaCS}
For any fixed function $g: \mathbb{R}^p \times \mathbb{R}^p \to [0,\infty)$ and $B>0$, the function $\pi : \mathbb{R}^p \to (0,\infty)$ that minimizes $\e[g(\goodX,\proxyX)/\pi(\proxyX)]$ subject to the constraint that $\e[\pi(\proxyX))] \leq B$ is given by \begin{equation}\label{eq:OptimalPositiveFunction}
    \pi_{\textnormal{opt}}(\proxyX)=\frac{B}{r_g} \cdot \sqrt{\e[ g(\goodX,\proxyX) \giv \proxyX]} \quad \text{where} \quad r_g = \e \Big[\sqrt{ \e[ g(\goodX,\proxyX) \giv \proxyX]} \Big].
\end{equation}
\end{lemma}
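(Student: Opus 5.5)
The plan is to reduce the problem to a function of $\proxyX$ alone and then apply the Cauchy--Schwarz inequality. Write $\mu(\proxyX) \equiv \e[g(\goodX,\proxyX) \giv \proxyX] \geq 0$. Since $\pi(\proxyX)$ is a positive measurable function of $\proxyX$, the tower property gives $\e[g(\goodX,\proxyX)/\pi(\proxyX)] = \e[\mu(\proxyX)/\pi(\proxyX)]$. It therefore suffices to minimize $\e[\mu(\proxyX)/\pi(\proxyX)]$ over positive $\pi$ subject to $\e[\pi(\proxyX)] \leq B$.

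For the lower bound, the key step is to write $\sqrt{\mu(\proxyX)} = \sqrt{\mu(\proxyX)/\pi(\proxyX)}\cdot\sqrt{\pi(\proxyX)}$ and apply Cauchy--Schwarz:
\[
r_g^2 = \Big(\e\big[\sqrt{\mu(\proxyX)}\big]\Big)^2 \leq \e\Big[\frac{\mu(\proxyX)}{\pi(\proxyX)}\Big]\,\e[\pi(\proxyX)] \leq B\,\e\Big[\frac{\mu(\proxyX)}{\pi(\proxyX)}\Big].
\]
This yields $\e[g(\goodX,\proxyX)/\pi(\proxyX)] \geq r_g^2/B$ for every feasible $\pi$.

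Next I would verify that $\pi_{\textnormal{opt}}$ attains this bound. It is feasible, since $\e[\pi_{\textnormal{opt}}(\proxyX)] = (B/r_g)\,r_g = B$. It also achieves the bound, since
\[
\e\Big[\frac{\mu(\proxyX)}{\pi_{\textnormal{opt}}(\proxyX)}\Big] = \frac{r_g}{B}\,\e\big[\sqrt{\mu(\proxyX)}\big] = \frac{r_g^2}{B}.
\]
Equality holds in Cauchy--Schwarz precisely because $\sqrt{\mu/\pi}$ is proportional to $\sqrt{\pi}$.

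The main obstacle is technical rather than conceptual: $\pi_{\textnormal{opt}}$ must map into $(0,\infty)$, but $\mu(\proxyX)$ may vanish on a set of positive probability. I would also need $0<r_g<\infty$. On the set where $\mu = 0$, the objective does not depend on $\pi$ (with the convention $0/\pi = 0$), so $\pi_{\textnormal{opt}}$ can be modified there to a small positive value. Redistributing the budget via a limiting $\epsilon$-argument then shows that the infimum $r_g^2/B$ is attained or approached. I would state the result under the implicit assumption that $\mu > 0$ almost surely and $\e[\sqrt{\mu}] \in (0,\infty)$, noting that the formula is then exact. The degenerate cases ($r_g = 0$, or $\mu = 0$ on part of the support) would be handled by this remark.
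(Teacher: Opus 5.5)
Your proposal is correct and follows the paper's proof essentially step for step: the tower property reduces the objective to $\e[\mu(\proxyX)/\pi(\proxyX)]$, and Cauchy--Schwarz applied to $\sqrt{\mu/\pi}\cdot\sqrt{\pi}$ together with $\e[\pi(\proxyX)]\le B$ shows that $\pi_{\textnormal{opt}}$, which is feasible, does no worse than any feasible $\pi$. Your caveats about $\mu$ vanishing on a set of positive probability and about needing $0<r_g<\infty$ are legitimate technicalities that the paper's proof silently assumes away, since it divides by $\sqrt{\e[g(\goodX,\proxyX)\giv\proxyX]}$ and needs $\pi_{\textnormal{opt}}>0$.
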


\begin{proof}
    
 Fix any function $\pi : \mathbb{R}^p \to (0,\infty)$ such that $\e[\pi(\proxyX)]\leq B$. Observe that by the tower property, the definition of $\pi_{\textnormal{opt}}$ in Equation \eqref{eq:OptimalPositiveFunction}, rearranging terms, and the Cauchy-Schwartz inequality $$\begin{aligned} \e \Big[ \frac{g(\goodX,\proxyX)}{\pi_{\textnormal{opt}}(\proxyX)}\Big] 
  & = \e \Big[ \frac{\e[g(\goodX,\proxyX) \giv \proxyX]}{\pi_{\textnormal{opt}}(\proxyX)} \Big] 
\\ & = \frac{r_g}{B} \cdot \e \Bigg[ \frac{ \e[ g(\goodX,\proxyX) \giv \proxyX]}{\sqrt{\e[ g(\goodX,\proxyX) \giv \proxyX]}}\Bigg]
\\ & =  \frac{1}{B}  \Big( \e \Big[\sqrt{\e[ g(\goodX,\proxyX) \giv \proxyX]} \Big] \Big)^2
\\ & = \frac{1}{B}  \Bigg( \e \Bigg[\sqrt{\frac{ \e[g(\goodX,\proxyX) \giv \proxyX]}{\pi(\proxyX)}} \cdot \sqrt{\pi(\proxyX)} \Bigg] \Bigg)^2
\\ & \leq  \frac{1}{B} \e \Big[ \frac{g(\goodX,\proxyX)}{\pi(\proxyX)}\Big] \e[\pi(\proxyX)]
\\ & \leq \e \Big[ \frac{g(\goodX,\proxyX)}{\pi(\proxyX)}\Big].
  \end{aligned}$$ Thus we have shown that for any  $\pi : \mathbb{R}^p \to (0,\infty)$ such that $\e[\pi(\proxyX)]\leq B$, $$\e \Big[ \frac{g(\goodX,\proxyX)}{\pi_{\textnormal{opt}}(\proxyX)}\Big]  \leq \e \Big[ \frac{g(\goodX,\proxyX)}{\pi(\proxyX)}\Big].$$ Since $\pi_{\textnormal{opt}}: \mathbb{R}^p \to (0,\infty)$ and satisfies $\e[\pi_{\textnormal{opt}}(\proxyX)]=B \leq B$, this completes the proof.

\end{proof}

\subsection{Enforcing the budget and overlap constraints}\label{sec:EnforceBudgetAndOverlapConstraints}

Recall that the initial estimate for the greedy optimal labelling rule in wave $k^*$ is given by $$\hat{\bar{\pi}}_{\text{opt,init}}^{(k^*)} (\argxProxy) =  \sqrt{\hat{\varrho}_j(\argxProxy)} \cdot  \prod_{k=1}^{k^*-1} \big(1-\LabelRulePi{k}(\argxProxy) \big)^{-1/2} \quad \text{ for each } \argxProxy \in \xProxySpace.$$ In this subsection, we give a procedure that was used to modify the initial labelling rule $\hat{\bar{\pi}}_{\text{opt,init}}^{(k^*)}$ to meet the budget and overlap constraints.

Let $b_{\text{targ}} \in (0,1/2)$ and $n_{\text{targ}}^{(k^*)}$ denote the desired, user-specified overlap threshold and the number of labels to be collected in wave $k^*$ (in expectation). A update of labelling rule, which we denote by $\hat{\bar{\pi}}_{\text{opt,TB}}^{(k^*)}: \xProxySpace \to [b_{\text{targ}},1-b_{\text{targ}}]$, was then defined by the following procedure. 
Define the normalization constant $$\hat{C}^{(k^*)} \equiv n_{\text{targ}}^{(k^*)} \cdot \Big( \sum_{i \in \mathcal{U}^{(k^*)}} \hat{\bar{\pi}}_{\text{opt,init}}^{(k^*)} (\proxyX_i) \Big)^{-1}\quad{\text{where} } \quad \mathcal{U}^{(k^*)} \equiv \bigl\{ i \in [N] \ : \ I_i^{(k)}=0 \ \ \text{for all } k \in [k^*-1] \bigr\}$$ is the set of samples for which the label $\xmiss_i$ has not yet been obtained prior to the start of wave $k^*$. The putative, normalized labelling probabilities given by $\hat{C}^{(k^*)} \cdot \hat{\bar{\pi}}_{\text{opt,init}}^{(k^*)} (\proxyX_i)$ are then trimmed to lie in $[b_{\text{targ}},1-b_{\text{targ}}]$ by defining $f_{\text{trim}} : \mathbb{R} \to [b_{\text{targ}},1-b_{\text{targ}}]$, such that $$
    f_{\text{trim}}(t) \equiv \begin{cases}
        b_{\text{targ}} & \text {if } t < b_{\text{targ}} \\
        t & \text {if } t \in [b_{\text{targ}},1-b_{\text{targ}}], \text{ and } \\ 
        1-b_{\text{targ}} & \text {if } t > 1-b_{\text{targ}},
    \end{cases} 
$$ and then the expected number of wave $k^*$ labels under such a trimming is given by $$n_{\text{trim}}^{(k^*)} \equiv \sum_{i \in \mathcal{U}^{(k^*)}} f_{\text{trim}}\big( \hat{C}^{(k^*)} \cdot \hat{\bar{\pi}}_{\text{opt,init}}^{(k^*)} (\proxyX_i) \big).$$ In some cases $n_{\text{trim}}^{(k^*)}$, exceeds or falls below the desired (expected) number of labels  $n_{\text{targ}}^{(k^*)}$. In these cases we rebalance the budget by defining slope constants $$\alpha_{\downarrow}^{(k^*)} \equiv \frac{n_{\text{targ}}^{(k^*)}-b_{\text{targ}} \vert \mathcal{U}^{(k^*)} \vert }{n_{\text{trim}}^{(k^*)}-b_{\text{targ}} \vert \mathcal{U}^{(k^*)} \vert } \quad \text{and} \quad \alpha_{\uparrow}^{(k^*)} \equiv \frac{ (1-b_{\text{targ}}) \vert \mathcal{U}^{(k^*)} \vert - n_{\text{targ}}^{(k^*)} }{(1-b_{\text{targ}} )\vert \mathcal{U}^{(k^*)} \vert - n_{\text{trim}}^{(k^*)}}$$ and defining for each $\argxProxy \in \xProxySpace$,
$$        \hat{\bar{\pi}}_{\text{opt,TB}}^{(k^*)}(\argxProxy) = \begin{cases} f_{\text{trim}}\big( \hat{C}^{(k^*)} \cdot \hat{\bar{\pi}}_{\text{opt,init}}^{(k^*)} (\argxProxy) \big)
        & \text{ if } \quad  n_{\text{trim}}^{(k^*)}=n_{\text{targ}}^{(k^*)}, \\
        b_{\text{targ}}+ \alpha_{\downarrow}^{(k^*)}  \cdot \Big( f_{\text{trim}}\big( \hat{C}^{(k^*)} \cdot \hat{\bar{\pi}}_{\text{opt,init}}^{(k^*)} (\argxProxy) \big) - b_{\text{targ}} \Big)
        & \text{ if } \quad  n_{\text{trim}}^{(k^*)}>n_{\text{targ}}^{(k^*)}, \\
        1-b_{\text{targ}}- \alpha_{\uparrow}^{(k^*)} \cdot \Big( 1-b_{\text{targ}}- f_{\text{trim}}\big( \hat{C}^{(k^*)} \cdot \hat{\bar{\pi}}_{\text{opt,init}}^{(k^*)} (\argxProxy)  \big) \Big)
         & \text{ if } \quad  n_{\text{trim}}^{(k^*)} < n_{\text{targ}}^{(k^*)} .\end{cases}$$

The above labelling rule $\hat{\bar{\pi}}_{\text{opt,TB}}^{(k^*)} \in \mathcal{P}$ meets the overlap and budget constraint under the mild restriction that the user specified target overlap parameter $b_{\text{targ}} \in (0,1/2)$ is small enough such that $b_{\text{targ}} < n_{\text{targ}}^{(k^*)}/\vert \mathcal{U}^{(k^*)} \vert < 1-b_{\text{targ}}$. In particular, an algebraic calculation shows that the expected number of labels collected in wave $k^*$ is given by $\sum_{i \in \mathcal{U}^{(k^*)}} \hat{\bar{\pi}}_{\text{opt,TB}}^{(k^*)}(\proxyX_i)=n_{\text{targ}}^{(k^*)}$, while $\hat{\bar{\pi}}_{\text{opt,TB}}^{(k^*)}: \xProxySpace \to [b_{\text{targ}},1-b_{\text{targ}}]$, provided that $b_{\text{targ}} < n_{\text{targ}}^{(k^*)}/\vert \mathcal{U}^{(k^*)} \vert < 1-b_{\text{targ}}$.

\subsection{Formula for estimated optimal strata-specific labelling probabilities}\label{sec:FormulaForEstimatedStrataSpecificLabellingRule}

In this subsection, we derive the stratified estimate of $\varrho_j(\cdot)$ presented in Section \ref{sec:EstimateOptimalLabellingProbabilityInEachStrata} that takes the same value within each stratum. This estimate of $\varrho_j(\cdot)$ uses data from the first $k^*-1$ waves and is used to determine an efficient labelling rule for the $k^*$th wave. As in Section \ref{sec:EstimateOptimalLabellingProbabilityInEachStrata}, we consider a 
partitioning of $\xProxySpace$ into $L$ prespecified strata $\mathcal{S}_1,\dots,\mathcal{S}_L$ such that $\cup_{r=1}^L \mathcal{S}_r = \xProxySpace$ and $\mathcal{S}_r \cap \mathcal{S}_{r'} =\emptyset$ for $r \neq r'$. 

Recall that by the definitions of $\varrho_j(\cdot)$ and $\psi_j(\cdot,\cdot)$ at \eqref{eq:OptimalSolutionForGreedy_IgnoreBounds} and \eqref{eq:psi_j_DefSimplifyAsympCovjj}, $$\varrho_j(\proxyX)=\e[\psi_j(\goodX,\proxyX) \giv \proxyX].$$ We then let $\varrho_j^{(\text{strat})}(\cdot)$ be a coarsening of $\varrho_j(\cdot)$ given by $$\varrho_j^{(\text{strat})}(\argxProxy) = \sum_{r=1}^L \mathbbm{1} \{ \argxProxy \in \mathcal{S}_r \} \cdot  \e [ \psi_j(\goodX,\proxyX) \giv \proxyX \in \mathcal{S}_r ] \quad \text{for } \argxProxy \in \xProxySpace.$$ To estimate $\varrho_j^{(\text{strat})}(\argxProxy)$, we must estimate $\e [ \psi_j(\goodX,\proxyX) \giv \proxyX \in \mathcal{S}_r]$ for each $r \in [L]$.

To do this, recall that $$ \psi_j(\goodX,\proxyX) = \Big(e_j^\tran  H_{\thetaTarg}^{-1} \dot{l}_{\thetaTarg}(\goodX) -e_j^\tran H_{\gammaTarg}^{-1} \dot{l}_{\gammaTarg}(\proxyX) \Big)^2,$$ by \eqref{eq:psi_j_DefSimplifyAsympCovjj}. Further, note that in the setting of Section \ref{sec:EstimateOptimalLabellingProbabilityInEachStrata} for each $i \in [N]$ that has an available label prior to the start of wave $k^*$, we also have an empirical estimate of $\psi_j(\goodX_i,\proxyX_i)$ given by $$\cy_i^{(k^*-1)} = \Big(e_j^\tran  [\hat{H}_{\thetaTarg}^{(k^*-1)}]^{-1} \dot{l}_{\hat{\theta}^{\completeSampNotation,(k^*-1)}}(\goodX_i) -e_j^\tran \hat{H}_{\gammaTarg}^{-1} \dot{l}_{\hga}(\proxyX_i) \Big)^2.$$

Note that for each $r \in [L]$, by Corollary \ref{cor:SimplifyWeightedExpectations1Term},

$$\begin{aligned}
    \e [ \psi_j(\goodX,\proxyX) \giv \proxyX \in \mathcal{S}_r ] & = \big( \mathbb{P} ( \proxyX \in \mathcal{S}_r )\big)^{-1} \e \big[ \mathbbm{1} \{ \proxyX \in \mathcal{S}_r \} \cdot \psi_j(\goodX,\proxyX) \big]
    \\ & = \frac{1} {\mathbb{P} ( \proxyX \in \mathcal{S}_r) }  \cdot   \e \Big[ \sum_{k=1}^{k^*-1} \Big( \frac{c_k}{\sum_{k'=1}^{k^*-1} c_{k'} } \Big) \cdot W_i^{(k)} \mathbbm{1} \{ \proxyX_i \in \mathcal{S}_r \} \cdot \psi_j(\goodX_i,\proxyX_i) \Big]
    \\ & = \frac{1}{ \mathbb{P} ( \proxyX \in \mathcal{S}_r) } \cdot \e \Big[ \cw_i^{(k^*-1)}  \cdot \mathbbm{1} \{ \proxyX_i \in \mathcal{S}_r \} \cdot \big(e_j^\tran  H_{\thetaTarg}^{-1} \dot{l}_{\thetaTarg}(\goodX_i) -e_j^\tran H_{\gammaTarg}^{-1} \dot{l}_{\gammaTarg}(\proxyX_i) \big)^2    \Big],
\end{aligned}$$ where the last step holds by the definition of $\cw_i^{(k^*-1)}$ at \eqref{eq:WeightsAggFirstKstar} Thus, in this setting, for each $r \in [L]$ we estimate $\e [ \psi_j(\goodX,\proxyX) \giv \proxyX \in \mathcal{S}_r ]$ by $$\frac{ N^{-1} \sum_{i=1}^N \cw_i^{(k^*-1)} \cdot \mathbbm{1} \{ \proxyX_i \in \mathcal{S}_r \} \cdot  \cy_i^{(k^*-1)}   }{ N^{-1} \sum_{i=1}^N \mathbbm{1} \{ \proxyX_i \in \mathcal{S}_r \} }.$$

Plugging in estimates for $\e [ \psi_j(\goodX,\proxyX) \giv \proxyX \in \mathcal{S}_r]$ to the formula for $\varrho_j^{(\text{strat})}(\cdot)$, we obtain the coarsened estimator for $\varrho_j(\cdot)$ given by the piecewise function $$\hat{\varrho}_j^{(\text{strat})}(\argxProxy) = \sum_{r=1}^L \mathbbm{1} \{ \argxProxy \in \mathcal{S}_r \} \cdot \Bigg(\frac{ N^{-1} \sum_{i=1}^N \cw_i^{(k^*-1)} \cdot \mathbbm{1} \{ \proxyX_i \in \mathcal{S}_r \} \cdot  \cy_i^{(k^*-1)}   }{ N^{-1} \sum_{i=1}^N \mathbbm{1} \{ \proxyX_i \in \mathcal{S}_r \} } \Bigg) \quad \text{for } \argxProxy \in \xProxySpace,$$ that is presented in Section \ref{sec:EstimateOptimalLabellingProbabilityInEachStrata} of the main text.

\section{Choices to increase efficiency after data collection}\label{sec:OtherWaysToIncreaseEfficiencyAppendix}

After Phase II is complete and all data has been collected, it is still possible to increase efficiency by choosing the tuning matrix $\hat{\Omega}$ and the wave-specific weights $c_1,\dots,c_K \in [0,1]$ strategically. In this appendix, we first derive the asymptotically optimal choice of the tuning matrix $\hat{\Omega}$ among the class of diagonal tuning matrices for fixed values of $\{c_k\}_{k=1}^K$. This asymptotically optimal diagonal tuning matrix is used in all of our numerical experiments. We then briefly discuss the choice of $\{c_k\}_{k=1}^K$ and describe the default that is used in our numerical experiments.

\subsection{Asymptotically optimal diagonal tuning matrix}\label{sec:AsymptoticallyOptimalDiagTuningMatrix}

In this subsection, we derive the asymptotically optimal tuning matrix among the class of diagonal tuning matrices for $\htPTD$ when the constants $\{c_k\}_{k=1}^K$ are fixed. Recall that in the setting of Theorem \ref{theorem:CLT_PTDEstimator}, if $\hat{\Omega} \xrightarrow{p} \Omega$, the asymptotic variance of $\htPTD$ is given by 
\begin{equation*}
    \begin{split}
    \Sigma^{\PTDSuperScriptAcr}(\Omega) & = H_{\thetaTarg}^{-1} \Sigma_{11} H_{\thetaTarg}^{-1} + \Omega H_{\gammaTarg}^{-1} \big( \Sigma_{22}-\Sigma_{33} \big) H_{\gammaTarg}^{-1} \Omega^\tran 
\\ & \quad + H_{\thetaTarg}^{-1} (\Sigma_{13}-\Sigma_{12}) H_{\gammaTarg}^{-1} \Omega^\tran +  \big(  H_{\thetaTarg}^{-1} (\Sigma_{13}-\Sigma_{12}) H_{\gammaTarg}^{-1} \Omega^\tran \big)^\tran.
\end{split}
\end{equation*} We next find the diagonal matrix $\Omega \in \mathbb{R}^{d \times d}$ that minimizes the diagonal components of the asymptotic variance $\Sigma^{\PTDSuperScriptAcr}(\Omega)$.

To do this, fix $j \in [d]$, suppose that $\Omega$ is a diagonal matrix and that $\omega_j \equiv [\Omega]_{jj} \in \mathbb{R}$, in which case $$[\Sigma^{\PTDSuperScriptAcr}(\Omega)]_{jj}  =  [H_{\thetaTarg}^{-1} \Sigma_{11} H_{\thetaTarg}^{-1}]_{jj} + \omega_j^2 [ H_{\gammaTarg}^{-1} ( \Sigma_{22}-\Sigma_{33} ) H_{\gammaTarg}^{-1} ]_{jj} + 2 \omega_j [ H_{\thetaTarg}^{-1} (\Sigma_{13}-\Sigma_{12}) H_{\gammaTarg}^{-1} ]_{jj}.$$ The above quantity is quadratic in $\omega_j$ and convex in $\omega_j$ ($\Sigma_{22}-\Sigma_{33} \succeq 0$, while generally being strictly positive definite, because it can be shown using Lemma \ref{lemma:PropertiesOfIndependentVersionsOfWeights} and Fact \ref{fact:PropertiesOfGradients} that $\Sigma_{22}-\Sigma_{33}= \var \big( (\bar{W}_1-1) \dot{l}_{\gammaTarg} (\proxyX_1) \big)$). Convex quadratic functions can be minimized by setting the derivative to $0$, implying that the choice $$\omega_j^{\text{opt,diag}}= \frac{[ H_{\thetaTarg}^{-1} (\Sigma_{12}-\Sigma_{13}) H_{\gammaTarg}^{-1} ]_{jj}}{[ H_{\gammaTarg}^{-1} ( \Sigma_{22}-\Sigma_{33} ) H_{\gammaTarg}^{-1} ]_{jj}},$$ minimizes $[\Sigma^{\PTDSuperScriptAcr}(\Omega)]_{jj}$.  Since for diagonal $\Omega$, $[\Sigma^{\PTDSuperScriptAcr}(\Omega)]_{jj}$ does not depend on other diagonal entries of $\Omega$ (besides $[\Omega]_{jj}$) and since the above formula for the optimal choice of the $j$th diagonal entry of $\Omega$ holds for each $j \in [d]$, the diagonal matrix $\Omega^{\text{opt,diag}}$ minimizing each diagonal component of $\Sigma^{\PTDSuperScriptAcr}(\Omega)$ is given by $$\Omega^{\text{opt,diag}} = \text{diag} \Bigg( \frac{[ H_{\thetaTarg}^{-1} (\Sigma_{12}-\Sigma_{13}) H_{\gammaTarg}^{-1} ]_{11}}{[ H_{\gammaTarg}^{-1} ( \Sigma_{22}-\Sigma_{33} ) H_{\gammaTarg}^{-1} ]_{11}},\dots,\frac{[ H_{\thetaTarg}^{-1} (\Sigma_{12}-\Sigma_{13}) H_{\gammaTarg}^{-1} ]_{dd}}{[ H_{\gammaTarg}^{-1} ( \Sigma_{22}-\Sigma_{33} ) H_{\gammaTarg}^{-1} ]_{dd}} \Bigg) \in \mathbb{R}^{d \times d}.$$ Thus we propose the use of the following tuning matrix \begin{equation}\label{eq:EstimatedOptimalDiagTuningMatrix}
    \hat{\Omega}^{\text{opt,diag}} = \text{diag} \Bigg( \frac{[ \hat{H}_{\thetaTarg}^{-1} (\hat{\Sigma}_{12}-\hat{\Sigma}_{13}) \hat{H}_{\gammaTarg}^{-1} ]_{11}}{[ \hat{H}_{\gammaTarg}^{-1} ( \hat{\Sigma}_{22}-\hat{\Sigma}_{33} ) \hat{H}_{\gammaTarg}^{-1} ]_{11}},\dots, \frac{[ \hat{H}_{\thetaTarg}^{-1} (\hat{\Sigma}_{12}-\hat{\Sigma}_{13}) \hat{H}_{\gammaTarg}^{-1} ]_{dd}}{[ \hat{H}_{\gammaTarg}^{-1} ( \hat{\Sigma}_{22}-\hat{\Sigma}_{33} ) \hat{H}_{\gammaTarg}^{-1} ]_{dd}} \Bigg) \in \mathbb{R}^{d \times d},
\end{equation} which is consistent for $\Omega^{\text{opt,diag}}$ in the settings of Proposition \ref{prop:ConsistencyOfHessians}, \ref{prop:ConsistencyOfSig13andSig33}, and \ref{prop:ConsistencyOfSigmaUpperLeft} (provided that $\Sigma_{22}-\Sigma_{33} \succ 0$). Crucially, using $\hat{\Omega}^{\text{opt,diag}}$ as a tuning matrix will minimize the asymptotic variance of $\htPTD_j$ (simultaneously for all $j \in [d]$) among all choices of diagonal tuning matrices, and in the setting of Proposition \ref{prop:AsymptoticallyValidCIs}, will result in asymptotically valid confidence intervals.

\begin{remark}
     We use the optimal diagonal tuning matrix ($d$ tuning parameters) rather than the optimal tuning matrix among all possible $d \times d$ tuning matrices ($d^2$ tuning parameters) that originates in \cite{ChenAndChen2000} in order to increase stability. In particular, for similar Predict-Then-Debias approaches, \cite{PTDBootstrapPaper} empirically found that reducing the number of tuning parameters from $d^2$ to $d$ in this manner led to limited losses in efficiency, while increasing stability and avoiding undercoverage.
\end{remark}

\subsection{Choice of \texorpdfstring{$c_k$}{wave-specific weights}}\label{sec:ChoiceOfCk}

After data collection the $\{c_k\}_{k=1}^K$ could also be tuned. In particular, for a fixed $\Omega$, one can find an analytic formula for the asymptotically optimal choice of $\{c_k\}_{k=1}^K$.
However, our current theory does not guarantee the validity of confidence intervals with post-hoc tuning of the $\{c_k\}_{k=1}^K$, so we do not pursue this further in our manuscript.  In our simulations we set $c_k = n_{\text{targ}}^{(k)}/ \big(\sum_{k'=1}^K n_{\text{targ}}^{(k')} \big)$ to be the fraction of the budget that is allocated to wave $k$ and put this forward as a reasonable default.

\section{Additional simulation and dataset details}
\label{sec:SimulationDetails}

In the next subsection, we provide a more detailed description of the Monte Carlo procedure used to study \samplingSchemeName{} on each dataset. In the subsequent subsection, we provide additional dataset and implementation details about each of the 5 experiments.Code and data that was used in these experiments is available on Github \url{https://github.com/DanKluger/MultiwavePTD}.

\subsection{Monte Carlo procedure}

For each experiment, we conduct the following analysis. We start with $\nSuper$ fully observed samples of $(\xobs,\txmiss,\xmiss)$ which are taken from a dataset described in Section \ref{sec:Datasets}. Then for the purposes of testing our methods in a data-driven manner, we set the empirical distribution of these $\nSuper$ samples to be the distribution of the superpopulation $\datvecrawDist$ that we study in the experiments. In particular, we calculate the ``ground truth" value for the parameter of interest $\thetaTarg$ by finding the empirical loss minimizer across the $\nSuper$ fully observed samples. 

For each $K \in \{1+1,1+5,1+25\}$, we performed $1{,}000$ independent Monte Carlo simulations. In each simulation:

\begin{enumerate}
    \item \textbf{Phase I} was run by collecting a sample of size $N$ with replacement from the $\nSuper$ observations of $(\xobs,\txmiss,\xmiss)$. This resulted in a Phase I sample $(\xobs_i,\txmiss_i,\text{NA} \cdot \xmiss_i)_{i=1}^N$, where the $\xmiss_i$ values are unobserved but stored and withheld in order to implement Phase II.
    \item \textbf{The ``explore" wave of Phase II} was run by generating $I_i^{(1)} \stackrel{\text{i.i.d.}}{\sim} \text{Bernoulli}(n_{\text{targ}}^{(1)}/N)$ and collecting $\xmiss_i$ observations for all $i \in [N]$ such that $I_i^{(1)}=1$.
    \item \textbf{The ``exploit" waves of Phase II:} For each $k \in [K] \setminus \{1\}$, the collected data from previous waves and from Phase I was used to estimate an approximate greedy optimal sampling rule, according to the approach described in Section \ref{sec:DescribeApproximateGreedyOptimalStrategy}. The labelling rule was modified using the procedure in Appendix \ref{sec:EnforceBudgetAndOverlapConstraints} so that in expectation, $n_{\text{targ}}^{(k)}$ measurements were collected in phase $k$ and labelling probabilities were truncated to lie in $[b_{\text{targ}},1-b_{\text{targ}}]$. We set $$n_{\text{targ}}^{(k)} = \frac{n_{\text{targ}} - n_{\text{targ}}^{(1)}}{K-1} \quad \text{and} \quad b_{\text{targ}} =   \frac{n_{\text{targ}}^{(k)}}{100 \cdot N_u^{(k-1)}},$$ where $N_u^{(k-1)} = \vert \{ i \in [N] \ : \ I_i^{(k')}=0 \text{ for each } k' \in [k-1] \} \vert$ denotes the number samples that have not been labelled prior to wave $k$. These choices spread the labelling budget evenly across the $K-1$ waves and allowed the labelling probabilities for uninformative points to be up to 100 times smaller under adaptive sampling compared to under uniform random sampling.

    \item \textbf{Point estimators and confidence intervals} were computed after Phase II. Using all available data from both phases, $\htc$, $\hgc$, and $\hga$ defined at \eqref{eq:PTDComponentEstimatorsDEF} were computed using standard statistical software for weighted estimators. The optimal diagonal tuning matrix, $\htPTD$, and 90\% confidence intervals were computed using formulas \eqref{eq:EstimatedOptimalDiagTuningMatrix}, \eqref{eq:PTD_estimator}, and \eqref{eq:CIDef}, respectively.
     
    \item \textbf{The uniform sampling baseline} was considered using the same Phase I data. 
    For each $i \in [N]$, $I_{i}^{\text{base}} \stackrel{\text{i.i.d.}}{\sim} \text{Bernoulli}(n_{\text{targ}}/N)$ was generated. The Predict-Then-Debias estimator of $\thetaTarg$ (with an asymptotically optimal diagonal tuning matrix given by \eqref{eq:EstimatedOptimalDiagTuningMatrix}) was evaluated using the samples $(I_{i}^{\text{base}},\xobs_i,\txmiss_i, I_{i}^{\text{base}} \xmiss_i)_{i=1}^N$. This baseline used the same Phase I data and same number of expected Phase II samples.
\end{enumerate}

Step 3 involved estimating the conditional expectation $\varrho_j(\cdot)$ defined in \eqref{eq:OptimalSolutionForGreedy_IgnoreBounds}. We conducted two separate sets of the above Monte Carlo simulations both with different approaches for estimating this quantity. In the first set of simulations, we use 20-Nearest-Neighbors (implemented via the \texttt{knn.reg()} function in the \texttt{FNN} package \citep{FNNPackage}) to estimate the conditional expectation in \eqref{eq:OptimalSolutionForGreedy_IgnoreBounds}. In the second set of simulations, the space $\xProxySpace$ of Phase I observations was stratified into prespecified strata and \eqref{eq:OptimalSolutionForGreedy_IgnoreBounds} was approximated with a function that is constant within each stratum using the approach described in Section \ref{sec:EstimateOptimalLabellingProbabilityInEachStrata}. Details of the prespecified strata used for each experiment can be found in Section \ref{sec:DatasetsAdditionalDetails}. In summary, the strata were selected by taking the Cartesian product of percentile bins of the one to three variables that were expected to be most critical for estimating the quantity of interest.

\subsection{Additional details about each dataset and experiment}\label{sec:DatasetsAdditionalDetails}

In this appendix we give additional details about the datasets used and implementation of each experiment.

\paragraph{Synthetic experiment:} We generated a large synthetic dataset of an outcome variable $Y$, a continuous covariate $Z_{\text{cov}} \in \mathbb{R}$, a binary treatment variable $Z_{\text{trt}} \in \{0,1\}$, and a binary estimate $\tilde{Z}_{\text{trt}} \in \{0,1\}$ of $Z_{\text{trt}}$. The estimand of interest was the population regression coefficient corresponding to the treatment variable in a linear regression of $Y$ on $(Z_{\text{cov}},Z_{\text{trt}})$. The $(Y,Z_{\text{cov}},Z_{\text{trt}},\tilde{Z}_{\text{trt}})$ data were generated so that both \begin{enumerate}[(i)]
    \item the residuals when regressing $Y$ on $(Z_{\text{cov}},Z_{\text{trt}})$ were heteroskedastic, and
    \item the prediction errors $\tilde{Z}_{\text{trt}}-Z_{\text{trt}}$ were differential in the sense that $Z_{\text{trt}} \centernot{\indep} Y \giv (Z_{\text{cov}},\tilde{Z}_{\text{trt}})$.
\end{enumerate}

More specifically, we generated $$Z_{\text{cov}} \sim \mathcal{N}(0,1), \quad Z_{\text{trt}} \sim \text{Bernoulli}(1/2), \quad \varepsilon_1,\varepsilon_2,\varepsilon_3,\varepsilon_4 \stackrel{\text{i.i.d.}}{\sim} \mathcal{N}(0,1), \quad \text{and}  \quad U \sim \text{Unif}(0,1), $$ all independently. We then set the outcome to be given by $$Y=Z_{\text{cov}}+Z_{\text{trt}}+\varepsilon_Y \quad \text{where} \quad \varepsilon_Y=\varepsilon_1 +10 Z_{\text{trt}} \varepsilon_2+ \vert Z_{\text{cov}} \vert \varepsilon_3+ 3 Z_{\text{trt}} \vert Z_{\text{cov}} \vert \varepsilon_4,$$ resulting in a setting with a well-specified linear model with heteroskedastic residuals. To generate a proxy for $Z_{\text{trt}}$ with differential prediction errors, we set the logit scores to be $$\xi= 4 Z_{\text{trt}} +Z_{\text{cov}}+ Y/\sqrt{\var(Y)} -\mu$$ where $\mu$ was chosen so that $\xi$ had approximately mean zero, and we set $$\tilde{Z}_{\text{trt}} = \mathbbm{1} \Bigl\{U \leq \frac{\exp(\xi)}{1+\exp(\xi)} \Bigr\}.$$ 

We generated a large dataset with $\nSuper= 3 \times 10^6$ i.i.d. draws of $(Y,Z_{\text{cov}},Z_{\text{trt}},\tilde{Z}_{\text{trt}})$ from the distribution defined by the above generating process. 
In our experiments, gold-standard measurements of $Z_{\text{trt}}$ were not collected during Phase I (but its proxy $\tilde{Z}_{\text{trt}}$, $Y$, and $Z_{\text{cov}}$ were). For the stratified adaptive approach, $\xProxySpace$ was partitioned into 18 strata defined by the Cartesian product of the terciles of $Y$, the terciles of $Z_{\text{cov}}$, and the binary value of $\tilde{Z}_{\text{trt}}$.

\paragraph{Housing price experiment:} We used a dataset of economic and environmental variables from $\nSuper=46{,}418$ distinct $\sim 1 \text{km} \times 1 \text{km}$ grid cells that was taken from \cite{MOSAIKSPaper,MOSAIKSSourceCode} and has previously been used to study the Predict-Then-Debias method \citep{KerriRSEPaper,PTDBootstrapPaper}. 
The dataset included grid cell-level averages of housing price, income, nightlight intensity, and road length as well as estimates for nightlights and road length based on daytime satellite imagery. For stratified adaptive sampling, $\xProxySpace$ was partitioned into 15 strata defined by the Cartesian product of housing price terciles and predicted nightlight quintiles computed from the superpopulation.

\paragraph{AlphaFold experiment:} The dataset used consisted of $\nSuper=10{,}802$ samples that originated from \cite{bludau2022structural} and was downloaded from Zenodo \citep{PPIZenodo}. Each sample had indicators $Z_{\text{Acet}},Z_{\text{Ubiq}} \in \{0,1\}$ of whether there was acetylation and ubiquitination, and an indicator $Y_{\text{IDR}} \in \{0,1\}$ of whether the protein region was an internally disordered region (IDR) coupled with a prediction of $Y_{\text{IDR}}$ based on AlphaFold \citep{AlphaFoldPaper}. We test the method on an estimation task considered in \cite{PTDBootstrapPaper} where the estimand of interest is the population-level interaction term in the logistic regression of $Y_{\text{IDR}}$ on $(Z_{\text{Acet}},Z_{\text{Ubiq}},Z_{\text{Acet}} \times Z_{\text{Ubiq}})$. In Phase I, $Y_{\text{IDR}}$ was not observed, but $Z_{\text{Acet}}$, $Z_{\text{Ubiq}}$ and predictions of $Y_{\text{IDR}}$ based on AlphaFold were assumed to be collected in Phase I. For
stratified adaptive sampling, $\xProxySpace$ was partitioned into four strata according to the four possible combinations of $Z_{\text{Acet}}$ and $Z_{\text{Ubiq}}$.
We did not consider k-Nearest-Neighbor approaches for estimating the approximate greedy optimal given that in this case $\xProxySpace=\{0,1\}^3$ was the corners of the unit cube, so stratification could be done without loss of information while nearest-neighbor approaches would face many instances of arbitrary tie-breaking.

\paragraph{Forest cover experiment:} The dataset used consisted of $\nSuper=67{,}968$ samples of $\sim 1 \text{km} \times 1 \text{km}$ grid cells taken from the previously mentioned data source \citep{MOSAIKSPaper,MOSAIKSSourceCode}. The variables included the percent of tree cover and grid cell-level averages of population and elevation, as well as machine learning-based estimates for tree cover and population. We binarized the tree cover variable and the machine learning-based predictions of tree cover using the 10\% threshold (which is meaningful from a forestry perspective \citep{UDSAForestServiceReport}) to construct a forest cover indicator variable and a cheap-to-measure prediction of it. We test the method on an estimation task considered in \cite{PTDBootstrapPaper} where the estimand of interest was the logistic regression coefficient for population when regressing the forest cover indicator on elevation and population. For stratified adaptive sampling, $\xProxySpace$ was partitioned into 12 strata defined by the Cartesian product of the estimated forest cover indicator and sextiles of predicted population.

\paragraph{Tree cover quantile experiment:} Using the same dataset as in the forest cover experiment, we consider estimating 0.75-quantile of the percent of tree cover across all of the grid-cells. In contrast to the previous forest cover experiment, we do not binarize the percent tree cover as we instead seek to estimate its 0.75-quantile. In our experiments, gold-standard measurements of percent tree cover were not collected during Phase I sampling (but a proxy for it based on satellite imagery was). For stratified adaptive sampling, we stratified the space $\xProxySpace$ of Phase I observations by calculating the $100/15,200/15,\dots, 1400/15$ percentiles of the estimated tree cover on the superpopulation, and used those cutoffs to define 13 distinct strata (the first two percentiles and the minimum estimated tree cover were all $0$, resulting in fewer than 15 distinct strata). 

Notably, computing
the point estimator $\htPTD$ and its corresponding confidence intervals required evaluating quantiles from weighted samples and estimating the density of a continuous random variable at its quantile. (In quantile estimation, the Hessians $H_{\thetaTarg}$ and $H_{\gammaTarg}$ are the densities of $\goodX$ and $\proxyX$ at $\thetaTarg$ and $\gammaTarg$, respectively). For implementation we used the \texttt{weighted\_quantile} function from the \texttt{ggdist} R package \citep{ggdist_weightedQuantile_Package} to estimate quantiles from a weighted sample and the \texttt{weighted\_kde} function from the \texttt{topolow} R package \citep{topolow_weightedKDE_Package} to estimate the density of a random variable from a weighted sample.



\end{document}